\documentclass[12pt]{article}
\usepackage{fullpage,url,amsmath,amssymb,array,multirow,graphicx,microtype}
\newtheorem{theorem}{Theorem}

\newtheorem{lemma}[theorem]{Lemma}

\newtheorem{corollary}[theorem]{Corollary}
\newtheorem{claim}[theorem]{Claim}
\newtheorem{fact}[theorem]{Fact}

\newtheorem{remk}[theorem]{Remark}
\newtheorem{exmp}[theorem]{Example}


\def\FullBox{\hbox{\vrule width 8pt height 8pt depth 0pt}}

\def\qed{\ifmmode\qquad\FullBox\else{
\nobreak\hfil
\penalty50\hskip1em\null\nobreak\hfil\FullBox
\parfillskip=0pt\finalhyphendemerits=0\endgraf}\fi}

\def\qedsketch{\ifmmode\Box\else{\unskip\nobreak\hfil
\penalty50\hskip1em\null\nobreak\hfil$\Box$
\parfillskip=0pt\finalhyphendemerits=0\endgraf}\fi}

\newenvironment{proof}{\begin{trivlist} \item {\bf Proof:~~}}
  {\qed\end{trivlist}}

\newcommand{\etal}{{\it et~al.\ }}
\newcommand{\ie} {{\it i.e.,\ }}
\newcommand{\eg} {{\it e.g.,\ }}
\newcommand{\cf}{{\it cf.,\ }}

\newcommand{\eqdef}{\mathbin{\stackrel{\rm def}{=}}}
\newcommand{\R}{{\mathbb R}}
\newcommand{\N}{{\mathbb{N}}}
\newcommand{\Z}{{\mathbb Z}}
\newcommand{\F}{{\mathbb F}}
\newcommand{\poly}{{\mathrm{poly}}}
\newcommand{\polylog}{{\mathrm{polylog}}}
\newcommand{\loglog}{{\mathop{\mathrm{loglog}}}}
\newcommand{\zo}{\{0,1\}}
\newcommand{\suchthat}{{\;\; : \;\;}}
\newcommand{\pr}[2][]{\Pr_{#1}\left[#2\right]}
\newcommand{\deffont}{\em}
\newcommand{\getsr}{\mathbin{\stackrel{\mbox{\tiny R}}{\gets}}}
\newcommand{\Exp}{\mathop{\mathrm E}\displaylimits}
\newcommand{\Var}{\mathop{\mathrm Var}\displaylimits}
\newcommand{\xor}{\oplus}
\newcommand{\GF}{\mathrm{GF}}

\def\textprob#1{\textmd{\textsc{#1}}}
\newcommand{\mathprob}[1]{\mbox{\textmd{\textsc{#1}}}}
\newcommand{\SAT}{\mathprob{SAT}}
\newcommand{\QuadRes}{\textprob{Quadratic Residuosity}}
\newcommand{\CktApprox}{\mathprob{Circuit-Approx}}
\newcommand{\CktRelApprox}{\mathprob{Circuit-RelApprox}}
\newcommand{\DNFRelApprox}{\mathprob{DNF-RelApprox}}
\newcommand{\GraphNoniso}{\textprob{Graph Nonisomorphism}}
\newcommand{\GNI}{\mathprob{GNI}}
\newcommand{\GraphIso}{\textprob{Graph Isomorphism}}
\newcommand{\GI}{\mathprob{GI}}
\newcommand{\MinCut}{\textprob{Min-Cut}}
\newcommand{\MaxCut}{\textprob{Max-Cut}}
\newcommand{\IdentityTest}{\textprob{Identity Testing}}
\newcommand{\GraphConn}{\textprob{Graph Connectivity}}

\newcommand{\yes}{{\sc yes}}
\newcommand{\no}{{\sc no}}

\newcommand{\class}[1]{\mathbf{#1}}
\newcommand{\coclass}[1]{\mathbf{co\mbox{-}#1}} 
\newcommand{\BPP}{\class{BPP}}
\newcommand{\NP}{\class{NP}}
\newcommand{\RP}{\class{RP}}
\newcommand{\coRP}{\coclass{RP}}
\newcommand{\ZPP}{\class{ZPP}}
\newcommand{\RNC}{\class{RNC}}
\newcommand{\RL}{\class{RL}}
\renewcommand{\L}{\class{L}}
\newcommand{\coRL}{\coclass{RL}}
\newcommand{\IP}{\class{IP}}
\newcommand{\AM}{\class{AM}}
\newcommand{\MA}{\class{MA}}
\renewcommand{\P}{\class{P}}
\newcommand\prBPP{\class{prBPP}}
\newcommand\prRP{\class{prRP}}
\newcommand\prP{\class{prP}}
\newcommand{\Ppoly}{\class{P/poly}}
\newcommand{\DTIME}{\class{DTIME}}
\newcommand{\ETIME}{\class{E}}
\newcommand{\BPTIME}{\class{BPTIME}}
\newcommand{\EXP}{\class{EXP}}
\newcommand{\SUBEXP}{\class{SUBEXP}}
\newcommand{\qP}{\class{\tilde{P}}}
\newcommand{\PH}{\class{PH}}
\newcommand{\NC}{\class{NC}}
\newcommand{\PSPACE}{\class{PSPACE}}
\newcommand{\quasiP}{\class{\tilde{P}}}

\newcommand{\negl}{{\mathrm{neg}}}
\newcommand{\Diam}{\mathrm{Diam}}
\newcommand{\Cut}{\mathrm{Cut}}
\newcommand{\pf}{\mathit{pf}}
\newcommand{\Col}{\mathrm{Col}}
\newcommand{\Supp}{\mathrm{Supp}}

\newcommand{\accept}{\mathtt{accept}}
\newcommand{\reject}{\mathtt{reject}}
\newcommand{\fail}{\mathtt{fail}}
\newcommand{\halt}{\mathtt{halt}}

\newcommand{\HFam}{\mathcal{H}}
\newcommand{\FFam}{\mathcal{F}}
\newcommand{\Dom}{\mathcal{D}}
\newcommand{\Rng}{\mathcal{R}}

\newcommand{\Hall}{\mathrm{H}}
\newcommand{\Hmin}{\mathrm{H}_{\infty}}
\newcommand{\HRen}{\mathrm{H}_2}
\newcommand{\HSha}{\mathrm{H}_{\mathit{Sh}}}
\newcommand{\Ext}{\mathrm{Ext}}
\newcommand{\Con}{\mathrm{Con}}
\newcommand{\Samp}{\mathrm{Smp}}
\newcommand{\Enc}{\mathrm{Enc}}
\newcommand{\Code}{\mathcal{C}}

\newcommand{\zigzag}{\mathbin{\raisebox{.2ex}{
      \hspace{-.4em}$\bigcirc$\hspace{-.75em}{\rm z}\hspace{.15em}}}}

\newcommand{\eps}{\varepsilon}
\newcommand{\ci} {\stackrel{\rm{c}}{\equiv}}
\newcommand{\Time}{\mathrm{time}}
\newcommand{\inner}[2]{\left[#1,#2\right]}

\newcommand{\GL}{\operatorname{GL}}
\newcommand{\maj}{\operatornamewithlimits{maj}}
\DeclareMathOperator{\Tr}{Tr}

\newcommand{\field}[1]{\mathbb{#1}}

\newcommand{\E}{\mathbf{E}}
\newcommand{\equivalent}{\ensuremath{\Leftrightarrow}}

\newcommand{\eat}[1]{}
\newcommand{\myhrule}{\rule[.5pt]{\hsize}{.5pt}}

\newcommand{\dom}{\ensuremath{\mathcal{D}}}

\newcounter{ccc}
\newcommand{\bcc}{\setcounter{ccc}{1}\theccc}
\newcommand{\icc}{\addtocounter{ccc}{1}\theccc}
\newcommand{\bfromcc}[1]{\setcounter{ccc}{#1} \theccc}

\newcounter{rc}
\newcommand{\brc}{\setcounter{rc}{0}}
\newcommand{\irc}{\addtocounter{rc}{1} \therc.}
\newcommand{\stream}{\ensuremath{\mathcal{S}}}
\newcommand{\sign}{\text{sgn}}

\newcommand\var{\mbox{\rm Var}}
\newcommand{\widebar}[1]{\ensuremath{\overline{#1}}}
\providecommand{\abs}[1]{\lvert#1\rvert}
\providecommand{\norm}[1]{\left\lVert{#1}\right\rVert}
\newcommand{\normb}[1]{\bigl\lVert {#1} \bigr\rVert}
\newcommand{\normB}[1]{\Bigl\lVert {#1} \Bigr\rVert}
\newcommand{\normbb}[1]{\biggl \lVert {#1} \biggr\rVert}
\providecommand{\real}{\text{Re }}

\providecommand{\card}[1]{\bigl\lvert#1\bigr\rvert}
\newcommand{\cardB}[1]{\Bigl \lvert {#1} \Bigr\rvert}
\newcommand{\cardbb}[1]{\biggl \lvert {#1} \biggr \rvert}
\providecommand{\prob}[1]{\ensuremath{\text{Pr}\left[ #1\right]}}
\newcommand{\probb}[1]{\ensuremath{\text{Pr}\bigl[ #1\bigr]}}
\newcommand{\probB}[1]{\ensuremath{\text{Pr}\Bigl[ #1\Bigr]}}

\newcommand{\expect}[1]{\ensuremath{\E\left[#1 \right]}}
\newcommand{\expects}[1]{\ensuremath{\E[ {#1} ]}}
\newcommand{\expectb}[1]{\ensuremath{\E\bigl[ {#1} \bigr]}}
\newcommand{\expectB}[1]{\ensuremath{\E\Bigl[ {#1} \Bigr]}}
\newcommand{\expectbb}[1]{\ensuremath{\E\biggl[ {#1} \biggr]}}
\newcommand{\expectBB}[1]{\ensuremath{\E\Biggl[ {#1} \Biggr]}}

\renewcommand{\exp}[1]{\ensuremath{\textrm{exp}\left\{ {#1} \right\}}}
\newcommand{\eqclass}[1]{\ensuremath{\langle {#1} \rangle}}

\newcommand{\probsub}[2]{\ensuremath{\text{Pr}_{#1}\left[ #2 \right]}}
\newcommand{\probsubb}[2]{\ensuremath{\text{Pr}_{#1}\bigl[ #2 \bigr]}}
\newcommand{\probsubB}[2]{\ensuremath{\text{Pr}_{#1}\Bigl[ #2 \Bigr]}}
\newcommand{\probsubbb}[2]{\ensuremath{\text{Pr}_{#1}\biggl[ #2 \biggr]}}
\providecommand{\variance}[1]{\textrm{Var}\left[ {#1} \right]}
\newcommand{\varianceb}[1]{\ensuremath{\textrm{Var}\bigl[ {#1} \bigr]}}
\newcommand{\varianceB}[1]{ \ensuremath{\textrm{Var}\Bigl[ {#1} \Bigr]}}
\newcommand{\variancebb}[1]{ \ensuremath{\textrm{ Var}\biggl[ {#1} \biggr]}}

\providecommand{\covar}[1]{\text{Cov}\left[ {#1} \right]}
\newcommand{\covarb}[1]{\ensuremath{\textrm{Cov}\bigl[ {#1} \bigr]}}
\newcommand{\covarB}[1]{ \ensuremath{\textrm{Cov}\Bigl[ {#1} \Bigr]}}
\newcommand{\covarbb}[1]{ \ensuremath{\textrm{Cov}\biggl[ {#1} \biggr]}}

\providecommand{\expectsub}[2]{\ensuremath{\mathbf{E}_{{#1}}\left[ {#2} \right]}}
\providecommand{\varsub}[2]{\textrm{Var}_{{#1}}\left[ {#2} \right]}
\providecommand{\varsubb}[2]{\textrm{Var}_{{#1}}\bigl[ {#2} \bigr]}
\providecommand{\varsubB}[2]{\textrm{Var}_{{#1}}\Bigl[ {#2} \Bigr]}
\providecommand{\varsubbb}[2]{\textrm{Var}_{{#1}}\biggl {#2} \biggr]}

\providecommand{\expectsubb}[2]{\mathrm{E}_{{#1}}\bigl[ {#2} \bigr]}
\providecommand{\expectsubbb}[2]{\mathrm{E}_{{#1}}\biggl[ {#2} \biggr]}
\providecommand{\expectsubB}[2]{\mathrm{E}_{{#1}}\Bigl[ {#2} \Bigr]}
\newcommand{\tpest}{\textsf{TPest}}
\newcommand{\est}{\textsf{AvgEst}}
\newcommand{\D}{\ensuremath{\mathcal{D}}}
\newcommand{\countsketch}{\textsf{CountSketch}}

\newcommand{\Hss}{\textsf{Hss}}

\newcommand{\Card}[1]{\biggl\lvert {#1} \biggr \rvert}
\newcommand{\nocollision}{\textsc{nocollision}}

\newcommand{\ip}[2]{\ensuremath{\langle {#1}, {#2} \rangle}}

\newcommand{\nocoll}{\textsc{nocollision}}

\newcommand{\powlow}[2]{\ensuremath{ {#1}^{\underline{{#2}}}}}

\newcommand{\lmargin}{\text{lmargin}}
\newcommand{\rmargin}{\text{rmargin}}
\newcommand{\midreg}{\text{mid}}
\newcommand{\goodest}{\textsc{goodest}}
\newcommand{\accuest}{\textsc{accuest}}
\newcommand{\bin}{\text{Binom}}
\newcommand{\smallhh}{\textsc{smallhh}}

\newcommand{\ftwores}[1]{\ensuremath{F_2^{\text{res}}\left({#1}\right)}}

\newcommand{\1}{\textbf{1}}

\newcommand{\level}{\mathsf{level}}

\newcommand{\G}{\ensuremath{\mathcal{G}}}
\renewcommand{\H}{\ensuremath{\mathcal{H}}}

\newcommand{\bvtheta}{\ensuremath{\bar{\vartheta}}}

\newcommand{\smallH}{\textsc{small-h}}
\newcommand{\lastlevel}{\textsc{goodlastlevel}}
\newcommand{\smallU}{\textsc{small-u}}
\newcommand{\smallu}{\textsc{small-u}}
\newcommand{\goodtopk}{\textsc{goodtopk}}
\newcommand{\gooditems}{\textsc{gooditems}}
\newcommand{\discover}{\textsc{discover}}
\newcommand{\ftworesb}[1]{\ensuremath{F_2^{\text{res}}({#1})}}

\newcommand{\fastams}{\textsf{Fast-AMS}}

\newcommand{\goodsign}{\ensuremath{\G_{\text{Sgn}}}}
\newcommand{\goodH}{\ensuremath{\G_H}}

\newcommand{\goodh}{\text{GoodH}}
\newcommand{\rank}{\text{rank}}

\newcommand{\tp}{\textsc{tp}}
\newcommand{\avtp}{\textsc{avgtp}}
\newcommand{\ghss}{\textsc{ghss}}
\newcommand{\shelf}{\textsc{shelf}}
\newcommand{\hh}{\textsf{HH}}
\newcommand{\topk}{\text{\sc Topk}}
\newcommand{\hattopk}{\ensuremath{\overline{\textsc{Topk}}}}
\newcommand{\epsbar}{\ensuremath{\bar{\epsilon}}}
\newcommand{\conj}[1]{\overline{#1}}
\newcommand{\sgn}{\textrm{sgn}}
\newcommand{\goodftwo}{\textsc{goodf}\ensuremath{_2}}
\newcommand{\smallres}{\textsc{smallres}}
\newcommand{\median}{\text{median}}

\newcommand{\consistw}{\textsf{~consist. with~}}

\title{High Probability Frequency Moment Sketches}
\author {Sumit Ganguly \\ Indian Institute of Technology,\\ Kanpur, India \\ \texttt{sganguly@cse.iitk.ac.in}
\and David P. Woodruff \\ School of Computing,\\ Carnegie Mellon University, \\ Pittsburg, USA \\ \texttt{dwoodruf@cs.cmu.edu}}

\begin{document}

\maketitle

\begin{abstract}
We consider the problem of sketching the $p$-th frequency moment of a vector, $p>2$, with multiplicative error at most $1\pm \epsilon$ and \emph{with high confidence} $1-\delta$. Despite the long sequence of work on this problem, tight bounds on this quantity are only known for constant $\delta$. While one can obtain an upper bound with error probability $\delta$ by repeating a sketching algorithm with constant error probability $O(\log(1/\delta))$ times in parallel, and taking the median of the outputs, we show this is a suboptimal algorithm! Namely, we show {\it optimal} upper and lower bounds of $\Theta(n^{1-2/p} \log(1/\delta) + n^{1-2/p} \log^{2/p} (1/\delta) \log n)$ on the sketching dimension, for any constant approximation. Our result should be contrasted with results for estimating frequency moments for $1 \leq p \leq 2$, for which we show the optimal algorithm for general $\delta$ is obtained by repeating the optimal algorithm for constant error probability $O(\log(1/\delta))$ times and taking the median output. We also obtain a matching lower bound for this problem, up to constant factors.
\end{abstract}

%
\section{Introduction} \label{sec:intro}
The frequency moments problem is a very well-studied and foundational problem in the data stream literature. In the data stream model, an algorithm may use only  sub-linear
 memory  and a single pass over the data to summarize a  data stream that
appears as  a sequence of  incremental updates.  A  data stream may be
viewed as a sequence of $m$  records of the form $( (i_1, v_1), (i_2, v_2), \ldots, (i_m, v_m))$, where, $i_j
\in
 [n] = \{1,2, \ldots, n\}$ and $v_j   \in  \R$.
The record $(i_j,v_j)$ changes the $i_j$th coordinate $x_{i_j}$ of an underlying
$n$-dimensional vector $x$ to  $ x_{i_j} + v_{j}$. Equivalently, for $i \in [n]$,
$x_i = \sum_{j: i_j = i} v_j$. Note that $v_j$ may be positive or negative, which corresponds to the
so-called turnstile model in data streams. Also, the $i$-th coordinate of $x$ is sometimes referred to
as the {\it frequency} of item $i$, though note that it can be negative in the turnstile model.
The $p$-th
moment of $x$ is defined to be $F_p = \sum_{i \in [n]}\abs{x_i}^p$, for a real number
$p\ge 0$, which for $p \geq 1$ corresponds to the $p$-th power of the $\ell_p$-norm $\|x\|_p^p$ of $x$.

The $F_p$ estimation problem with approximation parameter $\epsilon$ and failure probability $\delta$ is:
design an algorithm that makes one pass over the input stream and  returns  $\hat{F}_p$ such that
$\probb{\abs{\hat{F}_p - F_p} \le \epsilon F_p} \ge 1- \delta$. Such an algorithm is also referred to as an $(\epsilon, \delta)$-approximation of $F_p$. This is a  problem that is among the ones that has received the most attention in the data stream literature, and we only give a partial list of work on this problem \cite{AMS99,Andoni12,ako11,anpw:icalp13,B-YJKS:stoc02,bgks06,bo10,cks03,ck04,g04,g04b,g11,indy:focs00,indyk2005optimal,knpw11,knw:pods10b,knw:soda10,liwood:random13,mw10,ww15}.

We study the class of algorithms based on linear sketches, which store only a sketch $S \cdot x$ of the input vector $x$ and a (possibly randomized) matrix $A$. This model is well-studied, both for the problem of estimating norms and frequency moments
\cite{anpw:icalp13,hw13,liwood:random13,pw12}, and for other problems such as estimating matrix
norms \cite{lnw14a}, and matching size
\cite{AKLY16,k15}. The efficiency is measured in terms of the {\it sketching dimension} which is the maximum
number of rows of a matrix $S$ used by the algorithm. Since the algorithm is randomized, it may
choose different $S$ based on its randomness, so the maximum is taken over its randomness. Linear sketches are particularly useful for data streams since given an update
$(i_j, v_j)$, one can update $Sx$ as $S(x+v_j e_{i_j}) = Sx + Sv_j e_{i_j}$,
where $e_{i_j}$ is the standard unit vector in the $i_j$-th direction. They are
also used in distributed environments, since given $S \cdot x$ and618 $S \cdot y$,
one can add these to obtain $S \cdot (x+y)$, the sketch of $x+y$.

When $0 < p \leq 2$, one can achieve a sketching dimension of $O(\epsilon^{-2} \log(1/\delta))$ independent of $n$ \cite{AMS99,knpw11,knw:soda10}, while for $p = 0$ the sketching
dimension is $O(\epsilon^{-2} (\log(1/\epsilon) + \log \log n) \log(1/\delta))$ \cite{knw:pods10b}. For $p = 2$ there is a sketching lower bound of $\Omega(\epsilon^{-2} \log(1/\delta))$ \cite{kmn11}, which implies an optimal algorithm for general $\delta$ is to run an optimal algorithm with error probability $1/3$ and take the median of $O(\log(1/\delta))$ independent repetitions. As a side result, we show in the full version a lower bound
of $\Omega(\epsilon^{-2} \log(1/\delta))$ for any $1 \leq p < 2$, which shows this
strategy of amplifying the success probability by $O(\log 1/\delta)$ independent
repetitions is also optimal for any $1 \leq p < 2$.

Perhaps surprisingly, for $p > 2$, the sketching dimension needs to be polynomial in $n$, as first shown in \cite{pw12}, with the best known lower bounds being $\Omega(n^{1-2/p} \log n)$ \cite{anpw:icalp13} for constant $\epsilon$ and $\delta$, and $\Omega(n^{1-2/p} \epsilon^{-2})$ for constant $\delta$ \cite{liwood:random13}. Regarding upper
bounds, we present the long list of bounds in Table \ref{tab:results}. The best known
upper bound is $O(n^{1-2/p}\epsilon^{-2} \log(1/\delta) + n^{1-2/p} \epsilon^{-4/p} \log n \log(1/\delta))$ \cite{g11}. This is tight only when $\epsilon$ and $\delta$ are constant, in which case it matches \cite{anpw:icalp13}, or when $\delta$ is constant and $\epsilon < 1/\poly(\log n)$, since it matches \cite{liwood:random13}.
%
%
\begin{table*}[!ht]
\centering
\scalebox{1}{
\begin{tabular}{|c|c|}
\hline
$F_p$ Algorithm & Sketching Dimension\\
\hline
\cite{indyk2005optimal}  & $O(n^{1-2/p}\epsilon^{-O(1)}\log^{O(1)} n \log(1/\delta))$ \\
\cite{bgks06} & $O(n^{1-2/p}\epsilon^{-2-4/p}\log n \log(M) \log(1/\delta))$ \\
\cite{mw10} & $O(n^{1-2/p} \epsilon^{-O(1)}\log^{O(1)} n \log(1/\delta))$ \\
\cite{ako11} & $O(n^{1-2/p} \epsilon^{-2-6/p}\log n \log(1/\delta))$ \\
\cite{bo10} & $O(n^{1-2/p} \epsilon^{-2-4/p}\log n \cdot g(p,n) \log(1/\delta))$ \\
\cite{Andoni12} & $O(n^{1-2/p} \log n \epsilon^{-O(1)} \log(1/\delta))$\\
\cite{g11}, {\bf Best upper bound} & $O(n^{1-2/p}\epsilon^{-2} \log(1/\delta) + n^{1-2/p} \epsilon^{-4/p} \log n \log(1/\delta))$\\
\hline
\end{tabular}
}
\caption{
Here, $g(p,n) = \min_{c \textrm{ constant }} g_c(n)$, where $g_1(n) = \log n$, $g_c(n) = \log(g_{c-1}(n))/(1-2/p)$.
We start the upper bound timeline with \cite{indyk2005optimal}, since that
is the first work which achieved an exponent of $1-2/p$ for $n$. For earlier work which achieved worse exponents for $n$,
see \cite{AMS99,ck04,g04,g04b}.
}\label{tab:results}
\end{table*}
%
%
\subsection{Our Contributions} In this work, we show {\it optimal} upper and lower bounds of $\Theta(n^{1-2/p} \log(1/\delta) + n^{1-2/p} \log^{2/p} (1/\delta)\allowbreak \log n)$ on the sketching dimension for $F_p$-estimation, for any $p > 2$, and for any constant $\epsilon$. Our upper bound shows, perhaps surprisingly, that the optimal bound is {\it not} to run $O(\log (1/\delta))$ independent repetitions of a constant success probability algorithm and report the median of the outputs. Indeed, such an algorithm would give a worse $O(n^{1-2/p} \log(1/\delta) \log n)$ sketching dimension.

For general $\epsilon$, our upper bound is $O(n^{1-2/p} \epsilon^{-2} \log (1/\delta) + n^{1-2/p} \epsilon^{-4/p} \log ^{2/p}(1/\delta) \log n)$ and our lower bound is
 $\Omega(n^{1-2/p} \epsilon^{-2} \log (1/\delta) + n^{1-2/p} \epsilon^{-2/p} \log^{2/p} (1/\delta) \log n)$, which differ by at most an $\epsilon^{-2/p}$ factor. Our results thus come close to resolving the complexity for general $\epsilon$ as well.

Our results should be contrasted to $1 \leq p \leq 2$,
for which the optimal sketching dimension for such $p$ is $\Theta(\epsilon^{-2} \log(1/\delta))$, and so for these $p$ it is optimal to run $O(\log(1/\delta))$ independent repetitions of a constant probability algorithm. Here we strengthen the $\Omega(\epsilon^{-2} \log(1/\delta))$ bound for $p = 2$ of \cite{kmn11} by showing the same bound for $1 \leq p \leq 2$.

\subsubsection{Overview of Upper Bound}
In order  to obtain a confidence of $1-\delta$, we use the $d = \lceil\log (1/\delta)\rceil$th moment of an  estimate $\hat{F}_p$ of $F_p$. Since we are unable to use the $d$th moment of the Taylor polynomial estimator of \cite{g:arxiv15}, we employ a different estimator $X_i$ for estimating individual coordinates $\abs{x_i}$ and use it as $X_i^p$ to estimate $\abs{x_i}^p$. This estimator is based on (a) using random $q$th roots of unity for sketches instead of standard Rademacher variables, and (b) taking the \emph{average} of the estimates from those tables where the item does not collide with the set of  top-$k$ estimated heavy hitters. 

{\bf The Shelf Structure.}
The algorithm uses two structures, namely, a \ghss-like structure from \cite{g:arxiv15} and a new shelf structure , which is our main algorithmic novelty (both formally defined later). The shelf structure is necessary when the failure probability is $\delta =n^{-\omega(1)}$; otherwise, for $\delta =n^{-\Theta(1)}$, somewhat surprisingly the \ghss~structure of \cite{g:arxiv15} alone suffices with parameter $C = n^{1-2/p}( \epsilon^{-2} \log (1/\delta)/\log (n) + \epsilon^{-4/p} \log^{2/p}(1/\delta))$ and number of measurements $O(C \log n)$, which requires a some-what  intricate $d$-th moment analysis of the \ghss~ structure.


\begin{figure}[htbp]
\includegraphics[width=6.5in]{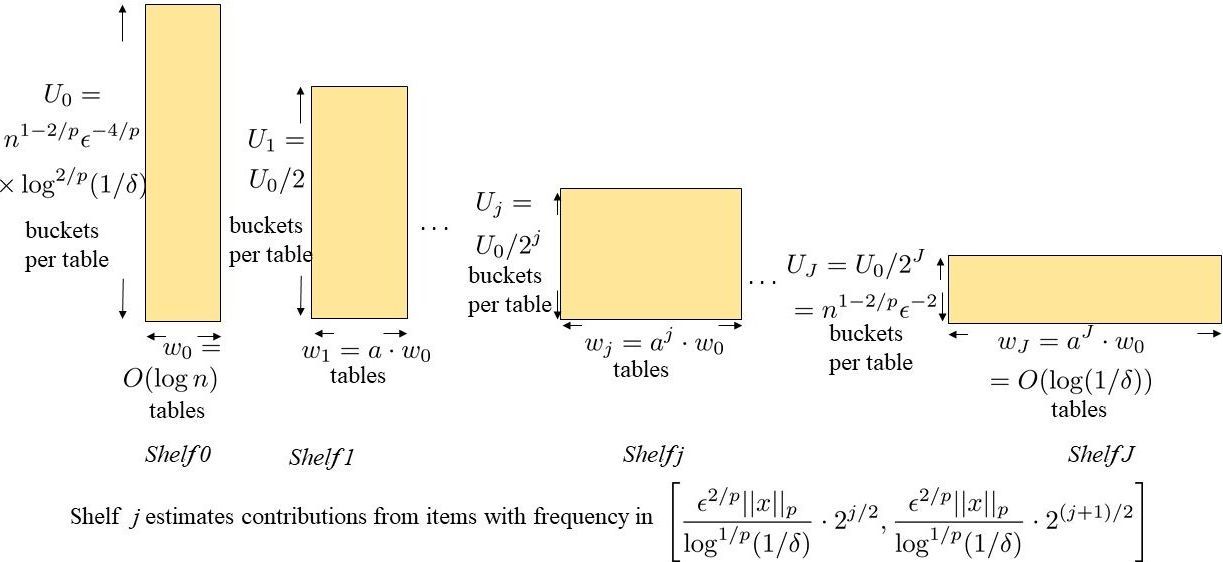}
\caption{Shelf structure and level sets for each shelf index $j$ whose contribution to $F_p$ is estimated accurately.}
\label{fig:shelf}
\end{figure}

The shelf structure is partitioned into shelves, indexed from $j=0, \ldots, J$, for a value $J$ which is specified below. Each shelf consists of a pair of  \countsketch~like structures, $\hh_j$ and $\est_j$. The number of buckets in the tables of the $j$th shelf is $H_j$ and the number of tables in the $j$th shelf of the $\hh_j$ structure is $w_j$ and of the $\est_j$ structure is $2w_j$.
We set $H_J = \Theta( n^{1-2/p} \epsilon^{-2})$ and $w_J = \Theta(\log (1/\delta))$, while $H_0 = \Theta(n^{1-2/p} \epsilon^{-4/p} \log^{2/p}(1/\delta))$ and $w_0 = s= \Theta(\log n)$.
In particular, the shelf numbered zero coincides with the \ghss~level zero.
The input vector $x$ is provided as input to all the shelves' structures. The levels of the \ghss~structure and the shelves of the shelf structure can also be viewed as a single structure starting from shelf numbered $J, J-1, \ldots, 0$, and level numbers $1,2 \ldots, L$.
Here we consider the interesting case when $H_J = o(H_0)$, otherwise, (i.e., when $H_J = \Omega(H_0)$) there are just two shelves and $J=1$.
The table height $H_j = H_0b^j$ decays geometrically with parameter $0<b<1$ and the table width $w_j = w_0a^j $ increases geometrically with parameter $a>1$. Note that the parameters $a$ and $b$ determine $J$. By requiring that $\abs{1-ab} = \Omega(1)$, we ensure that the  total number of measurements of the shelf structure is  $ \sum_{j=0}^J H_jw_j = O(H_0w_0+ H_Jw_J)$, no matter which value of $J$ we choose. For the shelf structure, frequency-wise thresholds are defined as  $U_j = O(\hat{F}_2/H_j)^{1/2}$, for $j = 0,1, \ldots, J$. The shelf frequency group corresponding to shelf $j$ is $S_j = [U_j, U_{j+1})$, where, $U_{J+1} = \infty$ and $U_0 = T_0$. We sometimes conflate $S_j$ with the set of items whose frequency belongs to $S_j$.  The frequency group $G_0$ is defined as $ [T_0, U_1] $ and coincides with $S_0$. 

So why a shelf structure? Suppose for simplicity that $\epsilon$ is a constant. Consider a vector $x$ which has a constant number of ``large'' coordinates of value $\Theta(n^{1/p})$, and $\Theta(n)$ remaining ``small'' coordinates of absolute value $O(1)$. Then we need to find all the large coordinates to accurately estimate $F_p$ up to a small constant factor. This is well-known to be possible with $\Theta(n^{1-2/p})$ buckets in the $J$-th shelf, since with probability $1-\delta$, each of the large coordinates will not collide with any other large coordinate in more than a small constant fraction of tables. Note that in each table, in each bucket containing a large coordinate, the ``noise'' in the bucket from small coordinates will be $C n^{1/p}$ for an arbitrarily small constant $C > 0$ with constant probability, and so this will happen in most buckets containing a large coordinate in most tables with probability $1-\delta$.

However, now consider a vector $x$ which has $\Theta(\log(1/\delta))$ ``large-ish'' coordinates of value $\Theta(n^{1/p}/\log^{1/p}(1/\delta))$, and $\Theta(n)$ remaining ``small'' coordinates of absolute value $O(1)$, as before. Then we again need to find most of the ``large-ish'' coordinates to accurately estimate $F_p$ up to a constant factor. We also {\it cannot} subsample and try to estimate how many large-ish coordinates there are from a subsample. Indeed, since there are only $O(\log(1/\delta))$ total large-ish coordinates, sub-sampling would not accurately estimate this total with probability at least $1-\delta$. However, to find these ``large-ish'' coordinates, we need to increase the number of buckets from $\Theta(n^{1-2/p})$ to $\Theta(n^{1-2/p} \cdot \log^{2/p}(1/\delta))$ just so that in a bucket containing one of these coordinates, with constant probability the noise will not be too large. But if we then want this to happen for a $1-\delta$ fraction of tables, we still need $\Theta(\log(1/\delta))$ tables, which gives overall $\Theta(n^{1-2/p} \cdot \log^{1+2/p}(1/\delta))$ measurements, which is above our desired total of $O(n^{1-2/p} (\log(1/\delta) + \log(n) \log^{2/p}(1/\delta)))$ measurements.

So what went wrong? The key idea in our analysis is to relax the requirement of trying to recover all the larg-ish coordinates with probability $1-\delta$. Suppose instead of $\Theta(\log(1/\delta))$ tables we just use $\Theta(\log n)$ tables. Then with probability $1-1/n$, there may be two large-ish coordinates which collide and cancel with each other in every single table, and we have no way of recovering them. However, we are able to show that with probability $1-\delta$, only $O(\log(1/\delta)/\log n)$ large-ish coordinates will fall into this category, and neglecting this roughly $(1-1/\log n)$ fraction of the large-ish coordinates will not affect our estimate of $F_p$ by more than a constant factor. And indeed, our $0$-th shelf has exactly $\Theta(n^{1-2/p} \cdot \log^{2/p}(1/\delta))$ buckets and $\Theta(\log n)$ tables, so is exactly suited for finding these large-ish coordinates. In general, we can show that one of our shelves will be able to handle every vector with coordinates of magnitude between the large and large-ish coordinates. Again, by choosing the shelf structure carefully, the total number of measurements is dominated by that in the zero-th plus the $J$-th shelf, giving us $O(n^{1-2/p} (\log(1/\delta) + \log(n) \log^{2/p}(1/\delta)))$ total measurements, and explaining where the $\log^{2/p}(1/\delta)$ in the upper bound comes from.

{\bf The Non-Large-ish Coordinates.}
Our shelves are designed to estimate the contribution to $F_p$ from all coordinates of absolute value at least $\Theta(n^{1/p}/\log^{1/p}(1/\delta))$. For coordinates of smaller value, we can now afford to sub-sample and apply the same $0$-th shelf structure to estimate their contribution to $F_p$. We apply
the \ghss~structure, which is analogous to the structure presented in \cite{g:arxiv15} and has $L+1$ levels corresponding to $l=0, \ldots, L$, and consists of a pair of \countsketch~like structures $\hh_l$ and $\est_l$ at each level. The sub-sampling technique and the associated frequency-wise thresholds and frequency groups are defined analogously (with new parameters) to \cite{g:arxiv15}.

A notable difference with \cite{g:arxiv15} is that
the \est~structures in the \ghss~and shelf structures use
 complex $q$th roots of unity and return the average of table estimates instead of the median of table estimates used by \countsketch, which are novelties in this context, though have been used for other data stream problems \cite{knpw:stoc11}.
We have that $\expect{X_i^p} = \abs{x_i}^p (1 \pm n^{-\Omega(1)})$ for
our estimator $X_i$ of $\abs{x_i}$, and thus $X_i^p$ provides a nearly unbiased
estimator of $\abs{x_i}^p$.
Additionally, we use  averaging in the definition of $X_i$ instead of the median to allow for a tractable, though intricate calculation of the $d$-th moment of the sum of the $p$-th powers of $X_i$. 
 \eat{The use of complex roots of unity implies  that $\expect{Z_i^l} = 0$, for $1 \le l \le q-1$. Using the average allows us to conveniently upper bound the $d$th central moment $\expect{(Z_{i}\conj{Z_{i}})^d}$ as $(c'dF_2/(sC_l))^d$, for some constant $c' > 0$, assuming the $\nocollision(i)$ holds along with a few other standard events \cite{g:arxiv15}.  In particular, this implies that $\expect{X_i^p} = \abs{x_i}^p (1\pm n^{-\Omega(1)})$, which was observed in \cite{knpw:stoc10}.}
\eat{The event $\nocoll(i)$ admits a useful \emph{negative dependence} property. Let $i$ and $j$ be two elements sampled in $\bar{G}_l$ and $\bar{G}_r$. Consider $\expect{Z_i^a Z_j^b}$ (or $\expectb{Z_i^a \conj{Z_j}^b}$). By $\nocoll$, there is no overlap in the set of buckets $R_l(i)$ to which $i$ maps and the set $R_r(j)$ to which $j$ maps. }\eat{If say, $i$ and $j$ map to different tables, then, the corresponding contributions are independent. If $i$ and $j$ map to the same table, then they must map to different buckets, in which case, the sets of coordinates of $x$ contributing to the sum for these two buckets are disjoint.}   \eat{By the same argument, for $i_k \in \bar{G}_{l_k}$, $k=1,2, \ldots, r$,
  $\expectB{ Z_{i_1}^{a_1} \cdots  Z_{i_r}^{a_r} \mid \G} = 0$,
  for $a_1 + \ldots + a_r = d < q$.}


\eat{We now  bound the $d = O(\log (1/\delta))$th central
 moment of $\hat{F}_p = \sum_{i \in [n]} Y_i$, that is, $\expect{ \card{\sum_{i\in [n]} Y_i - \expect{Y_i}}^{2d}}$, which is shown to be $\left(cdF_p (F_2/B)^{p/2}\right)^d$, for some constant $c$. Hence, it follows from the $d$th moment method that
  \begin{align*}
  \prob{ \card{ Y_i - F_p^L} \ge (\epsilon/2) F_p} \le \frac{ (cd F_p(F_2/B))^d}{(\epsilon F_p/2)^{2d}} \le (10)^{-2d} = \delta^{O(1)}
  \end{align*}
  assuming $B = O(n^{1-2/p} \epsilon^{-4/p}d^{2/p})$.
The  sketching dimension is $O(n^{1-2/p} \epsilon^{-2} \log (1/\delta) + n^{1-2/p} \epsilon^{-4/p} \log^{2/p}(1/\delta) \log n)$.}

\eat{ However, it is possible that some of these items $i$ are over-estimated and get erroneously classified into the shelves $1, \ldots, J$. The confidence of the estimates increases as $1-\exp{-w_j}$ for the $j$th shelf that has $w_j = w_0 a^j$ tables in the \hh$_j$~and $\est_J$~structure, with $w_0 = O(\log n)$ and $w_J = O(\log (1/\delta))$. If we let $z_j$ to be the number of items that get erroneously estimated at shelf $j$, then it follows that (see full version) $ \sum_{j=1}^{J-1} w_j z_j = O(\log (1/\delta))$, with probability $1-O(\delta)$. The total error is bounded above as $ \text{Error}^{\shelf} = \sum_{j=1}^{J-1} U_{j+1} z_j^p$. Recall that the table sizes decay geometrically as $H_j = H_0 b^j$, for some $\Omega (1) = b < 1$. It can be then shown that (see full version) that $\text{Error}^{\shelf} \le \max(O(\epsilon^p F_p), \epsilon^2 F_p/\log n)$, depending on whether $ab^{p/2} < 1$ or $ab^{p/2} > 1$.
}

\subsubsection{Overview of Lower Bounds} We give an overview for the case of constant $\epsilon$. In both cases we start by applying Yao's minimax principle for which we fix $S$ and then design a pair of distributions $\alpha$ and $\beta$ which must be distinguished by an $(\epsilon, \delta)$-approximation algorithm for $F_p$. We can also assume the rows of $S$ are orthonormal, since a change of basis to the row space of $S$ can always be applied in post-processing.

{\bf Our $\Omega(n^{1-2/p} \epsilon^{-2/p} (\log^{2/p} 1/\delta) \log n)$ bound.} This is our technically more involved lower bound. We first upper bound the variation distance using the $\chi^2$-divergence as in \cite{anpw:icalp13} 
 and work only with the latter. We let $\alpha = N(0, I_n)$ be an $n$-dimensional isotropic Gaussian distribution, while $\beta$ is a distribution formed by sampling an $x \sim N(0,I_n)$, together with a random subset $T \subset [n]$ of size $O(\log(1/\delta))$, and outputting $z = x + \sum_{i \in T} (Cn^{1/p}/t^{1/p}) e_i$, where $e_i$ is the $i$-th standard unit vector and $C > 0$ is a constant. For $y \sim \alpha$ and $z \sim \beta$, one can show that with probability $1-O(\delta)$, one has that $\|z\|_p^p$ is a constant factor larger than $\|y\|_p^p$, since $\|y\|_p^p$ and $\|x\|_p^p$ are concentrated at $\Theta(n)$, while $\sum_{i \in T} C^p n/t = \Theta(n)$.

A common technique in upper bounds, including our own, is the notion of subsampling, whereby a random fraction of roughly $1/2^i$ of the $n$ coordinates are sampled, for each value of $i \in O(\log n)$, and information is then gathered for each $i$ and combined into an overall estimate of $F_p$. We choose our hard distributions so that {\it subsampling does not help.} Indeed, if one subsamples half of the coordinates of $z \sim \beta$, with probability $\Omega(\delta)$ all of the coordinates in $T$ will be removed, at which point $z$ is indistinguishable from $y \sim \alpha$. Therefore, our pair of distributions suggests itself as being hard for $(\Theta(1), \delta)$-approximate $F_p$ algorithms.

What drives our analysis is conditioning our distributions on an event $\mathcal{G}$ which only happens with probability $\Omega(\delta)$. Note that for any algorithm which can distinguish samples from $\alpha$ from those from $\beta$ with probability at least $1-\delta$, it must still have probability $9/10$, say, of distinguishing the distributions given an event $\mathcal{G}$ which occurs for samples drawn from $\beta$. The event $\mathcal{G}$ corresponds to every $i \in T$ having the property that the corresponding column $S_i$ of our sketching matrix $S$ has squared length at most $2r/n$, where $r$ is the number of rows of $S$. By a Markov bound, half of the columns of $S$ have this property, and since $T$ has size $O(\log 1/\delta)$, with probability $\Omega(\delta)$, event $\mathcal{G}$ occurs.

We analyze the $\chi^2$-divergence of the distributions $\alpha$ and $\beta$ conditioned on $\mathcal{G}$. One technique helpful for this is an equality given by Fact \ref{mixture}, which states that for $p$ a distribution on $\R^n$, that $\chi^2(N(0, I_n) \ast p, N(0, I_n)) = {\bf E}[e^{\langle X, X' \rangle}] - 1,$ where $X$ and $X'$ are independently drawn from $p$. This equality was used in \cite{anpw:icalp13,lnw14a,woodruff:book} among other places. In our case, the inner product of $X$ and $X'$ corresponds to an inner product $P$ of two independent random sums of $t$ columns of $S$, restricted to only those columns with squared length at most $2r/n$. Let the $t$ columns forming $X$ be denoted by $T$ and the $t$ columns forming $X'$ be denoted by $U$.

Critical to our analysis is bounding ${\bf E}[P^j]$ for large powers of $j$, see Lemma \ref{lem:key}. One can think of indexing the rows of $S^TS$ by $T$ and the columns of $S^TS$ by $U$, where $S^TS$ is an $n \times n$ matrix. Let $M$ denote the resulting submatrix.
The inner product of interest is then $e_T^T M e_U$, where $e_T = \sum_{i \in T} e_i$ and $e_{U} = \sum_{i \in U} e_i$.

Our bound in Lemma \ref{lem:key} is very sensitive to minor changes. Indeed, if instead of showing ${\bf E}[P^j] \leq \left (\frac{t^2}{r^{1/2}} \right ) \cdot \left (\frac{16r}{n} \right )^j$, we had shown  ${\bf E}[P^j] \leq \left (\frac{t^2}{r^{1/2}} \right ) \cdot \left (\frac{16rt}{n} \right )^j$ or  ${\bf E}[P^j] \leq \left (\frac{t^2}{r^{1/2}} \right ) \cdot \left (\frac{16r \log n}{n} \right )^j$, our resulting bound for the $\chi^2$-divergence would be larger than $1$. For instance, a natural approach is to instead consider $e_T = \sum_{i \in T} \sigma_i e_i$ and $e_{U} = \sum_{i \in U} \sigma_i e_i$ where the $\sigma_i$ are independent random signs (i.e., $\Pr[\sigma_i = 1] = \Pr[\sigma_i = -1] = 1/2$), which would correspond to redefining the distribution $\beta$ above to sample $z = x + \sum_{i \in T} (Cn^{1/p}/t^{1/p}) \sigma_i e_i$. Without further conditioning the $\sigma_i$ variables, the $\chi^2$-divergence can be as large as $n^{\Theta(\log(1/\delta))}$. This is because with probability roughly $2^{-2t}$, over the choice of the $\sigma_i$, one has $\sum_{i \in T} \sigma_i e_i$ and $\sum_{i \in U} \sigma_i e_i$ both being very well aligned with the top singular vector of $M$ (if say, $S$ were a random matrix with orthonormal rows), at which point our desired inner product is too large. Instead, by setting all $\sigma_i = 1$, that is, by considering $e_T = \sum_{i \in T} e_i$ and $e_{U} = \sum_{i \in U} e_i$ as we do, we rule out this possibility.

We prove Lemma \ref{lem:key} by expanding ${\bf E}[P^j]$ into a sum of products,
each having the form $\prod_{w=1}^j |\langle S_{a_w}, S_{b_w} \rangle|$
where the $S_{a_w}, S_{b_w}$ are columns of $S$. One thing that matters in
such products is the multiplicities of duplicate columns that appear in a
product. We split
the summation by what we call $y$-{\it patterns}.
We can think of a $y$-pattern as a partition of $\{1, 2, \ldots, j\}$ into $y$ non-empty pieces.
We can also define a $z$-pattern as a partition of $\{1, 2, \ldots, j\}$ into $z$ non-empty pieces.
We analyze the expectation
for a particular pair $P, Q$, where $P$
is a $y$-pattern and $Q$ is a $z$-pattern for some $y, z \in \{1, 2, \ldots, j\}$, that is, we only sum over
pairs of $j$-tuples $a_1, \ldots, a_j$ and $b_1, \ldots, b_j$ for which for each
non-empty piece $\{d_1, \ldots, d_{\ell}\}$ in $P$, where $d_i \in \{1, 2, \ldots, j\}$ for all $i$
and $\ell \leq j$,
we have
$a_{d_1} = a_{d_2} = \cdots = a_{d_{\ell}}$. Similarly
for each $\{e_1, \ldots, e_{m}\}$ in $Q$, where $e_i \in \{1, 2, \ldots, j\}$ for all $i$ and $m \leq j$,
we have $b_{e_1} = b_{e_2} = \cdots = b_{e_m}$. We also require
if $d, d' \in \{1, 2, \ldots, j\}$ are in different pieces of $P$,
then $a_{d} \neq a_{d'}$. Similarly, if $e, e' \in \{1, 2, \ldots, j\}$ are in different pieces of $Q$, then
$b_e \neq b_{e'}$. Thus, each pair of $j$-tuples is valid for exactly one pair $P, Q$ of patterns.

The valid pairs of $j$-tuples for $P$ and $Q$
define a bipartite multi-graph as follows.
In the left partition we create a node for each non-empty piece of $P$, and in the right
partition we create a node for each non-empty piece of $Q$. We include an edge from a node
$a$ in the left to a node $b$ in the right if $i \in a$ and $i \in b$ for some $i \in \{1, 2, \ldots, j\}$.
If there is more than one such $i$, we include an edge with multiplicity corresponding to the number of such $i$.
This bipartite graph only depends on $P$ and $Q$. We consider a maximum matching in this multi-graph,
and we upper bound the contribution of valid pairs for $P$ and $Q$ based on that matching. By
summing over all pairs $P, Q$, we obtain our bound on ${\bf E}[P^j]$.

{\bf Our $\Omega(n^{1-2/p} \epsilon^{-2} \log(1/\delta))$ bound.} This bound uses the same distributions $\alpha$ and $\beta$ as in \cite{liwood:random13}, where an $\Omega(n^{1-2/p}\epsilon^{-2})$ bound was shown, but we strengthen it to hold for general $\delta$. To do so, we use an exact characterization of the variation distance between multi-variate Gaussians with shifted mean by relating it to the univariate case (given in the full version), and a strong concentration of bounded Lipshitz functions with respect to the Euclidean norm (given in the full version). These enable us to show with probability $1-O(\delta)$, vectors sampled from $\alpha$ and $\beta$ have $\l_p$-norm differing by a $1+\epsilon$ factor. By the definition of $\alpha$ and $\beta$, we can then reduce the problem to distinguishing an isotropic Gaussian from an isotropic Gaussian plus a small multiple of a fixed column of $S$, which typically has small norm since $S$ has orthonormal rows. We then apply a bound as derived above (see full version).

{\bf Our $\Omega(\epsilon^{-2} \log(1/\delta))$ bound for $1 \leq p < 2$.} This lower bound uses similar techniques to our
lower bound of $\Omega(n^{1-2/p} \epsilon^{-2} \log(1/\delta))$, but considers distinguishing an isotropic Gaussian
$N(0, I_n)$ from an $N(0, (1+\epsilon) I_n)$ random variable. Here we set $n = \Theta(\epsilon^{-2} \log(1/\delta))$,
and show the $p$-norms of samples from the two distributions differ by a $(1+\epsilon)$-factor with probability $1-\delta$.
Using that $S$ has orthonormal rows, the images of the two distributions under our sketching matrix $S$ correspond to
$N(0, I_r)$ and $N(0, (1+\epsilon) I_r)$, where $r$ is the number of rows of $S$. The result then follows by using the product structure of Hellinger distance.

\section{Our Lower Bounds} \label{sec:lb:intro}
We first present an overview of the lower bounds in a little more detail.  We defer both our $\Omega(n^{1-2/p} \epsilon^{-2} \log(1/\delta))$ lower bound for $p > 2$ and our $\Omega(\epsilon^{-2} \log(1/\delta))$
lower bound for $1 \leq p < 2$ entirely to the Appendix. Here we focus on our lower bound of $\Omega(n^{1-2/p} \epsilon^{-2/p} (\log^{2/p}(1/\delta)) \log n)$ for $p > 2$.
See also Section \ref{sec:intro} for an overview of all of our lower bounds.

We assume $\delta$-{\bf Bound4}, which is that $\log(1/\delta) \leq (n^{1-2/p} \eps^{-2/p} (\log^{2/p} 1/\delta) \log n)^{1/4} n^{-c'}$,
for a sufficiently small constant $c' > 0$. Since $p > 2$ is an absolute constant, independent of $n$, this just states that $\delta \geq 2^{-n^{c''}}$ for a sufficiently small constant $c'' > 0$. There are
other bounds - $\delta$-{\bf Bound1}, $\delta$-{\bf Bound2}, and $\delta$-{\bf Bound3} - see the full version in the Appendix, but these are not assumptions but rather implied by relations between the various parameters (e.g., otherwise the $\Omega(n^{1-2/p} \epsilon^{-2} \log(1/\delta))$ lower bound is stronger).
%
%

Let $p$ and $q$ be probability density functions of continuous distributions. The $\chi^2$-divergence from $p$ to $q$ is
$\chi^2(p, q) = \int_x \left (\frac{p(x)}{q(x)} - 1 \right )^2 q(x) dx.$

\begin{fact}(\cite{Tsybakov}, p.90) \label{chitvdSHORT}
For any two distributions $p$ and $q$, we have $D_{TV}(p, q) \leq \sqrt{\chi^2(p,q)}$.
  \end{fact}
We need a fact about the distance between a Gaussian location mixture to a Gaussian distribution.
\begin{fact}(p.97 of \cite{is03})\label{mixtureSHORT}
  Let $p$ be a distribution on $\mathbb{R}^n$. Then
  $\chi^2(N(0,I_n) \ast p, N(0,I_n)) = {\bf E}[e^{\langle X, X' \rangle}] - 1,$
  where $X$ and $X'$ are independently drawn from $p$.
\end{fact}
%
Let $T$ be a sample of $t \eqdef \log_3(1/\sqrt{\delta})$ coordinates $i \in [n]$ without replacement.

\noindent
{\bf Case 1:}
Suppose $y \sim N(0,I_n)$, and let $\alpha'$ be the distribution of $y$.

\noindent
{\bf Case 2:}
Let $z = x + \sum_{i \in T} \frac{C' \epsilon^{1/p} E_{n-t}}{t^{1/p}} e_i,$ where $x \sim N(0, I_n)$
and $E_{n-t} = {\bf E}_{x \sim N(0, I_{n-t})}[\|x\|_p]$.
Note that $x$ and $T$ are independent. Also, $C' > 0$ is a sufficiently large constant.
Let $\beta'$ be the distribution of $z$.

In the full version (in the Appendix) we show that for the sketching algorithm to be correct,
$D_{TV}(\bar{\alpha'}, \bar{\beta'}) \geq 1-2\delta$, where
$\bar{\alpha'}$ is the distribution of $S \cdot y$ for $y \sim \alpha'$ and
$\bar{\beta'}$ is the distribution of $S \cdot z$ for $z \sim \beta'$.

Fix an $r \times n$ matrix $S$ with orthonormal rows. Important to our proof will be the existence of a
subset $W$ of $n/2$ of the columns for which $\|S_i\|^2 \leq 2r/n$ for all $i \in W$.
To see that $W$ exists,
consider a uniformly random column $S_i$ for $i \in [n]$. Then ${\bf E}[\|S_i\|^2] = r/n$ and so by Markov's inequality, at least a $1/2$-fraction
of columns $S_i$ satisfy $\|S_i\|^2 \leq 2r/n$. We fix $W$ to be an arbitrary subset of $n/2$ of these columns.

Suppose we sample $t$ columns of $S$ without replacement, indexed by $T \subset [n]$.
Let $\mathcal{G}$ be the event that the set $T$ of sampled columns belongs to the set $W$.

\begin{lemma}\label{lemma:condition}
  $\Pr[\mathcal{G}] \geq \sqrt{\delta}$.
\end{lemma}
Let $\alpha_G = \bar{\alpha'} \mid \mathcal{G}$ and
$\beta_G = \bar{\beta'} \mid \mathcal{G}$. By the triangle inequality,
$1- 2\delta \leq D_{TV}(\bar{\alpha'}, \bar{\beta'}) \leq \Pr[\mathcal{G}] D_{TV}(\alpha_g, \beta_G) + 1-\Pr[\mathcal{G}] \leq \frac{\sqrt{\delta}}{2} D_{TV}(\alpha_G, \beta_G) + 1-\frac{\sqrt{\delta}}{2},$
which implies that $1 - 4\sqrt{\delta} \leq D_{TV}(\alpha_G, \beta_G)$. We can assume $\delta$ is less than a sufficiently small positive
constant, and so it suffices to show for sketching dimension $r = o(n^{1-2/p} \eps^{-2/p} (\log^{2/p} 1/\delta) \log n)$, that
$D_{TV}(\alpha_G, \beta_G) \leq 1/2$. By Fact \ref{chitvdSHORT}, it suffices to show $\chi^2(\alpha_G, \beta_G) \leq 1/4$.

Since $S$ has orthonormal
rows, $\bar{\alpha'}$ is distributed as $N(0, I_r)$. Note that, by definition of $\alpha$, we in fact have $\bar{\alpha'} = \alpha_G$ since
conditioning on $\mathcal{G}$ does not affect this distribution. On the other hand, $\beta_G$ is a Gaussian location mixture, that is, it has
the form $N(0, I_r) \ast p$, where $p$ is the distribution of a random variable
chosen by sampling a set $T$ subject to event $\mathcal{G}$ occurring
and outputting $\sum_{i \in T} \frac{C' \epsilon^{1/p} E_{n-t} S_i}{t^{1/p}}$.
We can thus apply Fact \ref{mixtureSHORT} and
it suffices to show for $r = o(n^{1-2/p} \eps^{-2/p} (\log^{2/p} 1/\delta) \log n)$ that
$\expect{e^{\frac{(C')^2 \epsilon^{2/p} E_{n-t}^2}{t^{2/p}} \langle \sum_{i \in T} S_i, \sum_{j \in U} S_j \rangle}} - 1 \leq \frac{1}{4},$
where the expectation is over independent samples $T$ and $U$ conditioned on $\mathcal{G}$. Note that
under this conditioning $T$ and $U$ are uniformly random subsets of $W$.

To bound the $\chi^2$-divergence, we define variables $x_{T,U}$, where
  $x_{T, U} = \frac{(C')^2 \epsilon^{2/p} E_{n-t}^2}{t^{2/p}} \langle \sum_{i \in T} S_i, \allowbreak \sum_{j \in U} S_j \rangle$.
Consider the following, where the expectation is over
independent samples $T$ and $U$ conditioned on $\mathcal{G}$:
\begin{align*}
&\mathbf{E}\biggl[\mathrm{exp}\biggl\{\frac{(C')^2 \epsilon^{2/p} E_{n-t}^2}{t^{2/p}} \langle \sum_{i \in T} S_i, \sum_{j \in U} S_j \rangle\biggr\}\biggr] = \mathbf{E} \bigl[e^{x_{T,U}}\bigr] = \sum_{0 \leq j < \infty} \mathbf{E} \biggl [\frac{x_{T,U}^j}{j!} \biggr ] \\
& = 1 +
\sum_{j \geq 1} \frac{(C')^{2j} \eps^{2j/p} E_{n-t}^{2j}}{t^{2j/p} j!} \mathbf{E} \left [\langle \sum_{i \in T} S_i, \sum_{j \in U} S_j \rangle^j \right ]\\
& = 1 + \sum_{j \geq 1} \frac{O(1)^{2j} \eps^{2j/p} n^{2j/p}}{t^{2j/p} j!} \mathbf{E} \left [\langle \sum_{i \in T} S_i, \sum_{j \in U} S_j \rangle^j \right ].
\end{align*}

The final equality uses that $E_{n-t} = \Theta(n^{1/p})$ and here
$O(1)^{2j}$ denotes an absolute constant raised to the $2j$-th power.
We can think of $T$ as indexing a subset of rows of $S^T S$ and $U$ indexing a subset of columns.
Let $M$ denote the resulting $t \times t$ submatrix of $S^T S$.
Then $\langle \sum_{i \in T} S_i, \sum_{j \in U} S_j \rangle = \sum_{i, j \in [t]} M_{i,j} \leq \sum_{i, j \in [t]} |M_{i,j}| \eqdef P$, and we seek
to understand the value of ${\bf E}[P^j]$ for integers $j \geq 1$.

The following lemma is the key to the argument; its proof is described in Section \ref{sec:intro}. The
proof is based on defining $y$-patterns and looking at matchings in an associated bipartite
multi-graph.

\begin{lemma}\label{lem:keySHORT}
  For integers $j \geq 1$, ${\bf E}[P^j] \leq \left (\frac{t^2}{r^{1/2}} \right ) \cdot \left (\frac{16r}{n} \right )^j.$
\end{lemma}
Given the previous lemma, by $\delta$-{\bf Bound4}, we have $\frac{t^2}{r^{1/2}} = \frac{1}{n^{\Omega(1)}}$, and therefore Lemma \ref{lem:keySHORT} establishes that
${\bf E}[P^j] \leq \frac{1}{n^{\Omega(1)}} \cdot \left (\frac{16r}{n} \right )^j.$
We thus have,
\begin{align*}
\mathbf{E}\biggl[\mathrm{exp}\biggl\{\frac{(C')^2 \epsilon^{2/p} E_{n-t}^2}{t^{2/p}} \langle \sum_{i \in T} S_i, \sum_{j \in U} S_j \rangle \biggr\} \biggr]
& =   \mathbf{E} [e^{x_{T,U}}] = 1 + \frac{1}{n^{\Omega(1)}} \cdot \sum_{j \geq 1} \frac{O(1)^{2j} \epsilon^{2j/p} n^{2j/p}}{j! t^{2j/p}} \cdot \left (\frac{r}{n} \right )^j\\
& = 1 + \frac{1}{n^{\Omega(1)}} \cdot \sum_{j \geq 1} \frac{(c \log n)^j}{j!} \\ &  \leq 1 + \frac{1}{n^{\Omega(1)}} \cdot e^{c (\log n)} \\ & \leq 1 + \frac{1}{4},
\end{align*}
since $c > 0$ is an arbitrarily small constant independent of the constant in the $n^{\Omega(1)}$. The proof is complete.

For $1 \leq p < 2$, we now show that the sketching
dimension is $\Omega(\epsilon^{-2} \log(1/\delta))$, which as discussed
in Section \ref{sec:intro}, matches known upper bounds up to a constant
factor.
\begin{theorem} \label{thm:smallp}
  The sketching dimension for $(\epsilon,\delta)$-approximating $F_p$
  for $1 \leq p < 2$ is $\Omega(\epsilon^{-2} \log(1/\delta))$.
\end{theorem}

\section{Algorithm } \label{sec:algo}
 \label{sec:ghss}
As outlined earlier, the algorithm uses two level-based  structures, namely, \ghss, which is similar to the \ghss~structure presented in \cite{g:arxiv15}, and the shelf structure. The shelf structure is needed only when $ \delta = n^{-\omega(1)}$, otherwise, the \ghss~structure suffices. The \ghss~  has $L+1$ levels, corresponding to $l=0,1, \ldots, L$, and  the shelf structure  has $J$ shelves numbered $0, 1, \ldots, J$. In particular, shelf 0 is identical to \ghss~level 0.

\subsection{Estimating $F_p$}

\emph{\ghss~structure.} Corresponding to each \ghss~level $l \in \{0,1, \ldots, L-1\}$, a pair of \countsketch~like structures named $\hh_l = \hh(C_l,s)$  and $\est_l = \est(C_l,2s)$  are kept. Both structures  $\hh(C_l,s)$ and $\est(C_l,2s)$ are very similar to \countsketch~ structures and have $s$ and $2s$ independent repetitions respectively, with  $16C_l$ buckets in each repetition (table).
Here,  $s = \Theta (\log n)$.   Recall that  $C = n^{1-2/p}(\epsilon^{-2}\log(1/\delta)/\log(n) + \epsilon^{-4/p}\log^{2/p}(1/\delta))$, $C =C_0 = \Theta(p^2 n^{1-2/p} \epsilon^{-4/p} \log^{2/p}(1/\delta))$ and $C_l = C_0 \alpha^l$, for $l=0,1,2, \ldots, L-1$, where,   $\alpha = 1- (1-2/p)\nu$ and $\nu$ is a constant (e.g., 0.01). The number of levels is $L = \lceil \log_{2\alpha} (n/C) \rceil $.
The final level $L$ of the \ghss~structure  uses an $\ell_2/\ell_1$ deterministic sparse-recovery algorithm  \cite{crt:ieeetit06a,donoho:tit06}. We will show that the number of items that are subsampled into level $L$ is $O(C_L)$ with probability $1-O(\delta)$ and therefore by the theorems proved in  \cite{crt:ieeetit06a,donoho:tit06}, by using $O(C_L \log (n/C_L))$ measurements, all  these item frequencies are recovered deterministically. 
\eat{ for the recovery of  non-zero frequencies of all  those items $i$ that hash to level $L$. We  show that the number of items that map to the sub-sampled stream at level $L$ is $O(C_L)$ (i.e., the frequency vector of $\stream_L$ is $O(C_L)$-sparse) with probability $1- \delta^{\Omega(1)}$.}
Following \cite{g:arxiv15}, the \ghss~structure subsamples the stream hierarchically using independent random hash functions $g_1, \ldots, g_L : [n] \rightarrow \{0,1\}$. All items are mapped to level 0; an item is mapped to each of levels $1$ through $l$ iff $g_1(i) = \ldots = g_l(i)=1$, where, the  $g_l$'s are $O(\log (1/\delta) + \log n)$-wise independent. 

\eat{\emph{\ghss~sampling.} For the \ghss~structure, the  input stream $\stream$ is sub-sampled hierarchically to produce random sub-streams $\stream_0, \stream_1, \ldots, \stream_L$, corresponding to each of the levels $0, \ldots, L$.
 The stream
$\stream_0$ is the  entire input stream. For each $l=1, \ldots, L$, $\stream_{l}$ is obtained by sampling each   item $i$
appearing in $\stream_{l-1}$ with probability $1/2$. If $i$ is sampled, then all its records $(i,v)$ are
included in $\stream_1$, otherwise none of its records are included. The sampling uses independently chosen  random hash functions
  $g_1, g_2, \ldots, g_{L}$  each mapping $[n] \rightarrow \{0,1\}$.  $i$ is included in
  $\stream_l$ iff $g_1(i) =1,  g_2(i)=1,  \ldots, g_{l}(i)=1$. The $g_l$'s are chosen
  from a $ O(\log (1/\delta) + \log n)$-wise independent hash family.}

\vspace*{0.1in}
\emph{\hh~and \est~structures.} The \hh$(C_l,s)$ is  a \countsketch~structure \cite{ccf:icalp02}. The $\est_l = $ \est$(C_l,2s)$ structure is similar, except that   instead of Rademacher sketches, it uses  random $q$th roots of unity sketches, where, $q \ge \Theta(\log (1/\delta)+\log n)$. At level $l$ and for table indexed $r \in [2s]$, the corresponding hash function is $h_{lr}: [n] \rightarrow [16C_l]$, and the sketch for  bucket index $b $ is given by  $T_{lr}[b] = \sum_{h_{lr}(i)=b} x_i \omega_{lr}(i)$, where, $\{\omega_{lr}(i)\}_{i \in [n]}$ is a random family of $q$th roots of unity that is $O(\log (1/\delta)+ \log n)$-wise independent. The hash functions across the tables and distinct levels, and the seeds of the family of the  random roots of unity used by the $\est_l$ structures are independent. 

\vspace*{0.1in}
\emph{Shelf structure.} The shelves,  indexed from $j=0, \ldots, J$, each also consist of an analogous pair of structures, namely,  \hh$(H_j, w_j)$ and \est$(H_j, 2w_j)$, each of which are \countsketch-like structures. The number of independent  repetitions in the $\hh(H_j, w_j)$ and $\est(H_j, 2w_j)$  structures are $w_j$ and $2w_j$ respectively.  The number of buckets per hash table is $O(H_j)$ in either of the structures. The $\hh(H_j, w_j)$ is exactly a \countsketch~ structure.
Analogous to the $\est_l$ structures of the \ghss~levels, the  $\est$ structures of the shelves also use sketches using $q$th roots of unity, instead of Rademacher sketches. In particular,  $H_0 = C_0$ and $w_0 = s$, ensuring that shelf 0 coincides with level 0 of \ghss. 
Further,   $H_J =  \Theta(n^{1-2/p} \epsilon^{-2})$ and $w_J = O(\log (1/\delta))$. We therefore have  two  cases, namely, (1) $H_J = \Omega(H_0)$, or, (2) $H_J = o(H_0)$. We consider each of the two cases next.

In case (1), $H_J  = \Omega(H_0) \ge cH_0$ for some constant $c$. The total number of sketches used  by the $J$th shelf is $(cH_0)O(\log (1/\delta))$. Up to constant factor, therefore, the $J$th shelf has higher width (i.e., higher number $O(\log (1/\delta))$)  of independent repetitions compared to shelf 0 (which has $O(\log n)$ independent repetitions) and has same or higher height (i.e., number of buckets in a repetition) namely $O(H_0)$. In this case, one can set $J=1$, and have only two shelves. This considerably simplifies the analysis.

The other case, namely, when,  $H_J = o( H_0)$ is  more
 interesting. Here, we let $H_j = H_0 b^j$, for a geometric decay  parameter $b < 1$ and $b = \Omega(1)$. The latter constraint $b = \Omega(1)$ is a technical constraint whose need becomes clear from the analysis. The table widths increase geometrically as $w_j = w_0 a^j$, for  a parameter $a> 1$. The
 total measurements used by the shelf structure is $\sum_{j=0}^J H_j w_j = H_0w_0 \sum_{j=0}^J (ab)^j = O(\max(H_0w_0, H_Jw_J))$, provided, \eat{ $ab$ is not too close to 1, that is,} $\abs{1-ab} = \Omega(1)$, or, equivalently,  $\abs{\ln (ab)} = \Omega(1)$. The entire stream $\stream$ is provided as input to each of the shelves $j=0,1, \ldots, J$, that is, there is no  sampling.
\eat{
  space required by the $j$th shelf structure is $H_j w_j = H_0 (ab)^j$. The total space required by the shelf structure is $\sum_{j=0 \ldots J} H_j w_j = H_0w_0 \sum_{j=0 ldots J} (ab)^j$. It is easily seen that this geometric series sum is bounded by $O(\max(H_0w_0, H_Jw_J))$, provided $ab$ is  not too close to 1, that is, $\abs{1 - ab} = \Omega(1)$---which we ensure.  We will also require that $b = \Omega(1)$. For the shelf structure, the entire stream $\stream$ is provided as input to each of the shelves $j=0,1, \ldots, J$ (i.e., no reduction via sampling is performed).  
\eat{For reasons of ease of proof, the parameters  $a$, $b$ are chosen to satisfy the following constraints.
\begin{enumerate}
\item $\abs{1-ab} = \Omega(1)$.
\item $\abs{1-ab^{p/2}} = \Omega (1)$.
\item $b = \Omega(1)$.
\end{enumerate}
As will be shown in Section~\ref{sec:shelf}, these constraints can be easily satisfied, by modifying at most one of the parameters, say,  $w_J$ by a constant factor.}
}

\eat{\begin{figure}[htbp]
\begin{center}
\begin{tabular}{|m{1.7in}|l|}\hline
Reduction factor  & $\alpha =  1- (1-2/p) \nu, ~\nu = 0.01$ \\  \hline
 Number of levels $L$ & $L =  \lceil
\log_{2\alpha} \frac{n}{C} \rceil$ \\ \hline
Degree of independence of &    $O(\log (1/\delta) + \log n)$  \\
hash functions&
\\ \hline
\multirow{3}{1.7in}{Level-wise space parameters }
& $B= O\left( n^{1-2/p}\epsilon^{-4/p} (\log (1/\delta))^{2/p}  \right), B' = O(n^{1-2/p} \epsilon^{-2}/\log(n)) $ \\
& $C = (108p)^2 B, C' = (108p)^2 B'$ \\
  & $ B_l = \lceil 4 \alpha^l B\rceil, C_l =  (108p)^2 B_l ~~~l  =0,1, \ldots, L $ \\
  & $C^*_L = 16 C_L$ \\ \hline
\end{tabular}
\end{center}
\caption{Parameters used by the \ghss~algorithm.}
\label{table:params}
\end{figure}
}

\vspace*{0.1in}
\emph{Frequency groups, thresholds, estimates  and samples}. Let $B = \Theta(C)$ and  $\epsbar = (B/C)^{1/2} = \Theta (1/p)$. Let $\hat{F}_2$ be an estimate for $F_2 = \norm{x}_2^2$ satisfying $F_2 \le \hat{F}_2  \le  (1+ O(1/p))F_2$  with probability $1-O(\delta)$. (Throughout the paper, it suffices to let $O(\delta)$ denote $\delta/100$. In general, it is $c\delta$ for any constant $c$ that can be embedded into the constants of the structures used by the algorithm).  Define frequency thresholds for \ghss~levels as follows: $T_0 = ( \hat{F}_2/B )^{1/2}$,  $T_l= \left( 2\alpha
 \right)^{-l/2} T_0$ and $Q_l = T_l(1 -\epsbar)$, for $l \in [L-1]$.
Let $Q_L, T_L = 0^+$ (i.e., $a \ge T_L$ iff $a > 0$).
For shelf $j=0, \ldots, J$, let  $E_j = \epsbar^2 H_j$. For shelf $j$, define  the frequency threshold
$U_j = (\hat{F}_2/E_j)^{1/2}$ and let  $U_{J+1} = \infty$. For \ghss~level indices $l =0, \ldots, L-1$, let $\hat{x}_{il}$ denote  the estimate for $x_i$ obtained
using $\hh_l$, and (overloading notation), for shelf indices,  $j =0, \ldots, J$,  let $\hat{x}_{ij}$ denote the estimate for $x_i$ obtained from the $\hh$ structure of shelf $j$. For $l= L$,  $\hat{x}_{iL}$ denotes the estimate returned from the $\ell_2/\ell_1$ sparse recovery structure at level $L$.

\vspace*{0.1in}
\emph{Discovering Items.} We say that $i$ is \emph{discovered} at shelf $j \in [J]$,  provided, $(1-\epsbar)U_j \le \abs{\hat{x}_{ij}} \le (1+\epsbar) U_{j+1}$ and $j \in [J]$ is the \emph{highest} numbered shelf with this property. We say that $i$ is discovered at \ghss~level $l \in \{0, \ldots, L\}$,  if $i$ is not discovered at any shelf indexed $j \in [J]$, and  $l$ is the \emph{smallest} level  such that  $T_l(1-\epsbar) < \hat{x}_{il} \le T_{l-1}(1+\epsbar)  $. 
If $i$ is discovered at shelf $j$,  then, $i$ is included in the shelf sample $\bar{S}_j$.
If $i$  is discovered at  level $l \in [0, 1, \ldots, L]$ and
$\abs{\hat{x}_{il}} \ge T_l$, then, $i$ is included in the level sample $\bar{G}_l$.
If $i$ is discovered at level $l$ and   $ T_l(1-\epsbar)  < \abs{\hat{x}_{il}} < T_l$   then, $i$ is placed in $\bar{G}_{l+1}$ iff the random toss of an  unbiased coin $K_i$ lands heads; and upon tails, $i$is not placed in any sample group. The \ghss~level sampling scheme is similar to \cite{g:arxiv15}.

\vspace*{0.1in}
\emph{The averaged estimator and \nocollision.} For each item $i$ included in a group sample $\bar{G}_l$ or shelf sample $\bar{S}_j$, an estimate $X_i$ for $\abs{x_i}$   is obtained  using the corresponding \est~structure of that level or  shelf, provided  the event \nocollision$(i)$ succeeds. \eat{ This estimate $X_i$ is  conditioned on an event \nocollision$(i)$, that is, if $\nocollision(i)$ fails to hold, then $X_i$ is not defined and $i$ is removed from the corresponding group or shelf sample.
 Suppose $i$ is sampled into $\bar{G}_l$ and let  $\hattopk(C_l)$ denote the top-$C_l$ items with respect to the estimated frequency $\abs{\hat{x}}_{i'l}$ (resp. \hattopk$(E_j)$ if $i$ is discovered in shelf $j$).}If $i$ is sampled into $\bar{G}_l$, then  \nocollision$(i)$ holds if  there is a set $R_l(i) \subset [2s]$ of table indices of the \est$_l$~structure such that for each $r \in R_l(i)$, $i$ does not collide under the hash function $h_{lr}$ with any of the items that are the  top-$C_l$ absolute  estimated frequencies using $\hh_l$. An analogous definition holds if $i$ is included in the $j$th shelf sample.  Assuming \nocollision$(i)$ holds,  the estimate $X_i$ is defined as the average of the estimates obtained from the tables whose indices are in the set $R_l(i)$ ( resp. $R_j(i)$ if $i$ was discovered in shelf $j$), that is,
$$X_i = (1/\abs{R(i)}) \sum_{r \in R(i)} T_r[h_r(i)] \cdot \conj{\omega_r(i)} \cdot \sgn(\hat{x_i}) \enspace . $$
Further, we check whether $(1-\epsbar)T_l \le X_i \le (1+\epsbar) T_{l-1}$ (resp. $(1-\epsbar) U_j \le X_i \le (1+\epsbar) U_{j+1}$, if $i$ is in shelf $j$ sample), otherwise, $i$ is dropped from the sample.

\vspace*{0.1in}
\emph{Estimating $F_p$.} The estimate for the $p$th frequency moment, $\hat{F}_p$, is the sum of  the  contribution from the shelf samples $\bar{S}_j, j \in [J]$, and  the contribution from the sample groups $\bar{G}_l$, $l = 0, \ldots, L$. For an item $i \in \bar{G}_l$, let $l_d(i)$ be the level at which an item $i$ is discovered. Define 
\begin{align*}
\hat{F}_p^{\shelf} & = \sum_{j=1}^J \sum \left\{ X_i^p \mid i \in \bar{S}_j, (1-\epsbar)U_j \le \abs{X_j} \le (1+\epsbar) U_{j+1}  \right\}, \text{  and } \\
\hat{F}_p^{\ghss} &= \sum_{l=0}^{L}
 2^L \sum \left\{X_i^p \mid i \in \bar{G}_l, l_d(i) < L,
 (1-\epsbar) T_{l_d} \le X_i < (1+\epsbar) T_{l_d-1}\right\} + 2^L \sum_{l_d(i) = L} \abs{\hat{x}_i}^p.
\end{align*}

The final estimate is $\hat{F}_p = \hat{F}_p^{\shelf} + \hat{F}_p^{\ghss}$.



\subsection{Analysis}
\label{sec:ghssreview:prop}
\emph{Notation.}
Let $\ftwores{k}$ be the sum of the squares of all coordinates except the top-$k$ absolute coordinates. That is, suppose the items are placed in decreasing order as per their absolute value of $\abs{x_i}$, that is, let $t$ is an ordering (permutation) of the coordinates $[n]$ such that   $\abs{x_{t_1}} \ge \abs{x_{t,2}} \ge \ldots \abs{x_{t_n}}$, with ties broken arbitrarily. Then, for $0 \le k \le n$, define  $\ftwores{k} = \sum_{j=k+1}^n \abs{x_{t_j}}^2 \enspace . $

For a \ghss~level $l \in [L]$,  $\ftwores{l,k}$ is the random $k$-residual second moment of the frequency vector  in the sampled  substream $\stream_l$. Define the following events.

\begin{align*}
(1) & ~\goodftwo \equiv   F_2 \le  \hat{F}_2  \le (1+0.001/(2p))F_2,\\
(2) &~\smallres_l  \equiv   \ftwores{2C_l,l} \le  1.5 \ftwores{ \lceil
(2\alpha)^l C \rceil }/2^{l-1}, ~~l =0, 1, \ldots, L,\\
(3) &~\smallres \equiv \forall l \in \{0, 1, \ldots, L\}~ \smallres_l,\\
(4) &~\lastlevel~ \equiv  (\hat{f}_{iL} = f_{i}) \text{ and } \forall i \not\in \stream_L, (\hat{f}_{iL}  = 0) \enspace .
\end{align*}
We  condition the analysis on the ``good event''  $ \G \equiv \goodftwo \wedge \smallres \wedge \lastlevel$, that we show holds with probability $ 1- \min(O(\delta), n^{-\Omega(1)})$.
\begin{lemma} \label{lem:GoodEvents:extn}
\G~holds with probability $1- \min(O(\delta), n^{-\Omega(1)})$.
\end{lemma}

The range of item frequencies is subdivided into frequency groups , so that each item belongs to exactly one shelf frequency group or to exactly one \ghss~frequency group.
The frequency group corresponding to the shelf $j$ is $[U_j, U_{j+1})$, for $j=1, \ldots, J$, where, $U_{J+1} = \infty$ and $U_0 =T_0$.  
The frequency group corresponding to level $l$ of \ghss~is $[T_l, T_{l-1})$, where, $T_L = 0$ and $T_{-1} = U_1$. Let $S_j$ (resp. $G_l)$ denote the set of items whose frequency belongs to the frequency group corresponding to shelf $j$ (resp. group $l$).
A few  other events  are used  in the analysis. If $i \in G_l$, then, $\prob{\nocollision(i)} \ge 1- \exp{-\Theta(\log n)}$ as shown in \cite{g:arxiv15} (Lemma 30). If $i \in S_j$, $\prob{\nocollision(i)} \ge 1- \exp{-\Theta(w_j)}$.
We condition some parts of the analysis  on the following additional events. 
\[
\begin{array}{ll}
(\bfromcc{5}) & ~\goodest(i) \equiv \forall l \in [0, \ldots, L], i \in \stream_l \Rightarrow \abs{\hat{x}_{il} - x_i} \le \bigl(
\ftwores{2C_l,l}/{C_l} \bigr)^{1/2} \\
(\icc) & ~\accuest(i) \equiv \forall l \in [0, \ldots, L], i \in \stream_l \Rightarrow
\abs{\hat{x}_{il} - x_i} \le \bigl(
\ftwores{  (2\alpha)^l C  }/(2(2\alpha)^l C)\bigr)^{1/2} \enspace .
\end{array} \]
As shown in \cite{g:arxiv15}, (a) $\goodest(i)$ and $\accuest(i)$ each hold with probability $1- n^{-\Omega(1)}$, and, (b)  $\goodest(i) \wedge \smallres$ imply the event \accuest$(i)$.    For an item $i$ that is discovered at some shelf $j$, the event $\goodest(i)$ is the same as the event $\accuest(i)$ and  is defined as $\abs{\hat{x}_{ij}- x_i} \le \bigl( \ftwores{U_j}/U_j\bigr)^{1/2}$ and holds with probability $1- \exp{-\Theta(w_j)}$.\eat{ Let $\accuest_1 = \{i : \accuest(i) \text{ holds}\}$. Similarly, define the set $\nocollision_1$ and $\smallhh_1$.  Define $\gooditems_1 = \nocollision_1 \cap \accuest_1 \cap \smallhh_1$ .}

\noindent
Lemma~\ref{lem:approxdwise} extends the approximate 2-wise independence property of the sampling scheme of \cite{g:arxiv15} to an approximate $d$-wise independence property.
\begin{lemma} \label{lem:approxdwise}
Let $I = \{i_1,  \ldots, i_{d} \}\subset [n]$ and  $1 \le h \le d$. Let $\accuest(\{i_1, \ldots, i_h\}) \equiv$ $\bigwedge_{k=1}^h \accuest(i_k)$. Then, assuming $d$-wise independence of the hash functions,\\
\begin{multline*}
\sum_{\substack{l_j =0,1, \ldots, L, \\ \forall j =1,2, \ldots, h}}
2^{l_1+ l_2 + \ldots + l_h} \prob{  \bigwedge_{j=1}^h i_j \in \bar{G}_{l_j} \bigl\vert \bigwedge_{j=h+1}^d i_j \in \stream_{l_j},\G, \accuest(\{i_1, \ldots, i_h\})} \in
\\ \prod_{j=1}^h \left(1 \pm 2^{\level(i_j) + 1} n^{-c}\right).
\end{multline*}
\end{lemma}


Lemma~\ref{lem:csk1} bounds $\abs{X_i - \expect{X_i}}$ using the $2d$th moment method. 
It uses Lemma~\ref{lem:2dmom1} as a key component.
\begin{lemma} \label{lem:2dmom1} Let $Z = \sum_{j =1}^t  a_j \omega(j)\chi(j)$, where, $\{\omega(j)\}_{j=1}^t$ is a family of random and  $2d$-wise independent family  roots of the equation $x^q = 1$, $q > 2d$  and integral. Let  $\{\chi(j)\}$ be a $2d$-wise independent family of  indicator variables such that  $ \prob{\chi(j)=1} = 1/C$ and is independent of the $\omega_j$'s. If $  \norm{a}_2^2 \ge  4d \norm{a}_{\infty}^2 C$, then, $\expect{(Z\bar{Z})^d} \le (2)\left( \frac{d \norm{a}_2^2}{C} \right)^d  \enspace . $
\end{lemma}

\begin{lemma} \label{lem:csk1} Suppose $d \le O(\log n)$ and  even and let $s \ge  300 \log (n)$. Then we have that
$\textrm{Pr}\bigl\{\abs{X_i - \abs{x_i}} > \left( \frac{d \ftwores{2C}}{(s/9)C }\right)^{1/2}\mid \nocollision,\goodest\bigr\} < 2^{-2d+1}\enspace .$
\end{lemma}

The use of $q$th roots of unity for the sketches used in the \est~structures allows us nearly unbiased estimators for $X_i^p$. This was first observed in \cite{knpw:stoc10}.
\begin{lemma}[\cite{knpw:stoc10}] \label{lem:expecttheta}
$\card{\expect{X_i^p}-  \abs{x_i}^p\mid \G,\goodest, \nocollision} \le \abs{x_i}^p n^{-\Omega(1)}$.
\end{lemma}

\noindent
For $i \in [n]$, let $x_{li} $ be an indicator variable that is 1 iff $i \in \stream_l$.
Let  $X_i$ denote $\abs{\hat{x}_{i}}$ when $l_d(i) = L$ and otherwise,  let its meaning be unchanged. Let $z_{il}$ be an indicator variable
that is 1 if $i \in \bar{G}_l $  and is  0 otherwise.
Define $$
\hat{F}_p =\sum_{i\in [n]} Y_i, ~~~~~ \text{ where, } 
Y_i = \sum_{l'=0}^{L}2^{l'} z_{il'} X_i^p \enspace . $$
 Let $\H = \G \cap \nocollision \cap \goodest$ and $G' \subset [n]$ be the set of items $G' = \lmargin(G_0)\cup_{l=1}^L G_l$.



 \begin{lemma} \label{lem:dmf2}
Let $B \ge O( n^{1-2/p} \epsilon^{-4/p} \log^{2/p} (1/\delta)))$.
For integral $0 \le d_1, d_2 \le \lceil \log (1/\delta)\rceil $, we have,
$\mathbf{E}\bigl[ \left( \sum_{i \in G'} (Y_i-\expect{Y_i \mid \H})\right)^{d_1} \left( \sum_{i \in G'} (\conj{Y_i} - \expect{\conj{Y_i} \mid \H})\right)^{d_2}\big\vert \H \bigr] \le \left( \frac{ \epsilon F_p}{20}\right)^{d_1+d_2} \enspace . $
\end{lemma}

\subsection{Analysis for the case $\delta \ge n^{-O(1)}$}
For the case $\delta = n^{-O(1)}$, the shelf structure is not needed. Redefine the group $G_0$  to correspond to  the frequency range $[T_0, \infty]$.
The lemmas in this section assume that the family $\{\omega_{lr}(i)\}_{i \in [n]}$ are $O(\log (1/\delta) + \log (n))$-wise independent, and independent across $l,r$ and all hash functions are also $O(\log (1/\delta) + \log (n))$-wise independent.
\begin{lemma} \label{lem:midcentral:conj}
Let $1 \le e,g \le \lceil\log (1/\delta)\rceil, l \in \midreg(G_0)$ and  $\abs{x_l} \ge \bigl( \frac{\ftwores{C}}{C} \bigr)^{1/2}$.
Then,\\ $\expect{ \left( Y_l - \expect{Y_l \mid \H}\right)^e \left(\conj{Y_l} - \expect{ \conj{Y_l} \mid H}\right)^g \mid \H}$ is real and is at most $\left( \frac{a\abs{x_l}^{2p-2}  \ftwores{C}}{\rho C} \right)^{(e+g)/2}$ for some constant $a$. Further, $ \card{\expect{ \left( Y_l - \expect{Y_l \mid \H} \right)^e} \mid \H} \le  \abs{x_l}^{pe} n^{-\Omega(e)} \enspace .$
\end{lemma}
The calculation of the $d$th central moment for the  contribution to $\hat{F}_p$  from the items in $\midreg(G_0)$ requires an upper bound on the following combinatorial sums.

\begin{align} \label{eq:comb:basic}
Q(S_1, S_2) & = \sum_{q=1}^{\min(S_1, S_2)} \sum_{\substack{e_1 + \ldots + e_q = S_1\\ e_j's \ge 1}}
\sum_{\substack{g_1 + \ldots + g_q = S_2 \\ g_j's \ge 1}} \binom{S_1}{e_1, \ldots, e_q} \binom{S_2}{g_1, \ldots, g_q}  \notag  \\
&= \sum_{\{i_1, \ldots, i_q\}} \prod_{r=1}^q \abs{x_{i_r}}^{(p-1)(e_r+g_r)} \prod_{r=1}^q (e_r+g_r)^{(e_r+g_r)/2}, \text{ and } \\
R(S) &  = \sum_{q=1}^{\lfloor  S/2 \rfloor} \sum_{h_1 + \ldots + h_q = S, h_j's \ge 2} \binom{S}{h_1, \ldots, h_q} \sum_{\{i_1, \ldots, i_q\}}\prod_{r\in [q]} \abs{x_{i_r}}^{(p-1)h_r} \prod_{r \in [q]} h_r^{h_r/2}. \label{eq:comb:Rdef}
\end{align}

\begin{lemma} \label{lem:comb:final}
$Q(S_1, S_2) \le R(S_1+S_2) \le \left(16e(S_1+S_2) F_{2p-2} \right)^{(S_1+S_2)/2}$.
\end{lemma}

\begin{lemma} \label{lem:dcentmom:midG0:2}
Let  $C \ge O(n^{1-2/p}/\log (n)) \epsilon^{-2}\log (1/\delta))$. Then, for $0 \le d_1, d_2 \le \log(1/\delta)$, the following expectation is real and is bounded above as follows.
\begin{gather*}
\expect{ \biggl( \sum_{i \in \midreg(G_0)} (Y_i - \expect{Y_i \mid \H}) \biggr)^{d_1} \biggl( \sum_{i \in \midreg(G_0)} (\conj{Y_i} - \expect{ \conj{Y_i}\mid H}) \biggr)^{d_2}\mid \H} \le \left( \frac{\epsilon F_p}{10} \right)^{d_1+d_2}.
\end{gather*}
\end{lemma}

\begin{lemma}  \label{lem:dcentmom} Let $C \ge Kn^{1-2/p}\epsilon^{-2}\log(1/\delta)/\log (n) + L n^{1-2/p} \epsilon^{-4/p} \log^{2/p}(1/\delta)$, where, $K,L$ are constants.\eat{ the constant from Lemma~\ref{lem:dcentmom:midG0:2} and $L$ is the constant from Lemma~\ref{lem:dmf2}.  Let $B$ be such that  $C/B \ge (5p)^2$.}
Then, for $d = \lceil \log (1/\delta) \rceil$, the following expectation is real and is bounded as follows.
\begin{align*}
\expect{ \bigl( \sum_{i \in [n]} (Y_i - \expect{Y_i \mid \H}) \bigr)^d \bigl( \sum_{i \in [n]} ( \conj{Y_i} - \expect{\conj{Y_i} \mid \H})\bigr)^d  \mid \H} \le \bigl( \frac{ \epsilon F_p}{5} \bigr)^{2d}.
\end{align*}
 It follows that  $\prob{ \bigl\lvert\hat{F}_p - F_p \bigr\rvert \ge (\epsilon/2) F_p} \le \delta$.
\end{lemma}
Since, $\H$ holds with probability $ 1- 2^{-\Theta(s)} = 1- 1/n^{-c}$, for any constant $c$ by choosing $s = \Theta(\log n)$ appropriately, we have the following theorem.

\begin{theorem}\label{thm:ub} For each $0 < \epsilon < 1$ and $7/8 \ge \delta \ge n^{-c}$, for any constant $c$, there is a sketching  algorithm that $(\epsilon, \delta)$-approximates $F_p$ with sketching dimension
$O\bigl(n^{1-2/p}\bigl(\epsilon^{-2} \log (1/\delta) + \epsilon^{-4/p} \log^{2/p}(1/\delta) \allowbreak \log n\bigr)\bigr)$ and update time (per stream update)  $O((\log n) \log (1/\delta)))$.
\end{theorem}

\subsection{Analysis for the case $\delta = n^{-\omega(1)}$}
We now extend the analysis for failure probability $\delta$ smaller than $n^{-\Theta(1)}$ and up to $\delta = 2^{-n^{\Omega(1)}}$.
For the \ghss~structure, $\nocollision$ and $\goodest$ may  hold only with probability $1-n^{-\Theta(1)}$. We first  show that the number of items that fail  to satisfy \nocollision~or  \goodest~ is at most $O(\log(1/\delta)/\log n)$ with probability $1 - O(\delta)$. The following lemmas assume the parameter sizes for $B,C, C_l, H_J$ and $H_j$ as described earlier.
\begin{lemma} \label{lem:Gfails}
With probability $1- O(\delta)$, the number of elements for which $\goodest$ or $\nocollision$ fails is at most $O(\log (1/\delta))/(\log n)$.
\end{lemma}
Thus, it is possible that legitimate items are not discovered, or are dropped due to collisions, or mistakenly classified and their contribution added to samples. Let $\text{Error}^{\ghss}$ denote the total contribution of such items to $\hat{F}_p^{\ghss}$  and let $\text{Error}^{\shelf}$ denote the error arising in the estimate of $\hat{F}_p^{\shelf}$ due to analogous errors. As described earlier, we mainly emphasize  the more interesting and complicated case when $H_J = o(H_0)$ (otherwise, $J=1$). \eat{We let   $H_j = H_0 b^j$, $j \in [0,\ldots, J]$ for a geometric decay parameter $b < 1$ and $b = \Omega (1)$ and $w_j = w_0a^j$, for a geometric rising ratio $a > 1$. It is shown in the full version that $a,b$ can be chosen so that $b= \Omega(1)$ and  $\abs{\ln(ab)} = \Omega(1)$.} 

\begin{lemma} \label{lem:error1}
$\text{Error}^{\ghss} \le O (\epsilon^2 F_p/\log n) $ and $\text{Error}^{\shelf} \le O(\max(\epsilon^2 F_p/(\log n) , O(\epsilon^p F_p))) $, each   with probability $1 - \delta/n^{\Omega(1)}$.
\end{lemma}

We first prove a refinement of Lemma~\ref{lem:midcentral:conj}.
\begin{lemma}\label{lem:refine:shelf:conj} [Refinement of Lemma~\ref{lem:midcentral:conj}.]
Let $1 \le e,g \le \lceil\log (1/\delta)\rceil$, $l \in S_j$ and  $\log (1/\delta) = \omega (\log n)$. Assume that $\accuest(l)$ holds and  $H_j \ge \Omega(p^2 E_J)$ and $\abs{x_l} \ge  (F_2/E_j)^{1/2}$.   Then the following expectation is real and is bounded above as follows. 
$$ \expect{ \left( \left(1 + \frac{Z_l}{\abs{x_l}} \right)^p-1\right)^e \left( \left(1 + \frac{\conj{Z_l}}{\abs{x_l}} \right)^p-1\right)^g\mid \H} \le  c^h \abs{x_l}^{-h} \biggl( \frac{F_2}{H_j} \biggr)^{h/2} \bigl( \min\bigl( h/w_j, 1 \bigr) \bigr)^{h/2} $$
where, $h = e+g$ and $c $ is an absolute constant.
Therefore
\begin{align*}
\expect{ \left( Y_l - \expect{Y_l}\right)^e \left(\conj{Y_l} - \expect{ \conj{Y_l}}\right)^g}
\le c^h \abs{x_l}^{(p-1)h} \left( \frac{F_2}{H_j} \right)^{h/2} \left( \min\left( h/w_j, 1 \right) \right)^{h/2} \enspace .
\end{align*}
\end{lemma}

Lemma~\ref{lem:dmom:shelf} considers the $2d$th central moment of the contribution to $\hat{F}_p^{\shelf}$ from all but the outermost  shelf, and  from the set of  outermost shelf items denoted $S_J$, separately. Let $S' = S_1 \cup \ldots \cup S_{J-1}$.
\begin{lemma} \label{lem:dmom:shelf}
Let $0 \le d_1, d_2 \le \lceil \log (1/\delta) \rceil$ and integral and $c_1,c_2$ be constants. Then,
\begin{align*}
\expect{ \bigl( \sum_{i \in  S'} (Y_i - \expect{Y_i \mid \H}) \bigr)^{d_1} \bigl( \sum_{i \in S'} (\conj{Y_i} - \conj{\expect{Y_i \mid\H}}) \bigr)^{d_2} \mid\H} \le \left( c_1\epsilon F_p \right)^{d_1 + d_2}.
\end{align*}
\begin{align*}
\expect{ \bigl( \sum_{i \in S_J} (Y_i - \expect{Y_i \mid\H}) \bigr)^{d_1} \bigl( \sum_{i \in S_J} (\conj{Y_i} - \conj{\expect{Y_i \mid \H}}) \bigr)^{d_2} \mid \H} \le \left( c_2\epsilon F_p \right)^{d_1 + d_2}.
\end{align*}
\end{lemma}
Combining Lemmas~\ref{lem:dmf2}, ~\ref{lem:dcentmom:midG0:2} and ~\ref{lem:dmom:shelf} with  Lemma~\ref{lem:error1}, we obtain the following.
\begin{lemma} \label{lem:dcentmomsmalldelta} 
 $\exists$ constant $c$ s.t. for  $1 \le d \le \lceil \log (1/\delta) \rceil$, the following holds. $$\expect{ \bigl(  \sum_{i \in  [n]} (Y_i - \expect{Y_i \mid \H}) \bigr)^d \bigl( \sum_{i \in  [n]} ( \conj{Y_i} - \expect{\conj{Y_i} \mid \H})\bigr)^d \mid \H} \le ( c \epsilon F_p )^{2d} \enspace . $$
Hence, $\prob{ \card{ \hat{F}_p - F_p} \le \epsilon F_p)} < \delta \enspace.$
\end{lemma}

\begin{theorem} \label{thm:ubfinal}  For each $0 < \epsilon < 1$ and $7/8 \ge\delta \ge  2^{-n^{\Omega(1)}}$, there is a sketching  algorithm that $(\epsilon, \delta)$-approximates $F_p$ with sketching dimension
$O\bigl(n^{1-2/p}\bigl(\epsilon^{-2} \log (1/\delta) + \epsilon^{-4/p} \log^{2/p}(1/\delta)\log n\bigr)\bigr)$ and update time (per stream update)  $O((\log n)\log (1/\delta))$.
\end{theorem}

\bibliographystyle{alpha}
\bibliography{streams}

\begin{thebibliography}{KNPW11b}

\bibitem[AKLY16]{AKLY16}
Sepehr Assadi, Sanjeev Khanna, Yang Li, and Grigory Yaroslavtsev.
\newblock Maximum matchings in dynamic graph streams and the simultaneous
  communication model.
\newblock In {\em Proceedings of the Twenty-Seventh Annual {ACM-SIAM} Symposium
  on Discrete Algorithms, {SODA} 2016, Arlington, VA, USA, January 10-12,
  2016}, pages 1345--1364, 2016.

\bibitem[AKO10]{ako11}
Alexandr Andoni, Robert Krauthgamer, and Krzysztof Onak.
\newblock Streaming algorithms from precision sampling.
\newblock {\em CoRR}, abs/1011.1263, 2010.

\bibitem[AMS99]{AMS99}
Noga Alon, Yossi Matias, and Mario Szegedy.
\newblock The space complexity of approximating the frequency moments.
\newblock {\em JCSS}, 58(1):137--147, 1999.

\bibitem[And]{Andoni12}
Alexandr Andoni.
\newblock High frequency moment via max stability.
\newblock Available at http://web.mit.edu/andoni/www/papers/fkStable.pdf.

\bibitem[ANPW13]{anpw:icalp13}
Alexandr Andoni, Huy~L. Nguyen, Yury Polyanskiy, and Yihong Wu.
\newblock {``Tight Lower Bound for Linear Sketches of Moments''}.
\newblock In {\em Proceedings of International Conference on Automata,
  Languages and Programming, (ICALP)}, July 2013.
\newblock Version published as arXiv:1306.6295, June 2013.

\bibitem[BGKS06]{bgks06}
Lakshminath Bhuvanagiri, Sumit Ganguly, Deepanjan Kesh, and Chandan Saha.
\newblock Simpler algorithm for estimating frequency moments of data streams.
\newblock In {\em SODA}, pages 708--713, 2006.

\bibitem[BO10]{bo10}
Vladimir Braverman and Rafail Ostrovsky.
\newblock Recursive sketching for frequency moments.
\newblock {\em CoRR}, abs/1011.2571, 2010.

\bibitem[BYJKS02]{B-YJKS:stoc02}
Z.~Bar-Yossef, T.S. Jayram, R.~Kumar, and D.~Sivakumar.
\newblock {``An information statistics approach to data stream and
  communication complexity''}.
\newblock In {\em Proceedings of \textit{ACM} Symposium on Theory of Computing
  \textit{STOC}}, pages 209--218, 2002.

\bibitem[CCFC04]{ccf:icalp02}
Moses Charikar, Kevin Chen, and Martin Farach-Colton.
\newblock {``Finding frequent items in data streams''}.
\newblock {\em Theoretical Computer Science}, 312(1):3--15, 2004.
\newblock Preliminary version appeared in Proceedings of ICALP 2002, pages
  693-703.

\bibitem[CK04]{ck04}
Don Coppersmith and Ravi Kumar.
\newblock An improved data stream algorithm for frequency moments.
\newblock In {\em SODA}, 2004.

\bibitem[CKS03]{cks03}
Amit Chakrabarti, Subhash Khot, and Xiaodong Sun.
\newblock Near-optimal lower bounds on the multi-party communication complexity
  of set disjointness.
\newblock In {\em CCC}, pages 107--117, 2003.

\bibitem[CRT06]{crt:ieeetit06a}
Emmanuel Cand\`{e}s, Justin Romberg, and Terence Tao.
\newblock {``Robust uncertainty principles: Exact signal reconstruction from
  highly incomplete frequency information''}.
\newblock {\em IEEE Trans. Inf. Theory}, 52(2):489--509, February 2006.

\bibitem[Don06]{donoho:tit06}
David~L. Donoho.
\newblock {``Compressed Sensing''}.
\newblock {\em IEEE Trans. Inf. Theory}, 52(4):1289--1306, April 2006.

\bibitem[Due10]{d10}
Lutz Duembgen.
\newblock {``Bounding Standard Gaussian Tail Probabilities''}.
\newblock ArXiv e-prints, 1012.2063, 2010.

\bibitem[Gan04a]{g04}
Sumit Ganguly.
\newblock Estimating frequency moments of data streams using random linear
  combinations.
\newblock In {\em RANDOM}, 2004.

\bibitem[Gan04b]{g04b}
Sumit Ganguly.
\newblock A hybrid algorithm for estimating frequency moments of data streams,
  2004.
\newblock Manuscript.

\bibitem[Gan11]{g11}
Sumit Ganguly.
\newblock Polynomial estimators for high frequency moments.
\newblock {\em CoRR}, abs/1104.4552, 2011.

\bibitem[Gan12]{g:isaac12}
Sumit Ganguly.
\newblock {``Precision vs. Confidence Tradeoffs for $\ell_2$-Based Frequency
  Estimation in Data Streams''}.
\newblock In {\em Proceedings of International Symposium on Algorithms,
  Automata and Computation (ISAAC), LNCS Vol. 7676}, pages 64--74, 2012.

\bibitem[Gan15]{g:arxiv15}
Sumit Ganguly.
\newblock {``Taylor Polynomial Estimator for Estimating Frequency Moments''}.
\newblock In {\em Proceedings of International Conference on Automata,
  Languages and Programming, (ICALP)}, 2015.
\newblock Full version in arXiv:1506.01442.

\bibitem[HW13]{hw13}
Moritz Hardt and David~P. Woodruff.
\newblock How robust are linear sketches to adaptive inputs?
\newblock In {\em Symposium on Theory of Computing Conference, STOC'13, Palo
  Alto, CA, USA, June 1-4, 2013}, pages 121--130, 2013.

\bibitem[Ind00]{indy:focs00}
Piotr Indyk.
\newblock {``Stable Distributions, Pseudo Random Generators, Embeddings and
  Data Stream Computation''}.
\newblock In {\em Proceedings of IEEE FOCS}, pages 189--197, 2000.

\bibitem[IS03]{is03}
Y.~I. Ingster and L.A. Suslina.
\newblock {\em {``Non-parametric goodness-of-fit testing under Gaussian
  models''}}, volume 169 of {\em Lecture Notes in Statistics}.
\newblock Springer-Verlag, 2003.

\bibitem[IW05]{indyk2005optimal}
P.~Indyk and D.~Woodruff.
\newblock Optimal approximations of the frequency moments of data streams.
\newblock In {\em STOC}. ACM, 2005.

\bibitem[KMN11]{kmn11}
Daniel~M. Kane, Raghu Meka, and Jelani Nelson.
\newblock Almost optimal explicit johnson-lindenstrauss families.
\newblock In {\em Approximation, Randomization, and Combinatorial Optimization.
  Algorithms and Techniques - 14th International Workshop, {APPROX} 2011, and
  15th International Workshop, {RANDOM} 2011, Princeton, NJ, USA, August 17-19,
  2011. Proceedings}, pages 628--639, 2011.

\bibitem[KMV10]{kmv}
Adam~Tauman Kalai, Ankur Moitra, and Gregory Valiant.
\newblock Efficiently learning mixtures of two gaussians.
\newblock In {\em Proceedings of the 42nd {ACM} Symposium on Theory of
  Computing, {STOC} 2010, Cambridge, Massachusetts, USA, 5-8 June 2010}, pages
  553--562, 2010.

\bibitem[KNPW10]{knpw:stoc10}
Daniel Kane, Jelani Nelson, Ely Porat, and David Woodruff.
\newblock {``Fast Moment Estimation in Data Streams in Optimal Space''}.
\newblock In {\em Proceedings of 2011 ACM Symposium on Theory of Computing,
  version arXiv:1007.4191v1 July}, 2010.

\bibitem[KNPW11a]{knpw:stoc11}
Daniel Kane, Jelani Nelson, Ely Porat, and David Woodruff.
\newblock {``Fast Moment Estimation in Data Streams in Optimal Space''}.
\newblock In {\em Proceedings of 2011 ACM Symposium on Theory of Computing,
  version arXiv:1007.4191v1 July}, 2011.

\bibitem[KNPW11b]{knpw11}
Daniel~M. Kane, Jelani Nelson, Ely Porat, and David~P. Woodruff.
\newblock Fast moment estimation in data streams in optimal space.
\newblock In {\em STOC}, pages 745--754, 2011.

\bibitem[KNW10a]{knw:pods10b}
Daniel~M. Kane, Jelani Nelson, and David Woodruff.
\newblock {``An Optimal Algorithm for the Distinct Elements Problem''}.
\newblock In {\em Proceedings of ACM International Symposium on Principles of
  Database Systems (PODS)}, pages 41--52, 2010.

\bibitem[KNW10b]{knw:soda10}
Daniel~M. Kane, Jelani Nelson, and David~P. Woodruff.
\newblock {``On the Exact Space Complexity of Sketching and Streaming Small
  Norms''}.
\newblock In {\em Proceedings of ACM Symposium on Discrete Algorithms (SODA)},
  2010.

\bibitem[Kon15]{k15}
Christian Konrad.
\newblock Maximum matching in turnstile streams.
\newblock In {\em Algorithms - {ESA} 2015 - 23rd Annual European Symposium,
  Patras, Greece, September 14-16, 2015, Proceedings}, pages 840--852, 2015.

\bibitem[LNW14]{lnw14a}
Yi~Li, Huy~L. Nguyen, and David~P. Woodruff.
\newblock On sketching matrix norms and the top singular vector.
\newblock In {\em Proceedings of the Twenty-Fifth Annual {ACM-SIAM} Symposium
  on Discrete Algorithms, {SODA} 2014, Portland, Oregon, USA, January 5-7,
  2014}, pages 1562--1581, 2014.

\bibitem[LW13]{liwood:random13}
Yi~Li and David Woodruff.
\newblock {``A Tight Lower Bound for High Frequency Moment Estimation with
  Small Error''}.
\newblock In {\em Proceedings of International Workshop on Randomization and
  Computation (RANDOM)}, 2013.

\bibitem[MW10]{mw10}
Morteza Monemizadeh and David~P. Woodruff.
\newblock $1$-pass relative-error $l_p$-sampling with applications.
\newblock In {\em SODA}, 2010.

\bibitem[Pol]{tvd}
D.~Pollard.
\newblock {``Chapter 3: Total Variation Distance Between Measures''}.
\newblock
  \url{http://www.stat.yale.edu/~pollard/Courses/607.spring05/handouts/Totalvar.pdf}.

\bibitem[PW12]{pw12}
Eric Price and David~P. Woodruff.
\newblock Applications of the shannon-hartley theorem to data streams and
  sparse recovery.
\newblock In {\em ISIT}, 2012.

\bibitem[SSS93]{sss:soda93}
J.~Schmidt, A.~Siegel, and A.~Srinivasan.
\newblock {``Chernoff-Hoeffding Bounds with Applications for Limited
  Independence''}.
\newblock In {\em Proceedings of ACM Symposium on Discrete Algorithms (SODA)},
  pages 331--340, 1993.

\bibitem[Tsy08]{Tsybakov}
Alexandre~B. Tsybakov.
\newblock {\em {``Introduction to Nonparametric Estimation''}}.
\newblock Springer, 1 edition, 2008.

\bibitem[Wai]{w16}
M.~Wainwright.
\newblock { ``Chapter 2: Basic Tail and Concentration Bounds''}.
\newblock
  \url{http://www.stat.berkeley.edu/~mjwain/stat210b/Chap2_TailBounds_Jan22_2015.pdf}.

\bibitem[Woo14]{woodruff:book}
David~P. Woodruff.
\newblock {\em {``Sketching as a Tool for Numerical Linear Algebra''}}.
\newblock Foundations and Trends in Theoretical Computer Science 10:1-2,
  \text{Now} Publications, 2014.

\bibitem[WW15]{ww15}
Omri Weinstein and David~P. Woodruff.
\newblock The simultaneous communication of disjointness with applications to
  data streams.
\newblock In {\em Automata, Languages, and Programming - 42nd International
  Colloquium, {ICALP} 2015, Kyoto, Japan, July 6-10, 2015, Proceedings, Part
  {I}}, pages 1082--1093, 2015.

\end{thebibliography}




\appendix

\section{Proofs for our Algorithm} \label{app:algoproofs}

We first  present some details of  the \ghss~structure that were originally presented in  summary form in Section~\ref{sec:algo}.

\emph{\ghss~sampling.} For the \ghss~structure, the  input stream $\stream$ is sub-sampled hierarchically to produce random sub-streams $\stream_0, \stream_1, \ldots, \stream_L$, corresponding to each of the levels $0, \ldots, L$.
 The stream
$\stream_0$ is the  entire input stream. For each $l=1, \ldots, L$, $\stream_{l}$ is obtained by sampling each   item $i$
appearing in $\stream_{l-1}$ with probability $1/2$. If $i$ is sampled, then all its records $(i,v)$ are
included in $\stream_1$, otherwise none of its records are included. The sampling uses independently chosen  random hash functions
  $g_1, g_2, \ldots, g_{L}$  each mapping $[n] \rightarrow \{0,1\}$.  $i$ is included in
  $\stream_l$ iff $g_1(i) =1,  g_2(i)=1,  \ldots, g_{l}(i)=1$. The $g_l$'s are chosen
  from a $ O(\log (1/\delta) + \log n)$-wise independent hash family.

\eat{\emph{\hh~and \est~structures.} The \hh$(C_l,s)$ is precisely a \countsketch~structure \cite{ccf:icalp02}. The \est$(C_l,2s)$ structure  is its modification that,   instead of using Rademacher sketches, applies  sketches using random $q$th roots of unity, where, $q = O(\log (1/\delta))$. Thus, at level $l$ and for table indexed $j$ with hash function $h_{lj}: [n] \rightarrow [16C_l]$,  and bucket index $b \in [16C_l]$, the sketch is $T_{lj}[b] = \sum_{h_j(i)=b} x_i \omega_{lj}(i)$, where, $\{\omega_{lj}(i)\}_{i \in [n]}$ is a random family of $q$th roots of unity that is $O(\log (1/\delta))$-wise independent. It is assumed that the hash functions across the tables and distinct levels, and the seeds of the family of the  random $q$th roots of unity, are independent. The estimate for $x_i$ obtained from the $j$th table of  level $l$ structure is defined as $\hat{X}_{lj} = T_{lj}[h_{lj}(i)] \conj{\omega_{lj}(i)}$. The error,  $Z_{lj} = \hat{X}_{lj} - x_i$ satisfies $ \expect{Z_{lj}^t} = 0$, for $t < q$.  We will define an estimate $X_i$ for $x_i$ as the averages of $\hat{X}_{lj}$ for those $j$'s that satisfy the \nocollision~property defined later. It can then be shown that, (a) $\expects{X_i^p} = \abs{x_i}^p (1 \pm n^{-\Omega(1)})$ as observed in \cite{knpw:stoc10}, and (b) the $d = \log (1/\delta)$th moment scaled sum of the $X_i^p$ terms is amenable to calculation.
}

Let $C = C_0 = \Theta(p^2 n^{1-2/p} \epsilon^{4/p} \log^{2/p}(1/\delta))$ be the height of \countsketch~and \est~structures at level 0 of the \ghss~structure. The height of the \countsketch~and \est~structures at level $l$ of the \ghss~structure is defined as
$$C_l = \alpha^l C_0, ~~~l = 0,1, \ldots, L \enspace . $$
Let $\epsbar = 1/(54p)$ be a constant. Let $B$ be another parameter closely related to $C$ as follows.
$$ B = \epsbar^2 C, \text{ and  let } B_l = \epsbar^2 C_l, \text{ for } l=0,1, \ldots, L\enspace. $$
The level-wise frequency thresholds are defined as follows.
\begin{align*}
T_0 &= \left( \frac{\hat{F}_2}{B} \right)^{1/2},
T_l = \left( \frac{1}{(2\alpha)} \right)^{1/2} T_0  = \left( \frac{\hat{F}_2}{2^l B_l} \right)^{1/2} l=0,1, \ldots, L-1 \enspace .
\end{align*}
$T_L$ is defined as $0^+$, that is, $a> T_L$ iff $a > 0$.
Another threshold $Q_l$ is used for defining when an item is discovered at level $l$, and is defined as follows.
\begin{align*}
Q_l = T_l (1- \epsbar) ~~l=0,1, \ldots, L-1 \enspace .
\end{align*}
Similarly, $Q_L$ is defined as $0^+$.

The \ghss~level groups are sets of items identified with  frequency ranges and essentially follows the scheme of \cite{g:arxiv15}.
 The group $G_l$ consists of all items in the  frequency range $[T_l, T_{l-1})$, for $l =0,1, \ldots, L-1$. That is,
 However, the group $G_0$ consists of the frequency range $[T_0, U_1)$, that is, $T_{-1}$ is identified with $U_1$ of the shelf structure. The group $G_L$ is identified with the frequency range $(T_0, T_{L-1})$.
 The ratio $T_{l-1}/T_{l} = (2\alpha)^{1/2}$, for  $l=0, 1 , \ldots, L-1$.  More precisely, we have the following group definitions.
\begin{align*}
G_l & = \{i: T_l \le \abs{x_i} < T_{l-1} \}, ~~l=0,1, \ldots, L-1 \\
G_L & = \{i: 0 < \abs{x_i} < T_{L-1}\}
\end{align*}
where, for the definition of the group $G_0$ it is assumed that $T_{-1} = U_1$ (and $T_0 = U_0$).

For analysis purposes, the  \ghss~ level groups $G_0, G_1, \ldots, G_L$ are  partitioned into subsets  $\lmargin(G_l), \midreg(G_l)$ and $\rmargin(G_l)$ \cite{g:arxiv15} as follows.
 \begin{align*}
 \lmargin(G_l) &= \{i: T_l \le \abs{x_i}  <  T_l(1+  \epsbar), \\
 \rmargin(G_l) &= \{ i :  T_{l-1}(1 - 2\epsbar) \le \abs{x_i} <  T_{l-1} \}, \text{ and }  \\
 \midreg(G_l) &= \{ i: T_l+ T_{l}\epsbar \le \abs{x_i}  < T_{l-1} - 2T_{l-1}\epsbar\} \enspace .
 \end{align*}
For $G_0$, there is no $\rmargin(G_0)$ defined, and analogously for $G_L$, there is no $\lmargin(G_L)$ defined.  Instead, \midreg$(G_0)$ and $\midreg(G_L)$ are extended as follows.
\begin{align*}
\midreg(G_0) &= \{i: \abs{x_i} \ge T_0(1+\epsbar)\}, \\
\midreg(G_L) &=\{i: 0 < \abs{x_i} < T_{L-1}(1-2\epsbar)\}
\end{align*}

Throughout the analysis, we will obtain bounds on expressions involving probability of events conditioned on $\G$. It is often much easier to prove the same expressions without conditioning on $\G$, and then deriving upper and lower bounds on the probability conditioned by $\G$. The following lemma from \cite{g:arxiv15} (Fact 21) is useful for this purpose.
\begin{lemma}[Fact 21 in \cite{g:arxiv15}.] \label{lem:fact}
Let $E$ and $G$ be a pair of events. Then,
$$ \left\lvert \prob{E \mid F} - \prob{E} \right\rvert \le \frac{ 1- \prob{F}}{\prob{F}}
$$
\end{lemma}

\subsection{Proofs}

The following lemma is a slight modification of Lemma 33 of \cite{g:arxiv15} by reducing the conditions on which the probability event depends. It may be noted that the event $\G$ of \cite{g:arxiv15} has been considerably trimmed to define the event denoted by $\G$ in this work.
\begin{lemma}[ Modified from \cite{g:arxiv15}] \label{lem:margin}
Let $i \in G_l$.
\begin{enumerate}
\item If $i \in \midreg(G_l)$, then,  $$\card{2^l\prob{ i \in \bar{G}_l \mid \G, \goodest(i)} - 1} \le 2^l \min(O(\delta), n^{-\Omega(1)}) \enspace. $$
Further, conditional on $\G \wedge \accuest(i)$,  (i)  $i \in \bar{G}_l $ iff $i \in \stream_l$,  and, (ii) $i$ may not  belong to any $\bar{G}_{l'}$, for $l' \ne l$.
\item If $i \in \lmargin(G_l)$, then
\begin{gather*}
 \card{2^{l+1} \prob{i \in \bar{G}_{l+1} \mid \G, \goodest(i)} + 2^l\prob{i \in \bar{G}_l \mid \G, \goodest(i)} -
 1}\\  \le 2^l \min(O(\delta), n^{-\Omega(1)}) \enspace. \end{gather*}
Further, conditional on $\G \wedge \goodest(i)(i)$,  $i$ may  belong to either $ \bar{G}_l$ or $\bar{G}_{l+1}$,
but not to any other sampled group.
\item If $i \in \rmargin(G_l)$, then
 \begin{gather*}
 \card{ 2^l \prob{ i \in \bar{G}_l \mid \G, \goodest(i)} + 2^{l-1}\prob{ i \in \bar{G}_{l-1} \mid \G, \goodest(i)} -
 1}\\  \le 2^l \min(O(\delta), n^{-\Omega(1)}) \enspace .
\end{gather*}
 Further, conditional on $\G \wedge \accuest(i)$,   $i$ can belong to  either $ \bar{G}_{l-1}$ or $
 \bar{G}_{l}$ and not to any other sampled group.
 \end{enumerate}
\end{lemma}
The proof of Lemma~\ref{lem:margin} is similar to the  proof of  Lemma 33 in \cite{g:arxiv15}. We provide an outline here to emphasize the slight generality of the current version. The main difference in the statement is that the event $\G$ here is the conjunction of significantly fewer events than the event $\G$ defined in \cite{g:arxiv15}.
\begin{proof}

Let $i \in \midreg(G_l)$. Suppose $i \in \stream_l$. We are given that $\G$ and $\goodest(i)$ hold. It is easy to see that \smallres $\wedge$ \goodest$(i)$ imply  $\accuest(i)$, which therefore also holds. We therefore have,
$$\abs{\hat{x}_{il} - x_i} \le \frac{ \ftwores{\lceil (2\alpha)^l C}}{4 (2 \alpha)^l C}  \le \epsbar \left( \frac{\hat{F_2}}{ (4) 2^l  B_l} \right)^{1/2} \le \epsbar T_l$$
where the first inequality follows from \accuest$(i)$ and the second uses $\goodftwo$ followed by a relaxation using the definitions of the thresholds. Thus
$$ \abs{\hat{x}_{il}} \ge \abs{x_i} - \epsbar T_l \ge (1+\epsbar)T_l - \epsbar T_l = T_l \enspace. $$
Since, $ T_l > Q_l$, $\abs{\hat{x}_{il}} > Q_l$ and by the  definition of discovery of item, $i$ qualifies to be  discovered at level $l$.
Further, a direct calculation shows that if $i \in \stream_r$ for any $r < l$ then, $i$ cannot be discovered at any level $r < l$.  Indeed, by \accuest$(i)$ we have,
$
\abs{\hat{x}_{ir} - x_i} \le \epsbar T_r
$
and therefore,
\begin{align*}
\abs{\hat{x}_{ir}} &  \le \abs{x_i} + \epsbar T_r < (1-2\epsbar)T_{l-1} + \epsbar T_r   \le (1-\epsbar)T_{r} \le Q_r
\end{align*}
that is, $i$ is not discovered at any level $r < l$. A similar calculation shows that $i$ is not discovered at any shelf $j \in [J]$.

Thus, under the presumptions of $\G \wedge \goodest(i)$,  if $i \in \stream_l$, then  $i$ is discovered at level $l$ and is not discovered at any lower level, that is, $l$ is the lowest level at which $i$ is discovered. Secondly, $\abs{\hat{x}_{il}} \ge T_l$ implying that $i$ is included into the group sample at level $l$, that is, $i \in \bar{G}_l$.

Therefore, for any level $l \ge 1$,
\begin{align*}
\prob{i \in \bar{G}_l \mid \G, \goodest(i),  i \in \stream_l} = 1
\end{align*}
By law of total probability,
\begin{align*}
 &\prob{i \in \bar{G}_l \mid \G, \goodest(i)} \\
  &= \prob{ i \in \bar{G}_l \mid \G, \goodest(i), i \in \stream_l} \prob{ i \in \stream_l \mid \G, \goodest(i)}\\
  & ~~+ \prob{ i \in \bar{G}_l \mid \G, \goodest(i), i \not\in \stream_l} \prob{i \not\in \stream_l \mid \G, \goodest(i)} \\
  & = 1 \cdot \prob{ i \in \stream_l \mid \G, \goodest(i)} + 0 \cdot \prob{i \not\in \stream_l \mid \G, \goodest(i), i \not\in \stream_l}\\
  & = \prob{i \in \stream_l \mid \G, \goodest(i)} \\
  & \in \prob{i \in \stream_l} \pm \left( \frac{ 1- \prob{\G, \goodest(i)}}{\prob{\G, \goodest(i)}} \right), ~~~~ \text{ by Lemma~\ref{lem:fact}} \\
  & = 2^{-l}  \pm \min(n^{-\Omega(1)},O(\delta)) \enspace .
\end{align*}
The last step is obtained  as follows. $\accuest(i)$ is implied by $\smallres \wedge \goodest(i)$. Since $s = \Theta(\log n)$, $\G$ holds with probability $1-\min(O(\delta), \exp{-\Omega(s)} = n^{-\Omega(1)})$ and \goodest$(i)$ holds with probability $1-n^{-\Omega(1)}$. Multiplying by $2^l$ on both sides and substracting 1, we get
\begin{align*}
\left\lvert 2^l \prob{i \in \bar{G}_l \mid \G, \goodest(i)}- 1\right\rvert \le 2^l \min(n^{-\Omega(1)},O(\delta)) \enspace .
\end{align*}

Items 2 and 3 in the statement of the lemma can be proved in a similar manner by following the steps  in the proof of Lemma 33 of \cite{g:arxiv15} and simplifying them in the above manner.
\end{proof}

\eat{
Lemma~\ref{lem:hsscond} essentially repeats the results of Lemma~\ref{lem:margin},
conditional upon the event that another item maps to some sampled group. This property is
useful in variance calculations later.

\begin{lemma}[\cite{g:arxiv15}] \label{lem:hsscond}  Let $i,j \in [n]$,  $i \ne j$ and $j \in  G_r$. Then,
$
  \sum_{r'=0}^L 2^{r'} \probb{j \in \bar{G}_{r'} \mid i \in \stream_l, \G} \allowbreak  =1 \pm O(2^{r}n^{-c}) \enspace .$ In particular, the following hold.
\begin{enumerate}
\item If $j \in \midreg(G_r)$, then $
  2^r \probb{ j \in \bar{G}_r \mid i \in \stream_l, \G}\allowbreak  = 1 \pm  2^r n^{-c}
$ and  for any $r \ne r'$, $ \probb{j \in \bar{G}_{r'}\mid i \in \stream_l, \G} = 0$.

\item If $j \in \lmargin(G_r)$, then,
$
  2^{r+1} \prob{j \in \bar{G}_{r+1} \mid i \in \stream_l, \G} + 2^r\prob{j \in
 \bar{G}_r \mid i \in \stream_l,\G}\allowbreak
  = 1 \pm   2^{r+1} n^{-c} \enspace .
$
Further, for any $ r' \not\in \{r,r+1\},  \prob{j \in \bar{G}_{r'}\mid i \in \stream_l, \G}  = 0$.
\item If $j \in \rmargin(G_r)$, then  $
 2^{r} \prob{ j \in \bar{G}_r \mid i \in \stream_{l}, \G} + 2^{r-1}\prob{ j \in
 \bar{G}_{r-1} \mid i \in \stream_{l},\G} \allowbreak
  = 1 \pm  2^{r+1} n^{-c} $.
 Further, for any $ r' \not\in \{r-1,r\},  \prob{j \in \bar{G}_{r'}\mid i \in \stream_l, \G} = 0$.
 \end{enumerate}
\end{lemma}

}
Suppose it is given that $i$ belongs to \ghss~group $G_l$, for some level $l$. Then, let $\level(i)$ denote this value of $l$. From the definition of threholds $T_{l'}$, for $l' \in [0, \ldots, L]$ this equals
Let $\level(i)$ denote the true  ``level'' of $i$, that is,
 \begin{align*}\level(i) =
 \begin{cases} 0 & \text{ if } x_i^2 \ge \hat{F}_2/B  \\
 \left\lfloor 2\log_{2\alpha} \left(  \hat{F}_2/( x_i^2B)\right)\right\rfloor & \text{ otherwise.}
 \end{cases}
 \end{align*}
Let $i$ be an item that belongs to \ghss~group $G_l$, where, $l = \level(i)$. Assuming  $\G$ and $\goodest(i)$, by Lemma~\ref{lem:margin}, $i$ may either be correctly classified into $\bar{G}_{\level(i)}$. However, if $i \in \lmargin(G_l)$, then due to estimation errors of $\hat{x}_{il}$, $i$ may be classified to belong to either $\bar{G}_{\level(i)}$ or to  $\bar{G}_{\level(i)+1}$. Finally, if $i \in \rmargin(G_l)$, then due to estimation errors again $i$ may be classified to belong to either $\bar{G}_{\level(i)}$ or to $\bar{G}_{\level(i)-1}$. In each case,  assuming $\G$ and $\goodest(i)$, there is zero probability that $i$ would be classified into a third group.

In certain equations, it is sometimes needed to sum or iterate over the possible groups each item $i$  can be sampled into.  Under the conditions $\G$ and $\goodest(i)$, it then  suffices to iterate over only the  three groups $\level(i)-1, \level(i), \level(i)+1$.
Given $i \in [n]$ and $l \in \{0,1, \ldots, L\}$, we use the notation $l ~\consistw i$ to denote that  $l \in \{\level(i)-1, \level(i), \level(i)+1\}$. \eat{If  $i \in \lmargin(G_{\level(i)})$, then as proved by Lemma~\ref{lem:margin}, it is possible that $i$ is classified into the sampled  group at either  level $\level(i)$ or $\level(i)+1$ but to no other level. If $i \in \rmargin(G_{\level(i)})$, then $i$ may be classified into the  sampled group at level $\level(i)$ or $\level(i)-1$, but to no other level. Finally, if $i \in \midreg(G_{\level(i)})$, then $i$ is classified only into $\bar{G}_{\level(i)}$. Thus, if $i$ is sampled into a group at level $l$, it follows that $l ~\consistw i$.}

\begin{lemma} \label{lem:basecase}
Let $i_1,  \ldots, i_d \in [n]$. Then,
\begin{align*}
&\sum_{l_1 \consistw~ i_1} 2^{l_1} \prob{ i_1 \in \bar{G}_{l_1} \vert \bigwedge_{j=h+1}^d i_j \in \stream_{l_j}, \G, \goodest(i_1)}\\ & \hspace*{1.0in} = 1 \pm 2^{\level(i_1)} \min (n^{-\Omega(1)}, O(\delta)) \enspace .
\end{align*}
\end{lemma}

\begin{proof}

\emph{Case 1:} $i \in \midreg(G_l)$. Then, conditional on $\G$ and $\goodest(i)$, as shown in the proof of Lemma~\ref{lem:margin}, we have,  (i) $i \in \bar{G}_l$ iff $i \in \stream_l$, and (ii) $i \not\in \bar{G}_r$, for any $r \ne l$.
Therefore, in this case,
\begin{multline}\label{eq:lem:induction:mid1}
 \sum_{l_1 \text{ consist. with } i_1} 2^{l_1} \prob{ i_1 \in \bar{G}_{l_1} \left\vert \bigwedge_{j=2}^d i_j \in \stream_{l_j} \right., \G, \goodest(i)}\\  = 2^l \prob{i \in \stream_l  \left\vert \bigwedge_{j=2}^d i_j \in \stream_{l_j} \right., \G, \goodest(i)}
\end{multline}
Now, we have, $ \prob{i \in \stream_l  \left\vert \bigwedge_{j=2}^d i_j \in \stream_{l_j} \right.} = 2^{-l}$ assuming that (i) the hash functions $g_l$ are each drawn from a $d$-wise independent family, for each $l \in [L]$,  and, (ii) the $g_l$'s are independent across $l$. Therefore, from Lemma~\ref{lem:fact}, we have,
$$\left\lvert \prob{i \in \stream_l  \left\vert \bigwedge_{j=2}^d i_j \in \stream_{l_j} \right., \G, \goodest(i)} -2^{-l} \right\rvert \le \min(n^{-\Omega(1)}, O(\delta))$$
Multiplying above equation by $2^l$ and substituting in Eqn.~\eqref{eq:lem:induction:mid1}, we obtain
\begin{align*}
& 2^l \prob{i \in \stream_l  \left\vert \bigwedge_{j=2}^d i_j \in \stream_{l_j} \right., \G, \goodest(i)}  =1 \pm 2^l \min(n^{-\Omega(1)}, O(\delta)) \enspace .
\end{align*}

\emph{ Case 2:} $i_1 \in \lmargin(G_l)$.  Then, conditional on $\G$ and \goodest$(i)$, the following statements follow from Lemma~\ref{lem:margin}.
\begin{enumerate}
\item $i_1$ cannot be discovered at level smaller than $l$ (with prob. 1).
\item If $i_1 \in \stream_l$, then it is discovered at level $l$ (with probability 1).
\item $i_1 \in \bar{G}_l$ iff $i_1 \in \stream_l$ and $\abs{\hat{x}_{i_1,l}} \ge T_l$.
\item $i_1 \in \bar{G}_{l+1}$ iff $i_1 \in \stream_1$ and $\abs{\hat{x}_{i_1,1}} < T_l$ and a random coin $K_i$ turns heads.
\end{enumerate}
Let $E_{2,d}$ denote the event
$$E_{2,d} = \bigwedge_{j=2}^d i_j \in \stream_{l_j} \enspace . $$
Therefore,
\begin{align}
& \sum_{l_1 \text{ consist. with } i_1} 2^{l_1} \prob{ i_1 \in \bar{G}_{l_1} \vert E_{2,d}, \G, \goodest(i_1)} \notag \\
&= 2^l \prob{i_1 \in \bar{G}_{l} \vert E_{2,d}, \G, \goodest(i_1)}
+ 2^{l+1} \prob{ i_1 \in \bar{G}_{l+1} \vert E_{2,d}, \G, \goodest(i_1)} \notag \\
& = 2^l \prob{ \abs{\hat{x}_{il}} \ge T_l, i_1 \in \stream_l \vert E_{2,d}, \G, \goodest(i_1)}\\
& \hspace*{0.2in}+ 2^{l+1} \prob{ \abs{\hat{x}_{il}} < T_l, i_1 \in \stream_l, K_i=1 \vert E_{2,d}, \G, \goodest(i_1)} \notag \\
& = 2^l\prob{ \abs{\hat{x}_{il}} \ge T_l  \vert  i_1 \in \stream_l, E_{2,d}, \G, \goodest(i_1)} \prob{ i_1 \in \stream_l \vert E_{2,d}, \G, \goodest(i_1)} \notag \\
&\hspace*{0.2in} + 2^{l} \prob{ \abs{\hat{x}_{il}} < T_l \vert  i_1 \in \stream_l, E_{2,d}, \G} \prob{ i_1 \in \stream_l, \vert E_{2,d}, \G, \goodest(i_1)} \notag \\
& = 2^l \prob{ i_1 \in \stream_l \vert E_{2,d}, \G, \goodest(i_1)} \label{eq:lem:induction:base:lmarg}
\end{align}
Now, by $d$-wise independence of the hash functions $g_1, \ldots, g_l$, we have,
$\prob{ i_1 \in \stream_l, \vert E_{2,d}} = 2^{-l}$. Therefore,
\begin{align*}
\prob{ i_1 \in \stream_l \vert E_{2,d}, \G, \goodest(i_1)} = 2^{-l} \pm \min( n^{-\Omega(1)}, O(\delta))
\end{align*}
Substituting in Eqn.~\eqref{eq:lem:induction:base:lmarg}, we have,
\begin{align*}
2^l \prob{ i_1 \in \stream_l \vert E_{2,d}, \G, \goodest(i_1)}
 = 1 \pm 2^l\min( n^{-\Omega(1)}, O(\delta)) \enspace .
\end{align*}

\emph{Case 3:} $i_1 \in \rmargin(G_l)$. As before, let $E_{2,d}$ denote the event $ \bigwedge_{j=2}^d i_j \in \stream_{l_j}$.  By Lemma~\ref{lem:margin} and assuming $\G$ and \goodest$(i)$,  we have the following observations.
\begin{enumerate}
\item It is possible for $i_1 \in \rmargin(G_l)$ to be discovered at level $l-1$. This happens if $\abs{\hat{x}_{i_1, l-1}} > Q_{l-1} = T_{l-1}(1-\epsbar)$.
\item It is not possible (i.e., is a zero probability event) that $i_1$ is discovered at levels lower than $l-1$.
\item It is also possible for $i_1$ to be classified into the sample at level $l-1$, that is, $i_1 \in \bar{G}_{l-1}$. This happens iff $i \in \stream_{l-1}$ and  $\abs{\hat{x}_{i_1, l-1}} \ge T_{l-1}$.
\item It is possible for $i_1$ to be classified  into the sample at level $l$, that is, $i_1 \in \bar{G}_l$. This can happen in one of the two mutually exclusive ways.
    \begin{enumerate}
    \item $i \in \stream_{l-1}$ and $Q_{l-1} < \abs{\hat{x}_{i_1, l-1}} < T_{l-1}$ and $K_i =1$.
    \item $i \in \stream_{l} $ and $\abs{\hat{x}_{i_1, l-1}} \le Q_{l-1}$.
    \end{enumerate}
\end{enumerate}

The \emph{LHS} can be written as follows. For $i_1 \in \rmargin(G_l)$, the levels consistent with $i_1$ are $l-1$ and $l$. Thus, we have,
\begin{align} \label{eq:lem:induction:basermarg}
& \sum_{l_1 \text{ consist. with } i_1} 2^{l_1} \prob{ i_1 \in \bar{G}_{l_1} \vert E_{2,d}, \G, \goodest(i_1)} \notag \\
&= 2^{l-1} \prob{i_1 \in \bar{G}_{l-1}  \left\vert E_{2,d} \right., \G, \goodest(i_1)} + 2^l \prob{i_1 \in \bar{G}_l  \left\vert E_{2,d}\right., \G, \goodest(i_1)} \notag \\
& = 2^{l-1} \prob{\abs{\hat{x}_{i_1,l-1}} \ge T_{l-1}, i_1 \in \stream_{l-1}\left\vert E_{2,d}\right., \G, \goodest(i_1)}\notag \\
 & ~+   2^l \prob{ Q_{l-1} \le \abs{\hat{x}_{i_1, l-1} }< T_{l-1}, K_i = 1,  i_1 \in \stream_{l-1} \mid E_{2,d}, \G, \goodest(i_1)} \notag \\
& ~~~~~  +  2^l \prob{ \abs{\hat{x}_{i_1, l-1}} < Q_{l-1}, g_l(i_1)=1, i_1 \in \stream_{l-1} \mid E_{2,d}, \G, \goodest(i_1)}
\end{align}
The final  expression is the sum of three terms. We consider these terms individually and then combine them. The first term can be written as
\begin{align} \label{eq:lem:induction:base:rmarg:1a}
& \text{ Term 1} \notag \\
& =2^{l-1} \prob{\abs{\hat{x}_{i_1,l-1}} \ge T_{l-1}, i_1 \in \stream_{l-1}\vert E_{2,d}, \G, \goodest(i_1)} \notag \\
& = 2^{l-1} \prob{\abs{\hat{x}_{i_1,l-1}} \ge T_{l-1} \vert i_1 \in \stream_{l-1}, E_{2,d}, \G, \goodest(i_1)} \notag \\
& \hspace*{1.0in} \cdot \prob{i_1 \in \stream_{l-1} \mid E_{2,d}, \G, \goodest(i_1)}
\end{align}
Now, $\prob{i_1 \in \stream_{l-1} \vert E_{2,d}, \G, \goodest(i_1)} = 2^{-(l-1)} \pm \min(n^{-\Omega(1)}, O(\delta))$. Substituting in Eqn.~\eqref{eq:lem:induction:base:rmarg:1a}, we obtain,
\begin{align}\label{eq:lem:induction:base:rmarg:1b}
\text{Term 1} =  &\prob{\abs{\hat{x}_{i_1,l-1}} \ge T_{l-1} \vert i_1 \in \stream_{l-1}, E_{2,d}, \G, \goodest(i_1)} \notag \\ &\hspace*{1.0in}\left(1 \pm 2^{l-1}\min(n^{-\Omega(1)}, O(\delta))\right) \enspace .
\end{align}
 The second term is
\begin{align}\label{eq:lem:induction:base:rmarg:2a}
& \text{Term 2} \notag \\
&= 2^l \prob{ Q_{l-1} < \abs{\hat{x}_{i_1, l-1} }< T_{l-1}, K_i = 1,  i_1 \in \stream_{l-1} \mid E_{2,d}, \G, \goodest(i_1)} \notag \\
& = 2^{l-1} \prob{ Q_{l-1} <\abs{\hat{x}_{i_1, l-1} }< T_{l-1}, i_1 \in \stream_{l-1} \mid E_{2,d}, \G, \goodest(i_1)}  \enspace .
\end{align} since, $K_i = 1$ happens with probability $1/2$ and is independent of all other random terms occurring  in the  expression. By definition of conditional probability,
Eqn.~\eqref{eq:lem:induction:base:rmarg:2a} equals
\begin{align}\label{eq:lem:induction:base:rmarg:2b}
&\text{Term 2} \notag \\
& = 2^{l-1} \prob{ Q_{l-1} <\abs{\hat{x}_{i_1, l-1} }< T_{l-1} \mid i_1 \in \stream_{l-1}, E_{2,d}, \G, \goodest(i_1)} \notag \\
& \hspace*{1.0in} \cdot  \prob{ i_1 \in \stream_{l-1} \mid E_{2,d}, \G, \goodest(i_1)}\notag \\
& = 2^{l-1} \prob{ Q_{l-1} <\abs{\hat{x}_{i_1, l-1} }< T_{l-1} \mid i_1 \in \stream_{l-1}, E_{2,d}, \G, \goodest(i_1)}\notag  \\
& \hspace*{1.0in} \cdot \left( 2^{-(l-1)} \pm \min(n^{-\Omega(1)}, O(\delta))\right) \notag \\
& = \prob{ Q_{l-1} <\abs{\hat{x}_{i_1, l-1} }< T_{l-1} \mid i_1 \in \stream_{l-1}, E_{2,d}, \G, \goodest(i_1)} \notag \\ & \hspace*{1.0in} \cdot  \left(1 \pm 2^{(l-1)} \min(n^{-\Omega(1)}, O(\delta))\right)   \enspace .
\end{align}
Adding the simplified terms for the first and second expression from Eqns. ~\eqref{eq:lem:induction:base:rmarg:1b} and ~\eqref{eq:lem:induction:base:rmarg:2b}, we obtain,
\begin{align} \label{eq:lem:induction:base:rmarg:12c}
\text{ Term 1} + \text{ Term 2}\notag  &= \prob{ \abs{\hat{x}_{i_1, l-1} }> Q_{l-1} \mid i_1 \in \stream_{l-1}, E_{2,d}, \G, \goodest(i_1)}\\ &  \left(1 \pm 2^{(l-1)} \min(n^{-\Omega(1)}, O(\delta))\right)\enspace .  \enspace .
\end{align}

We now consider Term 3. By definition of conditional probability, we have,
\begin{align} \label{eq:lem:induction:base:rmarg:3aa}
& \text{Term 3} \notag \\
& =  2^l \prob{ \abs{\hat{x}_{i_1, l-1}} < Q_{l-1}, g_l(i_1)=1, i_1 \in \stream_{l-1} \mid E_{2,d}, \G, \goodest(i_1)} \notag \\
& = 2^l\prob{ \abs{\hat{x}_{i_1, l-1}} < Q_{l-1}, g_l(i_1) = 1 \mid  i_1 \in \stream_{l-1}, E_{2,d}, \G \goodest(i_1)} \notag \\
& \hspace*{1.0in} \cdot  \prob{  i_1 \in \stream_{l-1} \mid E_{2,d}, \G, \goodest(i_1)} \enspace .
\end{align}
As argued earlier, $ \prob{  i_1 \in \stream_{l-1} \mid E_{2,d}, \G, \goodest(i_1)} = 2^{-(l-1)}  \pm \min(n^{-\Omega(1)}, O(\delta))$. Simplifying by substituting in Eqn.~\eqref{eq:lem:induction:base:rmarg:3aa}, we obtain,
\begin{align} \label{eq:lem:induction:base:rmarg:3a}
& \text{Term3} \notag \\
& = 2\prob{ \abs{\hat{x}_{i_1, l-1}} < Q_{l-1}, g_l(i_1) = 1 \mid  i_1 \in \stream_{l-1}, E_{2,d}, \G,  \goodest(i_1)}\notag \\ & \hspace*{1.0in} \cdot  \left( 1 \pm 2^{l-1} \min(n^{-\Omega(1)}, O(\delta))\right) \enspace .
\end{align}
Consider a related probability
$
\prob{\abs{\hat{x}_{i_1, l-1}} < Q_{l-1}, g_l(i_1) = 1 \mid  i_1 \in \stream_{l-1}, E_{2,d}, \goodest(i_1) }$. Since, the event $g_l(i_1)=1$ is independent  of (i) the event $i_1 \in \stream_{l-1}$ (by independence of $g_l$ and $g_r$'s for all $r \ne l$),  (ii) the event $E_{2,d}$ by independence and $d$-wise independence of $g_l$, and (iii) $\goodest(i_1)$ which considers the inferences obtained until level $l-1$, we have,
\begin{align} \label{eq:lem:induction:base:rmarg:3b}
&\prob{\abs{\hat{x}_{i_1, l-1}} < Q_{l-1}, g_l(i_1) = 1 \mid  i_1 \in \stream_{l-1}, E_{2,d}, \goodest(i_1) } \notag \\
& = \prob{\abs{\hat{x}_{i_1, l-1}} < Q_{l-1} \mid  i_1 \in \stream_{l-1}, E_{2,d}, \goodest(i_1) } \notag \\
& \hspace*{1.0in} \cdot  \prob{ g_l(i_1) = 1 \mid  i_1 \in \stream_{l-1}, E_{2,d}, \goodest(i_1) } \enspace .
\end{align}
The  second product probability term  simplifies to  $(1/2) \pm \min(n^{-\Omega(1)}, O(\delta))$. Substituting in Eqn.~\eqref{eq:lem:induction:base:rmarg:3b}, we have,
\begin{align}\label{eq:lem:induction:base:rmarg:3c}
&\prob{\abs{\hat{x}_{i_1, l-1}} < Q_{l-1}, g_l(i_1) = 1 \mid  i_1 \in \stream_{l-1}, E_{2,d}, \goodest(i_1)} \notag\\
& = \prob{\abs{\hat{x}_{i_1, l-1}} < Q_{l-1} \mid  i_1 \in \stream_{l-1}, E_{2,d}, \goodest(i_1) } \left(1/2 \pm \min(n^{-\Omega(1)}, O(\delta)) \right) \enspace .
\end{align}
Therefore,
\begin{align}\label{eq:lem:induction:base:rmarg:3d}
&\prob{\abs{\hat{x}_{i_1, l-1}} < Q_{l-1}, g_l(i_1) = 1 \mid  i_1 \in \stream_{l-1}, E_{2,d}, \goodest(i_1), \G} \notag\\
& = \prob{\abs{\hat{x}_{i_1, l-1}} < Q_{l-1}, g_l(i_1) = 1 \mid  i_1 \in \stream_{l-1}, E_{2,d}, \goodest(i_1)} \pm \min(n^{-\Omega(1)}, O(\delta)) \notag \\
& = \prob{\abs{\hat{x}_{i_1, l-1}} < Q_{l-1} \mid  i_1 \in \stream_{l-1}, E_{2,d}, \goodest(i_1) } \left(1/2 \pm \min(n^{-\Omega(1)}, O(\delta)) \right) \notag \\
& \hspace*{1.0in} \pm \min(n^{-\Omega(1)}, O(\delta)) \enspace .
\end{align}
Substituting in Eqn.~\eqref{eq:lem:induction:base:rmarg:3a}, we have,
\begin{align}\label{eq:lem:induction:base:rmarg:3e}
& \text{Term3} \notag \\
& = 2\prob{ \abs{\hat{x}_{i_1, l-1}} < Q_{l-1}, g_l(i_1) = 1 \mid  i_1 \in \stream_{l-1}, E_{2,d}, \G,  \goodest(i_1)} \notag \\
& \hspace*{1.0in} \cdot  \left( 1 \pm 2^{l-1} \min(n^{-\Omega(1)}, O(\delta))\right) \notag \\
& = 2\prob{\abs{\hat{x}_{i_1, l-1}} < Q_{l-1} \mid  i_1 \in \stream_{l-1}, E_{2,d}, \goodest(i_1) }\left(1/2 \pm \min(n^{-\Omega(1)}, O(\delta)) \right) \notag \\
& \hspace*{1.0in} \pm \min(n^{-\Omega(1)}, O(\delta)) \notag \\
& = \prob{\abs{\hat{x}_{i_1, l-1}} < Q_{l-1} \mid  i_1 \in \stream_{l-1}, E_{2,d}, \goodest(i_1) }\left(1 \pm \min(n^{-\Omega(1)}, O(\delta)) \right) \notag \\
& \hspace*{1.0in} \pm \min(n^{-\Omega(1)}, O(\delta)) \enspace .
\end{align}
Adding Eqns.~\eqref{eq:lem:induction:base:rmarg:12c} and ~\eqref{eq:lem:induction:base:rmarg:3e}, we obtain
\begin{align*}
&\text{ Term 1 } + \text{ Term 2} + \text{ Term 3}  \\
& = \left( 1 \pm 2^{l-1} \min(n^{-\Omega(1)}, O(\delta)) \right) \pm \min(n^{-\Omega(1)}, O(\delta))\\
& = 1 \pm 2^l \min(n^{-\Omega(1)}, O(\delta)) \enspace .
\end{align*}
\end{proof}

\begin{corollary} \label{cor:basecase1}
Let $i_1,  \ldots, i_d \in [n]$ and $S \subset \{i_1, \ldots, i_d\}$ containing $i_1$. Then,
\begin{gather*}
\sum_{l_1 \consistw~ i_1} 2^{l_1} \prob{ i_1 \in \bar{G}_{l_1} \left\vert \bigwedge_{j=h+1}^d i_j \in \stream_{l_j}, \G, \bigwedge_{j \in S}\goodest(j)\right.} \\ = 1 \pm 2^{\level(i_1)} \abs{S}\min (n^{-\Omega(1)}, O(\delta)) \enspace .
\end{gather*}
\end{corollary}

\begin{proof} The proof proceeds identically as the proof  for Lemma~\ref{lem:basecase}.

\end{proof}

\begin{lemma} \label{lem:approxdwise:restate} [Re-statement of Lemma~\ref{lem:approxdwise}.]
Let $i_1,  \ldots, i_{d} \in [n]$ and distinct  $1 \le h \le d$. Then,
\begin{gather*}
\sum_{\substack{l_j \consistw i_j\\ \forall j \in [h]}}
2^{l_1+ l_2 + \ldots + l_h} \prob{\left.  \bigwedge_{j=1}^h i_j \in \bar{G}_{l_j} \right\vert \bigwedge_{j=h+1}^d i_j \in \stream_{l_j}, \G, \bigwedge_{j=1}^d \goodest(i_j)} \\ \in \prod_{j=1}^h \left(1 \pm 2^{\level(i_j) + 1}\cdot \left(\min(n^{-\Omega(1)},\delta) + d n^{-\Omega(1)}\right)\right)
\end{gather*}
\end{lemma}

\eat{\begin{lemma}
Let $I = \{i_1,  \ldots, i_{d} \}\subset [n]$ and  $1 \le h \le d$. Let $\accuest(\{i_1, \ldots, i_h\})$ denote the event $\bigwedge_{k=1}^h \accuest(i)$. Then, assuming $d$-wise independence of the hash functions $g_1, \ldots, g_L$, the following holds.
\begin{align*}
\sum_{\substack{l_j =0,1, \ldots, L, \\ \text{ for each } j =1,2, \ldots, h}}
2^{l_1+ l_2 + \ldots + l_h} \prob{\left.  \bigwedge_{j=1}^h i_j \in \bar{G}_{l_j} \right\vert \bigwedge_{j=h+1}^d i_j \in \stream_{l_j}, \G, \accuest(\{i_1, \ldots, i_h\})} \in \prod_{j=1}^h \left((1 \pm 2^{\level(i_j) + 1})(O(\delta) + d n^{-\Omega(1)})\right)
\end{align*}
\end{lemma}
}

\begin{proof}
The proof proceeds by induction on $h$.

The base case occurs when $h=1$ and has been proved in Lemma~\ref{lem:basecase} and Corollary~\ref{cor:basecase1}.

\emph{Induction Case.} Let $E_{h+1,d} $ denote the event $\bigwedge_{j=h+1}^d i_j \in \stream_{l_j}$, $G_h$ denote the event $ \bigwedge_{j=1}^h i_j \in \bar{G}_{l_j}$ and $K_d = \bigwedge_{j=1}^d \goodest(i_j)$. Also, for simplicity, let $\delta'$ denote $\min(n^{-\Omega(1)}, O(\delta))$.  In this notation, the statement of the lemma can be written as
\begin{align} \label{eq:indn:1}
& \sum_{\substack{l_j \text{ consistent with } i_j\\ \forall j \in [h]}}
2^{l_1 + \ldots + l_{h+1}} \prob{ G_h \mid E_{h+1,d},  \G, K_d}
 \in \prod_{j=1}^h (1 \pm 2^{\level(i_j) + 1})\left( \delta' + dn^{-\Omega(1)}\right) \enspace .
\end{align}
We assume that for a fixed $h \in [d-1]$,  the induction hypothesis holds, that is,
\begin{align*}
& \sum_{\substack{l_j \text{ consistent with } i_j\\ \forall j \in [h]}}
2^{l_1 + \ldots + l_h} \prob{G_h\vert E_{h+1,d}, \G, K_d} \in\prod_{j=1}^{h+1} (1 \pm 2^{\level(i_j) + 1})\left( \delta' + dn^{-\Omega(1)}\right)  \enspace .
\end{align*}
We now have to show  that the following equation holds.
\begin{align*}
 \sum_{\substack{l_j \text{ consistent with } i_j\\ \forall j \in [h+1]}}
2^{l_1 + \ldots + l_{h+1}} \prob{G_{h+1}\vert E_{h+2,d}, \G, K_{d}} \in \prod_{j=1}^{h+1} (1 \pm 2^{\level(i_j) + 1})\left( \delta' + dn^{-\Omega(1)}\right) \enspace .
\end{align*}

\emph{Induction Case 1:}  Suppose there exists a $j \in [h+1]$ such that $i_j \in \midreg(G_l)$. Without loss of generality, let $j=h+1$ and  $i_{h+1} \in \midreg(G_l)$ (this can be done by rearranging the indices $j_1, \ldots, j_{h+1}$, without affecting the statement). 
Then,
\begin{align} \label{eq:lem:ind:mid1}
& \sum_{\substack{l_j \text{ consistent with } i_j\notag \\ \forall j \in [h+1]}}
2^{l_1 + \ldots l_{h+1}} \prob{G_{h+1}\vert E_{h+2,d}, \G, K_d} \notag \\
& = \sum_{\substack{l_j \text{ consistent with } i_j\notag \\ \forall j \in \{1,\ldots, h\}}}
2^{l_1 + \ldots l_{h} + l} \prob{ G_h \wedge (i_{h+1} \in \stream_l) \vert E_{h+2,d}, \G, K_d} \notag \\
& = \sum_{\substack{l_j \text{ consistent with } i_j \\ \forall j \in \{1,\ldots, h\}}}
2^{l+l_1 + \ldots l_{h}}\prob{ G_h\vert i_{h+1} \in \stream_l \wedge E_{h+2,d}, \G, K_d} \prob{ i_1 \in \stream_l\mid E_{h+2,d},\G, K_d}
\end{align}
The term
\begin{align*}
&\prob{ G_h\vert i_{h+1} \in \stream_l \wedge E_{h+2,d}, \G, K_d}\\
 &= \prob{G_h \vert E_{h+1,d}, \G, K_d}\prod_{j=1}^h (1 \pm 2^{\level(i_j) + 1}\left( O(\delta) + dn^{-\Omega(1)} \right)
\end{align*}
 by the induction hypothesis.
The second term in Eqn.~\eqref{eq:lem:ind:mid1} is evaluated as follows. By $d$-wise  independence, $\prob{ i_1 \in \stream_l\mid E_{h+2,d}} = 2^{-l}$. Also, $\G$ holds with probability $1-\delta'$ and $K_d$ holds with probability $1-dn^{-\Omega(1)}$. Therefore, $\G \wedge K_d$ holds with probability $1-\delta' - dn^{-\Omega(1)}$.  Hence, by Lemma~\ref{lem:fact},
$$\prob{ i_1 \in \stream_l\mid E_{h+2,d},\G, K_d}  \in 2^{-l} \pm O(\delta' + dn^{-\Omega(1)}) \enspace .$$
Therefore, Eqn.~\eqref{eq:lem:ind:mid1} can be bounded as
\begin{align*}
\prod_{j=1}^{h+1} (1 \pm 2^{\level(i_j) + 1}\left( O(\delta) + dn^{-\Omega(1)} \right)
\end{align*}
there by proving this case.

\emph{Induction Case 2:} Suppose there exists an index $j \in [h+1]$ such that $i_j \in \lmargin(G_l)$. By reordering the indices, we can assume without loss of generality that $i_{h+1} \in \lmargin(G_l)$. Conditional on $\G$ and \goodest$(i)$, as shown in Lemma~\ref{lem:margin}, the following holds.
\begin{enumerate}
\item  $i_{h+1} \in \bar{G}_l$ iff $i_{h+1} \in \stream_l$ and $\abs{\hat{x}_{h+1,l}} \ge T_l$.
\item $i_{h+1} \in \bar{G}_{l+1}$ iff $i_{h+1} \in \stream_l$ and $Q_l < \abs{\hat{x}_{h+1,l}} < T_l$.
\item $i_{h+1} \in \bar{G}_r$, for any $r \in [L] \setminus \{l, l+1\}$.
\end{enumerate}
Then,
\begin{align} \label{eq:lem:ind:lmarg:1}
& \sum_{\substack{l_j \text{ consistent with } i_j\\ \forall j \in [h+1]}}
2^{l_1 + \ldots l_{h+1}} \prob{G_{h+1} \vert E_{h+2,d}, \G, K_d} \notag  \\
& = \sum_{\substack{l_j \text{ consistent with } i_j\\ \forall j \in [h]}}
2^{l_1 + \ldots + l_h + l} \prob{G_h \wedge (i_{h+1} \in \stream_l \wedge \abs{\hat{x}_{i_{h+1},l}} \ge T_l) \vert E_{h+2,d}, \G, K_d} \notag \\
& +\sum_{\substack{l_j \text{ consistent with } i_j\\ \forall j \in [h]}}
2^{l_1 + \ldots + l_h + l+1} \notag \\
& \hspace*{1.0in} \cdot \prob{G_h \wedge (i_{h+1} \in \stream_l \wedge   (\abs{\hat{x}_{i_{h+1},l}} < T_l, K_{i_{h+1}}=1 ) \vert E_{h+2,d}, \G, K_d}
\end{align}
Since the coin toss $K_{i_{h+1}} = 1$ occurs with probability 1/2 and is independent of all the other random bits, the second  probability expression in the \emph{RHS} of Eqn.~\eqref{eq:lem:ind:lmarg:1}, we have,
\begin{align*}
&\prob{G_h \wedge (i_{h+1} \in \stream_l \wedge \abs{\hat{x}_{i_{h+1},l}} < T_l, K_{i_{h+1}} =1) \vert E_{h+2,d}, \G, K_d}\\
& = (1/2) \prob{G_h \wedge (i_{h+1} \in \stream_l \wedge \abs{\hat{x}_{i_{h+1},l}} < T_l) \vert E_{h+2,d}, \G, K_d} \enspace .
\end{align*}
Thus, the \emph{RHS} of Eqn.~\eqref{eq:lem:ind:lmarg:1} is simplified as follows.
\begin{align} \label{eq:lem:ind:lmarg:2}
& \sum_{\substack{l_j \text{ consistent with } i_j\\ \forall j \in [h]}} \left(
2^{l_1 + \ldots + l_h + l} \prob{G_h \wedge (i_{h+1} \in \stream_l \wedge \abs{\hat{x}_{i_{h+1},l}} \ge T_l) \vert E_{h+2,d}, \G, K_d} \right . \notag \\
& \left. ~~+
2^{l_1 + \ldots + l_h + l+1} (1/2)\prob{G_h \wedge (i_{h+1} \in \stream_l \wedge   \abs{\hat{x}_{i_{h+1},l}} < T_l ) \vert E_{h+2,d}, \G, K_d} \right) \notag \\
& = \sum_{\substack{l_j \text{ consistent with } i_j\\ \forall j \in [h]}}
2^{l_1 + \ldots + l_h + l} \prob{G_h \wedge i_{h+1} \in \stream_l\vert E_{h+2,d}, \G, K_d}  \notag \\
& =  \sum_{\substack{l_j \text{ consistent with } i_j\\ \forall j \in [h]}}
2^{l_1 + \ldots + l_h + l} \prob{G_h \vert i_{h+1} \in \stream_l\wedge  E_{h+2,d}, \G, K_d} \notag \\
& \hspace*{1.0in}  \cdot \prob{i_{h+1} \in \stream_l \vert E_{h+2,d}, \G, K_d}
\end{align}
where the first step follows from  probability axioms of union of exclusive events, and the last step follows from the definition of conditional probability.
Now, as discussed earlier in this proof, $\prob{i_{h+1} \in \stream_l \vert E_{h+2,d}} = 2^{-l}$ and therefore, $\prob{i_{h+1} \in \stream_l \vert E_{h+2,d}, \G, K_d} = 2^{-l} \pm \min(O(\delta), n^{-\Omega(1)})+ dn^{-\Omega(1)}$. Also, the event $i_{h+1} \in \stream_l \wedge E_{h+2, d} = E_{h+1, d}$. Hence, the probability expression in the \emph{RHS} of Eqn.~\eqref{eq:lem:ind:lmarg:2} is

\begin{align*} 
& = \sum_{\substack{l_j \text{ consistent with } i_j\\ \forall j \in [h]}}
2^{l_1 + \ldots + l_h}\prob{G_h \vert E_{h+1,d}, \G, K_d} (1 \pm 2^l \min(O(\delta), n^{-\Omega(1)})+ dn^{-\Omega(1)}) \notag\\
& = (1 \pm 2^l \min(O(\delta), n^{-\Omega(1)})+ dn^{-\Omega(1)}) \prod_{j=1}^h  \left(1 \pm 2^{l_j+1} \min(O(\delta), n^{-\Omega(1)}) \right)
\end{align*}
by the induction hypothesis, thereby proving this induction case.

\emph{Induction Case 3:} For this case, it suffices to assume that for each $j \in [h+1]$, $i_j \in \rmargin(G_{l_j})$. Assume that we re-order the items so that $ \abs{x_{i_{h+1}}} \ge \abs{x_{i_{h}}} \ge \ldots \ge \abs{x_{i_1}}$. Let $\level(i_{h+1}) = l$. Therefore,
the expression for the \emph{LHS} is

\begin{align} \label{eq:lem:ind:rmarg:1}
\sum_{\substack{ l_j \consistw i_j \\ \forall j \in [h+1]}}
2^{l_1 + l_2 + \ldots + l_{h+1}} \prob{ G_{h+1} \vert E_{h+2,d}, \G, K_d} \enspace .
\end{align}

We are given that $i_{h+1} \in \rmargin(G_l)$. Conditional on $\G$ and \goodest~, we have the following.
\begin{enumerate}
\item $i_{h+1}$ may be discovered at level $l-1$ (only if $i_{h+1} \in \stream_{l-1})$ or possibly at level $l$ (only if $i_{h+1} \in \stream_l)$, but at no lower level than $l-1$.
\item $i_{h+1} \in \bar{G}_{l-1}$ if $i_{h+1} \in \stream_{l-1}$ and $\abs{\hat{x}_{i_{h+1}, l-1}} \ge T_{l-1}$.
\item $i_{h+1} \in \bar{G}_l$ in one of two mutually exclusive ways.
\begin{enumerate}
\item $i \in \stream_{l-1}$ and $Q_{l-1} = T_{l-1} (1- \epsbar)) < \abs{\hat{x}_{i_{h+1}, l-1}}< T_l$ and the coin toss $K_i =1 $.
\item $i \in \stream_{l-1}$ and $g_l(i_{h+1}) = 1$ and  $\abs{\hat{x}_{i_{h+1}, l-1}} < Q_{l-1}$.
\end{enumerate}
\end{enumerate}
The  probability term in Eqn.~\eqref{eq:lem:ind:rmarg:1} can be written as follows.
\begin{align} \label{eq:lem:ind:rmarg:2}
& \sum_{l_{h+1} \consistw i_{h+1}} 2^{l_{h+1}}\prob{ G_{h+1} \vert E_{h+2,d}, \G, K_d} \notag \\
& = 2^{l-1} \prob{G_h \wedge i_{h+1} \in \bar{G}_{l-1}  \vert E_{h+2,d}, \G, K_d} \notag \\
& ~~+ 2^l \prob{ G_h \wedge i_{h+1} \in \bar{G}_l \vert E_{h+2, d}, \G, K_d} \notag\\
& = 2^{l-1} \prob{G_h, i_{h+1} \in \stream_{l-1}, \abs{\hat{x}_{h+1,l-1}} \ge T_{l-1} \vert E_{h+2,d}, \G, K_d}\notag \\
& ~~+ 2^l  \prob{G_h, i_{h+1} \in \stream_{l-1},  Q_{l-1} < \abs{\hat{x}_{h+1, l-1}} < T_{l-1}, K_i = 1 \mid E_{h+2, d}, \G, K_d } \notag \\
&~~+ 2^l \prob{ G_h, i_{h+1} \in \stream_{l-1}, g_l(i_{h+1}) = 1,  Q_{l-1} > \abs{\hat{x}_{i_{h+1}, l-1}} \mid E_{h+2,d}, \G, K_d}
\end{align}

We simplify the above equation to consider the following expression, which we will relate back to the \emph{RHS} of Eqn.~\eqref{eq:lem:ind:rmarg:2}.
\begin{align}\label{eq:lem:ind:rmarg:3}
& = 2^{l-1} \prob{G_h, i_{h+1} \in \stream_{l-1}, \abs{\hat{x}_{h+1,l-1}} \ge T_{l-1} \vert E_{h+2,d}, } \notag \\
& ~~+ 2^l  \prob{G_h, i_{h+1} \in \stream_{l-1},  Q_{l-1} < \abs{\hat{x}_{h+1, l-1}} < T_{l-1}, K_i = 1 \mid E_{h+2, d}} \notag \\
&~~+ 2^l \prob{ G_h, i_{h+1} \in \stream_{l-1}, g_l(i_{h+1}) = 1,  Q_{l-1} > \abs{\hat{x}_{i_{h+1}, l-1}} \mid E_{h+2,d}} \notag\\
& = 2^{l-1}\prob{G_h, i_{h+1} \in \stream_{l-1}, \abs{\hat{x}_{h+1,l-1}} \ge T_{l-1} \vert E_{h+2,d}, } \notag \\
&~~+ 2^{l-1} \prob{G_h, i_{h+1} \in \stream_{l-1},  Q_{l-1} < \abs{\hat{x}_{h+1, l-1}} < T_{l-1} \mid E_{h+2, d}} \notag \\
&~~+ 2^{l-1} \prob{ G_h, i_{h+1} \in \stream_{l-1},  Q_{l-1} > \abs{\hat{x}_{i_{h+1}, l-1}} \mid E_{h+2,d}}
\end{align}

The last step follows since $g_l(i_{h+1})$ is independent of $g_r$ for $r \ne l$ and comes from a $d$-wise independent family.
So Eqn.~\eqref{eq:lem:ind:rmarg:3} becomes
\begin{align}\label{eq:lem:ind:rmarg:4}
= 2^{l-1} \prob{G_h, i_{h+1} \in \stream_{l-1} \mid E_{h+2,d}}
\end{align}
Each of the summands in the expression of the \emph{RHS} of Eqn.~\eqref{eq:lem:ind:rmarg:2} can be written as follows.

\begin{align}
&2^{l-1} \prob{G_h, i_{h+1} \in \stream_{l-1}, \abs{\hat{x}_{h+1,l-1}} \ge T_{l-1} \vert E_{h+2,d}, \G, K_d} \notag\\
& = 2^{l-1} \prob{G_h, i_{h+1} \in \stream_{l-1}, \abs{\hat{x}_{h+1,l-1}} \ge T_{l-1} \vert E_{h+2,d}} \pm 2^{l-1} \left( \delta' + dn^{-\Omega(1)} \right)  \enspace .  \label{eq:lem:ind:rmarg:5a} \\
& 2^l  \prob{G_h, i_{h+1} \in \stream_{l-1},  Q_{l-1} < \abs{\hat{x}_{h+1, l-1}} < T_{l-1}, K_i = 1 \mid E_{h+2, d}, \G, K_d}  \notag \\
&  = 2^l \prob{G_h, i_{h+1} \in \stream_{l-1},  Q_{l-1} < \abs{\hat{x}_{h+1, l-1}} < T_{l-1}, K_i = 1 \mid E_{h+2, d}} \pm 2^{l-1} (\delta' + dn^{-\Omega(1)} )
\label{eq:lem:ind:rmarg:5b} \\
&2^{l-1} \prob{ G_h, i_{h+1} \in \stream_{l-1}, g_l(i_{h+1}) = 1,  Q_{l-1} > \abs{\hat{x}_{i_{h+1}, l-1}} \mid E_{h+2,d}, \G, K_d} \notag \\
& = 2^{l-1}\prob{ G_h, i_{h+1} \in \stream_{l-1},   Q_{l-1} > \abs{\hat{x}_{i_{h+1}, l-1}} \mid E_{h+2,d}} \pm 2^{l-1}\left( \delta' + dn^{-\Omega(1)} \right)
\label{eq:lem:ind:rmarg:5c}
\end{align}

Adding Eqns. ~\eqref{eq:lem:ind:rmarg:5a} through ~\eqref{eq:lem:ind:rmarg:5c}, we have,
\begin{align}  \label{eq:lem:ind:rmarg:6}
&\sum_{l_{h+1} \consistw i_{h+1}} 2^{l_{h+1}}\prob{ G_{h+1} \vert E_{h+2,d}, \G, K_d} \notag \\
& = 2^{l-1} \prob{G_h, i_{h+1} \in \stream_{l-1} \mid E_{h+2,d}} \pm 3 \cdot 2^{l-1} ( \delta' + dn^{-\Omega(1)} )
\end{align}

Further, consider the probability term in the \emph{RHS} of Eqn. ~\eqref{eq:lem:ind:rmarg:6}. We have,
\begin{align*}
&\prob{G_h, i_{h+1} \in \stream_{l-1} \mid E_{h+2,d}} \\
& = \prob{G_h \mid E_{h+2,d}, i_{h+1} \in \stream_{l-1}} \cdot \prob{i_{h+1} \in \stream_{l-1} \mid E_{h+2,d}} \\
& = 2^{-(l-1)}\prob{G_h\mid E_{h+1,d}}
\end{align*}

Substituting in Eqn.~\eqref{eq:lem:ind:rmarg:6}, we have,
\begin{align*}
&\sum_{l_{h+1} \consistw i_{h+1}} 2^{l_{h+1}}\prob{ G_{h+1} \vert E_{h+2,d}, \G, K_d} \notag \\
& =2^{l-1}\prob{G_h\mid E_{h+1,d}}  \pm 3 \cdot 2^{l-1} \left( \delta' + dn^{-\Omega(1)} \right) \notag \\
& = \prob{G_h \mid E_{h+1,d}, \G, K_d} \pm 4\cdot 2^{l-1} \left( \delta' + dn^{-\Omega(1)} \right)
\end{align*}
Therefore,
\begin{align*}
&\sum_{\substack { l_j \consistw i_j \\ \forall j \in [h+1]}} 2^{l_1 + \ldots + l_{h+1}}
\prob{G_{h+1} \mid E_{h+2,d}, \G, K_d} \notag \\
& = \prod_{j=1}^h (1 \pm 2^{\level(i_j)+1}( \delta' + dn^{-\Omega(1)})\pm 4\cdot 2^{l-1} \left( \delta' + dn^{-\Omega(1)} \right)
\end{align*}
which proves the property.

\eat{

\emph{Induction Case 3.} We may now assume that for each $j \in [h+1]$, $i_j \in  \rmargin(G_{l_j})$. 
Without loss of generality, we re-order the indices so that
$\abs{x_{i_{h+1}}} \ge \abs{x_{i_h}} \ge \ldots \ge \abs{x_{i_1}}$. Let $s \in [h]$ be the smallest index such that $ \{i_{h+1}, \ldots , i_s\} \subset G_l$ for some $l$. Thus, $\{i_{h+1}, \ldots , i_s\} \subset \rmargin(G_l)$ and $\level(i_j) = l$, for $j = s, s+1, \ldots h+1$.

Therefore,
\begin{align}\label{eq:dwirmarg}
& \sum_{\substack{l_j \text{ consistent with } i_j\notag \\ \forall j \in [h+1]}}
2^{l_1 + \ldots l_{h+1}} \prob{\left.  \bigwedge_{j=1}^{h+1}  i_j \in \bar{G}_{l_j} \right\vert \bigwedge_{j=h+2}^d i_j \in \stream_{l_j}, \G} \notag \\
& = \sum_{\substack{l_j \text{ consistent with } i_j\notag \\ \forall j \in [s-1]}} 2^{l_1 + \ldots + l_{s-1}}
\cdot 2^{(l-1)(h-s+2)} \cdot \sum_{t=0}^{h-s+2} 2^t \notag \\
& \hspace*{1.0in} \cdot \sum_{\substack{T\subset \{s, \ldots, h+1\}\\ \abs{T} =t}} \prob{\left.  \bigwedge_{j=1}^{s-1} i_j \in \bar{G}_{l_j}\bigwedge_{ j \in \{s,\ldots, h+1\}\setminus T} i_j \in \bar{G}_{l-1} \bigwedge_{\substack{j  \in T}} i_j \in \bar{G}_{l}  \right\vert \bigwedge_{j=h+2}^d i_j \in \stream_{l_j}, \G} \notag \\
& =  \sum_{\substack{l_j \text{ consistent with } i_j\notag \\ \forall j \in [s-1]}} 2^{l_1 + \ldots + l_{s-1}}
\cdot 2^{(l-1)(h-s+2)} \cdot \sum_{t=0}^{h-s+2} 2^t  \sum_{t_1=0}^t\notag \\
&  \hspace*{0.5in} \cdot \sum_{\substack{T\subset \{s, \ldots, h+1\}\notag \\ \abs{T} = t}} \sum_{\substack{ T_1 \subset T \notag \\ \abs{T_1}  = t_1}}\textsf{Pr} \left[  \bigwedge_{j=1}^{s-1} i_j \in \bar{G}_{l_j}\bigwedge_{ j \in \{s,\ldots, h+1\}\setminus T} \left( i_j \in \stream_{l-1}, \abs{\hat{x}_{i_j,l-1}} \ge T_{l-1}\right) \right . \notag \\
&  \hspace*{2.0in}
\bigwedge_{\substack{j  \in T_1} } \left( i_j \in \stream_{l-1},  Q_{l-1} \le \abs{\hat{x}_{i_j,l-1}} < T_{l-1}, K_i =1\right) \notag \\
& \hspace*{2.0in} \left . \left.
\bigwedge_{j \in T\setminus T_1} \left( i_j \in \stream_{l-1}, \abs{\hat{x}_{i_j, l-1}} < Q_{l-1}, g_l(i)=1 \right)
 \right\vert \bigwedge_{j=h+2}^d i_j \in \stream_{l_j}, \G \right] \notag \\
 & =  \sum_{\substack{l_j \text{ consistent with } i_j\notag \\ \forall j \in [s-1]}} 2^{l_1 + \ldots + l_{s-1}}
\cdot 2^{(l-1)(h-s+2)} \cdot \sum_{t=0}^{h-s+2} 2^t  \sum_{t_1=0}^t
 \cdot \sum_{\substack{T\subset \{s, \ldots, h+1\}\notag \\ \abs{T} = t}} \sum_{\substack{ T_1 \subset T \notag \\ \abs{T_1}  = t_1}} 2^{-t_1} \notag \\
& \cdot \textsf{Pr} \left[  \bigwedge_{j=1}^{s-1} i_j \in \bar{G}_{l_j}\bigwedge_{ j \in \{s,\ldots, h+1\}\setminus T} \left(\abs{\hat{x}_{i_j,l-1}} \ge T_{l-1}\right)
\bigwedge_{\substack{j  \in T_1} }   \left(Q_{l-1} \le \abs{\hat{x}_{i_j,l-1}} < T_{l-1} \right)
\bigwedge_{j \in T\setminus T_1}  \left( \abs{\hat{x}_{i_j, l-1}} < Q_{l-1}\right) \right. \notag \\
& \hspace*{1.0in}
 \left \vert \bigwedge_{j \in \{s,\ldots, h+1\}} i_j \in \stream_{l-1} \bigwedge_{j \in T\setminus T_1} g_l(i_j)=1 \bigwedge_{j=h+2}^d i_j \in \stream_{l_j}, \G \right]  \notag \\
 & \hspace*{0.5in} \cdot \prob{  \bigwedge_{j \in \{s,\ldots, h+1\}} i_j \in \stream_{l-1} \bigwedge_{j \in T\setminus T_1} g_l(i_j)=1 \left\lvert \bigwedge_{j=h+2}^d i_j \in \stream_{l_j}, \G \right. }
 \end{align}

 Let $T \setminus T_1 = \{j_1, j_2, \ldots, j_{t-t_1}\}$. Then,
 \begin{align*}
& \prob{  \bigwedge_{j \in \{s,\ldots, h+1\}} i_j \in \stream_{l-1} \bigwedge_{j \in T\setminus T_1} g_l(i_j)=1 \left\lvert \bigwedge_{j=h+2}^d i_j \in \stream_{l_j}, \G \right. } \\
& = \prod_{r=1}^{t-t_1} \prob{ g_l(i_{j_r})=1\left\lvert  \bigwedge_{w=r+1}^{t-t_1} g_l(i_{j_w}) \bigwedge_{j=s}^{h+1} i_j \in \stream_{l-1} \bigwedge_{j=h+2}^d i_j \in \stream_{l_j}, \G \right.}\\
& ~~~\cdot \prod_{j =s}^{h+1} \prob{ i_j \in \stream_{l-1} \left\lvert  \bigwedge_{w=j+1}^{h+1} i_w \in \stream_{l-1} \bigwedge_{j=h+2}^d i_j \in \stream_{l_j} \bigwedge \G\right. } \\
& \in \prod_{r=1}^{t-t_1} \left( 2^{-1} \pm n^{-c}\right)  \prod_{j=s}^{h+1} \left(2^{-(l-1)}\pm n^{-c}\right)~~\text{(by $d$-wise independence)} \\
& \in 2^{-(t-t_1) - (l-1)(h-s+2)}  (1 \pm 2n^{-c})^{t-t_1} (1 \pm 2^{l-1} n^{-c})^{h-s+2}
\end{align*}
 Substituting in Eqn.~\eqref{eq:dwirmarg}, we have,
 \begin{align*}
 &\in   \sum_{\substack{l_j \text{ consistent with } i_j\notag \\ \forall j \in [s-1]}} 2^{l_1 + \ldots + l_{s-1}} 2^{(l-1)(h-s+2)}
\cdot \sum_{t=0}^{h-s+2}  2^t \sum_{t_1=0}^t 2^{-t_1}
 \cdot \sum_{\substack{T\subset \{s, \ldots, h+1\}\notag \\ \abs{T} = t}} \sum_{\substack{ T_1 \subset T \notag \\ \abs{T_1}  = t_1}}  \notag \\
&\hspace*{0.2in} \textsf{Pr} \left[  \bigwedge_{j=1}^{s-1} i_j \in \bar{G}_{l_j}\bigwedge_{ j \in \{s,\ldots, h+1\}\setminus T} \abs{\hat{x}_{i_j,l-1}} \ge T_{l-1}
\bigwedge_{\substack{j  \in T_1} }   Q_{l-1} \le \abs{\hat{x}_{i_j,l-1}} < T_{l-1}
\bigwedge_{j \in T\setminus T_1}   \abs{\hat{x}_{i_j, l-1}} < Q_{l-1} \right. \notag \\
& \hspace*{1.0in}
 \left \vert \bigwedge_{j \in \{s,\ldots, h+1\}} i_j \in \stream_{l-1} \bigwedge_{j \in T\setminus T_1} g_l(i_j)=1 \bigwedge_{j=h+2}^d i_j \in \stream_{l_j}, \G \right] \\
 & \hspace*{1.0in}\cdot  2^{-(t-t_1) - (l-1)(h-s+2)}  (1 \pm 2n^{-c})^{t-t_1} (1 \pm 2^{l-1} n^{-c})^{h-s+2} \\
 & \in  (1 \pm 2n^{-c})^{h-s+2} (1 \pm 2^{l-1} n^{-c})^{h-s+2} \\
 & ~~\cdot \sum_{\substack{l_j \text{ consistent with } i_j\notag \\ \forall j \in [s-1]}} 2^{l_1 + \ldots + l_{s-1}}
 \sum_{t=0}^{h-s+2}  \sum_{t_1=0}^t
 \sum_{\substack{T\subset \{s, \ldots, h+1\}\notag \\ \abs{T} = t}} \sum_{\substack{ T_1 \subset T \notag \\ \abs{T_1}  = t_1}}  \notag \\
&\hspace*{0.5in} \textsf{Pr} \left[  \bigwedge_{j=1}^{s-1} i_j \in \bar{G}_{l_j}\bigwedge_{ j \in \{s,\ldots, h+1\}\setminus T} \abs{\hat{x}_{i_j,l-1}} \ge T_{l-1}
\bigwedge_{\substack{j  \in T_1} }   Q_{l-1} \le \abs{\hat{x}_{i_j,l-1}} < T_{l-1}
\bigwedge_{j \in T\setminus T_1}   \abs{\hat{x}_{i_j, l-1}} < Q_{l-1} \right. \notag \\
& \hspace*{1.0in}
 \left \vert \bigwedge_{j \in \{s,\ldots, h+1\}} i_j \in \stream_{l-1} \bigwedge_{j \in T\setminus T_1} g_l(i_j)=1 \bigwedge_{j=h+2}^d i_j \in \stream_{l_j}, \G \right]
 \end{align*}
 \begin{align*}
 &\in  (1 \pm 2n^{-c})^{h-s+2} (1 \pm 2^{l-1} n^{-c})^{h-s+2} \\
 & ~~\cdot  \sum_{\substack{l_j \text{ consistent with } i_j\notag \\ \forall j \in [s-1]}} 2^{l_1 + \ldots + l_{s-1}}
 \sum_{t=0}^{h-s+2}  \sum_{t_1=0}^t
 \sum_{\substack{T\subset \{s, \ldots, h+1\}\notag \\ \abs{T} = t}} \sum_{\substack{ T_1 \subset T \notag \\ \abs{T_1}  = t_1}}  \notag \\
&\hspace*{0.5in} \textsf{Pr} \left[  \bigwedge_{j=1}^{s-1} i_j \in \bar{G}_{l_j}\bigwedge_{ j \in \{s,\ldots, h+1\}\setminus T} \abs{\hat{x}_{i_j,l-1}} \ge T_{l-1}
\bigwedge_{\substack{j  \in T_1} }   Q_{l-1} \le \abs{\hat{x}_{i_j,l-1}} < T_{l-1}
\bigwedge_{j \in T\setminus T_1}   \abs{\hat{x}_{i_j, l-1}} < Q_{l-1} \right. \notag \\
& \hspace*{1.0in}
 \left \vert \bigwedge_{j \in \{s,\ldots, h+1\}} i_j \in \stream_{l-1}\bigwedge_{j=h+2}^d i_j \in \stream_{l_j}, \G \right]   \\
 & =   (1 \pm 2n^{-c})^{h-s+2} (1 \pm 2^{l-1} n^{-c})^{h-s+2}\\
& ~~\cdot \sum_{\substack{l_j \text{ consistent with } i_j\notag \\ \forall j \in [s-1]}} 2^{l_1 + \ldots + l_{s-1}}\prob{ \bigwedge_{j=1}^{s-1} i_j \in \bar{G}_{l_j} \left\vert \bigwedge_{j \in \{s,\ldots, h+1\}} i_j \in \stream_{l-1}\bigwedge_{j=h+2}^d i_j \in \stream_{l_j}, \G \right.} \\
& =  (1 \pm 2n^{-c})^{h-s+2} (1 \pm 2^{l-1} n^{-c})^{h-s+2} \prod_{j=1}^{s-1} (1 \pm 2^{\level(i_j)+1} n^{-c})\\
& \in (1 \pm 2^l n^{-c})^{h-s+2} \prod_{j=1}^{s-1} (1 \pm 2^{\level(i_j)+1} n^{-c})\\
& = \prod_{j=1}^{h+1}(1 \pm 2^{\level(i_j)+1} n^{-c})
\end{align*}

}
\end{proof}

\subsection{$p$th power estimator for $\abs{x_i}^p$ } \label{sec:binest}

\begin{lemma}\label{lem:expectZZbar} Let $W = \sum_{j =1}^n a_j \omega(j)$ where, $\{\omega(j)\}_{j=1}^n$ is a 2d-wise independent family of randomly chosen roots of $x^q=1$ for any integer $q > d$, where, $d \in \Z^+$.  Then, for  any $d \ge 1$,
$\expect{ (W\conj{W})^d} \le d! \norm{a}_2^{2d}  $, where
$a$ is the $n$-dimensional vector $(a_1, a_2, \ldots, a_n)$.
\end{lemma}

\begin{proof}

\begin{align*}
 \expect{ (W\conj{W})^d} & =  \sum_{e_1 + \ldots + e_n = d} \sum_{g_1 + \ldots + g_n = d}  \binom{d}{e_1, \ldots, e_n} \binom{d}{g_1, \ldots, g_n}  \prod_{j=1}^n a_j^{e_j + g_j} \expect{ \omega_j^{e_j} \conj{\omega_j}^{g_j}}  \\
& = \sum_{e_1 + \ldots  + e_n = d}  \binom{d}{e_1, \ldots, e_n}^2  a_j^{2e_j} \notag  \\
& \le d! \sum_{e_1 +\ldots + e_n = d} \binom{d}{e_1, \ldots, e_n} a_j^{2e_j} \notag \\
& = d! \norm{a}_2^{2d} \notag
\end{align*}
The first step views $(W\conj{W})^d$ as $W^d \conj{W}^d$ and takes the product of the  multinomial expansion for  $W^d $ and $\conj{W}^d$;  and subsequently uses linearity of expectation. The second step follows since $\expect{\omega_j^{e_j} \conj{\omega_j}^{g_j}} = 0$ unless $e_j = g_j$, in which case, it is 1.

\end{proof}

\begin{lemma}[Re-statement of Lemma~\ref{lem:2dmom1}.]
Let $Z = \sum_{j =1}^t  a_j \omega(j)\chi(j)$, where, $\{\omega(j)\}_{j=1}^t$ is a family of random and  $2d$-wise independent family  roots of the equation $x^q = 1$, $q > 2d$  and integral. Let  $\{\chi(j)\}$ be a $2d$-wise independent family of  indicator variables such that  $ \prob{\chi(j)=1} = 1/C$ and is independent of the $\omega_j$'s. If $  \norm{a}_2^2 \ge  4d \norm{a}_{\infty}^2 C$, then, $$ \expect{(Z\bar{Z})^d} \le (2)\left( \frac{d \norm{a}_2^2}{C} \right)^d   $$
where, $a$ is the $t$-dimensional vector $(a_1, a_2, \ldots, a_t)$.
\end{lemma}

\begin{proof} [Proof of Lemma~\ref{lem:2dmom1}.]
Assume $K = \norm{a}_{\infty}$.
\begin{multline*}
\expect{(Z\bar{Z})^d} = \sum_{\substack{e_1 + \ldots +e_t=d \\ e_j\text{'s } \ge 0}} \sum_{\substack{g_1 + \ldots + g_t = d \\ g_j\text{'s } \ge 0}} \binom{d}{e_1, \ldots, e_t} \binom{d}{g_1, \ldots, g_t}  \\
\cdot \prod_{u=1}^t a_{u}^{e_u+g_u} \expect{\omega^{e_u}(j) \bar{\omega}^{g_u}(j)} \expect{ \chi^{e_u+g_u}(j) } ~~
\end{multline*}
\text{(by $2d$-wise independence)}.
The expectation is 0 unless $e_u = g_u$ for each $u \in [t]$. Therefore,
\begin{align}
&\expect{(Z\bar{Z})^d} \\
& = \sum_{r=1}^d \sum_{e_1 + \ldots + e_r=d, e_j \text{'s} \ge 1} \binom{d}{e_1, \ldots, e_r}^2 \sum_{1\le i_1 < \ldots < i_r \le t}  \prod_{u=1}^r \frac{ a_{i_u}^{2e_u}}{C} \notag \\
& \le  K^{2d} \sum_{r=1}^d \frac{1}{K^{2r}} \sum_{e_1 + \ldots + e_r=d} \binom{d}{e_1, \ldots, e_r}^2 \sum_{1 \le i_1 < \ldots < i_r \le t}  \prod_{u=1}^r\frac{ a_{i_u}^2}{C} \notag \\ & \hspace*{3.0in} \text{(since, $a_{i_u}^{2e_u} \le K^{2e_u-2} a_{i_u}^2)$}\notag \\
& \le K^{2d}  \sum_{r=1}^d \frac{d!}{K^{2r}} \sum_{e_1 + \ldots + e_r=d} \binom{d}{e_1, \ldots, e_r}  \frac{1}{r!} \left(\frac{ \norm{a}_2^2}{C} \right)^r \label{eq:2dmt0}  \\
& \le K^{2d} d! \sum_{r=1}^d \frac{1}{r!} \left(\frac{ \norm{a}_2^2}{K^2C} \right)^r \frac{d^r}{(d-r+1)} \cdot r^{d-r} \label{eq:2dmt1}
\end{align}
Eqn.~\eqref{eq:2dmt0} follows by observing that (i)  $ \sum_{1 \le i_1 < \ldots < i_r \le t}  \prod_{u=1}^r\frac{ a_{i_u}^2}{C} \le \frac{1}{r!} \left( \frac{ \norm{a}_2^2}{C} \right)^r$, which can be seen by expanding  $ \frac{1}{C^r} \left( \sum_{j=1}^t a_j^2\right)^r$, and,  (ii) $\binom{d}{e_1, \ldots e_r}^2 \le d! \binom{d}{e_1, \ldots, e_r}$.  \\
Eqn.~\eqref{eq:2dmt1} is obtained as follows. Let $e'_j = e_j-1$. Then,
\begin{align*}
\sum_{e_1 + \ldots + e_r =d, e_j\text{'s} \ge 1} & \binom{d}{e_1, \ldots, e_r} \\
 & = \left(\frac{d(d-1)\ldots  (d-r+1)}{e_1 e_2 \ldots e_r}\right) \sum_{e'_1 + \ldots + e'_r=d-r, e'_j \text{'s}\ge 0} \binom{d-r}{e'_1, e'_2, \ldots, e'_r}
\end{align*}
Now, $\frac{d(d-1)\ldots  (d-r)}{e_1 e_2 \ldots e_r}\le \left(\frac{d^r}{d-r+1}\right)$, where the product $e_1 e_2 \ldots e_r$ is minimized,  subject to $e_1, \ldots, e_r\ge 1$ and $e_1 + \ldots + e_r = d$,  by letting $e_1, \ldots, e_{r-1}$ to be 1 and $e_r = d-r+1$. Also, $$ \sum_{e'_1 + \ldots + e'_r=d-r, e'_j \text{'s}\ge 0} \binom{d-r}{e'_1, e'_2, \ldots, e'_r} =
(\underbrace{1 + 1 + \ldots + 1}_{\text{ $r$ times }})^{d-r} = r^{d-r}  $$
Combining, this gives Eqn.~\eqref{eq:2dmt1}.

Consider the sum in Eqn.~\eqref{eq:2dmt1}. The ratio of the $r+1$ th term to the $r$th term, for $r =1,2, \ldots, d-1$, is
$$
\frac{d\norm{a}_2^2}{(r+1)K^2 C} \cdot \left( \frac{d-r+1}{d-r}\right) \cdot \frac{ (r+1)^{d-r-1}}{r^{d-r}}  \ge \frac{ d \norm{a}_2^2}{2(r+1)^2 K^2 C} \ge \frac{4d^2}{2(r+1)^2} \ge 2 \enspace .
$$
Hence, the sum in Eqn.~\eqref{eq:2dmt1} is a geometrically increasing sequence with common ratio at least 2. The sum is therefore bounded above by 2 $\times$ the final term, that is,
\begin{align*}
 \expect{(Z\bar{Z})^d} &  \le 2\left( \frac{ d\norm{a}_2^2}{C} \right)^d \enspace .
 \end{align*}
\end{proof}

For $r \in [2s]$, let  $\chi_{rij}$ be an indicator variable that is 1 if $h_r(i) = h_r(j)$ and is 0 otherwise. For $j \in [n]$, let  $\hat{x}_j$ be an estimate for $x_j$ obtained using independent means (e.g., \countsketch).

For $i \in [n]$ and $r\in [2s]$, let $X_{ri}$ denote the estimate for $x_i$ returned from table $r \in R(i)$. That is,
  \begin{align} \label{eq:Xri}
  X_{ri}= T[h_r(i)] \cdot \conj{\omega_r(i)} \cdot  \sgn(\hat{x}_i)  = \abs{x_i} + \sum_{j \ne i} x_j \omega_r(j) \conj{\omega_r(i)} \chi_{rij}\sgn(\hat{x}_i)
  \end{align}
assuming $\sgn(\hat{x}_i) = \sgn(x_i)$.
  We denote this as  $X_{ri} = \abs{x_i} + Z_{ri}$, where, for $r \in R(i)$,
  \begin{align} \label{eq:Zri} Z_{ri} = \sum_{j \ne i}  x_j \omega_r(j) \conj{\omega_r(i)} \chi_{rij}\sgn(\hat{x}_i) \enspace .
\end{align}Here, $Z_{ri}$ is the error in the estimation of $\abs{x_i}$ from table $T_r$, $r \in R(i)$. Therefore,
\begin{align} \label{eq:Xi}
  X_i = \frac{1}{\abs{R(i)}} \sum_{r \in R(i)} X_{ri} = \abs{x_i} + \frac{1}{\abs{R(i)}} \sum_{r \in R(i)} Z_{ri} = \abs{x_i} + Z_i  \enspace .
  \end{align}
where,
$Z_i$ is  the average of the $Z_{ri}$'s, that is,
\begin{align} \label{eq:Zi} Z_i = \frac{1}{\abs{R(i)}} \sum_{r \in R(i)} Z_{ri}
\end{align}

Let
\begin{align} \label{eq:Zprimeri}Z'_{ri} = \sum_{j \ne i} x_j \omega_r(j) \chi_{rij}
\end{align} which gives
\begin{align} \label{eq:Zri2}
Z_{ri} = Z'_{ri} \conj{\omega_r(i)} \sgn(\hat{x}_i) \enspace .
\end{align}
Define
\begin{align} \label{eq:Zprimei} Z'_i = \sum_{r \in R(i)}Z'_{ri} \enspace . \end{align}

Let \goodest~define the event $ \forall i \in [n],  \abs{\hat{x}_i - x_i} \le \left( \frac{ \ftwores{2C}}{2C} \right)^{1/2}$.

\begin{lemma} \label{lem:csk1:repeat} [Restatement of Lemma~\ref{lem:csk1}.]Suppose $d = O(\log n)$, $s \ge  300 \log (n)$ and $s \ge 20d$. Let \nocollision$(i)$ hold. Let $\rho = \abs{R(i)}$ be the  number of tables in \est$_l$ structure where $i$ does not collide with the \hattopk$(C_l)$ items and $l$ is the level of discovery of $i$. Let $\H_i$ be the event $\G \wedge \nocollision(i) \wedge \goodest(i)$.
We have that,
\begin{gather*}
\expect{ ((X_i - \abs{x_i})^d(\conj{X_i} - \abs{x_i})^d\mid \H_i} \le 2\left( \frac{d  \ftwores{2C}}{(\rho/9) C} \right)^d, \text{ and } \\
\prob{\abs{X_i - \abs{x_i}} > 2\left( \frac{ d\ftwores{2C}}{(s/9)C} \right)^{1/2}\mid \H_i} < 2^{-2d+1}\enspace .
\end{gather*}

\end{lemma}

\begin{proof} [Proof of Lemma~\ref{lem:csk1:repeat}.]
\eat{ Let $R(i)$ be the set of table indices for which $i$ does not collide with  any of the $\hattopk(2C)$ elements, that is, the top-$2C$ elements by $\abs{\hat{x}_i}$.
}

From \cite{g:isaac12} (Lemma 1b), we have conditional on \goodest(i), that, $\ftwores{\hattopk(2C)} \le 9\ftwores{2C}$. Consider the vector $x_{[n]\setminus \hattopk(2C)}$. Then, by Lemma 1(a) of \cite{g:isaac12}, we have
\begin{align} \label{eq:fres}\norm{x_{[n]\setminus\hattopk(2C)}}_{\infty} \le (1+ \sqrt{2})\left(\frac{\ftwores{2C}}{2C} \right)^{1/2} \enspace .
\end{align}
From Eqn.~\eqref{eq:Xri}, we have  $\expect{X_{ri} \mid \H_i} = \abs{x_i}$,  and hence, from Eqn.~\eqref{eq:Xi},  that  $\expect{X_i} = \abs{x_i}$.
Let $\rho = \abs{R(i)}$. Therefore,  $X_i - \expect{X_i} = Z_i$, and
\begin{align*}
X_i  - \expect{X_i \mid \H_i} & = \rho^{-1}\sum_{r \in R(i)} Z_{ri} & \text{ by Eqn.~\eqref{eq:Xi}}\\
 &= \rho^{-1} \sum_{r \in R(i)} Z'_{ri} \conj{\omega_{r(i)}}\sgn(\hat{x}_i) & \text{ by Eqn.~\eqref{eq:Zprimeri}.}
\end{align*}

Therefore,
\begin{align} \label{eq:csktemp1}
&\expect{Z_i^d \conj{Z_i}^d \mid \H_i}\notag \\
& = \expect{(X_i  - \expect{X_i \mid \H_i})(\conj{X_i - \expect{X_i \mid \H_i}})^d \mid \H_i}  \notag \\
 & = \rho^{-2d} \expect{  \left( \sum_{r \in R(i)} Z'_{ri} \conj{\omega_r(i)}\right)^d \left( \sum_{r\in R(i)} \conj{Z'_{ri}} \omega_r(i) \right)^d \mid \H_i } \notag \\
&= \rho^{-2d}\mathbf{E}\left[\left (\sum_{r \in R(i) } \sum_{j \ne i} x_j \omega_r(j) \conj{\omega_r(i) } \chi_{rji}\right)^d  \left(\sum_{r \in R(i)}\sum_{j \ne i} x_j \conj{\omega_r(j) } \omega_r(i) \chi_{rji}\right)^d \biggl\vert \H_i \right]\notag \\
& = \rho^{-2d}\sum_{\substack{\sum_{r\in R(i)} \sum_{j \ne i} e_{rj} = d\notag \\
e_{rj}\text{'s} \ge 0}} \binom{ d}{e_{11}, \ldots, e_{\rho n}}^2 \prod_{\substack{r\in R(i)\notag \\ j \in [n]\setminus \hattopk(C)}} x_j^{2e_{rj}} \prob{\bigwedge_{(r,j): e_{rj} \ge 1} \chi_{rij}=1 \mid \H_i} \notag \\
& = \rho^{-2d}\expect{ (Z'_i \overline{Z'_i})^d \mid\H_i} \enspace .
\end{align}
from Eqn.~\eqref{eq:Zprimei} defining $Z'_i$.
In this calculation, we have used that   $\expect{ (\omega_u(j))^{e_u} (\conj{\omega_u(j)})^{g_u} }= 0$ unless $e_u = g_u$ (this holds irrespective of the conditioning on $\H_i$, since the family $\omega$ is independent of the other random variables that define $\H_i$.

We can now apply Lemma~\ref{lem:2dmom1} to obtain a bound on $\expect{(Z'_i \conj{Z'_i})^d}$, conditional on \goodest(i)~and \nocollision(i). Let $a$ be the vector with $\rho = \abs{R(i)}$ copies of the vector $x_{[n] \setminus (\hattopk(2C)\cup\{i\})}$. Thus, $a$ is a $\rho n$-dimensional vector, with zeros in all coordinates corresponding to $i$ and elements of $\hattopk(2C)$. Hence,
 $$\norm{a}_2^2 = \rho \norm{x_{[n] \setminus \hattopk(2C)}}_2^2 \ge  \rho \ftwores{2C} \enspace .$$
 Further, as shown in \cite{g:arxiv15}, for any set of $S \subset [n] \setminus ( \hattopk(C) \cup \{i\})$,
 $$\prob{\wedge_{j \in S} \chi_{rij}=1 \mid \nocollision(i)} = (16C)^{-\abs{S}}(1 \pm n^{-\Omega(1)}) \enspace .$$
Thus, the same holds conditional on $\H_i$.

Now
\begin{align*}
 4d \norm{a}_{\infty}^2  (16C) \le \frac{ 4d(1+\sqrt{2})^2 \ftwores{2C} (16C)}{(2C)} \le  (200)d \ftwores{2C} \le \rho \ftwores{2C} \le  \norm{a}_2^2
\end{align*}
since, $\rho \ge s \ge  300\log (n)$, the premise of Lemma~\ref{lem:2dmom1} holds.

It follows from  Lemma~\ref{lem:2dmom1} that
\begin{align} \label{eq:csktemp2} 
\expect{(Z'_i\conj{Z'_i}))^d \mid \H_i}  \le 2\left( \frac{d\rho  (9)\ftwores{2C}}{C} \right)^d \enspace .
\end{align}
and, hence from Eqn.~\eqref{eq:csktemp1}, that
\begin{align}\expect{ (Z_i \conj{Z_i})^d} \le 2\left( \frac{d  \ftwores{2C}}{(\rho/9) C} \right)^d \label{eq:cskdmom}
\end{align}

Therefore,
\begin{multline*}
\prob{ \card{X_i - \expect {X_i}} > T\mid \H_i} =\prob{ \abs{Z_i}> T \mid \H_i}= \prob{  (Z_i \conj{Z_i})^d >   T^{2d}\mid  \H_i}\\ \le \frac{ \expect{(Z_i \bar{Z_i})^d \mid  \H_i}}{T^{2d}}
 \le 2\left( \frac{ d  \ftwores{2C}}{(\rho/9) C T^2}\right)^d \enspace .
 \end{multline*}
Letting $\rho \ge s$ and since, $\expect{X_i\mid \goodest(i),\nocollision(i)} = \abs{x_i}$, we have, $$\prob{\abs{X_i - \abs{x_i} \mid \H_i} > 2\left( \frac{ d\ftwores{2C}}{(s/9) C} \right)^{1/2}} \le 2^{-2d+1}  $$

\end{proof}

The following lemma uses a standard property of  $q$th roots of unity, that $\expect{\omega^v}=0$, where,   $\omega$ is a randomly chosen  root of  $x^q = 1$ and $0\le v \le q-1$.

\begin{lemma} \label{lem:expZipowv} Assume that the family $\{\omega(j)\}$ are $k+1$-wise independent and  randomly chosen roots of $x^q =1$, where, $q > k+1$ is any integer. Then,
for $ v = 1,2 \ldots, k$, $\expect{Z_i^v} = 0$.
\end{lemma}
\begin{proof} [Proof of Lemma~\ref{lem:expZipowv}]
For $v =1,2, \ldots, k$ and $l \in R(i)$, we have,

\begin{align*}
&\expect{Z_{li}^v} \\ &= \sgn^v(\hat{x}_i)\expect{\conj{\omega_{li}}^v} \expect{ \left(\sum_{j\ne i} x_j \omega_{lj} \chi_{lij}\right)^v} \\
 & = \sgn^v(\hat{x}_i) \cdot  0 \cdot \left( \sum_{r=1}^v \sum_{e_1 + \ldots + e_r = v, e_j's \ge 1} \binom{v}{e_1, \ldots, e_r} \sum_{1\le j_1 < \ldots < j_r \le n, j_u's \ne i} \frac{1}{C^r}\prod_{u=1}^r x_{j_u}^{e_u}\expect{\omega_{li}^{e_u}} \right) \\
 & = 0 \cdot 0 = 0
\end{align*}
since, each of the terms $\expect{\omega_{li}^{e_u}} = 0$.

Therefore,  for $v=1,2, \ldots, k$,
$
\expect{Z_i^v} = \expect{\left(\sum_{l=1}^s Z_{li}\right)^v}  = 0
$,
since, the $Z_{li}$'s are independent across the $l$'s.

\end{proof}

\begin{lemma}[Restatement of Lemma \ref{lem:expecttheta}]
Let $\{\omega_r(j)\}_{j \in [n]}$ be a family of $d+1$-wise independent variables, where, $d = O(\log n)$, $s \ge 300 \log (n) $ and $s \ge 72 d$. Further suppose that $\abs{x_i} > 8\left( \frac{\ftwores{2C}}{2C} \right)^{1/2}$.
Then, with  probability $1-n^{-\Omega(1)}$,
$$\card{\expect{X_i^p}-  \abs{x_i}^p\mid \H_i} \le \abs{x_i}^p n^{-\Omega(1)} \enspace .
$$
\end{lemma}

\begin{proof}
We have, $X_i = \abs{x_i} + Z_i$, where, $Z_i = \frac{1}{\rho} \sum_{r \in R(i)} Z_{ri}$ and \\ $Z_{ri} =  \sum_{j \ne i} x_j \omega_r(j) \conj{\omega_r(i)} \chi_{rij} \sgn(x_i)$.
By Lemma~\ref{lem:csk1},
$$ \prob{ \abs{Z_i} > 2 \left( \frac{ d \ftwores{2C}}{ (s/9) C} \right)^{1/2} \mid \H_i} \le 2^{-2d+1} \enspace .  $$
Since $s \ge 72d$, $(s/9)/d  \ge 8$. Also, since, $d = \Theta(\log n)$, $2^{-2d+1} = n^{-\Omega(1)}$. Therefore,
$$\prob{ \abs{Z_i} >  \left( \frac{  \ftwores{2C}}{ 2 C} \right)^{1/2} \mid \H_i} \le n^{-\Omega(1)} \enspace . $$
Therefore,
\begin{align*}
\expect{X_i^p\mid \H_i} &= \expect{ \left(\abs{x_i}+ Z_i  \right )^p \mid \H_i} \\ &= \abs{x_i}^p\expect{ \left(1 + \frac{Z_i}{\abs{x_i}}\right)^p} \\
& = \abs{x_i}^p \left(\sum_{v=0}^k \binom{p}{v} \frac{\expect{Z_i^v \mid \H_i}}{\abs{x_i}^{v}}
\pm \left\lvert \binom{p}{k+1} \right\rvert 8^{-(k+1)}\right) \\
& = \abs{x_i}^p  \left(1 \pm  \left(\frac{p}{k+1}\right)^{\lfloor p \rfloor} 8^{-(k+1)}\right) \\
& = \abs{x_i}^p \left(1  \pm  n^{-\Omega(1)}\right)
\end{align*}
The second to last step follows, since for $v =1,2, \ldots, k $, $\expect{ (Z_i)^v} = 0$, by Lemma~\ref{lem:expZipowv}. The error term is bounded by an application of Taylor's remainder term in the expansion of $(1 + z)^p$, for $\abs{z} \le 1/8$. We have also used the fact that for $k > p$, $\card{\binom{p}{k+1}} \le \left( \frac{p}{k+1} \right)^{\lceil p \rceil}$.
\end{proof}

\subsection{Analysis}
\label{sec:anal:ghss}

For $i \in [n]$, let $x_{li} $ be an indicator variable that is 1 iff $i \in \stream_l$.
For uniformity of notation, let  $X_i$ denote $\abs{\hat{x}_{i}}$ when $l_d(i) = L$ and otherwise,  let its meaning be unchanged.

\noindent Let $z_{il}$ be an indicator variable
that is 1 if $i \in \bar{G}_l $  and 0 otherwise.

We can now  write $\hat{F}_p$ as follows.
\begin{align} \label{eq:expFp0}
\hat{F}_p & =
\sum_{l=0}^L \sum_{\substack{i \in \bar{G}_l\\ l_d(i) < L}} 2^l X_i^p 
 = \sum_{i \in [n]} Y_i, \text{ where, } Y_i = \sum_{l'=0}^{L}2^{l'} z_{il'} X_i^p \enspace .
\end{align}


\subsection{Analysis of \ghss~sampled estimate}

 We now consider the analysis of the \ghss~sampled estimate.
Without loss of generality, for $l_d(i) = l$,  let $\abs{R_l(i)} = s$.


Let $\nocollision$ be defined as the event $\wedge_{i \in [n]} \nocollision(i)$. This means that, for each $i \in [n]$, (i) if $i \in G_l$ and $i$ maps to the random substream $\stream_l$, then, $\nocollision(i)$ holds at level $l_d(i)$, and (ii) if $i \in S_j$ and $\nocollision(i)$ holds at the shelf indexed $j_d(i)$, namely the shelf at which the item is discovered (the largest shelf index).
Let $\nocollision(J)$ denote $\wedge_{j \in J} \nocollision(j)$. Analogously, for a set $J$, define $\goodest(J)$.
For a set $J$, let  $\mathcal{H}_J$ be the event
$$\mathcal{H}_J = \G \wedge \nocollision(J) \wedge \goodest(J)\enspace . $$

\begin{lemma} \label{lem:expprod1}
 Let $r \in [0, \ldots, d], s \in \max(0,1-r),\ldots,d-r$ and $t \in \max(0,1-r), \ldots, d-r$. Let $\{i_1, \ldots, i_r\} \cup \{j_1, \ldots, j_s\} \cup \{k_1, \ldots, k_t\} \subset \lmargin(G_0) \cup_{l=1}^L G_l$ such that $\{i_1, \ldots, i_r\}$, $\{j_1, \ldots, \allowbreak j_s\}$ and $ \{k_1, \ldots, k_t\}$ are pair-wise disjoint. Let $(e_1, \ldots, e_{r+s}) $ and $(g_1, \ldots, g_{r+t})$ be positive integer vectors. Suppose $A = \{v \mid v\in [s], e_{r+v} = 1\}$ and $B = \{w \mid w \in [t], g_{r+w} = 1\}$.  Let the family $\{\omega _{lr}(i)\}_{i}$, be $k+1 = O(\log n)$-wise independent for each fixed $l \in [0,L]$ and $r \in [2s]$, and is independent across $l \in [0,L]$ and $r \in [2s]$.  Let $\gamma = \exp{-\Omega(s)} = n^{-\Omega(1)}$. 
 Then,
\begin{gather*}
 \E\left[ \prod_{u=1}^r (Y_{i_u} -\expect{Y_{i_u} \mid \H_J})^{e_u} (\conj{Y_{i_u}} - \expect{ \conj{Y_{i_u}}\mid \H_J})^{g_u} \prod_{v=1}^s  (Y_{j_{v}} - \expect{Y_{j_{v} }\mid \H_J})^{e_{r+v}} \right. \\
\left.\hspace*{1.0in} \prod_{w=1}^t (\conj{Y_{k_{w}}}- \expect{\conj{Y_{k_{w}}}\mid \H_J})^{g_{r+w}} \vert\H_J\right] \\
\le \gamma^{\abs{A} + \abs{B}} (1 \pm \gamma)^{(\abs{A}+ \abs{B})^2} \prod_{v \in A} \abs{x_{j_v}}^{p} \prod_{w \in B} \abs{x_{k_w}}^{p} \\
\E\left[  \prod_{u=1}^r (Y_{i_u} -\expect{Y_{i_u} \mid \H_J})^{e_u} (\conj{Y_{i_u}} - \expect{ \conj{Y_{i_u}}\mid \H_J})^{g_u} \prod_{v \in [s] \setminus A}(Y_{j_{v}} - \expect{Y_{j_{v}}\mid \H_J})^{e_{r+v}} \right.\\
\left.\hspace*{1.0in}\prod_{w\in [t] \setminus B}
(\conj{Y_{k_{w}}}- \expect{\conj{Y_{k_{w}}} \mid \H_J})^{g_{r+w}}
\mid \H_J\right]
\end{gather*}
\end{lemma}

\begin{proof}

We have, $\expect{Y_{s} \mid \H_J} = \abs{x_s}^p(1 \pm n^{-\Omega(1)}))$, for $s \in \{i_1, \ldots, i_r\} \cup \{j_1, \ldots, j_s\} \cup \{k_1, \ldots, k_t\}$.
Let $\gamma = n^{-\Omega(1)}$. Then,
\begin{align}\label{eq:dmoma1}
& \E\left[ \prod_{u=1}^r (Y_{i_u} -\expect{Y_{i_u} \mid \H_J})^{e_u} (\conj{Y_{i_u}} - \expect{\conj{Y_{i_u}} \mid \H_J})^{g_u}\right. \notag \\
& \hspace*{1.0in}  \left. \prod_{v=1}^s  (Y_{j_{v}} - \expect{Y_{j_{v}}\mid \H_J})^{e_{r+v}}
\prod_{w=1}^t (\conj{Y_{k_{w}}}- \expect{\conj{Y_{k_{w}}} \mid \H_J})^{g_{r+w}} \mid \H_J\right] \notag \\
& = \E\left[ \prod_{v \in A} (Y_{j_{v}} - \expect{Y_{j_{v}}\mid \H_J}) \prod_{w \in B} (\conj{Y_{k_{w}}}- \expect{\conj{Y_{k_{w}}} \mid \H_J}) \right.\\
&\hspace*{0.4in} \prod_{u=1}^r (Y_{i_u} -\expect{Y_{i_u} \mid \H_J})^{e_u} (\conj{Y_{i_u}} - \expect{\conj{Y_{i_u}} \mid \H_J})^{g_u} \notag \\
&  \hspace*{0.3in}\left.  \prod_{v \in [s] \setminus A}(Y_{j_{v}} - \expect{Y_{j_{v}}\mid \H_J})^{e_{r+v}} \prod_{w\in [t] \setminus B}
(\conj{Y_{k_{w}}}- \expect{\conj{Y_{k_{w}}} \mid \H_J})^{g_{r+w}} \mid \H_J \right]\notag \\
=
& \sum_{a=0}^{\abs{A}} \sum_{b=0}^{\abs{B}} \sum_{\substack{S_1 \subset A \\
\abs{S_1} = a}} \sum_{\substack{T_1 \subset B  \\ \abs{T_1} =b}} \prod_{v \in A\setminus S_1}(-\abs{x_{j_v}}^p (1 \pm \gamma))\prod_{ w\in B \setminus T_1} (-\abs{x_{k_w}}^p (1\pm \gamma))  \notag \\ &\hspace*{1.0in}\E\left[
\prod_{v \in S_1} Y_{j_v} \prod_{w \in T_1} \conj{Y_{k_w}} W_{A,B} \mid \H_J\right] \enspace .
\end{align}
where,
\begin{align*}
W_{A,B} & = \prod_{u=1}^r (Y_{i_u} -\expect{Y_{i_u} \mid \H_J})^{e_u} (\conj{Y_{i_u}} - \expect{\conj{Y_{i_u}} \mid \H_J})^{g_u} \prod_{v \in [s] \setminus A}(Y_{j_{v}} - \expect{Y_{j_{v}}\mid \H_J})^{e_{r+v}}\\
&\hspace*{1.5in} \prod_{w\in [t] \setminus B}
(\conj{Y_{k_{w}}}- \expect{\conj{Y_{k_{w}}} \mid \H_J})^{g_{r+w}} \enspace .
\end{align*}

\noindent
Consider the term inside the expectation in Eqn.~\eqref{eq:dmoma1}.   For $v\in S_1$, write\\ $Y_{j_v} = \sum_{l_v \text{ consist. w. } j_v} 2^{l_v} z_{j_v l_v} X_{j_v}^p$ and similarly for $Y_{k_w}$ for $k \in T_1$. Therefore,
\begin{align} \label{eq:dmoma21}
&\expect{ \prod_{v \in S_1} Y_{j_v} \prod_{w \in T_1} \conj{Y_{k_w}} W_{A,B} \mid \H_J} \notag
 = \notag\\
&\sum_{\substack{l_v \consistw~j_v \\ l_w \consistw~k_w}} 2^{\sum_{v \in S_1} l_v + \sum_{w \in T_1}} \expect{ \prod_{v \in S_1} X_{j_v}^p \prod_{w \in T_1} \conj{X_{k_w}}^p W_{A,B}  \prod_{v \in S_1} z_{j_vl_v} \prod_{w \in T_1} z_{k_w l_w} \mid \H_J} \enspace .  
\end{align}

Consider the term $\expect{\prod_{v \in S_1} X_{j_v}^p \prod_{w \in T_1} \conj{X_{k_w}}^p W_{A,B}  \prod_{v \in S_1} z_{j_vl_v} \prod_{w \in T_1} z_{k_w l_w} \mid \H_J}$. Let $\mathcal{R}$ denote the random bits used by the algorithm. Let $\omega$ denote the set of random bits used to form the complex roots of unity sketches. Then,
\begin{align}\label{eq:lem:expprod:1a}
&\expect{\prod_{v \in S_1} X_{j_v}^p \prod_{w \in T_1} \conj{X_{k_w}}^p W_{A,B}  \prod_{v \in S_1} z_{j_vl_v} \prod_{w \in T_1} z_{k_w l_w} \mid \H_J} \notag \\
&= \expectsub{\mathcal{R} \setminus \omega}{\expectsub{\omega}{
\prod_{v \in S_1} X_{j_v}^p \prod_{w \in T_1} \conj{X_{k_w}}^p W_{A,B}  \prod_{v \in S_1} z_{j_vl_v} \prod_{w \in T_1} z_{k_w l_w} \mid \H_J}} \enspace .
\end{align}
Now, $\H_J$ includes $\nocollision$ terms for each $i_u, j_v $ and $k_w$ as per the definition of the indices. Therefore, assuming $r+s+t$-wise independence of the $\omega_{lr}(i)$ family across the $i$'s and independence across $l \in [0, \ldots, L]$ and $r \in [2s]$, we have that
\begin{align*}
&\expectsub{\omega}{\prod_{v \in S_1} X_{j_v}^p \prod_{w \in T_1} \conj{X_{k_w}}^p W_{A,B}  \prod_{v \in S_1} z_{j_vl_v} \prod_{w \in T_1} z_{k_w l_w} \mid \H_J}\\
& = \prod_{v \in S_1}\expectsub{\omega}{X_{j_v}^p  \mid \H_J} \prod_{w \in T_1} \expectsub{\omega}{ \conj{X_{k_w}}^p \mid \H_J} \expectsub{\omega}{W_{A,B}\mid \H_J} \\
& = \prod_{v \in S_1} \left(\abs{x_{j_v}}^p(1\pm \gamma)\right) \prod_{w \in T_1} \left( \abs{x_{k_w}}^p (1 \pm \gamma) \right)  \expectsub{\omega}{W_{A,B}\mid \H_J} \enspace .
\end{align*}
Substituting in Eqn.~\eqref{eq:lem:expprod:1a}, we have,
\begin{align} \label{eq:lem:expprod:1b}
&\E\left[\prod_{v \in S_1} X_{j_v}^p \prod_{w \in T_1} \conj{X_{k_w}}^p W_{A,B}  \prod_{v \in S_1} z_{j_vl_v}\prod_{w \in T_1} z_{k_w l_w} \mid \H_J\right] \notag \\
&= \prod_{v \in S_1} \left(\abs{x_{j_v}}^p(1\pm \gamma)\right) \prod_{w \in T_1} \left( \abs{x_{k_w}}^p (1 \pm \gamma) \right) \expect{W_{A,B}  \prod_{v \in S_1} z_{j_vl_v} \prod_{w \in T_1} z_{k_w l_w}  \mid \H_J} \enspace .
\end{align}
Substituting in Eqn.~\eqref{eq:dmoma21}, we have,
\begin{align} \label{eq:lem:expprod:1c}
&\expect{ \prod_{v \in S_1} Y_{j_v} \prod_{w \in T_1} \conj{Y_{k_w}} W_{A,B} \mid H} \notag
 =\prod_{v \in S_1} \left(\abs{x_{j_v}}^p(1\pm \gamma)\right) \prod_{w \in T_1} \left( \abs{x_{k_w}}^p (1 \pm \gamma) \right) \notag \\
& \hspace*{1.0in} \sum_{\substack{l_v \consistw~j_v \\ l_w \consistw~k_w}} 2^{\sum_{v \in S_1} l_v + \sum_{w \in T_1}} \expect{W_{A,B}  \prod_{v \in S_1} z_{j_vl_v} \prod_{w \in T_1} z_{k_w l_w}  \mid \H_J}
\end{align}
Consider the term
\begin{align}\label{eq:lem:expprod:2a}
\sum_{\substack{l_v \consistw~j_v \\ l_w \consistw~k_w}} 2^{\sum_{v \in S_1} l_v + \sum_{w \in T_1}} \expect{W_{A,B}  \prod_{v \in S_1} z_{j_vl_v} \prod_{w \in T_1} z_{k_w l_w}  \mid \H_J}
\end{align}
Given the definition of the conditioning event $\H_J$, Lemma~\ref{lem:approxdwise:restate} essentially asserts that Eqn.~\eqref{eq:lem:expprod:2a} lies in
\begin{align*}
&\in \expect{W_{A,B} \mid \H_J} (1 \pm (\abs{S_1} + \abs{T_1})\gamma)^{\abs{S_1} + \abs{T_1}}\\ &  \in \expect{W_{A,B} \mid \H_J} (1 \pm  (\abs{A} + \abs{B})^2\gamma)
\end{align*}
where it is assumed that $\abs{A} + \abs{B} \le  n^{O(1)}$.

Substituting in Eqn.~\eqref{eq:lem:expprod:1c}, we obtain that
\begin{multline}  \label{eq:lem:expprod:2b}
\expect{ \prod_{v \in S_1} Y_{j_v} \prod_{w \in T_1} \conj{Y_{k_w}} W_{A,B} \mid \H_J}\\
 = (1 \pm  \gamma)^{(\abs{A} + \abs{B})^2}(1\pm \gamma)^{\abs{S_1} + \abs{T_1}} \prod_{v \in S_1} \abs{x_{j_v}}^p  \prod_{w \in T_1} \abs{x_{k_w}}^p \expect{W_{A,B} \mid \H_J}
\end{multline}

Substituting in Eqn.~\eqref{eq:dmoma1}, we have,
\begin{align} \label{eq:lem:expprod:4a}
& \mathbf{E} \left[ \prod_{u=1}^r (Y_{i_u} -\expect{Y_{i_u} \mid \H_J})^{e_u} (\conj{Y_{i_u}} - \expect{\conj{Y_{i_u}} \mid \H_J})^{g_u} \prod_{v=1}^s  (Y_{j_{v}} - \expect{Y_{j_{v}}\mid \H_J})^{e_{r+v}}\right. \notag \\
&\hspace*{0.5in}\left.\prod_{w=1}^t (\conj{Y_{k_{w}}}- \expect{\conj{Y_{k_{w}}} \mid \H_J})^{g_{r+w}} \mid \H_J \right] \notag \\
& =(1 \pm  \gamma)^{(\abs{A} + \abs{B})^2} \expect{W_{A,B} \mid \H_J} \notag \\
&~~\sum_{a=0}^{\abs{A}} \sum_{b=0}^{\abs{B}} \sum_{\substack{S_1 \subset A \\
\abs{S_1} = a}} \sum_{\substack{T_1 \subset B  \\ \abs{T_1} =b}} \prod_{v \in A\setminus S_1}(-\abs{x_{j_v}}^p (1 \pm \gamma))\prod_{ w\in B \setminus T_1} (-\abs{x_{k_w}}^p (1\pm \gamma)) \prod_{v \in S_1} \abs{x_{j_v}}^p  \prod_{w \in T_1} \abs{x_{k_w}}^p  \notag \\
& = (1 \pm  \gamma)^{(\abs{A} + \abs{B})^2} \expect{W_{A,B} \mid \H_J} \gamma^{\abs{A} + \abs{B}} \prod_{v \in A} \abs{x_{j_v}}^p  \prod_{w \in B} \abs{x_{k_w}}^p
\end{align}


\eat{
Now $X_i^p = \abs{x_i}^p \left(1 + \frac{Z_i}{\abs{x_i}}\right)^p = \abs{x_i}^p \left( \sum_{t=0}^{k-1} \binom{p}{t} \frac{Z_i^t}{\abs{x_i}^t} \pm \binom{p}{k} \frac{\abs{Z_i}^k}{\abs{x_i}^k}\right)$.
Suppose we substitute the above expression for $X_{j_v}^p$ and $\conj{X_{k_w}}^p$, for each $v \in S_1$ and $w \in T_1$, as well as for each of the $Y_j$ terms in $W$. Upon expansion, the expression inside the expectation becomes  a sum of product terms. By linearity of expectation, this is the sum of  the expectation of each of the product terms. Now, each product term has a contribution from the $Y_{j_v}$'s in the form of either $ (Z_i^t/\abs{x_i}^t)$ for $ t \in [0,k-1]$ or of the form $\abs{Z_i^t}/\abs{x_i}^t$, for $v \in S_1$ and similarly for  $\conj{Y_{k_w}}$'s for $w \in T_1$. Each product term also has contributions from possible powers $Z_i$'s corresponding to  each of the  $Y_i$'s occurring in $W_{A,B}$.
Let $S \subset [n]$ and $T \subset [n]$ be disjoint index sets.
Suppose $0 \le \alpha_i, \beta_j, \gamma_j \le k$. Then,  if $\alpha_i > 0$, for some $i \in S$,
\begin{align*}
\expectsub{\omega}{\prod_{i \in S} Z_i^{\alpha_i} \prod_{j \in T}  Z_j^{\beta_j} \conj{Z_j}^{\gamma_j}} =0
\end{align*}
Hence, the terms in the expansion of $X_{j_v}^p$ for $v \in S_1$  that will contribute are $\abs{x_i}^p $  and $ \abs{x_i}^{p-k}\abs{Z_i}^k $, the rest of the terms will have 0 expectation.   Therefore, Eqn.~\eqref{eq:dmoma21} becomes
\begin{align}
&\expect{ \prod_{v \in S_1} Y_{j_v} \prod_{w \in T_1} \conj{Y_{k_w}} W_{A,B}} \notag \\
&= \sum_{\substack{l_v \consistw~j_v \\ l_w \consistw~k_w}} 2^{\sum_{v \in S_1} l_v + \sum_{w \in T_1}} \E\left[  \prod_{v \in S_1} \abs{x_{j_v}}^p \left(1 \pm \left\lvert \binom{p}{k} \right\rvert\frac{\abs{Z_{j_v}}^k}{\abs{x_{j_v}}^k} \right)\prod_{w \in T_1} \abs{x_{k_w}}^p \left(1 \pm \left\lvert \binom{p}{k} \right\rvert\frac{\abs{Z_{k_w}}^k}{\abs{x_{k_w}}^k}\right) W_{A,B} \right. \notag\\
 &\hspace*{1.0in}\left.  \left\rvert \bigwedge_{v \in S_1} z_{j_vl_v} \bigwedge_{w \in T_1} z_{k_w l_w} \right. \right] \prob{\bigwedge_{v \in S_1} z_{j_vl_v} \bigwedge_{w \in T_1} z_{k_w l_w} } \label{eq:dmoma22}
\end{align}
Now, for $l_v$ \consistw~$j_v$, by the algorithm, $\frac{\abs{Z_{j_v}}}{x_{j_v}} \le \frac{B}{C} \le 1/8$. Also, since $k = \Theta(\log n)$, $k > p$ and therefore, $\card{\binom{p}{k} } \le (p/k)^{\lceil p \rceil} < 1$.   Hence, $\left\lvert \binom{p}{k} \right\rvert\frac{\abs{Z_{j_v}}^k}{\abs{x_{j_v}}^k} \le 8^{-\Theta(\log n)} = n^{-\Theta(1)}$. Hence, Eqn.~\eqref{eq:dmoma22} becomes
\begin{align}
&{ \prod_{v \in S_1} Y_{j_v} \prod_{w \in T_1} \conj{Y_{k_w}} W_{A,B}} \notag \\
& \le  \prod_{v \in S_1} \abs{x_{j_v}}^p \prod_{w \in T_1} \abs{x_{k_w}}^p (1+ O(\delta/\log n))^{\abs{S_1} + \abs{T_1}}  \notag \\
& \sum_{\substack{l_v \consistw~j_v \\ l_w \consistw~k_w}} 2^{\sum_{v \in S_1} l_v + \sum_{w \in T_1}} \expect{W_{A,B} \left\lvert \bigwedge_{v \in S_1} z_{j_vl_v} \bigwedge_{w \in T_1} z_{k_w l_w} \right. } \prob{\bigwedge_{v \in S_1} z_{j_vl_v} \bigwedge_{w \in T_1} z_{k_w xl_w} } \label{eq:dmoma23}
\end{align}

By Lemma~\ref{lem:dmomcond1}, the summation in Eqn.~\eqref{eq:dmoma23} equals
$(1\pm n^{-\Omega(1)}) \expect{W_{A,B}}$. Therefore, Eqn.~\eqref{eq:dmoma23} becomes
\begin{align}
& \expect{ \prod_{v \in S_1} Y_{j_v} \prod_{w \in T_1} \conj{Y_{k_w}} W_{A,B}} \le  \prod_{v \in S_1} \abs{x_{j_v}}^p \prod_{w \in T_1} \abs{x_{k_w}}^p (1+n^{-\Omega(1)})
 \expect{W_{A,B}} \label{eq:dmoma24}
\end{align}

Let $\gamma$ denote $n^{-\Omega(1)}$. Substituting in Eqn.~\eqref{eq:dmoma1}, we have,
\begin{align*}
&= \sum_{a=0}^{\abs{A}} \sum_{b=0}^{\abs{B}} \sum_{\substack{S_1 \subset A\notag \\
\abs{S_1} = a}} \sum_{\substack{T_1 \subset B \notag \\ \abs{T_1} =b}} \prod_{v \in A\setminus S_1}(-\abs{x_{j_v}}^p (1 \pm \gamma))\prod_{ w\in B \setminus T_1} (-\abs{x_{k_w}}^p (1\pm \gamma))  \E\left[
\prod_{v \in S_1} Y_{j_v} \prod_{w \in T_1} \conj{Y_{k_w}} W\right]\notag \\
&=\sum_{a=0}^{\abs{A}} \sum_{b=0}^{\abs{B}} \sum_{\substack{S_1 \subset A\notag \\
\abs{S_1} = a}} \sum_{\substack{T_1 \subset B \notag \\ \abs{T_1} =b}} \prod_{v \in A\setminus S_1}(-\abs{x_{j_v}}^p (1 \pm \gamma))\prod_{ w\in B \setminus T_1} (-\abs{x_{k_w}}^p (1\pm \gamma))
\prod_{v \in S_1} \abs{x_{j_v}}^p \prod_{w \in T_1} \abs{x_{k_w}}^p (1 + O(\gamma)) \expect{W} \\
& =\prod_{v \in A} (\abs{x_{j_v}}^p(1 + O(\gamma)) - \abs{x_{j_v}}^p(1-O(\gamma))\prod_{w \in B}
(\abs{x_{k_w}}^p(1 + O(\gamma)) - \abs{x_{k_w}}^p(1-O(\gamma)) \expect{W} \\
& \le  (c \gamma)^{\abs{A} + \abs{B}} \prod_{v \in A} \abs{x_{j_v}}^{p} \prod_{w \in B} \abs{x_{k_w}}^{p} \expect{W} \enspace .
\end{align*}
for some constant $c>0$.
}
\end{proof}

\begin{fact}\label{fact:1}
Suppose $i \in \lmargin(G_0) \cup_{l'=1}^L G_{l'}$ and suppose
$l$ is consistent with $i$ and $\beta \ge 1$ is an integer.  Then,
\begin{align*}
\abs{x_i}^{p\beta} 2^{l\beta} & \le 2^{l} \abs{x_i}^{p} \left( \frac{F_2}{B} \right)^{p(\beta-1)/2} 2^{(1+p/2)(\beta-1)} \\
 \end{align*}
 \end{fact}

 \begin{proof}[Proof of Fact~\ref{fact:1}.]
By the definition of $\level(i)$, $\abs{x_i} \le \left( \frac{ F_2}{(2\alpha)^{\level(i)-1} B}\right)^{1/2}$. Therefore,
\begin{gather}
 \abs{x_i}^{p\beta} 2^{l\beta}   = 2^l \cdot 2^{l(\beta-1)} \cdot \abs{x_i}^p \cdot \abs{x_i}^{p(\beta-1)}\le  2^{l}\abs{x_i}^{p} \left( \frac{ F_2}{(2\alpha)^{\level(i)-1} B} \right)^{p(\beta-1)/2} 2^{l (\beta-1)} \label{eq:fact:t1}
 \end{gather}
 Now $-1 \le l-\level(i)\le 1$ and
$ 2^{\ell}(2\alpha)^{-\ell p/2} = 2^{\ell (1 -(p/2) \log_2 (2\alpha))}$, for any $\ell$.

Since, $\alpha = 1- (1-2/p)\nu$, where, $\nu = 0.01$, $\ln (\alpha) \ge -2(1-2/p)\nu$. Hence,
 \begin{align*}
 1-(p/2) \log_2(2\alpha) & = 1 -  (p/2) (1 + \log_2(\alpha))\\
 & = 1-p/2 - \frac{p/2}{\ln 2} \ln (\alpha) \\
 & \le -(p/2-1) + \frac{(p/2)}{\ln 2}  (1-2/p)(2\nu))\\  &= -(p/2-1) + (p/2-1) \frac{(2 \nu)}{\ln 2}  \\
 &= -(p/2-1)(1- 2\nu/\ln (2)) \\ & < 0
 \end{align*}
 Hence, for any $\ell \ge 0$, $2^{\ell} (2\alpha)^{-\ell p/2} < 1$.  Therefore,  for any integer $\beta \ge 1$, we have,
 \begin{align*}
 &\abs{x_i}^{p\beta} 2^{l\beta} \\
 & \le  2^{l}\abs{x_i}^{p} \left( \frac{ F_2}{(2\alpha)^{\level(i)-1} B} \right)^{p(\beta-1)/2} 2^{l (\beta-1)} \\
 &  = 2^{l} \abs{x_i}^{p} \left( \frac{F_2}{B} \right)^{p(\beta-1)/2} \left( \frac{ 2^{\level(i)}}{(2\alpha)^{(p/2)\level(i)}} \right)^{\beta-1} (2\alpha)^{(p/2)(\beta-1)} 2^{(l -\level(i))(\beta-1)} \\
 & \le 2^{l} \abs{x_i}^{p} \left( \frac{F_2}{B} \right)^{p(\beta-1)/2} 2^{(1+p/2)(\beta-1)}
 \end{align*}
 \end{proof}

\begin{lemma} \label{lem:intermed1} Suppose $e_1, \ldots, e_r, g_1, \ldots, g_r \ge 1$ and $e_{r+1}, \ldots, e_{r+s'} \ge 2$ and $g_{r+1}, \ldots, g_{r+t} \ge 2$. Let $\{i_1, \ldots, i_r\} \cup \{ j_1, \ldots, j_{s'}\} \cup \{k_1, \ldots, k_t\} \subset \lmargin(G_0) \cup_{l=1}^L G_l$. Let  $s \ge 2\max_{u=1}^r(e_u,g_u)$ and $B/C \le 1/(72p^2)$. Let $J'$ denote the set $\{i_1, \ldots, i_r, j_1, \ldots, j_s', k_1, \ldots, k_r'\}$. Let $\H_{J'}$ denote $\G \wedge \nocollision(J') \wedge \accuest(J')$.
Then,
\begin{align*}
\mathbf{E} &\left[  \prod_{u=1}^r Y_{i_u}^{e_u} \conj{Y_{i_u}}^{g_u} \prod_{v \in [s']} Y_{j_v}^{e_{r+v}} \prod_{w \in [t]} \conj{Y_{k_w}}^{g_{r+v}} \mid \H_{J'}\right] \\
& \hspace*{1.0in} \le  2 e^{ \sum_{u \in [r]} (e_u+g_u)/8}\prod_{u=1}^r \abs{x_{i_u}}^p \prod_{v=1}^s \abs{x_{j_v}}^p \prod_{w=1}^t \abs{x_{k_w}}^p\\
& \hspace*{1.1in} \cdot \left( \frac{ 4 F_2}{B} \right)^{(p/2)\left(\sum_{u=1}^r (e_u+g_u-1)+ \sum_{v \in [s] } (e_{r+v}-1) + \sum_{w \in [t]} (e_{r+w}-1) \right)}\enspace .
 \end{align*}
\end{lemma}
\begin{proof} The expectations in this proof are conditioned on the conjunction of events $\G$, $\nocollision(J)$ and $\goodest(J)$. For brevity, let $\H_{J'} =  \G \wedge \nocollision(J')\wedge \goodest(J')$.
\begin{align} \label{eq:dmomb1}
&\expect{ \left.\prod_{u=1}^r Y_{i_u}^{e_u} \conj{Y_{i_u}}^{g_u} \prod_{v \in [{s'}]} Y_{j_v}^{e_{r+v}} \prod_{w \in [t]} \conj{Y_{k_w}}^{g_{r+w}}\right\vert \H_{J'}} \notag \notag \\
& = \E \left[\prod_{u=1}^r \left( \sum_{l_u \text{ consist. w. } i_u} 2^{l_u} z_{i_ul_u}X_{i_u}^p\right)^{e_u} \left( \sum_{l_u \text{ consist. w. } i_u} 2^{l_u} z_{i_ul_u}\conj{X_{i_u}}^p\right)^{g_u} \right. \notag \\
& \left.\prod_{v \in [{s'}]}\left( \sum_{l'_v \text{ consist. w. } j_v} 2^{l'_v} z_{j_v l'_v} X_{j_v}^p \right)^{e_{r+v}}  \left. \prod_{w\in [t]} \left( \sum_{\ell_w \text{ consist. w. } k_w} 2^{\ell_w} z_{k_w \ell_w} \conj{X_{k_w}}^p \right)^{ g_{r+w}}\right\vert \H_{J'} \right]
\end{align}
Consider one of the power terms in Eqn.~\eqref{eq:dmomb1}, say, $\left( \sum_{l_u \text{ consist. w. } i_u} 2^{l_u} z_{i_ul_u}X_{i_u}^p\right)^{e_u}$. Note that for each $i_u \in \lmargin(G_0) \cup_{l=1}^L G_l$, $i_u$ is sampled into at most one group $\bar{G}_l$, for $l \in [0, L]$. Since, $z_{i_u,l}$ is the indicator variable that is 1 iff $i_u$ is sampled into $\bar{G}_l$, therefore, for $z_{i_u,l} = 1$ for at most one $l \in [0,L]$. Hence, for $l \ne l'$, $z_{i_u,l} z_{i_u,l'} = 0$. Now,
\begin{align*}
&\left( \sum_{l_u \text{ consist. w. } i_u} 2^{l_u} z_{i_ul_u}X_{i_u}^p\right)^{e_u} = \sum_{\substack{
\sum_{l ~\consistw~i_u}  h_l = e_u \\ h_l\text{'s} \ge 0}}~~ \prod_{l ~\consistw~i_u} \left( 2^l z_{i_u,l} X_{i_u}^p \right)^{h_l}
\end{align*}
In the product term $\prod_{l ~\consistw~i_u} \left( 2^l z_{i_u,l} X_{i_u}^p \right)^{h_l}$, if $h_l $ and $h_{l'}$ are both non-zero for some $l\ne l'$, then, $z_{i_u,l}^{h_l} z_{i_u,l'}^{h_{l'}} = z_{i_u,l} z_{i_u,l'} = 0$, and therefore the product term is 0. Hence, the only contribution comes from the diagonal terms, that is
\begin{align*}
\left( \sum_{l_u \text{ consist. w. } i_u} 2^{l_u} z_{i_ul_u}X_{i_u}^p\right)^{e_u} & = \sum_{l_u ~ \consistw~i_u} 2^{l_u e_u} z_{i_u l_u} X_{i_u}^{pe_u} \\ & = X_{i_u}^{pe_u} \sum_{l_u \consistw~i_u} 2^{l_u e_u} z_{i_u l_u} \enspace .
\end{align*}

Substituting in Eqn.~\eqref{eq:dmomb1}, we have,
\begin{align}\label{eq:dmomb2}
\E &\left[\prod_{u=1}^r \left( \sum_{l_u \text{ consist. w. } i_u} 2^{l_u} z_{i_ul_u}X_{i_u}^p\right)^{e_u} \left( \sum_{l_u \text{ consist. w. } i_u} 2^{l_u} z_{i_ul_u}\conj{X_{i_u}}^p\right)^{g_u} \right. \notag \\
& \left.\prod_{v \in [{s'}]}\left( \sum_{l'_v \text{ consist. w. } j_v} 2^{l'_v} z_{j_v l'_v} X_{j_v}^p \right)^{e_{r+v}}  \left. \prod_{w\in [t]} \left( \sum_{\ell_w \text{ consist. w. } k_w} 2^{\ell_w} z_{k_w \ell_w} \conj{X_{k_w}}^p \right)^{ g_{r+w}}\right\vert \H_{J'} \right] \notag \\
& = \E \left[\prod_{u=1}^r X_{i_u}^{pe_u} \conj{X_{i_u}}^{pg_u} \prod_{v=1}^{s'} X_{j_v}^{pe_{r+v}} \prod_{w\in [t]} \conj{X_{k_w}}^{p  g_{r+w}} \prod_{u=1}^r \sum_{l_u \text{ consist. w. } i_u} 2^{l_u (e_u + g_u)} z_{i_u l_u} \right. \notag \\
& \hspace*{0.5in}\left. \left. \prod_{v=1}^{s'} \sum_{l'_v \text{ consist. w. } j_v} 2^{l'_v e_{r+v}} z_{j_v l'_v} \prod_{w=1}^t \sum_{\ell_w \text{ consist. w. } k_w} 2^{\ell_w g_{r+w}} z_{k_w \ell_w} \right\vert \H_{J'} \right] \enspace .
\end{align}
For any $i$, conditional on $\H_{J'}$, $X_i = \abs{x_i} + Z_i = \abs{x_i}\left(1 + \frac{Z_i}{\abs{x_i}}\right)$, where $Z_i$ is given by Eqns.~\eqref{eq:Xi} and ~\eqref{eq:Zi}. Then, Eqn.~\eqref{eq:dmomb2} equals
\begin{align} \label{eq:dmomb}
& = \prod_{u=1}^r \abs{x_{i_u}}^{p(e_u+g_u)} \prod_{v=1}^{s'} \abs{x_{j_v}}^{pe_{r+v}} \prod_{w=1}^t \abs{x_{k_w}}^{pg_{r+t}} \notag \\
& \hspace*{0.2in}\sum_{\substack{l_u \text{ consist. w. } i_u, u =1,\ldots, r \notag \\ l'_v \text{ consist. w. } j_v, v \in [{s'}] \notag \\ \ell_w \text{ con{s'}ist. w. } k_w, w \in [t]} } 2^{\sum_{u=1}^r {l_u(e_u+g_u)} + \sum_{v=1}^{s'} l'_v e_{r+v} + \sum_{w=1}^t \ell_w g_{r+w}}  \notag \\
&  \E\left[ \prod_{u=1}^r
\left(1 + \frac{Z_{i_u}}{\abs{x_{i_u}}} \right)^{pe_u} \left( 1 + \frac{\conj{Z_{i_u}}}{\abs{x_{i_u}}}\right)^{pg_u} \prod_{v=1}^{s'} \left( 1 + \frac{Z_{j_v}}{\abs{x_{j_v}}} \right)^{pe_{r+v}} \prod_{w=1}^t \left( 1 + \frac{ Z_{k_w}}{\abs{x_{k_w}}} \right)^{pg_{r+w}} \right. \notag \\
&  \left. \hspace{0.6in}\left\vert \bigwedge_{u=1}^r z_{i_u l_u}=1 \bigwedge_{v=1}^t z_{j_v l'_v} = 1 \bigwedge_{w=1}^t z_{k_w \ell_w}=1, \H_{J'} \right. \right] \notag \\ & \hspace*{0.6in} \cdot  \prob{ \bigwedge_{u=1}^r z_{i_u l_u}=1 \bigwedge_{v=1}^t z_{j_v l'_v} = 1 \bigwedge_{w=1}^t z_{k_w \ell_w}=1,\H_{J'}} \enspace .
\end{align}
We will denote the expression in the expectation as $P(I,J,K, e,g)$, where, $I = \{i_1, \ldots, i_r\}, J = \{j_1, \ldots, j_{s'}\}$ and $K = \{k_1, \ldots, k_t\}$. $e $ denotes the vector $(e_1, \ldots, e_{r+{s'}})$ and $g$ denotes the vector $(g_1, \ldots, g_{r+t})$, where, $e_1, \ldots, e_r \ge 1$, $g_1, \ldots, g_r \ge 1$, and $e_{r+1}, \ldots, e_{r+{s'}}, g_{r+1}, \ldots, g_{r+t} \ge 2$.  Then, Eqn.~\eqref{eq:dmomb} may be  written as
\begin{align*} 
&=\prod_{u=1}^r \abs{x_{i_u}}^{p(e_u+g_u)} \prod_{v=1}^{s'} \abs{x_{j_v}}^{pe_{r+v}} \prod_{w=1}^t \abs{x_{k_w}}^{pg_{r+t}} \notag \\ &\sum_{\substack{l_u \text{ consist. w. } i_u, u =1,\ldots, r \notag \\ l'_v \text{ consist. w. } j_v, v \in [{s'}] \notag \\ \ell_w \text{ con{s'}ist. w. } k_w, w \in [t]} } 2^{\sum_{u=1}^r {l_u(e_u+g_u)} + \sum_{v=1}^{s'} l'_v e_{r+v} + \sum_{w=1}^t \ell_w g_{r+w}}  \notag \\
&\hspace*{0.7in}
 \E\left[P(I,J,K,e,g)\left\vert \bigwedge_{u=1}^r z_{i_u l_u}=1 \bigwedge_{v=1}^t z_{j_v l'_v} = 1 \bigwedge_{w=1}^t z_{k_w \ell_w}=1, \H_{J'} \right. \right]\notag \\
 &\hspace*{1.0in}\prob{ \bigwedge_{u=1}^r z_{i_u l_u}=1 \bigwedge_{v=1}^t z_{j_v l'_v} = 1 \bigwedge_{w=1}^t z_{k_w \ell_w}=1,\H_{J'}} \enspace .
 \end{align*}

By $\nocollision (J)$, we have
\begin{align} \label{eq:dmt2}
&\expect{P(I,J,K,e,g) \mid \left\vert \bigwedge_{u=1}^r z_{i_u l_u}=1 \bigwedge_{v=1}^t z_{j_v l'_v} = 1 \bigwedge_{w=1}^t z_{k_w \ell_w}=1, \H_{J'} \right.}  \notag\\
& \E\left[ \left. \prod_{u=1}^r
\left(1 + \frac{Z_{i_u}}{\abs{x_{i_u}}} \right)^{pe_u} \left( 1 + \frac{\conj{Z_{i_u}}}{\abs{x_{i_u}}}\right)^{pg_u} \prod_{v=1}^{s'} \left( 1 + \frac{Z_{j_v}}{\abs{x_{j_v}}} \right)^{pe_{r+v}} \prod_{w=1}^t \left( 1 + \frac{ Z_{k_w}}{\abs{x_{k_w}}} \right)^{pg_{r+w}} \right\vert  \right. \notag \\
& \hspace*{1.0in}\left. \bigwedge_{u=1}^r z_{i_u l_u}=1 \bigwedge_{v=1}^t z_{j_v l'_v} = 1 \bigwedge_{w=1}^t z_{k_w \ell_w}=1, \H_{J'}  \right]  \notag \\
& = \prod_{u=1}^r \expect{\left. \left(1 + \frac{Z_{i_u}}{\abs{x_{i_u}}} \right)^{pe_u} \left( 1 + \frac{\conj{Z_{i_u}}}{\abs{x_{i_u}}}\right)^{pg_u} \right\rvert z_{i_u,l_u}=1, \H_{J'}}  \notag \\ &\hspace*{1.0in} \cdot \prod_{v=1}^{s'}\expect{\left. \left( 1 + \frac{Z_{j_v}}{\abs{x_{j_v}}} \right)^{pe_{r+v}}\right\vert z_{j_v,l'_v} =1,  \H_{J'}} \notag \\
& \hspace*{1.0in} \cdot \prod_{w=1}^t \expect{\left. \left( 1 + \frac{ Z_{k_w}}{\abs{x_{k_w}}} \right)^{pg_{r+w}} \right\vert z_{k_w, \ell_w} =1, \H_{J'}} \left(1  \pm n^{-\Omega(1)}\right) \enspace .
\end{align}

We now consider $\expect{\left. \left(1 + \frac{Z_{i_u}}{\abs{x_{i_u}}} \right)^{pe_u} \left( 1 + \frac{\conj{Z_{i_u}}}{\abs{x_{i_u}}}\right)^{pg_u} \right\rvert z_{i_u,l_u}=1, \H_{J'}}$. Then, since, $z_{i_u,l_u}=1$, we have  $\frac{\abs{Z_{i_u}}}{\abs{x_{i_u}}} \le (B/C)^{1/2} \le 1/(8p)$ with probability $1-n^{-\Omega(1)}$. Hence,
\begin{multline*}
\expect{\left. \left(1 + \frac{Z_{i_u}}{\abs{x_{i_u}}} \right)^{pe_u} \left( 1 + \frac{\conj{Z_{i_u}}}{\abs{x_{i_u}}}\right)^{pg_u} \right\rvert \H_{J'}}\\ \le \left( 1 + \frac{\abs{Z_{i_u}}}{\abs{x_{i_u}}} \right)^{pe_u} \left( 1+ \frac{ \abs{Z_{i_u}}}{\abs{x_{i_u}}}\right)^{pg_u}  \le  \left(1 + \frac{1}{8p} \right)^{p(e_u+g_u)} \enspace .
\end{multline*}
with probability $1-n^{-\Omega(1)}$.

Now consider $\expect{\left. \left( 1 + \frac{Z_{j_v}}{\abs{x_{j_v}}} \right)^{pe_{r+v}}\right\vert z_{j_v,l'_v}=1,\H_{J'}}$, for a fixed $v \in [s']$. Let $c_i = \binom{pe_{r+v}}{i}$, for $i=1,2, \ldots, k$. Since, $\frac{\abs{Z_{j_v}}}{\abs{x_{j_v}}} \le 1/8$ with probability $1-n^{-\Omega(1)}$, we have by the  Taylor's series  expansion  of $\left( 1 + \frac{Z_{j_v}}{\abs{x_{j_v}}} \right)^{pe_{r+v}}$ up to $k$ terms (and using $k+1$-wise independence of the family $\{\omega_{lr}(i)\}_{i \in [n]}$), that
\begin{align*}
&\left( 1 + \frac{Z_{j_v}}{\abs{x_{j_v}}} \right)^{pe_{r+v}}  = \sum_{i=0}^{k-1} c_i \frac{ Z_{j_v}^i}{\abs{x_{j_v}}^i} + c_k \frac{ Z_{j_v}^{'k}}{\abs{x_{j_v}}^k}
\end{align*}
where, $ \abs{Z'_{j_v}} \le \abs{Z_{j_v}}$.

Taking expectation, and using the fact that $ \expect{Z_{j_v}^i}=0$, for $r=1,2, \ldots, k-1$, and since, $ \abs{Z'_{j_v}} \le \abs{Z_{j_v}} \le (1/(8p))\abs{x_{j_v}}$, with probability $1-n^{-\Omega(1)}$, given that $z_{j_v, l'_v}=1$, we have,
\begin{align*}
\expect{\left( 1 + \frac{Z_{j_v}}{\abs{x_{j_v}}} \right)^{pe_{r+v}} \mid \H_{J'}}& = \sum_{r=0}^{k-1} c_r \frac{ \expect{Z_{j_v}^r \mid H}}{\abs{x_{j_v}}^r} + c_k \frac{\expect{ Z_{j_v}^{'k} \mid \H_{J'}}}{\abs{x_{j_v}}^k}
= 1 + c_k \frac{\expect{ Z_{j_v}^{'k} \mid \H_{J'}}}{\abs{x_{j_v}}^k} \\
& \le 1 + \abs{c_k} \frac{ \abs{Z_{j_v}^{'k}}}{\abs{x_{j_v}}^k}   \le  1 + \abs{c_k} \frac{ \abs{Z_{j_v}^k}}{\abs{x_{j_v}}^k} \le 1 + \abs{c_k} (8p)^{-k} \le 1 +  n^{-\Omega(1)}
\end{align*}
since, $c_k = \card{\binom{ pe_{r+v}}{k}} \le  \left( \frac{pe_{r+v}}{k} \right)^{\lceil p \rceil} < 1$, assuming $k \ge 2pe_{r+v}$.

Substituting in Eqn.~\eqref{eq:dmt2} we have,
\begin{align} \label{eq:dmt3}
&\expect{P(I,J,K,e,g) \mid \bigwedge_{u=1}^r z_{i_u l_u}=1 \bigwedge_{v=1}^t z_{j_v l'_v} = 1 \bigwedge_{w=1}^t z_{k_w \ell_w}=1, \H_{J'}}  \notag \\
& \le  \left(1 + \frac{1}{8p} \right)^{p \sum_{u \in [r]} (e_u + g_u)} \left(1 + n^{-\Omega(1)} \right)^{s'+t} \notag \\
& \le e^{ \sum_{u \in [r]} (e_u + g_u)/8} \left( 1 + n^{-\Omega(1)} \right) \enspace .
\end{align}

Substituting Eqn.~\eqref{eq:dmt3} in Eqn.~\eqref{eq:dmomb}, we have,
\begin{align} \label{eq:dmomt3}
&\expect{ \left.\prod_{u=1}^r Y_{i_u}^{e_u} \conj{Y_{i_u}}^{g_u} \prod_{v \in [s]} Y_{j_v}^{e_{r+v}} \prod_{w \in [t]} \conj{Y_{k_w}}^{g_{r+v}}\right\rvert \H_{J'}} \notag \notag \\
& = e^{ \sum_{u \in [r]} (e_u+g_u)/8} \left(1 + n^{-\Omega(1)} \right) \sum_{\substack{l_u \text{ consist. w. } i_u, u =1,\ldots, r \notag \\ l'_v \text{ consist. w. } j_v, v \in [s] \notag \\ \ell_w \text{ consist. w. } k_w, w \in [t]} } 2^{\sum_{u=1}^r {l_u(e_u+g_u)} + \sum_{v=1}^s l'_u + \sum_{w=1}^t \ell_w} \notag \\
& \prod_{u=1}^r \abs{x_{i_u}}^{p(e_u+g_u)} \prod_{v=1}^s \abs{x_{j_v}}^{pe_{r+v}} \prod_{w=1}^t \abs{x_{k_w}}^{pg_{r+t}} \notag \\
& \hspace*{1.0in}\prob{\left. \bigwedge_{u=1}^r z_{i_u l_u}=1 \bigwedge_{v=1}^t z_{j_v l'_v} = 1 \bigwedge_{w=1}^t z_{k_w \ell_w}=1\right\vert \H_{J'} } \enspace .
\end{align}
By Fact~\ref{fact:1}, we have, for any $i \in \lmargin(G_0) \cup_{l=1}^L G_l$, $l$ consistent with $i$ and  $\beta \ge 1$,   $$ \abs{x_i}^{p\beta} 2^{\beta l} \le  2^l \abs{x_i}^p\left( \frac{F_2}{B} \right)^{p(\beta-1) /2} 2^{(p/2+1)(\beta-1)} = 2^l \abs{x_i}^p \left( \frac{2^{1+2/p} F_2}{B} \right)^{(p/2)(\beta-1)} \enspace . $$ Thus, \\
$$ \abs{x_{i_u}}^{p(e_u+g_u)} 2^{(e_u+g_u) l_u} \le \abs{x_{i_u}}^p 2^{l_u} \left( \frac{2^{1+2/p}F_2}{B} \right)^{(p/2)(e_u+g_u-1)} $$ and $$ \abs{x_{j_v}}^{pe_{r+v}} 2^{e_{r+v} l'_v} \le \abs{x_{j_v}}^p 2^{l'_v} \left( \frac{2^{1+2/p}F_2}{B} \right)^{(p/2)(e_{r+v}-1)}  \enspace . $$
Using this,  Eqn.~\eqref{eq:dmomt3} is upper bounded as
\begin{align} \label{eq:dmomt4}
&\expect{\left. \prod_{u=1}^r Y_{i_u}^{e_u} \conj{Y_{i_u}}^{g_u} \prod_{v \in [s']} Y_{j_v}^{e_{r+v}} \prod_{w \in [t]} \conj{Y_{k_w}}^{g_{r+v}} \right\rvert \H_{J'}} \notag \notag \\
&\le e^{ \sum_{u \in [r]} (e_u+g_u)/8} \left(1 + n^{-\Omega(1)} \right) \prod_{u=1}^r \abs{x_{i_u}}^p \prod_{v=1}^s \abs{x_{j_v}}^p \prod_{w=1}^t \abs{x_{k_w}}^p
\notag \\
& \hspace*{0.5in}\left( \frac{2^{1+2/p}F_2}{B} \right)^{(p/2)\left(\sum_{u=1}^r (e_u+g_u-1)+ \sum_{v \in [s] } (e_{r+v}-1)  + \sum_{w \in [t]} (e_{r+w}-1) \right)}\notag \\
& \sum_{\substack{l_u \text{ consist. w. } i_u, u =1,\ldots, r  \\ l'_v \text{ consist. w. } j_v, v \in [s] \\ \ell_w \text{ consist. w. } k_w, w \in [t]} }
2^{\sum_{u \in [r]}  l_u + \sum_{v \in [s]} l_v + \sum_{w \in [t]} l_w} \notag\\
& \hspace*{1.5in}
\prob{ \bigwedge_{u=1}^r z_{i_u l_u}=1 \bigwedge_{v=1}^t z_{j_v l'_v} = 1 \bigwedge_{w=1}^t z_{k_w \ell_w}=1 \mid \H_{J'}}
\end{align}

Recall that for brevity purposes, in this proof, we have extended the event $\H_{J'}$ to also include $\nocollision(J)$ and $\goodest(J)$. Then,
by Lemma~\ref{lem:approxdwise},
\begin{gather*}
\sum_{\substack{l_u \text{ consist. w. } i_u, u =1,\ldots, r \notag \\ l'_v \text{ consist. w. } j_v, v \in [s] \notag \\ \ell_w \text{ consist. w. } k_w, w \in [t]} }
2^{\sum_{u \in [r]}  l_u + \sum_{v \in [s]} l_v + \sum_{w \in [t]} l_w}\\
\hspace*{1.0in} \cdot  \prob{ \bigwedge_{u=1}^r z_{i_u l_u}=1 \bigwedge_{v=1}^t z_{j_v l'_v} = 1 \bigwedge_{w=1}^t z_{k_w \ell_w}=1 \mid \H_{J'}}
 = 1 \pm n^{-\Omega(1)}
\end{gather*}

Substituting in Eqn.~\eqref{eq:dmomt4},  we obtain
\begin{align*}
&\expect{\left. \prod_{u=1}^r Y_{i_u}^{e_u} \conj{Y_{i_u}}^{g_u} \prod_{v \in [s']} Y_{j_v}^{e_{r+v}} \prod_{w \in [t]} \conj{Y_{k_w}}^{g_{r+v}} \right\rvert \H_{J'}} \notag \notag \\
&\le e^{ \sum_{u \in [r]} (e_u+g_u)/8} \left(1 + n^{-\Omega(1)} \right)^2 \prod_{u=1}^r \abs{x_{i_u}}^p \prod_{v=1}^s \abs{x_{j_v}}^p \prod_{w=1}^t \abs{x_{k_w}}^p\notag \\
& \hspace*{0.5in} \left( \frac{2^{1+2/p}F_2}{B} \right)^{(p/2)\left(\sum_{u=1}^r (e_u+g_u-1)+ \sum_{v \in [s] } (e_{r+v}-1)  + \sum_{w \in [t]} (e_{r+w}-1) \right)} \\
& \le 2 e^{ \sum_{u \in [r]} (e_u+g_u)/8}\prod_{u=1}^r \abs{x_{i_u}}^p \prod_{v=1}^s \abs{x_{j_v}}^p \prod_{w=1}^t \abs{x_{k_w}}^p \\
& \hspace*{0.5in}\left( \frac{ 4 F_2}{B} \right)^{(p/2)\left(\sum_{u=1}^r (e_u+g_u-1)+ \sum_{v \in [s] } (e_{r+v}-1) + \sum_{w \in [t]} (e_{r+w}-1) \right)}
\end{align*}
since, $p \ge 2$ and  $ 2^{1+2/p} \le  2^2 = 4$.

\end{proof}

\begin{lemma} \label{lem:centmompow}
Let $X$ be a non-negative real random variable with expectation $\expect{X} = \mu$. Then, for any integer $e \ge  2$, $\expect{ (X-\mu)^e } \le \expect{X^e} + \mu^e \enspace . $
\end{lemma}
\begin{proof} Let $p_x = \prob{X=x}$. Then,
\begin{align}
\expect{ (X-\mu)^e} & = \sum_{x} p_x (x-\mu)^e \notag \\
&= \sum_{x \le \mu} p_x (x-\mu)^e + \sum_{x>\mu} p_x (x-\mu)^e \label{eq:centmompow1} \\
& \le \sum_{x \le \mu} p_x (\mu-x)^e + \sum_{x > \mu} p_x x^e \notag \\
& \le \sum_{x \le \mu} p_x \mu^e + \expect{X^e} \notag \\
& \le \mu^e + \expect{X^e} \enspace .  \notag
\end{align}
Step 1 follows from the definition of expectation. Step 2 separates the summation into the ranges $x \le
\mu$ and $x > \mu$. Step 3 uses the fact  that  in the range $x \le \mu$, $\mu-x \le 0$ and therefore for any odd $e$, $(\mu-x)^e \le (x-\mu)^e$. For even $e$, the two powers $(\mu-x)^e$ and $(x-\mu)^e$ are the same. Hence for all $e \ge 2$,  $(\mu-x)^e  \le (x-\mu)^e$. Further,  for $x \ge \mu$,  $(x-\mu)^e \le x^e$, since, $x \ge 0$ by virtue of $X$ being a non-negative random variable.
Step 3 follows from the facts that (i) in the range $x \le \mu$, $0 \le (\mu-x) \le \mu$ and hence, $(\mu-x)^e \le \mu^e$, and,  (ii) by non-negativity of $X$,  $\sum_{x> \mu} p_x x^e \le \sum_{x \ge 0} p_x x^e  = \expect{X^e}$. The final step follows since $\sum_{x \le \mu} p_x \mu^e = \mu^e \prob{X \le \mu} \le \mu^e$.

\end{proof}

Recall that $G' = \lmargin(G_0) \cup_{l=1}^L G_l$.
\begin{lemma} \label{lem:dmf2:centtomom} 
Let $j \in G'$. Then, the following expectation is real and is bounded above as follows.
\begin{align*}
\expect{(Y_j-\expect{Y_j \mid \H})^e_j \mid \H} \le  (1+\delta'') (\expect{Y_j^{e_j} \mid \H} + \abs{x_j}^{pe_j}(1+\delta'))
\end{align*}
where, $\delta', \delta'' \in \min(O(\delta), n^{-\Omega(1)})$.
\end{lemma}
\begin{proof} By definition, $Y_j = \sum_{l \consistw j} 2^l z_{jl} X_j^p$, where, $z_{jl}=1$ iff $i \in \bar{G_l}$. Taking expectation of $Y_j$,
$
\expect{Y_j \mid \H} = \sum_{l \consistw j} 2^l \expect{z_{jl} X_j^p \mid \H}$. Note that $\expectsub{\omega}{X_j^p \mid \H} = \abs{x_j}^p$ as discussed earlier. This does not depend on the hash functions $g_1, \ldots, g_L$ which determine $z_{jl}$. Therefore,
$\expect{z_{jl} X_j^p \mid \H} = \expect{z_{jl} \mid \H} \abs{x_j}^p$. Thus, we have,
$$\expect{Y_j \mid \H} = \abs{x_j}^p \expect{ \sum_{l \consistw j} 2^l z_{jl} \mid \H} = \abs{x_j}^p (1 \pm 2^l \min (O(\delta), n^{-\Omega(1)}))  $$
where, the last step follows from Lemma~\ref{lem:margin}. In the following, we will refer to $(1 \pm 2^l \min (O(\delta), n^{-\Omega(1)}))$ as $1\pm \delta'$, where, $\delta' = \min (O(\delta), n^{-\Omega(1)}))$.

Therefore, letting $c_k = \binom{e_j}{k}$,
\begin{align} \label{eq:lem:dmf2:centtomom:t1}
&\expect{ (Y_j - \expect{Y_j \mid \H})^{e_j}\mid \H} = \sum_{j=0}^k c_k  \expect{ Y_j^k \mid \H} (-1)^{e_j-k} \abs{x_j}^{p(e_j-k)} (1\pm \delta')^{e_j-k}  \enspace .
\end{align}
Now, $\expect{ Y_j^k \mid \H} = 1$, if $k=0$. Otherwise, for $k \ge 1$, note that
$$ \left( \sum_{l \consistw j} 2^l z_{jl} X_j^p \right)^k = \sum_{l \consistw j} 2^{kl} z_{jl}X_j^{pk} \enspace . $$
Taking expectations of both sides, and noting that $\expectsub{\omega}{X_j^{pk}} = \abs{x_j}^{pk}$ and the expectation of the $z_{jl}$'s is independent of  $\omega$, we have,
\begin{align*}
\expect{ \left( \sum_{l \consistw j} 2^l z_{jl} X_j^p \right)^k} = \abs{x_j}^{pk} \expect{ \sum_{l \consistw j} 2^{kl} z_{jl}} \enspace .
\end{align*}
Substituting in Eqn.~\eqref{eq:lem:dmf2:centtomom:t1}, we have,
\begin{align*}
&\expect{ (Y_j - \expect{Y_j \mid \H})^{e_j}\mid \H}  = \sum_{j=0}^k c_k  \abs{x_j}^{pk} \expect{ \sum_{l \consistw j} 2^{kl} z_{jl}}(-1)^{e_j-k} \abs{x_j}^{p(e_j-k)} (1\pm \delta')^{e_j-k} \enspace .
\end{align*}
Define the random variable $W_j = \sum_{l \consistw j} 2^l z_{jl}$. Then, the above expectation can be written as  follows.
\begin{align}\label{eq:lem:dmf2:centtomom:t2}
 &\expect{ ((Y_j - \expect{Y_j \mid \H})^{e_j}\mid \H}\notag  \\
  &= \sum_{j=0}^k c_k  \abs{x_j}^{pe_j} \expect{W_j^k} (-1)^{e_j-k} \abs{x_j}^{p(e_j-k)} (1\pm \delta')^{e_j-k} \notag \\
 & = (1\pm \delta'')\abs{x_j}^{pe_j}\expect{ (W_j - \expect{W_j\mid \H})^{e_j} \mid \H}, \text{ where, $0 \le \delta'' \le \min(O(\delta), n^{-\Omega(1)})$} \notag \\
 & \le (1+\delta'') \abs{x_j}^{pe_j} (\expect{W_j^{e_j} \mid \H} + \expect{W_j \mid \H}^{e_j})
\end{align}
Now, $\abs{x_j}^{pe_j} \expect{W_j^{e_j} \mid \H} = \expect{Y_j^{e_j} \mid \H}$, by calculations as done above. Similarly,\\ $\abs{x_j}^{pe_j} (\expect{W_j \mid \H})^{e_j} = \abs{x_j}^{pe_j} (1\pm \delta')^{e_j}$. Substituting in Eqn.~\eqref{eq:lem:dmf2:centtomom:t1}, we have,
\begin{align*}
&\expect{ ((Y_j - \expect{Y_j \mid \H})^{e_j}\mid \H} \le (1+\delta'') (\expect{Y_j^{e_j} \mid \H} + \abs{x_j}^{pe_j}(1+\delta')) \enspace .
\end{align*}

\end{proof}

Lemma~\ref{lem:dmf2:centmom2} presents an approximation of the central moment $\expect{(Y_j-\expect{Y_j \mid \H})^{e_j} \mid \H}$ in terms of the (non-central) moment $\expect{Y_j^{e_j} \mid \H}$.

\begin{lemma} \label{lem:dmf2:centmom2}
Let $j \in G'$ and $e_j$ be an integer $\ge 0$.  Then,
\begin{align*}
\expect{(Y_j-\expect{Y_j \mid \H})^{e_j} \mid \H} \le (2 + \delta') \expect{Y_j^{e_j} \mid \H}
\end{align*}
where, $\delta'\le n^{-\Omega(1)}$.
\end{lemma}

\begin{proof}
We begin with the statement of Lemma~\ref{lem:dmf2:centtomom}.
\begin{align} \label{eq:lem:dmf2:centtomom:repeat}
\expect{(Y_j-\expect{Y_j \mid \H})^e_j \mid \H} \le  (1+\delta'') (\expect{Y_j^{e_j} \mid \H} + \abs{x_j}^{pe_j}(1+\delta'))
\end{align}
where, $\delta', \delta'' \le n^{-\Omega(1)}$.  We will now attempt to lower bound the term $\expect{Y_j^{e_j} \mid \H}$ in terms of $\abs{x_j}^{pe_j}$.
We will follow calculations similar to those in the proof of Lemma~~\ref{lem:dmf2:centtomom}. Firstly, if $e_j =0$, then, $\expect{Y_j^{e_j}} = 1 = \abs{x_j}^{pe_j}$. So now assume that $e_j \ge 1$ and integral.
\begin{align} \label{eq:lem:dmf2:centmom2:t1}
\expect{Y_j^{e_j} \mid \H} & = \abs{x_j}^{pe_j} \expect{ \biggl(\sum_{l \consistw j} 2^l z_{jl} \biggr)^{e_j}}  = \abs{x_j}^{pe_j} \expect{ \sum_{l \consistw j} 2^{le_j} z_{jl} } \enspace .
\end{align}
We write $2^{l e_j} z_{jl} = 2^{le_j-l} \cdot 2^l z_{jl}$, for $l \consistw j$.  Let $l  \consistw j$.

\noindent
\emph{Case 1:} If $j \in \midreg(G_r)$, then, the only value of  $l$ consistent with $j$ is $r$. In this case, $2^{l e_j} z_{jl} = 2^{r(e_j-1)} \cdot 2^l z_{jl}$.

\noindent \emph{Case 2:} If $j \in \rmargin(G_r)$, then, $2^{l e_j} z_{jl} \ge 2^{(r-1) (e_j-1)} \cdot 2^l z_{jl}$.

\noindent \emph{Case 3:} If $j \in \lmargin(G_r)$, then, $2^{l e_j} z_{jl} \ge 2^{r(e_j-1)} \cdot 2^l z_{jl}$.

Now  $j \in \lmargin(G_0) \cup_{l=1}^L G_l$. If $j \in G_r$ and $r \in [L]$, then, for any $l$ consistent with $j$,  $2^{l e_j} z_{jl} \ge 2^{(r-1) (e_j-1)} \cdot 2^l z_{jl}$. In this case,
$$\expect{ \sum_{l \consistw j} 2^{le_j} z_{jl}\mid \H } \ge 2^{(r-1)(e_j-1)} \expect{ \sum_{l \consistw j} 2^l z_{jl} \mid \H} = 2^{(r-1)(e_j-1)} (1\pm \delta') \ge 1- \delta' $$ by Lemma~\ref{lem:margin}, and since, $r \ge 1$ and $e_j \ge 1$.

 If $j \in \lmargin(G_0)$, then, for any $l$ consistent with $j$, $2^{l e_j} z_{jl} \ge 2^{0(e_j-1)} \cdot  2^l z_{jl} = 2^l z_{jl}$. In this case, $\expect{  \sum_{l \consistw j} 2^{le_j} z_{jl}\mid \H } = 1- \delta'$, by Lemma~\ref{lem:margin}. Therefore, in all cases, for $j \in \lmargin(G_0) \cup_{l=1}^L G_l$, we have,
 $$ \expect{ \sum_{l \consistw j} 2^{le_j} z_{jl}\mid \H } \ge 1- \delta' \enspace . $$
From Eqn.~\eqref{eq:lem:dmf2:centmom2:t1} for $e_j \ge 1$ we have,
$\displaystyle \expect{Y_j^{e_j} \mid \H} \ge (1-\delta') \abs{x_j}^{pe_j} \enspace . $

\noindent
Substituting in Eqn.~\eqref{eq:lem:dmf2:centtomom:repeat} we have,
\begin{align*}
\expect{(Y_j-\expect{Y_j \mid \H})^e_j \mid \H}  &
\le  (1+\delta'') (\expect{Y_j^{e_j} \mid \H} + \abs{x_j}^{pe_j}(1+\delta')) \\
& \le  (1+\delta'')  ( 1 + (1-\delta')^{-1} ) \expect{Y_j^{e_j} \mid \H}\\
& =  (2 + \delta''') \expect{Y_j^{e_j} \mid \H}
\end{align*}
where, $\delta''' \le  n^{-\Omega(1)})$.

Note that when $e_j=0$, $ \expect{Y_j^{e_j} \mid \H} = 1 = \abs{x_j}^{pe_j}$, and the \emph{RHS} above becomes $(1+\delta'')(2+\delta') = (2+ \delta''') \expect{Y_j^{e_j} \mid \H}$.

\end{proof}

\begin{lemma} \label{lem:dmf2:centmom3} Let $j \in \lmargin(G_0) \cup_{l=1}^L G_l$. Then, for any $e_j, g_j \ge 0$,
\begin{align*}
\expect{(Y_j - \expect{Y_j \mid \H})^{e_j} (\conj{Y_j} - \expect{\conj{Y_j} \mid \H})^{g_j} \mid \H}
\le  \left(\frac{2(1+\delta')}{1-\delta'} \right)^{e_j+g_j} \expect{Y_j^{e_j} \conj{Y_j}^{g_j}}
\end{align*}
where, $\delta' = n^{-\Omega(1)}$.
\end{lemma}

\begin{proof} We have $Y_j = \sum_{l_j \consistw j} 2^l z_{lj} X_j^p$. Further, $\expect{Y_j \mid \H} = \abs{x_j}^p (1 \pm \delta')$, where, $\delta' \le n^{-\Omega(1)}$. Let $c_k = \binom{e_j}{k}$ and $d_k =  \binom{g_j}{k}$. Therefore,
\begin{align}\label{eq:lem:dmf2:centmom3:t1}
&\expect{(Y_j - \expect{Y_j \mid \H})^{e_j} (\conj{Y_j} - \expect{\conj{Y_j} \mid \H})^{g_j} \mid \H} \notag  \notag \\
& = \expect{ \left( \sum_{l \consistw j} 2^l z_{lj} X_j^p - \abs{x_j}^p (1\pm \delta')\right)^{e_j}
\left(\sum_{l \consistw j} 2^{l} z_{lj} X_j^p - \abs{x_j}^p (1 \pm \delta')\right)^{g_j}} \notag  \notag \\
& = \textbf{E} \left[ \left( \sum_{k=0}^{e_j} c_k \left( \sum_{l \consistw j} 2^l z_{lj} X_j^p \right)^k (-1)^{e_j-k} \abs{x_j}^{p(e_j-k)}(1 \pm \delta')^{p(e_j-k)}\right) \right. \notag \\
&\hspace*{0.5in}\left.  \left( \sum_{k'=0}^{g_j} d_k \left( \sum_{l \consistw j} 2^l z_{jl} X_j^p\right)^{k'} (1-)^{g_j-k'}
\abs{x_j}^{p(g_j-k')} (1\pm \delta')^{p(e_j-k')} \right) \right]  \notag \\
& =  \sum_{k,k'=0,0}^{e_j, g_j} c_k d_{k'} \expect{ \left( \sum_{l \consistw j} 2^l z_{lj} X_j^p \right)^k \left( \sum_{l \consistw j} 2^l z_{jl} X_j^p\right)^{k'}} \notag \\
&\hspace*{1.0in} (-1)^{e_j + g_j - k-k'} \abs{x_j}^{p(e_j+g_j -k-k')} (1\pm \delta')^{e_j + g_j -k-k'} \enspace .
\end{align}
 Since, $z_{lj} z_{l'j} = 0$, for any distinct $l, l'$, we have for any fixed $0 \le k \le e_j$ and $0 \le k' \le g_j$ that,
\begin{align} \label{eq:lem:dmf2:cemtmom3:t3}
&\expect{\left( \sum_{l \consistw j} 2^l z_{lj} X_j^p \right)^k \left( \sum_{l \consistw j} 2^l z_{jl} X_j^p\right)^{k'}} \notag \\
& = \expect{\left( \sum_{l \consistw j} 2^{lk} z_{lj}^k X_j^{pk} \right)\left( \sum_{l \consistw j} 2^{lk'} z_{jl}^{k'} \conj{X_j}^{pk'} \right)} \notag \\
& = \expect{ \sum_{l \consistw j} 2^{l(k+k')} z_{lj}^{k+k'} X_j^{pk} \conj{X_j}^{pk'}} \enspace .
\end{align}

Let $a^{pk}_r = \binom{pk}{r}$ and $b_{r'}^{pk'}$ denote $\binom{pk'}{r'}$.
\begin{align} \label{eq:lem:dmf2:cemtmom3:t4}
\expectsub{\omega}{X_j^{pk} \conj{X_j}^{pk'} \mid \H} & = \abs{x_j}^{p(k+k')} \expectsub{\omega}{\left( 1 + \frac{Z_j}{\abs{x_j}} \right)^{pk} \left( 1 + \frac{\conj{Z_j}}{\abs{x_j}}\right)^{pk'} \mid \H} \notag\\
& = \abs{x_j}^{p(k+k')} \sum_{r,r'=0}^{pk, pk'} a^{pk}_r b^{pk'}_{r'}
\expectsub{\omega}{Z_j^r \conj{Z_j}^r} \abs{x_j}^{-(r+r')} \enspace .
\end{align}
Note that the expression for $Z_j^r \conj{Z_j}^{r'}$ has the multiplicative term $\omega_j^r \conj{\omega_j}^{r'}$. If $r \ne r'$, then, upon taking expectation with respect to all the various $\omega_i$'s, in particular, we have, $\expectsub{\omega_j}{ \omega_j^r \conj{\omega_j}^{r'}} = 0 $ and therefore that $\expectsub{|omega}{Z_j^r \conj{Z_j}^{r'}} =0$. Thus, Eqn.~\eqref{eq:lem:dmf2:cemtmom3:t4} is equivalent to the following expression where $r=r'$.
\begin{align} \label{eq:lem:dmf2:cemtmom3:t5}
\expectsub{\omega}{X_j^{pk} \conj{X_j}^{pk'} \mid \H}  = \abs{x_j}^{p(k+k')} \sum_{r} a^{pk}_r b^{pk'}_{r'} \expectsub{\omega}{ \frac{\abs{Z_j}^{2r}}{\abs{x_j}^{2r}} \mid \H} \enspace .
\end{align}
It follows that the expectation term in the above equation is real and non-negative.

Let $\expectsub{h}{\cdot}$ denote the expectation with respect to the hash functions and $\expectsub{\omega}{ \cdot}$  denote the expectation with respect to the complex roots of unity sketches (i.e., the remaining random bits).
Substituting Eqn~\eqref{eq:lem:dmf2:cemtmom3:t5} in Eqn.~\eqref{eq:lem:dmf2:cemtmom3:t4}, we have,
\begin{align*}
\expectsub{\omega}{X_j^{pk} \conj{X_j}^{pk'} \mid \H} & = \abs{x_j}^{p(k+k')} \sum_{r=0}^{ \min(pk, pk')} a_r^{pk} b_r ^{pk'} \expectsub{\omega}{\frac{\abs{Z_j}^{2r}}{\abs{x_j}^{2r}}} \enspace .
\end{align*}
Substituting the above  equation in Eqn.~\eqref{eq:lem:dmf2:cemtmom3:t3}, we have,
\begin{align*}
&\expect{\left( \sum_{l \consistw j} 2^l z_{lj} X_j^p \right)^k \left( \sum_{l \consistw j} 2^l z_{jl} X_j^p\right)^{k'}} \notag \\
& = \expect{ \sum_{l \consistw j} 2^{l(k+k')} z_{lj}^{k+k'} X_j^{pk} \conj{X_j}^{pk'}} \\
& = \abs{x_j}^{p(k+k')} \sum_{r=0}^{\min(pk, pk')} a_r^{pk} b_r^{pk'}  \expect{ \sum_{l \consistw j} 2^{l(k+k')}   z_{lj}^{k+k'} \frac{\abs{Z_j}^{2r}}{\abs{x_j}^{2r}}} \enspace .
\end{align*}
Substituting this equation in Eqn.~\eqref{eq:lem:dmf2:centmom3:t1}, we have,
\begin{align} \label{eq:lem:dmf2:centmom3:t6}
&\expect{(Y_j - \expect{Y_j \mid \H})^{e_j} (\conj{Y_j} - \expect{\conj{Y_j} \mid \H})^{g_j} \mid \H}\notag \\
& = \sum_{k,k'=0,0}^{e_j, g_j} c_k d_{k'} \expect{ \left( \sum_{l \consistw j} 2^l z_{lj} X_j^p \right)^k \left( \sum_{l \consistw j} 2^l z_{jl} X_j^p\right)^{k'} \mid \H} \notag \notag \\
&\hspace*{1.0in} (-1)^{e_j + g_j - k-k'} \abs{x_j}^{p(e_j+g_j -k-k')} (1\pm \delta')^{e_j + g_j -k-k'} \enspace . \notag \\
& =     \sum_{k,k'=0,0}^{e_j, g_j} c_k d_{k'} \abs{x_j}^{p(k+k')}  \sum_{r=0}^{\min(pk, pk')} a_r^{pk} b_r^{pk'} \expect{ \sum_{l \consistw j} 2^{l(k+k')}   z_{lj}^{k+k'} \frac{\abs{Z_j}^{2r}}{\abs{x_j}^{2r}}\mid \H}\notag \\
& \hspace*{1.0in} (-1)^{e_j + g_j - k-k'} \abs{x_j}^{p(e_j+g_j -k-k')} (1\pm \delta')^{e_j + g_j -k-k'} \enspace . \notag \\
& = \abs{x_j}^{p(e_j+g_j)} \sum_{k,k'=0,0}^{e_j, g_j} c_k d_{k'} \sum_{r=0}^{\min(pk, pk')} a_r^{pk} b_r^{pk'} \expect{ \sum_{l \consistw j} 2^{l(k+k')}   z_{lj}^{k+k'} \frac{\abs{Z_j}^{2r}}{\abs{x_j}^{2r}}\mid \H}\notag \\
&\hspace*{1.0in} (-1)^{e_j + g_j - k-k'}  (1\pm \delta')^{e_j + g_j -k-k'} \enspace .
\end{align}
Noting that the term in the expectation is always non-negative, replacing the powers of -1 by 1 cannot decrease the \emph{RHS}. Therefore, the \emph{RHS} of Eqn.~\eqref{eq:lem:dmf2:centmom3:t6} is  bounded above by
\begin{align} \label{eq:lem:dmf2:centmom3:t7}
\le & (1+\delta)^{e_j + g_j} \abs{x_j}^{p(e_j+g_j)} \sum_{k,k'=0,0}^{e_j, g_j} c_k d_{k'} \sum_{r=0}^{\min(pk, pk')} a_r^{pk} b_r^{pk'} \expect{ \sum_{l \consistw j} 2^{l(k+k')}   z_{lj}^{k+k'} \frac{\abs{Z_j}^{2r}}{\abs{x_j}^{2r}}\mid \H}  \notag \\
& = (1+ \delta) ^{e_j + g_j} \abs{x_j}^{p(e_j+g_j)} \sum_{k,k'=0,0}^{e_j, g_j} c_k d_{k'} \tau(k,k'),
\end{align}
where,
\begin{align} \label{eq:defn:tau}\tau(k,k') = \sum_{r=0}^{\min(pk, pk')} a_r^{pk} b_r^{pk'} \expect{ \sum_{l \consistw j} 2^{l(k+k')}   z_{lj}^{k+k'} \frac{\abs{Z_j}^{2r}}{\abs{x_j}^{2r}}\mid \H} \enspace .
\end{align}

Repeating similar calculations, we obtain,
\begin{align} \label{eq:lem:dmf2:centmom3:t8}
\expect{ Y_j^{e_j} \conj{Y_j}^{g_j} \mid \H} & \ge (1-\delta')^{e_j + g_j}\abs{x_j}^{p(e_j+g_j)} \sum_{r=0}^{\min(e_j, g_j)} a_r^{pe_j} b_r^{pg_j} \notag \\
& \hspace*{0.5in}\expect{ \sum_{l \consistw j} 2^{l(e_j + g_j)} z_{lj}^{e_j + g_j} \frac{\abs{Z_j}^{2r}}{\abs{x_j}^{2r}}\mid \H} \notag  \\
& = (1-\delta')^{e_j + g_j}\abs{x_j}^{p(e_j+g_j)} \tau(e_j, g_j) \enspace .
\end{align}

We now consider the $\tau$ function. Note that by definition, $\tau$ is a symmetric function, that is, $\tau(k,k') = \tau(k',k)$. We wish to show that $\tau$ is monotonic, that is, if $0 \le k \le e_j$ and $0 \le k' \le g_j$, then, $\tau(k,k') \le \tau(e_j, g_j)$. Assume that $0 \le k \le e_j$ and $0 \le k' \le g_j$.

\noindent
\emph{Case 1.} Suppose $k+k'=0$, or, equivalently, $k=0$ and $k'=0$. Then, $\tau(k,k')=1$. Further,
\begin{align*}
\tau(e_j, g_j ) & = \sum_{r=0}^{\min(pe_j, pg_j')} a_r^{pe_j} b_r^{pg_j} \expect{ \sum_{l \consistw j} 2^{l(e_j + g_j)}   z_{lj}^{e_j + g_j} \frac{\abs{Z_j}^{2r}}{\abs{x_j}^{2r}}\mid \H} \\
& = 1 + \sum_{r=1}^{\min(pe_j, pg_j')} a_r^{pe_j} b_r^{pg_j} \expect{ \sum_{l \consistw j} 2^{l(e_j + g_j)}   z_{lj}^{e_j + g_j} \frac{\abs{Z_j}^{2r}}{\abs{x_j}^{2r}}\mid \H} \\
& \ge 1 = \tau(k,k') \enspace .
\end{align*}
\emph{Case 2.} Now suppose $k+k' \ge 1$. We note that the function $a_r^n = \binom{n}{r}$ is monotonic in the first argument, that is, $\binom{n+1}{r} \ge \binom{n}{r}$. This  follows since, for $r=0$, these two terms are obviously equal to 1, and for $r > 0$,
\begin{align*}
\binom{n+1}{r} = \prod_{j=0}^{r-1} \frac{n+1-j}{j} > \prod_{j=0}^{r-1} \frac{n-j}{j} = \binom{n}{r} \enspace .
\end{align*}
Therefore,
\begin{align*}
\tau(k,k') &  = \sum_{r=0}^{\min(pk, pk')} \binom{pk}{r} \binom{pk'}{r} \expect{ \sum_{l \consistw j} 2^{l(k+k')}   z_{lj}^{k+k'} \frac{\abs{Z_j}^{2r}}{\abs{x_j}^{2r}}\mid \H} \\
& \le  \sum_{r=0}^{\min(pe_j, pg_j)} \binom{pe_j}{r} \binom{pe_j}{r} \expect{ \sum_{l \consistw j} 2^{l(e_j + g_j)}   z_{lj}^{e_j + g_j} \frac{\abs{Z_j}^{2r}}{\abs{x_j}^{2r}}\mid \H} \\
& = \tau(e_j, g_j) \enspace .
\end{align*}
The last step is obtained essentially by replacing each term in the expansion of $\tau(k,k')$ by a corresponding term that is a function of $e_j$ and $g_j$ and which is no smaller. First, the summation is extended from $\min(pk, pk')$ to $\min(pe_j, pg_j)$. Since the summation inside are all non-negative terms, the replacement cannot make it smaller. Secondly, the terms $\binom{pk}{r}$ and $ \binom{pk'}{r}$ are replaced by $\binom{pe_j}{r}$ and $\binom{pg_j}{r}$ respectively, which are each no smaller than its corresponding term. The term $2^{l(k + k')} \le 2^{l (e_j + g_j)}$. Now $z_{lj}$ is an indicator variable, and for $k+k' \ge 1$, $z_{lj} = z_{lj}^{k+k'} = z_{lj}^{e_j + g_j}$.

Thus, in all cases, $\tau$ is a monotonic function, that is, for $0 \le k \le e_j$ and  $\tau(k, k') \le \tau(e_j, g_j)$.

Continuing from Eqn.~\eqref{eq:lem:dmf2:centmom3:t7}, we have,
\begin{align*}
&\expect{(Y_j - \expect{Y_j \mid \H})^{e_j} (\conj{Y_j} - \expect{\conj{Y_j} \mid \H})^{g_j} \mid \H}\notag \\
&\le  (1+ \delta') ^{e_j + g_j} \abs{x_j}^{p(e_j+g_j)} \sum_{k,k'=0,0}^{e_j, g_j} c_k d_{k'} \tau(k,k') \\
& = (1+ \delta') ^{e_j + g_j} \abs{x_j}^{p(e_j+g_j)} \tau(e_j, g_j) \sum_{k,k'=0}^{e_j, g_j} c_k d_{k'} \\
& = (1+\delta')^{e_j + g_j} \abs{x_j}^{p(e_j+g_j)} \tau(e_j, g_j)  2^{e_j + g_j} \\
&  \left(\frac{2(1+\delta')}{1-\delta'} \right)^{e_j+g_j}  \expect{Y_j^{e_j} \conj{Y_j}^{g_j} \mid \H}
\end{align*}
\end{proof}

Let $G' =  \lmargin(G_0) \cup_{l=1}^L G_l$. Recall that $\H = \G \wedge \nocollision \wedge \accuest$. Although, for the next lemma, it would suffice to condition on the event $\H' = \G \wedge \nocollision(G') \wedge \accuest(G')$.
\begin{lemma} [Re-statement of Lemma~\ref{lem:dmf2}.]
Let $B \ge L  n^{1-2/p} \epsilon^{-4/p} \log^{2/p} (1/\delta))$ for a suitable constant $L$.
Then,
for integral $0 \le d_1, d_2 \le \lceil \log (1/\delta)\rceil $, we have,
$$\mathbf{E}\biggl[ \left( \sum_{i \in G'} (Y_i-\expect{Y_i \mid \H})\right)^{d_1} \left( \sum_{i \in G'} (\conj{Y_i} - \conj{\expect{Y_i \mid \H}})\right)^{d_2}\bigg\vert \H \biggr] \le \left( \frac{ \epsilon F_p}{20}\right)^{d_1+d_2} \enspace . $$
\end{lemma}

\begin{proof} 
For $d_1+d_2=0$, that is, $d_1=0$ and $d_2=0$, the statement of the lemma is vacuously true.
If $d_1 + d_2 =1$, then, say $d_1=1$, then,
\begin{align*}
\expect{ \sum_{i \in S} (Y_i-\expect{Y_i})} = \sum_{i \in S} \abs{x_i}^pn^{-\Omega(1)} = F_p n^{-\Omega(1)}
\end{align*}
implying the statement of the Lemma.

We therefore assume that $d_1+d_2 > 1$.
 Using the  notation that $S_1 = \{i_1, \ldots, i_r\}$ and $S_2 = \{j_1, \ldots, j_t\}$, we can  rewrite the summation by using three subsets: $T_1  = S_1 \cap S_2$, $T_2 = S_1 \setminus S_2$ and $T_3 = S_2 \setminus S_1$, where, $\abs{T_1} = r, \abs{T_2} = s$ and $\abs{T_3} = t$. Without loss of generality, let $d_1 \le d_2$---the other case is symmetric.

Then, the summation may be written as
\begin{align}
&\expect{ \left.\left( \sum_{i \in S} (Y_i-\expect{Y_i\mid \H})\right)^{d_1} \left( \sum_{i \in S} (\conj{Y_i} - \conj{\expect{Y_i\mid \H}})\right)^{d_2}\right\vert \H}\notag\\
& =\sum_{r=0}^{d_1} \sum_{s = \max(0, 1-r) }^{d_1-r} \sum_{t = \max(0,1-r)}^{d_2-r} \sum_{a=0}^{s} \sum_{b=0}^t
\sum_{\substack{e_1 + \ldots +e_{r+s} = d_1 \notag\\ e_1, \ldots, e_{r+s} \ge 1\\ \card{\{ j \in [s]: e_{r+j} = 1 \}} =a \\  A = \{j\in [s]: e_{r+j}=1\}}}
 \sum_{\substack{g_1 + \ldots + g_{r+t} =d_2 \\ \text{ $g_j's \ge 1$} \\ \card{\{j:g_{r+j} =1\}}=b \\ B = \{k\in [t]: g_{r+k}=1\}
 }}\binom{d_1}{e_1, \ldots, e_{r+s}} \notag\\
 & ~~~\binom{d_2}{g_1, \ldots, g_{r+t}}
\sum_{\substack{ \{ i_1, \ldots, i_r, j_1, \ldots, j_s, k_1, \ldots, k_t\}}}
\E\left[ \prod_{v \in A} (Y_{j_{v}} - \expect{Y_{j_{v}} \mid \H}) \prod_{w \in B} (\conj{Y_{k_{w}}}- \expect{\conj{Y_{k_{w}}} \mid \H}) \right.\notag\\
&  \hspace*{1.5in}\prod_{u=1}^r (Y_{i_u} -  \expect{Y_{i_u}\mid \H})^{ e_u} (\conj{Y_{i_u}} - \expect{ \conj{Y_{i_u}}})^{g_u}  \notag \\
& \hspace*{1.0in}\left. \left.\prod_{v \in [s] \setminus A}(Y_{j_{v}} - \expect{Y_{j_{v}} \mid \H})^{e_{r+v}}  \prod_{w\in [t] \setminus B}
(\conj{Y_{k_{w}}}- \expect{\conj{Y_{k_{w}}} \mid \H})^{g_{r+w}}\right\vert \H\right]\notag\\
 \label{eq:dmomc}
 \end{align}
 Let $\gamma = n^{-\Omega(1)}$.
 Using Lemma~\ref{lem:expprod1}, we have,
\begin{align} \label{eq:dmomd}
&\E\left[ \prod_{v \in A} (Y_{j_{v}} - \expect{Y_{j_{v}} \mid \H}) \prod_{w \in B} (\conj{Y_{k_{w}}}- \expect{\conj{Y_{k_{w}}} \mid \H}) \prod_{u=1}^r (Y_{i_u} -\expect{Y_{i_u}\mid \H})^{e_u} (\conj{Y_{i_u}} - \expect{ \conj{Y_{i_u}} \mid \H})^{g_u} \right.\notag\\
&  \left. \left.  \prod_{v \in [s] \setminus A}(Y_{j_{v}} - \expect{Y_{j_{v}} \mid \H})^{e_{r+v}} \prod_{w\in [t] \setminus B}
(\conj{Y_{k_{w}}}- \expect{\conj{Y_{k_{w}}} \mid \H})^{g_{r+w}}\right\vert \H\right] \notag \\
& \le \gamma^{a+b} \prod_{v \in A} \abs{x_{j_v}}^p \prod_{w \in B} \abs{x_{k_w}}^p  \mathbf{E} \left[ \prod_{u=1}^r (Y_{i_u} -\expect{Y_{i_u}\mid \H})^{e_u} (\conj{Y_{i_u}} - \expect{ \conj{Y_{i_u}} \mid \H})^{g_u}\right.  \notag \\
&\hspace*{0.5in} \left. \left.\prod_{v \in [s] \setminus A}(Y_{j_{v}} - \expect{Y_{j_{v}} \mid \H})^{e_{r+v}} \prod_{w\in [t] \setminus B}
(\conj{Y_{k_{w}}}- \expect{\conj{Y_{k_{w}}} \mid \H})^{g_{r+w} } \right\vert \H \right] \notag \\
& = \gamma^{a+b} \prod_{v \in A} \abs{x_{j_v}}^p \prod_{w \in B} \abs{x_{k_w}}^p
\prod_{u=1}^r  \expect{ (Y_{i_u} - \expect{Y_{i_u}})^{e_u}( \conj{Y_{i_u}} - \expect{ \conj{Y_{i_u}}})^{g_u} \mid \H} \notag \\
& \hspace*{0.5in} \prod_{v \in [s] \setminus A}\expect{(Y_{j_{v}} - \expect{Y_{j_{v}} \mid \H})^{e_{r+v}} \mid \H}
\prod_{w\in [t] \setminus B}\expect{
(\conj{Y_{k_{w}}}- \expect{\conj{Y_{k_{w}}} \mid \H})^{g_{r+w}} \mid \H } \enspace .
\end{align}

We will now use Lemma~\ref{lem:dmf2:centmom3}, which states that, for $e_j, g_j \ge 0$,
$$\expect{ (Y_j - \expect{Y_j \mid \H})^{e_j} ( \conj{Y_j} - \expect{ \conj{Y_j} \mid \H})^{g_j} \mid \H}
\le \left( 2\left(\frac{1+\delta'}{1-\delta'}\right)\right)^{e_j + g_j}\expect{ Y_j^{e_j} \conj{Y_j}^{g_j}}$$
where, $\delta' = n^{-\Omega(1)}$.

Note that for $\delta' = n^{-\Omega(1)}$, where, the constant term in $\Omega(1)$ can be made as large as needed by choosing the parameter of width of hash tables appropriately, and since, $e_j + g_j \le \log (1/\delta) \le n^{2/p}$, it follows that $\left( \frac{1+\delta'}{1-\delta'}\right)^{e_j + g_j} \le 1+\delta''$, where, $\delta'' = n^{-\Omega(1)}$ for a different constant in $\Omega(1)$. Also, $ (1+\delta'')^{\lceil \log (1/\delta) \rceil}  \le 2$, which we will use below.

Substituting in Eqn.~\eqref{eq:dmomd}, we have,
\begin{align*}
&\E\left[ \prod_{v \in A} (Y_{j_{v}} - \expect{Y_{j_{v}} \mid \H}) \prod_{w \in B} (\conj{Y_{k_{w}}}- \expect{\conj{Y_{k_{w}}} \mid \H}) \prod_{u=1}^r (Y_{i_u} -\expect{Y_{i_u}\mid \H})^{e_u} (\conj{Y_{i_u}} - \expect{ \conj{Y_{i_u}} \mid \H})^{g_u} \right.\notag\\
&  \left. \left.  \prod_{v \in [s] \setminus A}(Y_{j_{v}} - \expect{Y_{j_{v}} \mid \H})^{e_{r+v}} \prod_{w\in [t] \setminus B}
(\conj{Y_{k_{w}}}- \expect{\conj{Y_{k_{w}}} \mid \H})^{g_{r+w}}\right\vert \H\right] \notag \\
& \le  \gamma^{a+b} \prod_{v \in A} \abs{x_{j_v}}^p \prod_{w \in B} \abs{x_{k_w}}^p  \cdot (2) \cdot  2^{ \sum_{u=1}^r (e_u + g_u) + \sum_{v \in [s]} e_{r+v} + \sum_{w \in [t]} g_{r+w}} \\
& \hspace*{0.5in} \prod_{u=1}^r  \expect{ Y_{i_u}^{e_u}\conj{Y_{i_u}}^{g_u} \mid \H}  \prod_{v \in [s] \setminus A}\expect{Y_{j_{v}}^{e_{r+v}} \mid \H}
\prod_{w\in [t] \setminus B}\expect{
\conj{Y_{k_{w}}}^{g_{r+w}} \mid \H } \enspace .
\end{align*}
Now, the sum of the exponents in the power of 2 in the above expression is $d_1 + d_2 -a -b$. Thus, the above equation becomes,
\begin{align} \label{eq:dmf2:centtomom1}
&\E\left[ \prod_{v \in A} (Y_{j_{v}} - \expect{Y_{j_{v}} \mid \H}) \prod_{w \in B} (\conj{Y_{k_{w}}}- \expect{\conj{Y_{k_{w}}} \mid \H}) \prod_{u=1}^r (Y_{i_u} -\expect{Y_{i_u}\mid \H})^{e_u} (\conj{Y_{i_u}} - \expect{ \conj{Y_{i_u}} \mid \H})^{g_u} \right.\notag\notag \\
&  \left.   \prod_{v \in [s] \setminus A}(Y_{j_{v}} - \expect{Y_{j_{v}} \mid \H})^{e_{r+v}} \prod_{w\in [t] \setminus B}
(\conj{Y_{k_{w}}}- \expect{\conj{Y_{k_{w}}} \mid \H})^{g_{r+w}}\vert \H\right] \notag \notag \\
&\le  \gamma^{a+b} 2^{d_1 + d_2 -a-b+1} \prod_{v \in A} \abs{x_{j_v}}^p \prod_{w \in B} \abs{x_{k_w}}^p  \prod_{u=1}^r  \expect{ Y_{i_u}^{e_u}\conj{Y_{i_u}}^{g_u} \mid \H}  \prod_{v \in [s] \setminus A}\expect{Y_{j_{v}}^{e_{r+v}} \mid \H}
\notag \notag \\ &\hspace*{3.0in}\prod_{w\in [t] \setminus B}\expect{
\conj{Y_{k_{w}}}^{g_{r+w}} \mid \H }   \notag \\
& = \gamma^{a+b} 2^{d_1 + d_2 -a-b+1} \prod_{v \in A} \abs{x_{j_v}}^p \prod_{w \in B} \abs{x_{k_w}}^p \expect{ \prod_{u=1}^r  Y_{i_u}^{e_u}\conj{Y_{i_u}}^{g_u} \prod_{v \in [s] \setminus A}Y_{j_{v}}^{e_{r+v}}\prod_{w\in [t] \setminus B} \conj{Y_{k_{w}}}^{g_{r+w}} \mid \H},
\end{align}
where we have used the fact that $\sum_{u=1}^r (e_u+g_u) + \sum_{v=1}^s e_{r+v} + \sum_{w=1}^t g_{r+w} = d_1 + d_2$.

By Lemma~\ref{lem:intermed1}, we have, for  $e_1, \ldots, e_r, g_1, \ldots, g_r \ge 1$ and $e_{r+1}, \ldots, e_{r+t} \ge 2$ and $g_{r+1}, \ldots, g_{r+t} \ge 2$, that
\begin{align} \label{eq:dmf2:t2a}
&\expect{ \prod_{u=1}^r Y_{i_u}^{e_u} \conj{Y_{i_u}}^{g_u} \prod_{v \in [s]} Y_{j_v}^{e_{r+v}} \prod_{w \in [t]} \conj{Y_{k_w}}^{g_{r+v}}\vert \H}\notag\\
&\le 2 e^{ \sum_{u \in [r]} (e_u+g_u)/8}\prod_{u=1}^r \abs{x_{i_u}}^p \prod_{v=1}^s \abs{x_{j_v}}^p \prod_{w=1}^t \abs{x_{k_w}}^p \notag \\
&\hspace*{0.5in} \left( \frac{ 4 F_2}{B} \right)^{(p/2)\left(\sum_{u=1}^r (e_u+g_u-1)+ \sum_{v \in [s] } (e_{r+v}-1) + \sum_{w \in [t]} (e_{r+w}-1) \right)}  \notag\\
 & \le 2 e^{(d_1+d_2)/8} \prod_{u=1}^r \abs{x_{i_u}}^p \prod_{v\in [s]} \abs{x_{j_v}}^p \prod_{w \in [t]} \abs{x_{k_w}}^p\left( \frac{4F_2}{B} \right)^{(p/2)(d_1+d_2-r-s-t-a-b)}
 \end{align}
using the fact that $\sum_{u=1}^r (e_u+g_u) + \sum_{v=1}^s e_{r+v} + \sum_{w=1}^t g_{r+w} = d_1 + d_2$.

Substituting Eqn~\eqref{eq:dmf2:t2a} in  Eqn.~\eqref{eq:dmf2:centtomom1} we have,
\begin{align} \label{eq:dmf2:t1b}
&\E\left[ \prod_{v \in A} (Y_{j_{v}} - \expect{Y_{j_{v}} \mid \H}) \prod_{w \in B} (\conj{Y_{k_{w}}}- \expect{\conj{Y_{k_{w}}} \mid \H}) \prod_{u=1}^r (Y_{i_u} -\expect{Y_{i_u}\mid \H})^{e_u} (\conj{Y_{i_u}} - \expect{ \conj{Y_{i_u}} \mid \H})^{g_u} \right.\notag\\
&  \left.  \left. \prod_{v \in [s] \setminus A}(Y_{j_{v}} - \expect{Y_{j_{v}} \mid \H})^{e_{r+v}} \prod_{w\in [t] \setminus B}
(\conj{Y_{k_{w}}}- \expect{\conj{Y_{k_{w}}} \mid \H})^{g_{r+w}}\right\vert \H\right]  \notag \\
& \le 4\gamma^{a+b}2^{d_1+d_2} e^{(d_1+d_2)/8}\prod_{u=1}^r \abs{x_{i_u}}^p \prod_{v\in [s]} \abs{x_{j_v}}^p \prod_{w \in [t]} \abs{x_{k_w}}^p \left( \frac{4F_2}{B}\right)^{(p/2)(d_1+d_2-r-s-t-a-b)}
\end{align}

Substituting Eqn.~\eqref{eq:dmf2:t1b} in Eqn.~\eqref{eq:dmomc}, we have,
\begin{align} \label{eq:dmomsmp1}
&\expect{ \left.\left( \sum_{i \in S} (Y_i-\expect{Y_i\mid \H})\right)^{d_1} \left( \sum_{i \in S} (\conj{Y_i} - \conj{\expect{Y_i\mid \H}})\right)^{d_2}\right\vert \H}\notag\\
&\le 4\cdot 2^{d_1+d_2}e^{(d_1+d_2)/8} \sum_{r=0}^{d_1} \sum_{s = \max(0, 1-r) }^{d_1-r} \sum_{t = \max(0,1-r)}^{d_2-r}
\sum_{a=0}^{s} \sum_{b=0}^t
\sum_{\substack{e_1 + \ldots +e_{r+s} = d_1 \\ e_1, \ldots, e_{r+s} \ge 1\\ \card{\{ j \in [s]: e_{r+j} = 1 \}} =a}}
 \sum_{\substack{g_1 + \ldots + g_{r+t} =d_2 \\ \text{ $g_j's \ge 1$} \\ \card{\{j:g_{r+j} =1\}}=b
 }} \notag \\
 &~~\binom{d_1}{e_1, \ldots, e_{r+s}} \binom{d_2}{g_1, \ldots, g_{r+t}} \sum_{\substack{\{i_1, \ldots, i_r ,j_1 , \ldots j_s, k_1, \ldots, k_t\}}}
\gamma^{a+b}\prod_{u=1}^r \abs{x_{i_u}}^p \prod_{v\in [s]} \abs{x_{j_v}}^p \prod_{w \in [t]} \abs{x_{k_w}}^p  \notag \\
& \hspace*{1.0in} \left( \frac{4F_2}{B}\right)^{(p/2)(d_1+d_2-r-s-t-a-b)} \notag \\
& \le  2  \left( \frac{8e^{1/(4p)}F_2}{B} \right)^{(p/2)(d_1 + d_2)}\sum_{r=0}^{d_1} \sum_{s = \max(0, 1-r) }^{\lfloor (d_1-r)\rfloor } \sum_{t = \max(0,1-r)}^{\lfloor (d_2-r) \rfloor}\sum_{a=0}^{s} \sum_{b=0}^t
\sum_{\substack{e_1 + \ldots +e_{r+s} = d_1 \notag\\ e_1, \ldots, e_{r+s} \ge 1\\ \card{\{ j \in [s]: e_{r+j} = 1 \}} =a}} \notag \\
& \sum_{\substack{g_1 + \ldots + g_{r+t} =d_2 \notag\\ \text{ $g_j's \ge 1$} \\ \card{\{j:g_{r+j} =1\}}=b
 }} \binom{d_1}{e_1, \ldots, e_{r+s}} \binom{d_2}{g_1, \ldots, g_{r+t}} \frac{ \gamma^{a+b}F_p^{r+s+t} }{(4F_2/B)^{(p/2)(r+s+t+a+b)} (r+s+t)!} \notag \\
 & \le 2 \left( \frac{8e^{1/(4p)}F_2}{B} \right)^{(p/2)(d_1 + d_2)}\sum_{r=0}^{d_1} \sum_{s = \max(0, 1-r) }^{d_1-r } \sum_{t = \max(0,1-r)}^{d_2-r}\frac{ F_p^{r+s+t} }{(4F_2/B)^{(p/2)(r+s+t)} (r+s+t)!} \notag \\
 &\sum_{a=0}^{s} \sum_{b=0}^t  \frac{\gamma^{a+b} }{(4F_2/B)^{(p/2)(a+b)}}
\sum_{\substack{e_1 + \ldots +e_{r+s} = d_1 \\ e_1, \ldots, e_{r+s} \ge 1\\ \card{\{ j \in [s]: e_{r+j} = 1 \}} =a}}
 \sum_{\substack{g_1 + \ldots + g_{r+t} =d_2 \\ \text{ $g_j's \ge 1$} \\ \card{\{j:g_{r+j} =1\}}=b
 }} \binom{d_1}{e_1, \ldots, e_{r+s}} \binom{d_2}{g_1, \ldots, g_{r+t}}  1
  \end{align}

Fix $r,s,t,a$ and $b$ as per the constraints, and consider the inner summation. Given the vectors $e_1, \ldots, e_{r+s}$ and  $g_1, \ldots, g_{r+t}$, define the vectors $e'_1, \ldots, e'_{r+s}$ and $g'_1, \ldots, g'_{r+s}$ as follows. $e'_j = e_j-1$, for $j \in [r]$, $e'_{r+j} = e_{r+j}-2$, for $j \in [t]$ and $e_{r+j} \ne 1$, and $e'_{r+j} = 0$, if $e_{r+j}=1$. Similarly, the $r+t$-dimensional vector $g'$  is defined. Therefore,
\begin{align}\label{eq:dmf2:t2}
&\sum_{\substack{e_1 + \ldots +e_{r+s} = d_1 \\ e_1, \ldots, e_{r+s} \ge 1\\ \card{\{ j \in [s]: e_{r+j} = 1 \}} =a}}
 \sum_{\substack{g_1 + \ldots + g_{r+t} =d_2 \\ \text{ $g_j's \ge 1$} \\ \card{\{j:g_{r+j} =1\}}=b
 }} \binom{d_1}{e_1, \ldots, e_{r+s}} \binom{d_2}{g_1, \ldots, g_{r+t}}  1  \notag \\
& = \Biggl(\sum_{\substack{e_1 + \ldots +e_{r+s} = d_1 \notag\\ e_1, \ldots, e_{r+s} \ge 1\\ \card{\{ j \in [s]: e_{r+j} = 1 \}} =a}}\binom{d_1}{e_1, \ldots, e_{r+s}} 1\Biggr) \Biggl(
 \sum_{\substack{g_1 + \ldots + g_{r+t} =d_1 \notag\\ \text{ $g_j's \ge 1$} \\ \card{\{j:g_{r+j} =1\}}=b
 }} \binom{d_2}{g_1, \ldots, g_{r+t}}  1\Biggr) \notag \\
 & \le \Biggl(d_1^{r+a + 2(s-a)} \sum_{\substack{e'_1 + \ldots e'_{r+s} = d-(r+2s-a)\\
 \card{\{j: e'_{r+j} = 0\}} = a}} \binom{ d_1-(r+2s-a)}{e'_1, \ldots, e'_{r+s}} \Biggr) \notag\\
&\hspace*{0.5in} \Biggl( d_2^{r+b +2(t-b)}\sum_{\substack{g'_1 + \ldots g'_{r+t} = d_2-(r+2t-b)\\
 \card{\{j: g'_{r+j} = 0\}} = b}} \binom{ d_2-(r+2t-b)}{g'_1, \ldots, g'_{r+t}} \Biggr)
 \end{align}
 Now,
 \begin{align*}\sum_{\substack{e'_1 + \ldots e'_{r+s} = d_1-(r+2s-a)\\
 \card{\{j: e'_{r+j} = 0\}} = a}} & \binom{ d_1-(r+2s-a)}{e'_1, \ldots, e'_{r+s}}\\
  &= \binom{r+s}{a}\sum_{\substack{f_1 + \ldots, f_{r+s-a} = d_1-(r+2s-a)}} \binom{ d_1-(r+2s-a)}{f_1, \ldots, f_{r+s-a}} \\ & = \binom{r+s}{a} (r+s-a)^{d_1-(r+2s-a)}  \\
  & \le \frac{(r+s)^{d_1-(r+2s-a)+a}}{a!}
 \end{align*}
 Similarly,
 \begin{align*}\sum_{\substack{g'_1 + \ldots g'_{r+t} = d_2-(r+2t-b)\\
 \card{\{j: g'_{r+j} = 0\}} = b}} \binom{ d_2-(r+2t-b)}{g'_1, \ldots, g'_{r+t}} \le \frac{(r+t)^{d_2-(r+2t-b) + b}}{b!}  \enspace .
 \end{align*}
 Substituting in ~\eqref{eq:dmf2:t2}, we have
 \begin{align*}
&\sum_{\substack{e_1 + \ldots +e_{r+s} = d_1 \\ e_1, \ldots, e_{r+s} \ge 1\\ \card{\{ j \in [s]: e_{r+j} = 1 \}} =a}}
 \sum_{\substack{g_1 + \ldots + g_{r+t} =d_2 \\ \text{ $g_j's \ge 1$} \\ \card{\{j:g_{r+j} =1\}}=b
 }} \binom{d_1}{e_1, \ldots, e_{r+s}} \binom{d_2}{g_1, \ldots, g_{r+t}}  1 \\
 & \le \left(\frac{1}{a!b!}\right)d_1^{r+2s-a} d_2^{r+2t-b} (r+s)^{d_1-r-2s+2a} (r+t)^{d_2-r-2t+2b}
 \end{align*}
 Therefore,
 \begin{align}
 &\sum_{a=0}^{s} \sum_{b=0}^t \frac{\gamma^{a+b} }{(4F_2/B)^{(p/2)(a+b)}}\notag \\
&\hspace*{0.3in}\sum_{\substack{e_1 + \ldots +e_{r+s} = d \notag\\ e_1, \ldots, e_{r+s} \ge 1\\ \card{\{ j \in [s]: e_{r+j} = 1 \}} =a}}
 \sum_{\substack{g_1 + \ldots + g_{r+t} =d \notag\\ \text{ $g_j's \ge 1$} \\ \card{\{j:g_{r+j} =1\}}=b
 }} \binom{d}{e_1, \ldots, e_{r+s}} \binom{d}{g_1, \ldots, g_{r+t}}  1  \notag \\
 & \le d_1^{r+2s} d_2^{r+2t} (r+s)^{d_1-r-2s} (r+t)^{d_2-r-2t} \sum_{a=0}^{s} \sum_{b=0}^t   \left(\frac{\gamma}{(4F_2/B)^{p/2}}\right)^{a+b} \notag \\
 & \hspace*{1.5in}\left( \frac{ (r+s)^{2a} (r+t)^{2b}}{d_1^a d_2^b}  \right) \notag \\
 & = d_1^{r+2s} d_2^{r+2t} (r+s)^{d_1-r-2s} (r+t)^{d_2-r-2t} \left( \sum_{a=0}^s \left(\frac{\gamma(r+s)^{2}}{ d_1 (4F_2/B)^{(p/2)}}\right)^a \right) \notag \\
  & \hspace*{1.5in} \left( \sum_{b=0} ^t \left(\frac{ \gamma (r+s)^2}{d_2 (4F_2/B)^{p/2}} \right)^b\right) \notag  \\
 & \le  d_1^{r+2s} d_2^{r+2t} (r+s)^{d_1-r-2s} (r+t)^{d_2-r-2t} \left( 1  + \gamma'\right) \left( 1  + \gamma'\right) \label{eq:dmf2:t3a} \\
 & \le 2d_1^{r+2s} d_2^{r+2t} (r+s)^{d_1-r-2s} (r+t)^{d_2-r-2t} \enspace . \notag \\
 & \le 2 d_2^{2r+2s+2t}
 (r+s+t)^{d_1 + d_2 -2r -2s-2t} \label{eq:dmf2:t3}
 \end{align}
 Eqn.~\eqref{eq:dmf2:t3a} is obtained from the previous step as follows. Let $$ \gamma' = 2\max\left( \frac{ \gamma (r+s)^2}{d_1 (4F_2/B)^{p/2}}, \frac{\gamma (r+s)^2}{d_2 (4 F_2/B)^{p/2}}\right) \enspace . $$ Then,  $\gamma' = O(\gamma (d_1+d_2)^2) = n^{-\Omega(1)}$ and
 $$ \sum_{a=0}^s \left(\frac{\gamma(r+s)^{2}}{ d_1 (4F_2/B)^{(p/2)}}\right)^a \le 1 + \gamma', \text{ and } \sum_{b=0}^s \left(\frac{\gamma(r+s)^{2}}{ d_2 (4F_2/B)^{(p/2)}}\right)^b \le 1 + \gamma' \enspace. $$

 The last step, that is, Eqn.~\eqref{eq:dmf2:t3} is obtained from its previous equation  as follows. (1) $\displaystyle (r+s)^{d_1-r-2s} (r+t)^{d_2-r-2s} \le (r+s+t)^{d_1 - r-2s} (r+s+t)^{d_2-r-2t} = (r+s+t)^{d_1+d_2 - 2r -2s-2t}$, and (2) by assumption, $d_1 \le d_2$.

 Substituting Eqn.~\eqref{eq:dmf2:t3} in Eqn.~\eqref{eq:dmomsmp1}, we have,
 \begin{align}
 &\expect{ \left.\left( \sum_{i \in S} (Y_i-\expect{Y_i\mid \H})\right)^{d_1} \left( \sum_{i \in S} (\conj{Y_i} - \conj{\expect{Y_i\mid \H}})\right)^{d_2}\right\vert \H} \notag \\
 & \le  2^{d_1+d_2+2} e^{(d_1+d_2)/8} \left( \frac{4F_2}{B} \right)^{(p/2)(d_1 + d_2)}\sum_{r=0}^{d_1} \sum_{s = \max(0, 1-r) }^{d_1-r } \sum_{t = \max(0,1-r)}^{d_2-r}\frac{ F_p^{r+s+t} }{(4F_2/B)^{(p/2)(r+s+t)} (r+s+t)!} \notag \\
 &\hspace*{0.5cm}\sum_{a=0}^{s} \sum_{b=0}^t  \frac{\gamma^{a+b} }{(4F_2/B)^{(p/2)(a+b)}}
\sum_{\substack{e_1 + \ldots +e_{r+s} = d_1 \\ e_1, \ldots, e_{r+s} \ge 1\\ \card{\{ j \in [s]: e_{r+j} = 1 \}} =a}}
 \sum_{\substack{g_1 + \ldots + g_{r+t} =d_2 \\ \text{ $g_j's \ge 1$} \\ \card{\{j:g_{r+j} =1\}}=b
 }} \binom{d_1}{e_1, \ldots, e_{r+s}} \binom{d_2}{g_1, \ldots, g_{r+t}}  1  \notag \\
 &\le 2^2  \left( \frac{8e^{1/(4p)}F_2}{B} \right)^{(p/2)(d_1 + d_2)}\sum_{r=0}^{d_1} \sum_{s = \max(0, 1-r) }^{d_1-r } \sum_{t = \max(0,1-r)}^{d_2-r}\frac{ F_p^{r+s+t} }{(4F_2/B)^{(p/2)(r+s+t)} (r+s+t)!} \notag\\
 & \hspace*{0.5in} d_2^{2r+2s+2t} (r+s+t)^{d_1 + d_2 -2r -2s-2t} \enspace . \label{eq:dmf2:t4}
 \end{align}
 Letting $u = r+s+t$ in Eqn.~\eqref{eq:dmf2:t4}, we obtain,
 \begin{align}\label{eq:dmfa1}
 & \le 2^2 \left( \frac{8e^{1/(4p)}F_2}{B} \right)^{(p/2)(d_1+d_2)} \sum_{u=1}^{\lfloor (d_1+d_2)/2\rfloor}  \left(\frac{ F_p^u}{(4F_2/B)^{(p/2)u} u!}\right) d_2^{2u}u^{d_1+d_2-2u}
 \end{align}
 since, $2r + 2s + 2t \le d_1+d_2$ and therefore, $u=r+s+t \le \lfloor (d_1+d_2)/2\rfloor$.

 Taking ratio of the $u+1$st term to the $u$th term in the summation in Eqn.~\eqref{eq:dmfa1}, we have,
 \begin{align*}
 \frac{ F_p}{(4F_2/B)^{(p/2)}} \cdot \frac{ d_2^2}{(u+1)^3}\cdot  (1 +1/u)^{d_1 + d_2 -2u}
 \ge \frac{ F_p}{(4F_2/B)^{(p/2)}} \cdot \frac{ d_2^2}{(u+1)^3}
 \end{align*}
 Since, $F_2 \le n^{1-2/p} F_p^{2/p}$ and $B \ge Kn^{1-2/p} d_2^{2/p}$, it follows that
 $$ \frac{ F_p}{(4F_2/B)^{(p/2)}} \ge (K/4)^{p/2} d_2 \enspace . $$
 Therefore,  the ratio of $u+1$st term to the $u$th term is at least
 \begin{align*}
 \frac{ F_p}{(4F_2/B)^{(p/2)}} \cdot \frac{ d_2^2}{(u+1)^3} \ge  \frac{(K/4)^{p/2} d_2^3}{(u+1)^3} \ge 2
 \end{align*}
 for $K \ge 8$ since, $d_2 \ge \lfloor d_1 + d_2 \rfloor/2 \ge u+1$.

 Therefore, the  series $\displaystyle \sum_{u=1}^{\lfloor (d_1+d_2)/2\rfloor}\left(  \frac{ F_p^u}{(4F_2/B)^u u!}\right) d_2^{2u}u^{d_1+d_2-2u}$  is bounded above  by twice the last  term, that is,
 \begin{align} \label{eq:dmf2:t5a}
 &\sum_{u=1}^{\lfloor (d_1+d_2)/2\rfloor}\left(  \frac{ F_p^u}{(4F_2/B)^u u!}\right) d_2^{2u}u^{d_1+d_2-2u} \notag\\
 &\le (2) \left(\frac{ F_p^{\lfloor (d_1+d_2)/2\rfloor}}{ (4F_2/B)^{\lfloor (d_1+d_2)/2\rfloor} (\lfloor (d_1+d_2)/2\rfloor!)} \right) d_2^{2\lfloor (d_1+d_2)/2\rfloor}\notag \\
 &\hspace*{1.5in}\left( \lfloor (d_1+d_2)/2\rfloor \right)^{ d_1 + d_2 -2 \lfloor (d_1+d_2)/2\rfloor}
 \end{align}

  Substituting in Eqn.~\eqref{eq:dmfa1}, we obtain
 \begin{align} \label{eq:dmf2:t5}
 &\expect{ \left.\left( \sum_{i \in S} (Y_i-\expect{Y_i\mid \H})\right)^{d_1} \left( \sum_{i \in S} (\conj{Y_i} - \conj{\expect{Y_i\mid \H}})\right)^{d_2}\right\vert \H} \notag\\
 &\le 2^3 e^{(d_1+d_2)/8} \left( \frac{8F_2}{B} \right)^{(p/2)(d_1+d_2)}
 \left(\frac{ F_p^{\lfloor (d_1+d_2)/2\rfloor}}{ (4F_2/B)^{\lfloor (d_1+d_2)/2\rfloor} (\lfloor (d_1+d_2)/2\rfloor!)} \right) d_2^{2\lfloor (d_1+d_2)/2\rfloor}\notag \\
 & \hspace*{1.0in}\left( \lfloor (d_1+d_2)/2\rfloor \right)^{ d_1 + d_2 -2 \lfloor (d_1+d_2)/2\rfloor}
 \notag \\
 & \le 2^3 e^{(d_1+d_2)/8}  \left( \frac{8F_2}{B} \right)^{(p/2)\left((d_1+d_2)- \lfloor (d_1 + d_2)/2 \rfloor \right)} F_p^{\lfloor (d_1+d_2)/2 \rfloor} \notag\\
   &\hspace*{0.5in} \frac{d_2^{2\lfloor (d_1 + d_2)/2 \rfloor} }{ \lfloor (d_1 + d_2)/2 \rfloor !} \left(\lfloor (d_1 + d_2)/2 \rfloor \right)^{d_1+d_2 - 2\lfloor (d_1 + d_2)/2 \rfloor}
 \end{align}
 By Stirling's approximation, that is,  $\displaystyle n! > (2\pi n)^{1/2}\left(\frac{n}{e}\right)^n$, we have \\  $\displaystyle \lfloor (d_1 + d_2)/2 \rfloor ! > (2 \pi \lfloor (d_1 + d_2)/2 \rfloor))^{1/2} \left(\frac{ \lfloor (d_1 + d_2)/2 \rfloor}{e} \right)^{\lfloor (d_1 + d_2)/2 \rfloor}$.

 Since, $d_1 + d_2 > 1$, and we have assumed that $d_2 \ge d_1$,
 $d_2 \le 3\lfloor (d_1 + d_2)/2 \rfloor$. Therefore,
 \begin{align}
 \frac{d_2^{2\lfloor (d_1 + d_2)/2 \rfloor} }{ \lfloor (d_1 + d_2)/2 \rfloor !}
 &\le (3 \sqrt{e})^{d_1+d_2} \left(\frac{ \left(\lfloor (d_1 + d_2)/2 \rfloor\right)^{2\lfloor (d_1 + d_2)/2 \rfloor}}{\sqrt{2\pi}\left(\lfloor (d_1 + d_2)/2 \rfloor\right)^{\lfloor (d_1 + d_2)/2 \rfloor+1/2}}\right) \notag \\
 & = (2\pi)^{-1/2}(3 \sqrt{e})^{d_1+d_2} \eat{F_p^{\lfloor (d_1+d_2)/2 \rfloor}} \left(\lfloor (d_1 + d_2)/2 \rfloor\right)^{\lfloor (d_1 + d_2)/2 \rfloor -1/2} \label{eq:dmf2:t6}
 \end{align}
 The term $\left(\lfloor (d_1 + d_2)/2 \rfloor \right)^{d_1+d_2 - 2\lfloor (d_1 + d_2)/2 \rfloor} $ in Eqn.~\eqref{eq:dmf2:t5} is 1 if $d_1+d_2$ is odd and is  $\lfloor (d_1 + d_2)/2 \rfloor$ if $d_1 + d_2$ is even. We write this as the indicator variable $\displaystyle  \lfloor (d_1+d_2)/2 \rfloor^{ \1_{d_1 + d_2 \text{ odd } }}$.

 Using this and substituting Eqn.~\eqref{eq:dmf2:t6} into Eqn.~\eqref{eq:dmf2:t5}, we have,
 \begin{align}
 &\expect{ \left.\left( \sum_{i \in S} (Y_i-\expect{Y_i\mid \H})\right)^{d_1} \left( \sum_{i \in S} (\conj{Y_i} - \conj{\expect{Y_i\mid \H}})\right)^{d_2}\right\vert \H} \notag\\
 &\le 2^{3} e^{(d_1+d_2)/8} \left( \frac{8F_2}{B} \right)^{(p/2)\left((d_1+d_2)- \lfloor (d_1 + d_2)/2 \rfloor \right)} F_p^{\lfloor (d_1+d_2)/2 \rfloor} \notag \\
   &\hspace*{1.0in} \frac{d_2^{2\lfloor (d_1 + d_2)/2 \rfloor} }{ \lfloor (d_1 + d_2)/2 \rfloor !} \left(\lfloor (d_1 + d_2)/2 \rfloor \right)^{d_1+d_2 - 2\lfloor (d_1 + d_2)/2 \rfloor} \notag\\
 & \le 2^{3} e^{(d_1+d_2)/8} (2\pi)^{-1/2}(3 \sqrt{e})^{d_1+d_2}\left( \frac{8F_2}{B} \right)^{(p/2)\lceil (d_1+d_2)/2 \rceil} F_p^{\lfloor (d_1+d_2)/2 \rfloor}\notag \\
 & \hspace*{1.0in}\left(\lfloor (d_1 + d_2)/2 \rfloor\right)^{\lfloor (d_1 + d_2)/2 \rfloor -1/2 + \1_{d_1+d_2 \text{ odd}} } \notag \\
 & \le 2^3\left( 3  \cdot e \right)^{d_1 + d_2} \left( \frac{8F_2}{B} \right)^{(p/2)\lceil (d_1+d_2)/2 \rceil} F_p^{\lfloor (d_1+d_2)/2 \rfloor}\left(\lfloor (d_1 + d_2)/2 \rfloor\right)^{(d_1+d_2)/2} \label{eq:dmf2:t7}
 \end{align}

 Now, $B \ge L n^{1-2/p} \epsilon^{-4/p} \log^{2/p}(1/\delta)$, for an appropriate constant $L \ge (2^3)^{2/p} \cdot \left( (8 \cdot 20) (3e) \right)^{4/p}$. Using $F_2 \le n^{1-2/p} F_p^{2/p}$, we have,
\begin{align*}
\left( 2^{6/p} (3  e)^2 \cdot \frac{8F_2}{B} \right)^{p/2} \le  \frac{\epsilon^2 F_p}{400\lceil \log (1/\delta)\rceil}
\end{align*}
Therefore, assuming $d_1 + d_2 \ge 1$,  Eqn.~\eqref{eq:dmf2:t7} is bounded above by
\begin{align*}
& 2^3 \left( 3 e \right)^{d_1 + d_2} \left( \frac{8F_2}{B} \right)^{(p/2)\lceil (d_1+d_2)/2 \rceil} F_p^{\lfloor (d_1+d_2)/2 \rfloor}\left(\lfloor (d_1 + d_2)/2 \rfloor\right)^{(d_1+d_2)/2} \notag\\
& \le \left(\lfloor (d_1 + d_2)/2 \rfloor\right)^{(d_1+d_2)/2}  \left( \frac{\epsilon^2 F_p}{400\lceil \log (1/\delta)\rceil} \right)^{\lceil (d_1+d_2)/2 \rceil}F_p^{\lfloor (d_1+d_2)/2 \rfloor} \\
& \le \left(\frac{\epsilon F_p}{20} \right)^{d_1+ d_2} \enspace .
\end{align*}
Here, the last step uses that, (i) $\lfloor (d_1 + d_2)/2 \rfloor \le \lceil\log (1/\delta)\rceil$, (ii) $\epsilon^{2\lceil (d_1 + d_2)/2 \rceil} \le \epsilon^{d_1 + d_2}$, and \\ (iii) $F_p^{\lfloor (d_1 + d_2)/2 \rfloor + \lceil (d_1 + d_2)/2 \rceil} = F_p^{d_1+d_2}$.

\end{proof}

\subsection{ Analysis of contribution to $d$th moment from items in $\midreg(G_0)$}

Recall that  $\mathcal{H}$ denotes the event $\G \wedge \nocollision \wedge \goodest$.
\begin{lemma} [Re-stated (expanded) version of Lemma~\ref{lem:midcentral:conj}.]
Let $1 \le e,g \le \lceil\log (1/\delta)\rceil$ and $l \in \midreg(G_0)$. Assume $C \ge 72p^2 B$ and $\abs{x_l} \ge \left( \frac{\ftwores{C}}{B} \right)^{1/2}$.  Let the family $\{\omega_{lr}(i)\}_{i \in [n]}$ 
be $O(k)$-wise independent where, $k \ge O(\log (1/\delta) + \log n)$. Then,\\ $\displaystyle \expect{ \left( \left(1 + \frac{Z_l}{\abs{x_l}} \right)^p-1\right)^e \left( \left(1 + \frac{\conj{Z_l}}{\abs{x_l}} \right)^p-1\right)^g \mid \mathcal{H}}$ is real and
\begin{align*}
 0 \le \expect{ \left( \left(1 + \frac{Z_l}{\abs{x_l}} \right)^p-1\right)^e \left( \left(1 + \frac{\conj{Z_l}}{\abs{x_l}} \right)^p-1\right)^g \mid \mathcal{H}} \le \left( \frac{7p^2(e+g) \ftwores{C}}{\rho C \abs{x_l}^2} \right)^{(e+g)/2} \enspace .
\end{align*}
It follows that,
\begin{align*}
\expect{ \left( Y_l - \expect{Y_l \mid \H}\right)^e \left(\conj{Y_l} - \expect{ \conj{Y_l}\mid \H}\right)^g \mid \mathcal{H}}
\le \left( \frac{8p^2\abs{x_l}^{2p-2} \ftwores{C}}{\rho C} \right)^{(e+g)/2} \enspace .
\end{align*}
\end{lemma}

\begin{proof}
Let $c_r = \binom{p}{r}$, for $r\ge 0$.  Since, $ \frac{ \abs{Z_l}}{\abs{x_l}} \le 1/8$, the binomial series expansion of  $\left( 1 + \frac{Z_l}{\abs{x_l}} \right)^p $ holds as the power series $ \sum_{r \ge 0} c_r \left( \frac{Z_l}{\abs{x_l}}\right)^r$. For the first part of the proof, we will assume full independence of the family $\{\omega_{lr}(i)\}_{i \in [n]}$, for $l=0$ and $r \in [2s]$ so that we can use the above binomial expansion.
\begin{align}
&\expect{ \left( \left(1 + \frac{Z_l}{\abs{x_l}} \right)^p-1\right)^e \left( \left(1 + \frac{\conj{Z_l}}{\abs{x_l}} \right)^p-1\right)^g\mid \H } \notag \\
&= \expect{ \left(\sum_{r \ge 1} c_r \left(\frac{Z_l}{\abs{x_l}}\right)^r\right)^e
\left( \sum_{s \ge 1} c_s \left( \frac{\conj{Z_l}}{\abs{x_l}} \right)^s \right)^g\mid \H } \notag \\
& = \expect{ \left(\frac{c_1 Z_l}{\abs{x_l}}\right)^e \left( \frac{c_1 \conj{Z_l}}{\abs{x_l}} \right)^g \left(\sum_{r \ge 1} \frac{c_r}{c_1} \left( \frac{ Z_l}{\abs{x_l}}\right)^{r-1} \right)^e
\left(\sum_{s \ge 1} \frac{c_s}{c_1} \left( \frac{ \conj{Z_l}}{\abs{x_l}}\right)^{s-1} \right)^g \mid \H}\notag\\
& =  \sum_{a_1 + \ldots + a_k+ \ldots = e} \sum_{b_1 + \ldots + b_k + \ldots  = g}  c_1^{e+g} \prod_{r \ge 1} \left(\frac{c_r}{c_1}\right)^{a_r} \prod_{s \ge 1} \left(\frac{c_s}{c_1} \right)^{b_s}  \notag \\
&\hspace*{1.0in} \expect{\cfrac{Z_l^{e+\sum_{r \ge 1} (r-1)a_r} \conj{Z_l}^{g+\sum_{s \ge 1} (s-1)b_s}}{\abs{x_l}^{ \sum_{r \ge 1} (r-1)a_r + \sum_{s \ge 1} (s-1)b_r}} \mid \H} \label{eq:midreg:t1}
\end{align}

Consider the  term $\displaystyle \expect{\frac{Z_l^{e+\sum_{r \ge 1} (r-1)a_r} \conj{Z_l}^{g+\sum_{s \ge 1} (s-1)b_s}}{\abs{x_l}^{ \sum_{r \ge 1} (r-1)a_r + \sum_{s \ge 1} (s-1)b_r}} \mid \H}$. Subject to the constraint that $a_1 + \ldots + a_k+ \ldots = e$ and $b_1 + \ldots + b_k + \ldots  = g$, we have, $e+\sum_{r \ge 1} (r-1)a_r = \sum_{r \ge 1} ra_r$ and $g + \sum_{s \ge 1} (s-1) b_s = \sum_{s \ge 1} s b_s$. Hence,
\begin{align*}
\expect{Z_l^{e+\sum_{r \ge 1} (r-1)a_r} \conj{Z_l}^{g+\sum_{s \ge 1} (s-1)b_s} \mid \H}
= \expect{ Z_l^{\sum_{r \ge 1} ra_r} \conj{Z_l}^{\sum_{s \ge 1} s b_s} \mid \H} \end{align*}
This expectation is 0 if $ \sum_{r \ge 1} r a_r  \ne \sum_{s \ge 1} s b_s$.  Therefore,
\begin{align}\label{eq:midreg:t1a}
&\expect{ \left( \left(1 + \frac{Z_l}{\abs{x_l}} \right)^p-1\right)^e \left( \left(1 + \frac{\conj{Z_l}}{\abs{x_l}} \right)^p-1\right)^g \mid \H} \notag\\
&=\sum_{a_1 + \ldots + a_k+ \ldots = e} \sum_{\substack{b_1 + \ldots + b_k + \ldots  = g\\
\sum_{r \ge 1} r a_r = \sum_{s \ge 1} s b_s}}  c_1^{e+g} \prod_{r \ge 1} \left(\frac{c_r}{c_1}\right)^{a_r} \prod_{s \ge 1} \left(\frac{c_s}{c_1} \right)^{b_s}
 \frac{ \expect{(Z \conj{Z})^{\sum_{r \ge 1} r a_r} \mid \H}}{ \abs{x_l}^{2\sum_{r \ge 1} r a_r}}
\end{align}

From Eqn.~\eqref{eq:midreg:t1a}, it follows that the \emph{RHS} is a real number, and therefore the expectation in the \emph{LHS} is real.

For non-zero expectation, assuming $ \sum_{r \ge 1} r a_r  = \sum_{s \ge 1} s b_s$, we have,
\begin{align*}
0 \le \frac{\expect{ (Z_l\conj{Z_l})^{\sum_{r \ge 1} r a_r} \mid \H}}{\abs{x_l}^{2\sum_{r \ge 1} r a_r}}  &\le \frac{ \expect{ (Z_l\conj{Z_l})^{(e+g)/2} \mid\H}}{\abs{x_l}^{e+g}} \left( \frac{ \abs{Z_l}}{\abs{x_l}} \right)^{ 2(\sum_{r \ge 1} r a_r) - (e+g)}\\
&\le \abs{x_l}^{-(e+g)}\left( \frac{(e+g) \ftwores{C}}{2(\rho/9) C } \right)^{(e+g)/2} \varrho^{2(\sum_{r \ge 1} r a_r) - (e+g)}
\end{align*}
where, $\varrho = (B/C)^{1/2}$. Note that conditional on $\goodest$, $\abs{Z_l}/\abs{x_l} \le \rho$.

Substituting, and noting that $c_1 = \binom{p}{1} = p$,  the sum in Eqn.~\eqref{eq:midreg:t1} is bounded above as follows.
\begin{align}
&\sum_{a_1 + \ldots + a_k+ \ldots = e} \sum_{b_1 + \ldots + b_k + \ldots  = g}  c_1^{e+g} \prod_{r \ge 1} \left(\frac{c_r}{c_1}\right)^{a_r} \prod_{s \ge 1} \left(\frac{c_s}{c_1} \right)^{b_s}   \notag \\ & \hspace*{1.0in}\expect{\cfrac{Z_l^{e+\sum_{r \ge 1} (r-1)a_r} \conj{Z_l}^{g+\sum_{s \ge 1} (s-1)b_s}}{\abs{x_l}^{ \sum_{r \ge 1} (r-1)a_r + \sum_{s \ge 1} (s-1)b_r}} \mid \H} \notag \\
& \le \abs{x_l}^{-(e+g)} \left( \frac{c_1^2(e+g) \ftwores{C}}{2(\rho/9) C} \right)^{(e+g)/2}  \notag\\ &\hspace*{0.5cm}\sum_{a_1 + \ldots + a_k+ \ldots = e} \sum_{\substack{b_1 + \ldots + b_k + \ldots  = g \\
\sum_{r \ge 1} r a_r = \sum_{s \ge 1} s b_s}}  \prod_{r \ge 1} \left( \frac{\abs{c_r}}{c_1}\right)^{a_r} \prod_{s \ge 1} \left( \frac{\abs{c_s}}{c_1} \right)^{b_s}\varrho^{2(\sum_{r \ge 1} r a_r) - (e+g)} \notag \\
& \le \abs{x_l}^{-(e+g)}\left( \frac{c_1^2(e+g) \ftwores{C}}{2(\rho/9) C} \right)^{(e+g)/2}\notag\\ &\hspace*{0.5cm} \sum_{a_1 + \ldots + a_k+ \ldots = e} \sum_{\substack{b_1 + \ldots + b_k + \ldots  = g }}  \prod_{r \ge 1} \left(\frac{\abs{c_r}}{c_1}\right)^{a_r} \prod_{s \ge 1} \left( \frac{\abs{c_s}}{c_1} \right)^{b_s}\varrho^{\left(\sum_{r \ge 1} r a_r  +\sum_{s \ge 1} s b_s \right)- (e+g)} \label{eq:midrreg:t2}
\end{align}
Using, $\sum_{r \ge 1} a_r = e$ and $\sum_{s \ge 1} b_s = g$, we have,  $\left(\sum_{r \ge 1} r a_r  +\sum_{s \ge 1} s b_s \right)- (e+g) = \sum_{r \ge 1} (r-1)a_r + \sum_{s \ge 1} (s-1) b_s$.  Therefore, Eqn.~\eqref{eq:midrreg:t2} equals
\begin{align}
&
\abs{x_l}^{-(e+g)}\left( \frac{c_1^2(e+g) \ftwores{C}}{2(\rho/9) C} \right)^{(e+g)/2}
\left( \sum_{r \ge 1} \frac{\abs{c_r}}{c_1}\varrho^{r-1}\right)^e \left( \sum_{s \ge 1} \frac{\abs{c_s}}{c_1}\varrho^{s-1}\right)^g \label{eq:midrreg:t3}
\end{align}
The ratio of the $(r+1)$th term to the $r$th term in the summation $\sum_{r \ge 1} \frac{\abs{c_r}}{c_1}\varrho^{r-1}$ , for $r=1,2, \ldots, k-1$ is
\begin{align*}
\left\lvert \frac{ c_{r+1}}{c_r} \right\rvert \varrho= \frac{ \abs{p-r} \varrho}{r+1} \le \frac{1}{8}
\end{align*}
since, $\varrho = (B/C )^{1/2}\le 1/(8p)$. Therefore,
$\sum_{r \ge 1} \frac{\abs{c_r}}{c_1} \varrho^{r-1} \le \sum_{r \ge 1} (1/8)^{r-1} \le (9/8)$.

\noindent The summation in Eqn.~\eqref{eq:midrreg:t3} is therefore bounded above by  (since, $c_1 = p$)
\begin{align} \label{eq:midrreg:final1}
&\expect{ \left( \left(1 + \frac{Z_l}{\abs{x_l}} \right)^p-1\right)^e \left( \left(1 + \frac{\conj{Z_l}}{\abs{x_l}} \right)^p-1\right)^g \mid \H}\notag \\
& \le \abs{x_l}^{-(e+g)}\left( \frac{((9/8)p)^2(e+g) \ftwores{C}}{2(\rho/9) C} \right)^{(e+g)/2} \notag \\
&\le \left ( \frac{ 6p^2 (e+g) \ftwores{C}}{\abs{x_l}^2 \rho C} \right)^{(e+g)/2} \enspace .
\end{align}

\noindent
Proceeding  similarly, we  show that for any $k \ge 1$,  $\displaystyle \expect{ \left( \sum_{r=1}^k c_r  \frac{Z_l^r}{\abs{x_l}^r} \right)^e \left( \sum_{s=1}^k c_s \frac{ \conj{Z_l}^s}{\abs{x_l}^r} \right)^g \mid \H}$ is real and
\begin{align} \label{eq:midrreg:final2}
& 0 \le \expect{ \left( \sum_{r=1}^k c_r  \frac{Z_l^r}{\abs{x_l}^r} \right)^e \left( \sum_{s=1}^k c_s \frac{ \conj{Z_l}^s}{\abs{x_l}^r} \right)^g} \le \left ( \frac{ 6p^2 (e+g) \ftwores{C}}{\abs{x_l}^2 \rho C} \right)^{(e+g)/2} \enspace .
\end{align}

\emph{ Using $k$-wise independence.} Let $k = \Omega(\log n)$ be a parameter to be determined and assume that the family $\{\omega_{lr}(i)\}_{i \in [n]}$ is at least $k$-wise independent. With probability $1-n^{-\Omega(1)}$, $\abs{Z_l} \le \left( \frac{\ftwores{C}}{C} \right)^{1/2}$. Therefore,
$\displaystyle \frac{ \abs{Z_l}}{x_l} \le \left( \frac{ \ftwores{C}}{\abs{x_l}^2 C} \right)^{1/2}$.
By Taylor's series expansion up to $k$ terms, we have, $
\displaystyle \left(1 + \frac{Z_l}{\abs{x_l}} \right)^p = \sum_{r=0}^{k-1} c_r \frac{Z_l^r}{\abs{x_l}^r} + \gamma_k$, where, $\displaystyle \gamma_k = c_k \frac{Z_l^{'k}}{\abs{x_l}^k}$ and $\abs{Z'_l} \le \abs{Z_l}$. By the above discussion, we have, $\displaystyle   \abs{\gamma_k} = \frac{\abs{c_kZ_l^{'k}}}{\abs{x_l}^k}  \le  \left\lvert \binom{p}{k}\right\rvert \left( \frac{ \ftwores{C}}{\abs{x_l}^2 C} \right)^{k/2} = \zeta$ (say).

In the remainder of the proof, \emph{all expectations are conditional on} $\H$.

Let $\alpha_k $ denote the sum of the first $k-1$ terms in the Taylor series expansion of $\left(1+ \frac{Z_l}{\abs{x_l}} \right)^p$ except for the zeroth term, that is,
\begin{align*}
\alpha_k = \sum_{r=1}^{k-1} c_r \frac{ Z_l^r}{\abs{x_l}^r} \enspace .
\end{align*}
Hence, $\displaystyle \left(1+ \frac{Z_l}{\abs{x_l}}\right)^p-1 = \alpha_k + \gamma_k$. Let $\beta_k = \conj{\alpha_k}$  and so that $\displaystyle \left(1 + \frac{\conj{Z_l}}{\abs{x_l}}\right)^p-1 = \beta_k + \conj{\gamma_k}$.
Therefore,
\begin{align*}
&\expect{ \left( \left( 1 + \frac{Z_l}{\abs{x_l}}\right)^p - 1 \right)^e \left( \left(1 + \frac{ \conj{Z_l}}{\abs{x_l}} \right)^p -1 \right)^g} = \expect{ \left(\alpha_k + \gamma_k\right)^e \left(\beta_k + \conj{\gamma_k} \right)^g} \enspace .
\end{align*}
Let $d_r = \binom{e}{r}$, for $r=0,1, \ldots, e$ and $h_s = \binom{g}{s} $, for $s=0,1,\ldots, g$. Therefore,
\begin{align} \label{eq:midrreg:fi2:t1}
& \expect{ \left(\alpha_k + \gamma_k\right)^e \left(\beta_k + \conj{\gamma_k} \right)^g} \notag \\
& = \expect{ \left( \sum_{r=0}^e d_r \alpha_k^{e-r} \gamma_k^r\right)\left( \sum_{s=0}^g h_s \beta_k^{g-s} \conj{\gamma_k}^s \right)} \notag \\
& = \expect{\alpha_k^e \beta_k^g} + \sum_{\substack{ r=0 \ldots e, s=0 \ldots g\\ r+s \ge 1}}
d_r h_s \expect{ \alpha_k^{e-r} \beta_k^{g-s} \gamma_k^r\conj{\gamma_k}^{s}}  \enspace .
\end{align}

By Eqn.~\eqref{eq:midreg:t1a}, the \emph{LHS} of Eqn.~\eqref{eq:midrreg:fi2:t1} is non-negative and real.

By Eqn.~\eqref{eq:midrreg:final2}, $\displaystyle \expect{ \alpha_k^e \beta_k^g}$ is non-negative, real and $\displaystyle \expect{ \alpha_k^e \beta_k^g} \le \left ( \frac{ 6p^2 (e+g) \ftwores{C}}{\abs{x_l}^2 \rho C} \right)^{(e+g)/2}$.

We now consider the term $ \expect{ \alpha_k^{e-r} \beta_k^{g-s} \gamma_k^r\conj{\gamma_k}^{s}}$.
Note that
\begin{align*}
\abs{\alpha_k}  &\le \sum_{r=1}^{k-1} \abs{c_r} \frac{\abs{Z_l}^r}{\abs{x_l}^r}   \le  \sum_{r=1}^{k-1} \card{ \binom{p}{r}} \frac{1}{(8p)^r} \enspace .
\end{align*}
The ratio of the $r+1$st term to the $r$th term in the above summation is, $\displaystyle \frac{\abs{p-r}}{r+1} \cdot \frac{1}{8p} \le 1/8$. Thus, $\abs{\alpha_k} \le \displaystyle \sum_{r=1}^{k-1} \card{ \binom{p}{r}} \frac{1}{(8p)^r}\le \left( \frac{p}{8p} \right)\sum_{r=1}^{k-1} (1/8)^{r-1} = (1/7)$. Hence, $\abs{\beta_k} = \abs{\conj{\alpha_k}} = \abs{\alpha_k} \le 1/7$.

Substituting in Eqn.~\eqref{eq:midrreg:fi2:t1} and taking absolute values, we obtain,
\begin{align} \label{eq:midrreg:fi2:t2}
0 \le &\expect{ \left(\alpha_k + \gamma_k\right)^e \left(\beta_k + \conj{\gamma_k} \right)^g}  \notag\\
& \le  \expect{\alpha_k^e \beta_k^g} + \sum_{\substack{ r=0 \ldots e, s=0 \ldots g\\ r+s \ge 1}}d_r h_s  \abs{\alpha_k}^{e-r} \abs{\beta_k}^{g-s} \zeta_k^{r+s}  \notag \\
& \le \left ( \frac{ 6p^2 (e+g) \ftwores{C}}{\abs{x_l}^2 \rho C} \right)^{(e+g)/2} +
\sum_{\substack{ r=0 \ldots e, s=0 \ldots g\\ r+s \ge 1}} d_r h_s (1/7)^{e+g-r-s} \zeta_k^{r+s} \enspace .
\end{align}

We now consider the summation term in Eqn.~\eqref{eq:midrreg:fi2:t2}.
Then,
\begin{align} \label{eq:midrreg:fi2:t3}
&\sum_{\substack{ r=0 \ldots e, s=0 \ldots g\\ r+s \ge 1}} d_r h_s (1/7)^{e+g-r-s} \zeta_k^{r+s} \notag \\
& = (1/7)^{e+g}\sum_{s=1}^g h_s  (7\zeta_k)^s + (1/7)^{e+g}\sum_{r=1}^e d_r  (7\zeta_k)^r \sum_{s=0}^g h_s (7\zeta_k)^s \enspace .
\end{align}

Consider the summation  $\sum_{s=1}^g h_s (7\zeta_k)^s$. The ratio of the $s+1$th term to the $s$th term, for $s=1,2, \ldots, g-1$, is $\displaystyle \left(\frac{h_{s+1}}{h_s} \right)  (7\zeta_k) \le 7g \zeta_k/2  = n^{-\Omega(1)}$.  Thus,
$$\sum_{s=1}^g h_s  (7\zeta_k)^s \le    h_1 (7\zeta_k) \sum_{s=1}^g (7g\zeta_k/2)^{s-1} = 7g  \zeta_k (1 + O(g\zeta_k)) \le 8g \zeta_k \enspace . $$

\noindent
We now consider the second (double)-summation in Eqn.~\eqref{eq:midrreg:fi2:t3}, namely, $\displaystyle \sum_{r=1}^e d_r (7\zeta_k)^r \sum_{s=0}^g h_s (7\zeta_k)^s$. Proceeding as in the previous paragraph, this is at most $8e \zeta_k$. The second summation,  $\sum_{s=0}^g h_s (7\zeta_k)^s = (1+7\zeta_k)^g \le \exp{7\zeta_k g} \le 1 + 8g \zeta_k$, since, $g = O(\log (1/\zeta)) $ and $\zeta_k = n^{-\Omega(1)}$. Therefore, $\displaystyle \sum_{r=1}^e d_r (7\zeta_k)^r \sum_{s=0}^g h_s (7\zeta_k)^s \le 8e \zeta_k (1 + 8g \zeta_k)$.

Substituting in Eqn.~\eqref{eq:midrreg:fi2:t3}, we have,
\begin{align} \label{eq:midrreg:fi2:t4}
&(1/7)^{e+g}\sum_{s=1}^g h_s  (7\zeta_k)^s + (1/7)^{e+g}\sum_{r=1}^e d_r  (7\zeta_k)^r \sum_{s=0}^g h_s (7\zeta_k)^s \notag \\
& \le  (1/7)^{e+g}(9)\zeta_k (e+g)\notag \\
& \le  (9)(e+g) (1/7)^{e+g} \left( \frac{ \ftwores{C}}{\abs{x_l}^2 C} \right)^{k/2} \enspace .
\end{align}

Substituting in Eqn.~\eqref{eq:midrreg:fi2:t2}, we have,
\begin{align} \label{eq:midrreg:fi2:t5}
&\expect{ \left(\alpha_k + \gamma_k\right)^e \left(\beta_k + \conj{\gamma_k} \right)^g} \notag \\
&\le  \left ( \frac{ 6p^2 (e+g) \ftwores{C}}{\abs{x_l}^2 \rho C} \right)^{(e+g)/2} + (9)(e+g) (1/7)^{e+g} \left( \frac{ \ftwores{C}}{\abs{x_l}^2 C} \right)^{k/2} \notag \\
& = \left ( \frac{ 6p^2 (e+g) \ftwores{C}}{\abs{x_l}^2 \rho C} \right)^{(e+g)/2}\notag \\
& \hspace*{0.5in} \cdot  \left(1 + 9(e+g) \left( \frac{\rho}{6 (7p)^2(e+g)} \right)^{(e+g)/2}\left( \frac{ \ftwores{C}}{\abs{x_l}^2 C} \right)^{(k-(e+g))/2}\right) \enspace .
\end{align}
Let $d = 2\lceil \log (1/\delta) \rceil$, so that  $e +g \le d$.  Then,
\begin{align} \label{eq:midrreg:fi2:t5a}
& 9 (e+g) \left( \frac{\rho}{6 (7p)^2(e+g)} \right)^{(e+g)/2}  = \exp{ \ln (9(e+g)) + \left( \frac{e+g}{2} \right) \ln \left( \frac{\rho}{6 (7p)^2 (e+g)} \right)} \enspace .
\end{align}
The function $x \ln \frac{\rho}{ax}$ attains a maximum at $x = \frac{\rho}{e a}$ and the maximum value is $\frac{\rho}{\exp{1} a}$. Thus, the \emph{RHS} in Eqn.~\eqref{eq:midrreg:fi2:t5a} is bounded above  by $\displaystyle \exp{\ln (9d) + \frac{\rho}{\left(\exp{1}\right)6 (7p)^2}}$.
Further,
\begin{align*}
\left( \frac{ \ftwores{C}}{\abs{x_l}^2 C} \right)^{(k-(e+g))/2} \le \left(\frac{1}{8p} \right)^{ (k-(e+g))/2} &= \exp{-(\ln (8p)) (k- (e+g)/2)} \\ &\le  \exp{-2-\left( \ln (9d) + \frac{\rho}{\left(\exp{1}\right)6 (7p)^2}\right)},
\end{align*}
provided, $\displaystyle k \ge  d + \left(\frac{1}{\ln (8p)}\right) \left(2 +\left( \ln (9d) + \frac{\rho}{\left(\exp{1}\right)6 (7p)^2}\right)\right)  = O(\log (1/\delta) + \log n)$, since, $\rho = O(\log n)$.

Under this condition,
\begin{align*}
9(e+g) \left( \frac{\rho}{6 (7p)^2(e+g)} \right)^{(e+g)/2}\left( \frac{ \ftwores{C}}{\abs{x_l}^2 C} \right)^{(k-(e+g))/2} \le e^{-2} \enspace .
\end{align*}
Substituting in Eqn.~\eqref{eq:midrreg:fi2:t5}, we obtain
\begin{align} \label{eq:midrreg:fi2:t6}
&\expect{ \left(\alpha_k + \gamma_k\right)^e \left(\beta_k + \conj{\gamma_k} \right)^g} \notag \\
& \le \left ( \frac{ 6p^2 (e+g) \ftwores{C}}{\abs{x_l}^2 \rho C} \right)^{(e+g)/2} \left(1 + e^{-2} \right) \notag \\
& \le \left ( \frac{ 7p^2 (e+g) \ftwores{C}}{\abs{x_l}^2 \rho C} \right)^{(e+g)/2}
\end{align}

This proves the first statement of the lemma.

For the second statement of the lemma, $Y_l =  \abs{x_l}^p\left(1+ \frac{Z_l}{\abs{x_l}}\right)^p$.
Using $k$-wise independence of the $\omega_{lr}$'s family of random roots of unity, and since,  $\abs{Z_i}/\abs{x_i} \le \varrho = (B/C)^{1/2} \le 1/(8p)$, we have,
\begin{align*}
\expect{Y_l} &=  \abs{x_l}^p \left( \sum_{r=0}^{k-1} c_r \frac{\expect{Z_l^r}}{\abs{x_l}^r} + c_k \frac{ \expect{Z_l^{'k}}}{\abs{x_l}^k} \right) =\abs{x_l}^p\left(1 \pm  \gamma \right)
\end{align*}
since, $\expect{Z_l^r} = 0$, for $r \in [k-1]$ and where, $\abs{\gamma}  \le \abs{c_k} (8p)^{-k} = = \zeta$ (say), which is $n^{-\Omega(1)}$ since $k = \Omega(\log n)$. That is, we have shown that $ \displaystyle  \left\lvert \expect{Y_l} - \abs{x_l}^p \right\rvert \le \gamma \abs{x_l}^p$.

By a similar argument, $\expect{\conj{Y_l}} = \abs{x_l}^p(1 \pm \gamma)$.

Since, $ \displaystyle Y_l = (\abs{x_l} + Z_l)^p = \abs{x_l}^p \left( 1 + \frac{ Z_l}{\abs{x_i}} \right)^p$, we have,
\begin{align*}
&\expect{ \left( Y_l - \expect{Y_l}\right)^e \left( \conj{Y_l} - \expect{\conj{Y_l}} \right)^g}\\
& = \expect{ \left( \abs{x_l}^p \left( 1 + \frac{ Z_l}{\abs{x_l}} \right)^p -\abs{x_l}^p (1 + \gamma )\right)^e \left(\abs{x_l}^p \left( 1 + \frac{ \conj{Z_l}}{\abs{x_l}} \right)^p -\abs{x_l}^p (1 \pm \gamma)\right)^g} \\
& = \abs{x_l}^{p(e+g)}(1+\zeta)^{e+g} \notag \\
& \hspace*{0.5in}\expect{ \left( (1+\gamma)^{-1}\left( 1 + \frac{ Z_l}{\abs{x_l}} \right)^p -1 \right)^e
\left( (1+\conj{\gamma})^{-1}\left( 1 + \frac{ \conj{Z_l}}{\abs{x_l}} \right)^p -(1\pm \gamma )\right)^g} \\
& \le \abs{x_l}^{p(e+g)}(1+ \gamma)^{e+g}
\expect{ \left( \left( 1 + \frac{ Z_l}{\abs{x_l}} \right)^p -1\right)^e
\left( \left( 1 + \frac{ \conj{Z_l}}{\abs{x_l}} \right)^p -1 \right)^g} \\
& \le \abs{x_l}^{p(e+g)}(1+\gamma)^{e+g} \left( \frac{7p^2(e+g) \ftwores{C}}{\abs{x_l}^2 \rho C} \right)^{(e+g)/2} \\
& = \left( \frac{\abs{x_l}^{2p-2} a  (e+g) \ftwores{C}}{\rho C} \right)^{(e+g)/2}
\end{align*}
where, $a = 7p^2(1+ \gamma)^{e+g}\le  (7p^2) e^{(e+g)\gamma}  \le 9p^2$, since, $e+g \le O(\log (1/\delta)$ and $\gamma = \exp{-(\ln (8p)) k} \le O(1/(\log (1/\delta)))$, if $k  = O(\log\log (1/\delta))$.
The second to last step follows from the first statement of the lemma.

\end{proof}

\begin{lemma} \label{lem:midcentral:noconj}
Let $1 \le e \le \lceil\log (1/\delta)\rceil$ and $l \in \midreg(G_0)$. Suppose the random roots of unity family $ \{\omega_{lr}\}_{l,r}$ is $O(\log (n) \log (1/\delta))$-wise independent. Then,
$$ \expect{ \left(\left( 1 + \frac{Z_l}{\abs{x_l}} \right)^p - 1\right)^e \mid \H} \le n^{-\Omega(1)} \enspace . $$
Therefore,
$$\expect{ \left( Y_l - \expect{Y_l} \right)^e \mid \H} \le  \abs{x_l}^{pe} n^{-\Omega(1)} $$
and $$\expect{ \left( \conj{Y_l} - \expect{\conj{Y_l}} \right)^e \mid \H} \le  \abs{x_l}^{pe} n^{-\Omega(1)} \enspace .$$
\end{lemma}

\begin{proof} All expectations in this proof are conditional on $\H$.
Let $k $ be a parameter.
Let $c_r = \binom{p}{r}$, for $r=0,1, \ldots, k$. Then,
\begin{align} \label{eq:midreg:noconj:t1}
& \expect{\left( \left( 1 + \frac{Z_l}{\abs{x_l}} \right)^p -1 \right)^e \mid \H } = \expect{ \left( \left( \sum_{r=1}^{k-1} c_r \left( \frac{Z_l}{\abs{x_l}}\right)^r\right) + c_k  \left(\frac{Z'_i}{\abs{x_i}} \right)^k\right)^e \mid \H}
\end{align}
where, $\abs{Z'_i} \le \abs{Z_i}$.

Denote $ \sum_{r=1}^{k-1} c_r \left( \frac{Z_l}{\abs{x_l}}\right)^r$ by $\alpha$ and $c_k \frac{Z_i^{'k}}{\abs{x_i}^k}$ by $\beta$.  Let $d_r = \binom{e}{r}$, for $r=0,1, \ldots, e$. We have, $\abs{\alpha} \le  \sum_{r=1}^{k-1} \abs{c_r} (8p)^{-1}  \le (1/8) \sum_{r=1}^{k-1} (1/8)^{r-1} = 1/7$. Also, $\abs{\beta} \le (8p)^{-k}$.
Then, Eqn.~\eqref{eq:midreg:noconj:t1} can be written as
\begin{align} \label{eq:midreg:noconj:t1a}
\expect{\left( \left( 1 + \frac{Z_l}{\abs{x_l}} \right)^p -1 \right)^e \mid \H }& =\expect{ \left( \alpha + \beta\right)^e \mid \H} = \expect{\alpha^e \mid \H} + \sum_{r=1}^e d_r \expect{\alpha^{e-r} \beta^r \mid \H}  \enspace .
\end{align}
Now,
\begin{align*}
\expect{\alpha^e \mid \H} = \expect{\left(\sum_{r=1}^{k-1} c_r \left( \frac{Z_l}{\abs{x_l}}\right)^r\right)^e \mid \H} = \sum_{h_1 + \ldots + h_{k-1} = e}\prod_{r=1}^{k-1} c_r^{h_r} \frac{ \expect{Z_l^{ \sum_{r=1}^{k-1} r\cdot h_r}\mid \H}}{\abs{x_l}^{ \sum_{r=1}^{k-1} r\cdot h_r}} = 0 \enspace .
\end{align*}

Consider the sum $\displaystyle \sum_{r=1}^e d_r \abs{\alpha}^{e-r} \abs{\beta}^r $. The ratio of $r+1$th term to the $r$th term is $ \displaystyle \left( \frac{ e-r}{r+1}\right) \cdot 7\beta \le (7e\abs{\beta}/2)$. Assuming $k = O(\log n)$, $\abs{\beta} \le (8p)^{-k} = n^{-\Omega(1)}$. Therefore, $\displaystyle \sum_{r=1}^e d_r \abs{\alpha}^{e-r} \abs{\beta}^r \le e(1/7)^{e-1}\abs{\beta} (1 + O(en^{-\Omega(1)})) \le n^{-\Omega(1)}$.

Substituting these into Eqn.~\eqref{eq:midreg:noconj:t1a}, we have,
\begin{align*}
\expect{\left( \left( 1 + \frac{Z_l}{\abs{x_l}} \right)^p -1 \right)^e \mid \H }  = \expect{ \left( \alpha + \beta\right)^e \mid \H}  &=  0 + \sum_{r=1}^e d_r \expect{ \alpha^{e-r} \beta^r \mid \H} \enspace .
\end{align*}
Taking absolute values, we have,
\begin{align*}
\left\lvert \expect{\left( \left( 1 + \frac{Z_l}{\abs{x_l}} \right)^p -1 \right)^e \mid \H } \right\rvert
 \le \sum_{r=1}^e d_r \abs{\alpha}^{e-r} \abs{\beta}^r =  n^{-\Omega(1)}   \enspace .
\end{align*}


Now, $\displaystyle Y_l =  \abs{x_l}^p\left(1+ \frac{Z_l}{\abs{x_l}}\right)^p$ and therefore, $\displaystyle \left\lvert\frac{\expect{Y_l \mid \H}}{\abs{x_l}^p}-1 \right\rvert  \le  n^{-\Omega(1)}$, using $k= O(\log n)$-wise independence. Equivalently, $\displaystyle \card{\expect{Y_l \mid \H} -\abs{x_l}^p} \le  \abs{x_l}^pn^{-\Omega(1)}$. So, $\expect{Y_l \mid \H} = \abs{x_l}^p (1+ \gamma)$, where, $\abs{\gamma} \le n^{-\Omega(1)}$.  Therefore,
\begin{align*}\expect{ (Y_l - \expect{Y_l})^e \mid \H} &
= \abs{x_l}^{pe} \expect{ \left( \left( 1+ \frac{Z_l}{\abs{x_l}}\right)^p - (1+\gamma)\right)^e} \enspace .
\end{align*}
Taking absolute values,
\begin{align*}
\left \lvert  \expect{ (Y_l - \expect{Y_l})^e}  \right\rvert
& = \abs{x_l}^{pe} \left\lvert \expect{ \left( \left( 1+ \frac{Z_l}{\abs{x_l}}\right)^p - (1+\gamma)\right)^e \mid \H}\right\rvert \\
& \le \abs{x_l}^{pe}(1 + \abs{\gamma})^e  \left\lvert\expect{ \left( \left( 1+ \frac{Z_l}{\abs{x_l}}\right)^p -1 \right)^e \mid \H}\right\rvert \\
& \le \abs{x_l}^{pe}(1 + n^{-\Omega(1)})^e  n^{-\Omega(1)} \\
& \le \abs{x_l}^{pe} n^{-\Omega(1)} \enspace .
 \end{align*}
since, $e \le \lceil \log (1/\delta) \rceil \le O(\log n)$.

\end{proof}

\subsection{A combinatorial lemma}
In the calculation of the $d$th central moment for the  contribution from the items in $\midreg(G_0)$, we will need to estimate an upper bound on the following combinatorial sums defined in Eqns.~\eqref{eq:comb:basic} and ~\eqref{eq:comb:Rdef} respectively.
\begin{align*}
Q(S_1, S_2)& = \sum_{q=1}^{\min(S_1, S_2)} \sum_{\substack{e_1 + \ldots + e_q = S_1\\ e_j's \ge 1}}
\sum_{\substack{g_1 + \ldots + g_q = S_2 \\ g_j's \ge 1}} \binom{S_1}{e_1, \ldots, e_q} \binom{S_2}{g_1, \ldots, g_q} \notag \\
&\hspace*{0.2cm}\sum_{\{i_1, \ldots, i_q\}} \prod_{r=1}^q \abs{x_{i_r}}^{(p-1)(e_r+g_r)} \prod_{r=1}^q (e_r+g_r)^{(e_r+g_r)/2} \enspace . \\
R(S) & = \sum_{q=1}^{\lfloor  S/2 \rfloor} \sum_{h_1 + \ldots + h_q = S, h_j's \ge 2} \binom{S}{h_1, \ldots, h_q} \sum_{\{i_1, \ldots, i_q\}}\prod_{r\in [q]} \abs{x_{i_r}}^{(p-1)h_r} \prod_{r \in [q]} h_r^{h_r/2} \enspace . 
\end{align*}
Define the sum
\begin{align} \label{eq:comb:Pdef}
P(S) & = \sum_{q=1}^S \sum_{i_1, \ldots, i_q} \sum_{\substack{e_1 + \ldots + e_q = 2S\\ e_1, \ldots e_q \ge 2}} \binom{S}{e_1/2, \ldots, e_q/2} \prod_{r \in [q]} \abs{x_{i_r}}^{(2p-2)(g_r/2)} \enspace . 
\end{align}

The expression in Eqn.~\eqref{eq:comb:basic} is upper bounded by
\begin{lemma} \label{lem:comb:1}[Re-statement of first part of Lemma~\ref{lem:comb:final}.]
$Q(S_1, S_2) \le R(S_1+S_2)$.
\end{lemma}
\begin{proof}
Consider the multinomial  expansion of  $(a_1 + \ldots +a_q)^{S_1}$, for $a_1, \ldots, a_q \ge 0$ together with the constraint that in  each monomial of the form $\prod_{i\in [q]}{a_i}^{e_i}$, each $e_i \ge 1$. This equals $ \sum_{e_1 + \ldots + e_q = S_1, e_j's \ge 1} \binom{S_1}{e_1, \ldots, e_q} \prod_{i\in [q]} a_i^{e_i}$. The multinomial expansion of $(a_1 + \ldots + a_q)^{S_2}$ with the same constraint can be written as $\sum_{g_1 + \ldots + g_q = S_2, g_j's \ge 1}$  $ \binom{S_2}{g_1, \ldots, g_q} \prod_{i\in [q]} a_i^{g_i}$.  Now consider the multinomial expansion of $(a_1 + \ldots + a_q)^{S_1+S_2}$ subject to
the constraint that in  each monomial of the form $\prod_{i\in [q]}{a_i}^{h_i}$, each $h_i \ge 2$. This equals  $\sum_{h_1 + \ldots + h_q = S_1+S_2, h_j's \ge 2}$  $ \binom{S_1+S_2}{h_1, \ldots, h_q} \prod_{i\in [q]} a_i^{h_i}$. It therefore follows that
\begin{gather*}
\sum_{e_1 + \ldots + e_q = S_1, e_j's \ge 1} \binom{S_1}{e_1, \ldots, e_q} \prod_{i\in [q]} a_i^{e_i} \sum_{g_1 + \ldots + g_q = S_2, g_j's \ge 1} \binom{S_2}{g_1, \ldots, g_q} \prod_{i\in [q]} a_i^{g_i} \\ \hspace*{1.0in}\le \sum_{h_1 + \ldots + h_q = S_1+S_2, h_j's \ge 2} \binom{S_1+S_2}{h_1, \ldots, h_q} \prod_{i\in [q]} a_i^{h_i}
\end{gather*}
Here the variables  $h_i$ takes values (among other possibilities) $e_i+g_i$, for $i \in [q]$,  for each $q$-partition vector $e$ of $S_1$ and $q$-partition $g$ of $S_2$. Therefore,
\begin{gather}
\sum_{e_1 + \ldots + e_q = S_1, e_j's \ge 1} \binom{S_1}{e_1, \ldots, e_q} \prod_{i\in [q]} a_i^{e_i} \sum_{g_1 + \ldots + g_q = S_2, g_j's \ge 1} \binom{S_2}{g_1, \ldots, g_q} \prod_{i\in [q]} a_i^{g_i}  \notag \\ \hspace*{1.0in} \prod_{i \in [q]} (e_i + g_i)^{(e_i + g_i)/2}  \notag
\\ \le \sum_{h_1 + \ldots + h_q = S_1+S_2, h_j's \ge 2} \binom{S_1+S_2}{h_1, \ldots, h_q} \prod_{i\in [q]} a_i^{h_i} \prod_{i \in [q]} h_i^{h_i/2} \label{eq:comb:t2} \enspace .
\end{gather}
It therefore follows that by letting $a_i = \abs{x_i}^{p-1}$, for $i \in [n]$, that
\begin{align} \label{eq:comb:t3}
&Q(S_1, S_2) \\ &=
\sum_{q=1}^{\min(S_1, S_2)} \sum_{\{i_1, \ldots, i_q\} \subset[n]} \sum_{\substack{e_1 + \ldots + e_q = S_1\\ e_j's \ge 1}}
\sum_{\substack{g_1 + \ldots + g_q = S_2 \\ g_j's \ge 1}} \binom{S_1}{e_1, \ldots, e_q} \binom{S_2}{g_1, \ldots, g_q} \notag \\
& \hspace*{1.0in}\prod_{r=1}^q \abs{x_{i_r}}^{(p-1)(e_r+g_r)} \prod_{r=1}^q (e_r+ g_r)^{(e_r+g_r)/2} \notag \\
&\le \sum_{q=1}^{\lfloor ( S_1 + S_2)/2 \rfloor} \sum_{\{i_1, \ldots, i_q\}}\sum_{h_1 + \ldots + h_q = S_1+S_2, h_j's \ge 2} \binom{S_1+S_2}{h_1, \ldots, h_q} \prod_{r\in [q]} \abs{x_{i_r}}^{(p-1)h_r} \prod_{r \in [q]} h_r^{h_r/2} \notag \\
& = R(S_1+ S_2)
\end{align}
\end{proof}

Suppose we generalize the factorial notation $x!$ to mean $ \Gamma(x+1)$, when $x$ is a fraction  of the form $(n+1/2)$, for $n \ge 1$.  Using this, generalize the multinomial coefficient notation $\binom{S}{e_1, \ldots, e_q} $ to denote $\frac{S!}{e_1! \ldots e_q!}$, even when the $e_j$'s are fractional and of the form $(n + 1/2)$, for $n \ge 1$.

\begin{lemma}\label{lem:comb:RR}
$R^2(S) \le R(2S)$.
\end{lemma}

\begin{proof}

\begin{align*}
&R^2(S)\\
& = \sum_{q_1=1}^{\lfloor S/2 \rfloor} \sum_{q_2 = 1}^{\lfloor S/2 \rfloor} \sum_{\{i_1, \ldots, i_{q_1}\}} \sum_{j_1, \ldots, j_{q_2}} \sum_{\substack{e_1 + \ldots + e_{q_1} = S \\ e_j's \ge 2}}  \sum_{\substack{g_1 + \ldots + g_{q_2} = S\\ g_j's \ge 2}} \binom{S}{e_1, \ldots, e_{q_1}} \binom{S}{g_1, \ldots, g_{q_2}} \\
& \hspace*{1.0in}\prod_{r\in [q_1]} \abs{x_{i_r}}^{(p-1)e_r}  \prod_{s \in [q_2]} \abs{x_{j_s}}^{(p-1)g_s}\prod_{r \in [q_1]} e_r^{e_r/2} \prod_{r \in [q_2]} g_r^{g_r/2} \enspace .
\end{align*}

Define
\begin{align*}
a &= \card{\{i_1, \ldots i_{q_1}\} \cap \{j_1, \ldots, j_{q_2}\}} \\
b & = \{ \card{ \{i_1, \ldots, i_{q_1}\} \setminus \{j_1, \ldots, j_{q_2}\}}\\
c & = \{\card{ \{j_1, \ldots, j_{q_2}\} \setminus \{i_1, \ldots, i_{q_1}\} }
\end{align*}
Then, $R^2(S)$ can be written as
\begin{align}
& R^2(S) \notag \\
& = \sum_{a=0}^{S}\sum_{b=\max(0,1-a)}^{S-a} \sum_{c = \max(0,1-a)}^{S-a}
\sum_{\substack{e_1 + \ldots + e_{a+b} = S \\ e_j's \ge 2}} \sum_{\substack{ g_1 + \ldots + g_{a+c}= S \\ g_j's \ge 2}} \binom{S}{e_1, \ldots, e_{a+b}} \binom{S}{g_1, \ldots, g_{a+c}} \notag\\
& \sum_{\{i_1, \ldots, i_a, j_1, \ldots, j_b, k_1, \ldots, k_c\}} \prod_{l=1}^a \abs{x_{i_l}}^{(p-1)(e_l + g_l)} \prod_{m=1}^b \abs{x_{j_m}}^{(p-1)e_{l+m}} \prod_{n =1}^c \abs{x_{k_n}}^{(p-1)e_{l+n}} \notag \\
 & \hspace*{2.0in}\prod_{l=1}^a \left( e_l^{e_l/2} g_l^{g_l/2} \right) \prod_{m=1}^b e_{l+m}^{e_{l+m}/2} \prod_{n=1}^x g_{l+n}^{g_{l+n}/2} \enspace . \label{eq:comb:lem2:t1}
\end{align}
By a similar argument as in Lemma~\ref{lem:comb:1}, and using the fact that $\left( e_l^{e_l/2} g_l^{g_l/2} \right) \le (e_l+g_l)^{(e_l+g_l)/2}$, for any $e_l, g_l \ge 1$, we  have that the sum in Eqn.~\eqref{eq:comb:lem2:t1}, is bounded above by the following sum (using $q = a+b+c$).
\begin{align} \label{eq:comb:lem2:t2}
& \le  \sum_{q=1}^{S} \sum_{\substack{e_1 + \ldots + e_q = 2S \\ e_j's \ge 2}}
\binom{ 2S}{e_1, \ldots, e_q} \sum_{\{i_1, \ldots, i_q\}} \prod_{l=1}^q \abs{x_{i_l}}^{(p-1)e_j} \prod_{l=1}^q e_l^{e_l/2}\\
& = R(2S) \enspace . \notag
\end{align}

\end{proof}
\begin{lemma} \label{lem:comb:R2S}
\begin{align*}
R(2S) & \le (2eS)^S P(S) \\
&= (2eS)^S \sum_{q=1}^S \sum_{\substack{e_1 + \ldots + e_q = 2S\\ e_j's \ge 2} } \binom{S}{e_1/2, \ldots, e_q/2} \sum_{\{i_1, \ldots i_q\}} \prod_{r\in [q]} \abs{x_{i_r}}^{(2p-2)(e_r/2)} \enspace .
\end{align*}
\end{lemma}

\begin{proof}
Consider Eqn.~\eqref{eq:comb:lem2:t2}.

Fix $q$ and $\{i_1, \ldots, i_q\}$. Then, using Stirling's formula, we have,
\begin{align*}
&\binom{ 2S}{e_1, \ldots, e_q}  \prod_{l=1}^q e_l^{e_l/2} \\
& = \frac{ (2S)!}{e_1! \ldots e_q!} \prod_{l=1}^q e_l^{e_l/2} \\
& \le (2S)^S \left(\frac{ S! e^{e_1 + \ldots + e_q}}{ \left(\prod_{r\in [q]} \sqrt{2\pi e_r}\right) e_1^{e_1} \cdots e_q^{e_q}}\right) \prod_{l=1}^q e_l^{e_l/2} \\
& \le  (2eS)^{S} \left( \frac{ S!}{(e_1/2)! \ldots (e_q/2)!} \right) \\
& = (2eS)^S \binom{S}{e_1/2, \ldots, e_q/2}
\end{align*}
using the generalized factorial notation.
The statement of the lemma now follows.
\end{proof}
We would now like to establish a bound on $R(2S)$ in terms of $F_{2p-2}^S$. The two expressions are shown below.
\begin{align} \label{eq:comb:lem:R2S}
P(S) & \le  \sum_{q=1}^S  \sum_{\{i_1, \ldots i_q\}}\sum_{\substack{g_1 + \ldots + g_q = 2S \\ g_j's \ge 2} } \binom{S}{g_1/2, \ldots, g_q/2}  \prod_{r\in [q]} \abs{x_{i_r}}^{(2p-2)(g_r/2)} \enspace .\\
F_{2p-2}^S & = \sum_{q=1}^S \sum_{\{i_1, \ldots i_q\}} \sum_{\substack{e_1 + \ldots + e_q = S \\ e_j's \ge 1}} \binom{S}{e_1, \ldots, e_q}  \prod_{r\in [q]} \abs{x_{i_r}}^{(2p-2)e_r}  \enspace . \label{eq:comb:lem:F2p2}
\end{align}

\begin{lemma} \label{lem:comb:PSF2p2}
$P(S) \le 4^S F_{2p-2}^S$.
\end{lemma}

\begin{corollary} \label{lem:comb:R2SF2p2}
$R(2S) \le (8eS)^S F_{2p-2}^S$ \enspace .
\end{corollary}

\begin{proof} [Proof of Corollary~\ref{lem:comb:R2SF2p2}.]
We have from Lemma~\ref{lem:comb:R2S} that $R(2S) \le (2eS)^S P(S)$. By Lemma~\ref{lem:comb:PSF2p2}, we have $P(S) \le  4^S F_{2p-2}^S$. Combining, we obtain the corollary.
\end{proof}
\begin{proof} [Of Lemma ~\ref{lem:comb:PSF2p2}.]

Fix $q$ and fix a choice of the index set $\{i_1, \ldots, i_q\} \subset [n]$.

For $q \in [S]$,  define the sets $U_q = \{(g_1, \ldots, g_q): g_1 + \ldots + g_q = 2S, g_j's \ge 2\}$ and $V_q = \{(e_1, \ldots, e_q) : e_1 + \ldots + e_q = S, e_j's \ge 1\}$. The sets $U_q$ and $V_q$ may be viewed as two partitions of bi-partite graph $G_q = (U_q, V_q, E_q)$, where the edge-set $E_q$ is defined as follows.
Let $(g_1, \ldots, g_q) \in U_q$. Let $h$ be the number of indices $j$ such that $g_j$ is even and let these indices be  $k_1<  \ldots < k_h $ in  sequence. Thus,  $g_j $ is odd for any $j \not\in \{k_1, \ldots, k_h\}$.
Define,
\begin{align} \label{eq:comb:ekj}
e_{k_j} = g_{k_j}/2, ~~~j =1,2, \ldots, h \enspace .
\end{align}

Note that $q-h$ is even. This is because $0 = 2S \mod 2 = g_1 + \ldots + g_q \mod 2 = \left((g_{k_1} + \ldots + g_{k_h}\right) \mod 2 + \sum_{j \not\in \{k_1, \ldots, k_h\}}( g_j \mod 2) = q-h \mod 2 $, since each of $g_{k_r} $ is even, for $r \in [h]$ and each of $g_j$ is odd, for $j \in [q] \setminus \{k_1, \ldots, k_h\}$.

Let $l_1 < l_2 < \ldots < l_{q-h}$ be the sequence of all indices in $[q]$ such that $g_{l_j}$ is odd, for $l \in [q-h]$. Note that $q-h$ is even.  Let $t$ be the permutation of $\{l_1, \ldots, l_{q-h}\}$ such that
$\abs{x_{t_1}} \le \abs{x_{t_2}} \le \ldots \le \abs{x_{t_{q-h}}}$.  Define,
\begin{align} \label{eq:comb:etj}
e_{t_j} & = \begin{cases} \lfloor g_{t_j}/2 \rfloor &  ~~~j =1,2, \ldots, (q-h)/2 \\
\lceil g_{t_j}/2 \rceil & ~~~j = (q-h)/2 + 1, \ldots, q-h \enspace .
\end{cases}
\end{align}
Denote this mapping as $\phi(g_1, \ldots, g_q)$, where, $\phi: U_q \rightarrow V_q$.

We would now like to count $\abs{\phi^{-1}(e_1, \ldots, e_q)}$. Fix $(e_1, \ldots, e_q)$ and fix any  $H\subset [q]$.  We will let $H$ be the set of indices such that $g_i$ is even and $g_i = 2e_i$. Let $\abs{H} = h$. Let $l_1 < l_2 < \ldots < l_{q-h}$ be the sequence of indices in $[q] \setminus H$. Let $t$ be the permutation of $\{l_1, \ldots, l_{q-h}\}$ such that
$\abs{x_{t_1}} \le \abs{x_{t_2}} \le \ldots \le \abs{x_{t_{q-h}}}$. Define,
\begin{align} \label{eq:comb:gtj}
g_{t_j}& = \begin{cases}
2e_{t_j}-1 & \text{ for } j=(q-h)/2 + 1, \ldots, q-h \\
2e_{t_j}+1& \text{ for } j=1,2, \ldots, (q-h)/2 \enspace .
\end{cases}
\end{align}
For a fixed $H$, this function $\psi_H(e_1, \ldots, e_q)$ is a mapping from $V_q$ to $U_q$. Further,
if for a given $(g_1, \ldots, g_q)$, $H = \{ i: g_i \text{ is even }\}$, then,
$$ \psi_H(\phi(g_1,\ldots, g_q)) = (g_1, \ldots, g_q) \enspace . $$
Hence,
\begin{align} \label{eq:comb:phiinv}\abs{\phi^{-1}(e_1, \ldots, e_q)} \le \card{ \{H:H\subset [q]\}} = 2^q \enspace .
\end{align}

Another important fact  is that by construction, for any $(g_1, \ldots, g_q) \in U_q$,
if we let $(e_1, \ldots, e_q) = \phi(g_1, \ldots, g_q)$, then,
$$ \prod_{r=1}^q \abs{x_{i_r}}^{(p-1)g_r} \le \prod_{r=1}^q \abs{x_{i_r}}^{(2p-2)e_r} \enspace . $$
This can be seen as follows. Let $H = \{i: g_i \text{ is even }\}$ and let $h = \abs{H}$. Let $\bar{H} = [q] \setminus H$ and $\card{ \bar{H}} = q-h$. Then,
$\prod_{r \in [q]} \abs{x_r}^{(p-1) g_r} = \prod_{r \in H} \abs{x_r}^{(2p-2)e_r} \prod_{r \in \bar{H}} \abs{x_r}^{(p-1)g_r)}$. It suffices to show that $\prod_{r \in \bar{H}} \abs{x_r}^{g_r} \le \prod_{r \in \bar{H}} \abs{x_r}^{2e_r}$. This is equivalent to show that $ \sum_{r \in \bar{H}} g_r \ln \abs{x_r} \le \sum_{r \in \bar{H}} 2e_r \ln \abs{x_r}$. Let $\alpha_r = \ln \abs{x_r} \ge 0$. Let $t$ be a permutation of the indices of the elements  of $\bar{H}$ such that $\abs{x_{t_1}} \le \abs{x_{t_2}} \le \ldots \le \abs{x_{t_{q-h}}}$. So it suffices to show that $\sum_{r=1}^{q-h} g_{t_r} \alpha_{t_r} \le \sum_{r=1}^{q-h} 2e_{t_r} \alpha_{t_r}$. The permutation orders the indices so that $\alpha_{t_1} \le \alpha_{t_2} \le \ldots \le \alpha_{t_{q-h}}$. The sum $L(x) = \sum_{r=1}^{q-h} x_r \alpha_{t_r}$ is a linear function of $g_{t_r}$. Consider the  assignment for the $x_r$'s, where, $x_r = g_{t_r}$, where, $g_{t_r} \ge 2$ is odd for each $r$. For any $j \in 1, \ldots, (q-h)/2$, let $x'_r = g_{t_r}-1$  and $x'_{q-h-r} = g_{t_{q-h-r}} +1$ and let $x'_j$ equal $x_j$ for all other indices $j \in [q-h]$. Then, $L(x) \le L(x')$, since, $\alpha_{t_j}$'s are in non-descending order. Continuing this argument for each $r \in [(q-h)/2]$, we obtain that $L(x) \le L(x^*)$, where, $x^*_r = g_{t_r}-1 = 2 e_{t_r}$, for $r =1,2, \ldots, (q-h)/2$ and $x^*_r = g_{t_r}+1 = 2e_{t_r}$, for $r=1,2, \ldots, (q-h)/2$. This proves the assertion.

Let $(e_1, \ldots, e_q) = \phi(g_1, \ldots, g_q)$, for some $(g_1, \ldots, g_q) \in U_q$.
We are now interested in an upper bound for the following ratio
$\displaystyle
\cfrac{ \binom{S}{g_1/2, \ldots, g_q/2}}{\binom{S}{e_1, \ldots, e_q}}
$.
As before, let $H = \{i\in [q]: g_i \text{ is even }\}$ and let $E = \sum_{i \in H} (g_i/2)$. Let
the indices in $H$ be $k_1 < k_2< \ldots < k_h$, where, $h = \abs{H}$.  Let the remaining indices be $l_1 < l_2 < \ldots, l_{h'}$, where, $h' = (q-h)$. Then,
\begin{align*}
\binom{S}{g_1/2, \ldots, g_q/2} &= \frac{ S!}{(g_1/2)! \cdots (g_q/2)!} \\
&= \binom{S}{E} \binom{E}{(g_{k_1}/2), \ldots, (g_{k_h}/2)} \binom{S-E}{(g_{l_1}/2) \ldots, (g_{l_{h'}}/2)}\\
& = \binom{S}{E} \binom{E}{e_{k_1}, \ldots, e_{k_h}} \binom{S-E}{(g_{l_1}/2) \ldots, (g_{l_{h'}}/2)}
\end{align*}
since,
 $(e_1, \ldots, e_q) = \phi(g_1, \ldots, g_q)$. Similarly,
\begin{align*}
\binom{S}{e_1, \ldots, e_q} & = \binom{S}{E} \binom{E}{e_{k_1}, \ldots, e_{k_h}}
\binom{S-E}{e_{l_1} \ldots, e_{l_{h'}}}
\end{align*}
Taking ratios, let $R$ denote the ratio $\displaystyle \cfrac{\binom{S}{g_1/2, \ldots, g_q/2}}{\binom{S}{e_1, \ldots, e_q}}$.
\begin{align}\label{eq:comb:bincoeff:ratio1}
R & = \cfrac{\binom{S}{g_1/2, \ldots, g_q/2}}{\binom{S}{e_1, \ldots, e_q}}
= \cfrac{\binom{S-E}{(g_{l_1}/2) \ldots, (g_{l_{h'}}/2)}}{ \binom{S-E}{e_{l_1} \ldots, e_{l_{h'}}}}
 = \prod_{r=1}^{h'/2}\left( \cfrac{ e_{l_r}!}{g_{l_r}/2!}\right)\left( \cfrac{ e_{l_{h'-r+1}}!}{g_{l_{h'-r+1}}/2!}\right) \notag \\
 & = \prod_{r=1}^{h'/2}\cfrac{ e_{l_r}!}{ (e_{l_r}+1/2)!} \prod_{r=h'/2+1}^{h'} \cfrac{e_{l_r}!}{(e_{l_r}-1/2)!} \notag
  \le \prod_{r=h'/2+1}^{h'} \cfrac{e_{l_r}!}{(e_{l_r}-1/2)!} \\ &  \le \prod_{r=h'/2+1}^{h'} \cfrac{e_{l_r}!}{(e_{l_r}-1)!} \le \prod_{r=h'/2+1}^{h'} e_{l_r} \le \prod_{r=1}^{h'} e_{l_r} \enspace .
\end{align}
Further, $ \sum_{r=1}^{h'} e_{l_r} = S-E$ $= S'$ (say). Hence, $\prod_{l=1}^{h'} e_{l_r}$ is maximized when the $e_{l_r}$'s equals $S'/h$. The product is then at most $(S'/h')^{h'} = \exp{ h' \ln (S'/h')}$.
The function $g(x) = x \ln \frac{a}{x}$ attains its maximum in the range $1 \le x \le a$ at $x = a/e$. The corresponding maximum  value is $g^* = a/e$.
Therefore, $\displaystyle \exp{ h'\ln (S'/h')} \le \exp{S/e}$.
Substituting in Eqn.~\eqref{eq:comb:bincoeff:ratio1}, we have,
\begin{align}\label{eq:comb:bincoeff:ratio2}
R = \cfrac{\binom{S}{g_1/2, \ldots, g_q/2}}{\binom{S}{e_1, \ldots, e_q}} \le
e^{(S-E)/e}\enspace.
\end{align}
 Fix $q \in [S]$ and index set $\{i_1, \ldots, i_q\} \subset [n]$. We will now try to relate  the sub-summations from Eqn.~\eqref{eq:comb:lem:R2S} and Eqn.~\eqref{eq:comb:lem:F2p2} respectively.
\begin{align}
&\sum_{\substack{g_1 + \ldots + g_q = 2S \\ g_j's \ge 2} } \binom{S}{g_1/2, \ldots, g_q/2}  \prod_{r\in [q]} \abs{x_{i_r}}^{(2p-2)(g_r/2)}  \notag  \\
&= \sum_{\substack{e_1 + \ldots + e_q = S \\ e_j's \ge 1}} \sum_{\substack{g_1 + \ldots + g_q = 2S \\ g_j's \ge 2 \\ \phi(g_1, \ldots, g_q) = (e_1, \ldots, e_q)} } \binom{S}{g_1/2, \ldots, g_q/2}  \prod_{r\in [q]} \abs{x_{i_r}}^{(2p-2)(g_r/2)} \notag \\
& \le \sum_{\substack{e_1 + \ldots + e_q = S \\ e_j's \ge 1}} \sum_{\substack{g_1 + \ldots + g_q = 2S \\ g_j's \ge 2 \\ \phi(g_1, \ldots, g_q) = (e_1, \ldots, e_q)} }e^{S/e} \binom{S}{e_1, \ldots, e_q/2}
\prod_{r\in [q]} \abs{x_{i_r}}^{(2p-2)e_r} \notag \\
& = e^{S/e}\sum_{\substack{e_1 + \ldots + e_q = S \\ e_j's \ge 1}}\binom{S}{e_1, \ldots, e_q/2}
\prod_{r\in [q]} \abs{x_{i_r}}^{(2p-2)e_r} \left\lvert \phi^{-1}(e_1, \ldots, e_q)\right\rvert  \notag \\
& \le (2e^{1/e})^S \sum_{\substack{e_1 + \ldots + e_q = S \\ e_j's \ge 1}}\binom{S}{e_1, \ldots, e_q/2}
\prod_{r\in [q]} \abs{x_{i_r}}^{(2p-2)e_r} \label{eq:comb:lemR2Sb} \enspace .
\end{align}
Therefore, since, $q \le S$,
\begin{align*}
P(S) & \le  \sum_{q=1}^S  \sum_{\{i_1, \ldots i_q\}}\sum_{\substack{g_1 + \ldots + g_q = 2S \\ g_j's \ge 2} } \binom{S}{g_1/2, \ldots, g_q/2}  \prod_{r\in [q]} \abs{x_{i_r}}^{(2p-2)(g_r/2)} \\
& \le (2e^{1/e})^S\sum_{q=1}^S  \sum_{\{i_1, \ldots i_q\}}\sum_{\substack{e_1 + \ldots + e_q = S \\ e_j's \ge 1} } \binom{S}{e_1, \ldots, e_q/2}  \prod_{r\in [q]} \abs{x_{i_r}}^{(2p-2)e_r} \\
& \le 4^S F_{2p-2}^S \enspace .
\end{align*}
\end{proof}

\begin{corollary} \label{cor:comb:1}[Re-statement of second part of  Lemma~\ref{lem:comb:final}.]
$R(S) \le \left(4e^{1+1/e}S F_{2p-2} \right)^{S/2}$.
\end{corollary}
\begin{proof}
Follows from Lemma~\ref{lem:comb:R2SF2p2} and Lemma~\ref{lem:comb:RR}.
\end{proof}

\subsection{$d$th central moment calculations for $\midreg(G_0)$ resumed}

\begin{lemma} \label{lem:dcentmom:midG0:1} Let $a \ge 9p^2$ be a constant and let   $C \ge (400)ae(n^{1-2/p}/\log (n)) \epsilon^{-2}\log (1/\delta)$ . Then,
\begin{gather*}
 \sum_{q=1}^{\min(S_1, S_2)} \sum_{\substack{e_1 + \ldots + e_q = S_1\\ e_j \ge 1}} \sum_{\substack{g_1 + \ldots + g_q=S_2\\ g_j \ge 1}} \binom{S_1}{e_1, \ldots, e_q} \binom{S_2}{g_1, \ldots, g_q} \notag \\
 \hspace*{0.5in}\sum_{\{i_1, \ldots i_q\}} \prod_{r=1}^q \abs{x_{i_r}}^{ p(e_r + g_r) -2((e_r+g_r)/2)} \left( \frac{ aF_2}{\rho C} \right)^{\sum_{r=1}^q (e_r+g_r)/2} \prod_{r=1}^q (e_r+g_r)^{(e_r+g_r)/2} \\\le
 \left( \frac{\epsilon F_p}{20} \right)^{S_1+S_2} \enspace .
 \end{gather*}

\end{lemma}

\begin{proof}
\begin{align}\label{eq:dcentmom:midG0:t1}
& \sum_{q=1}^{\min(S_1, S_2)} \sum_{\substack{e_1 + \ldots + e_q = S_1\\ e_j \ge 1}} \sum_{\substack{g_1 + \ldots + g_q=S_2\\ g_j \ge 1}} \binom{S_1}{e_1, \ldots, e_q} \binom{S_2}{g_1, \ldots, g_q} \notag \\
& \hspace*{0.5in}\sum_{\{i_1, \ldots i_q\}} \prod_{r=1}^q \abs{x_{i_r}}^{ p(e_r + g_r) -2((e_r+g_r)/2)} \left( \frac{ aF_2}{\rho C} \right)^{\sum_{r=1}^q (e_r+g_r)/2} \prod_{r=1}^q (e_r+g_r)^{(e_r+g_r)/2} \notag \\
& = \left( \frac{ aF_2}{\rho C} \right)^{(S_1+S_2)/2} Q(S_1+S_2) \notag \\
& \le \left( \frac{ aF_2}{\rho C} \right)^{(S_1+S_2)/2}R(S_1+S_2),~~ \text{ by Lemma~\ref{lem:comb:1}} \notag \\
& \le \left( \frac{ aF_2}{\rho C} \right)^{(S_1+S_2)/2} \left( 16 e (S_1+S_2) F_{2p-2} \right)^{(S_1+S_2)/2}, ~~ \text{ by Corollary~\ref{lem:comb:R2SF2p2}}\notag \\
& = \left( \frac{ 16a e (S_1+S_2)F_2} {\rho C} \right)^{(S_1+S_2)/2} F_p^{(2-2/p)(S_1 + S_2)/2}
\end{align}
For $p \ge 2$, $2p-2 \ge p$ and so $F_{2p-2} \le F_p^{2-2/p}$. Since, $\rho \ge \log n$, $1 \le S_1, S_2 \le \lceil \log (1/\delta)\rceil$ and
$C =K(n^{1-2/p}/\log(n))\epsilon^{-2}\log (1/\delta))$, we have, for   $K=(400) (16 a e)$, that
\begin{align*}
\frac{ 16 a e(S_1+S_2)a F_2}{(\log n)K (n^{1-2/p}/\log (n)) \epsilon^{-2} \log(1/\delta)}
\le \frac{\epsilon^2 F_p^{2/p}}{400} \enspace .
\end{align*}
Substituting in Eqn.~\eqref{eq:dcentmom:midG0:t1}, we have,
\begin{align*}
\left( \frac{ 16a e (S_1+S_2)F_2} {\rho C} \right)^{(S_1+S_2)/2} F_p^{(2-2/p)(S_1 + S_2)/2}  & \le \left(\frac{\epsilon F_p}{20}\right)^{S_1+S_2} \enspace .
\end{align*}

\end{proof}

Let $\H$ denote the event $\G \wedge \nocollision \wedge \goodest$.

\begin{lemma}  [Re-statement of Lemma~\ref{lem:dcentmom:midG0:2}].
Let $a \ge 9p^2$ be a constant  and let  $C \ge K(16a e)(n^{1-2/p}/\log (n))$ $ \epsilon^{-2}\log (1/\delta)$, where, $K = 400 (16 a e)$. Then, for $0 \le d_1, d_2 \le \log(1/\delta)$ and integral, the following holds.
\begin{multline*}
\expect{ \left( \sum_{i \in \midreg(G_0)} (Y_i - \expect{Y_i \mid \H}) \right)^{d_1} \left( \sum_{i \in \midreg(G_0)} (\conj{Y_i} - \expect{ \conj{Y_i}\mid \H}) \right)^{d_2} \mid \H} \le \left( \frac{\epsilon F_p}{10} \right)^{d_1+d_2} \enspace .
\end{multline*}
\end{lemma}

\begin{proof}
Recall that  $\H$ denote the event $\G \wedge \nocollision(\midreg(G_0)) \wedge \goodest(\midreg(G_0))$.

Let $\alpha_0(q) = \max(0, 1-q)$. 

\begin{align} \label{eq:dcentmom:midG0:2:t1}
&\expect{ \left( \sum_{i \in \midreg(G_0)} (Y_i - \expect{Y_i \mid \G}) \right)^{d_1} \left( \sum_{i \in \midreg(G_0)} (\conj{Y_i} - \expect{ \conj{Y_i} \mid \H}) \right)^{d_2}} \notag  \\
& = \sum_{q=0}^{\min(d_1, d_2)} \sum_{s= \alpha_0(q)}^{d_1-q} \sum_{t = \alpha_0(q)}^{d_2-q} \sum_{e_1 + \ldots + e_{q+s} = d_1} \binom{ d_1}{e_1, \ldots, e_{q+s}} \sum_{g_1 + \ldots + g_{q+t}= d_2} \binom{d_2}{g_1, \ldots, g_{q+s}} \notag \\
& \sum_{\{i_1, \ldots, i_q, j_1, \ldots, j_s, k_1, \ldots, k_t\}}
\mathbf{E} \left[ \prod_{r=1}^q (Y_{i_r} - \expect{Y_{i_r} \mid \H})^{e_r} (\conj{Y_{i_r}} - \expect{\conj{Y_{i_r}}\mid \H})^{g_r}\right. \notag \\
&\hspace*{1.5in} \left.  \prod_{l=1}^s (Y_{j_l} - \expect{Y_{j_l} \mid \H})^{e_{q+s}}
 \prod_{m=1}^t ( Y_{k_m} - \expect{Y_{k_m} \mid \H})^{g_{q+t}}\right]
\end{align}
Note that for a given value of $q$, if $q=0$, then $s $ and $t$ are at least 1, and otherwise, for $q > 0$, $s$ and $t$ may be initialized from 0.

Consider the term in the expectation of Eqn.~\eqref{eq:dcentmom:midG0:2:t1}. By \nocollision~and its consequent property as explained earlier,
\begin{align} \label{eq:dcentmom:midG0:2:t2}
&\mathbf{E} \left[\prod_{r=1}^q (Y_{i_r} - \expect{Y_{i_r} \mid \H})^{e_r} (\conj{Y_{i_r}} - \expect{\conj{Y_{i_r}}\mid \H})^{g_r} \prod_{l=1}^s (Y_{j_l} - \expect{Y_{j_l} \mid \H})^{e_{q+s}}\right. \notag\\
&\left. \hspace*{1.0in} \prod_{m=1}^t ( Y_{k_m} - \expect{Y_{k_m} \mid \H})^{g_{q+t}} \right] \notag \\
& =\prod_{r=1}^q \mathbf{E}\left[(Y_{i_r} - \expect{Y_{i_r} \mid \H})^{e_r} (\conj{Y_{i_r}} - \expect{\conj{Y_{i_r}}\mid \H})^{g_r}
\prod_{l=1}^s \expect{ (Y_{j_l} - \expect{Y_{j_l} \mid \H})^{e_{q+s}}} \right. \notag \\
& \hspace*{0.75in}\left. \prod_{m=1}^t\expect{( Y_{k_m} - \expect{Y_{k_m} \mid \H})^{g_{q+t}}} \right]\notag \\
& \le \prod_{r=1}^q \left(\frac{ \abs{x_{i_r}}^{2p-2}a(e_r+g_r)F_2}{\rho C} \right)^{ (e_r+g_r)/2} \prod_{l=1}^s \abs{x_{j_l}}^{pe_{q+l}} n^{-\Omega(e_{q+l})} \prod_{m=1}^t \abs{x_{k_m}}^{pg_{q+m}}n^{-\Omega(g_{q+m})}
\end{align}
by Lemmas~\ref{lem:midcentral:conj} and ~\ref{lem:midcentral:noconj}.

Let $S_1 = e_1 + \ldots + e_q$ and $S_2 = g_1 + \ldots + g_q$. By the interpretation above, $S_2 =0$ iff $S_1 = 0$.
Rewrite Eqn.~\eqref{eq:dcentmom:midG0:2:t1} as follows. Since, $e_1 + \ldots + e_{q+s} = d_1$ and $g_1 + \ldots + g_{q+s} = d_2$,   we will use the identity that
$$ \binom{d_1}{e_1, \ldots, e_{q+s}} = \binom{d_1}{S_1} \binom{S_1}{e_1, \ldots, e_q} \binom{d_1-S_1}{ e_{q+1}, \ldots, e_{q+s}} \enspace . $$
Using Eqn.~\eqref{eq:dcentmom:midG0:2:t2}, Eqn.~\eqref{eq:dcentmom:midG0:2:t1} can be equivalently rewritten as
\begin{align}
&=\sum_{S_1=0}^{d_1} ~~\sum_{\substack{S_2=0 \\ S_1 =0 \Rightarrow S_2=0}}^{d_2} ~~\sum_{\substack{q=0\\ S_1=0 \Rightarrow q=0}}^{\min(S_1, S_2)} ~~\sum_{s = \alpha_0(q)}^{d_1 -S_1} \sum_{t=\alpha_0(q)}^{d_2-S_2}\notag \\
 &\sum_{e_1 + \ldots + e_q = S_1} \sum_{e_{q+1}, \ldots, e_{q+s} = d_1  - S_1} \binom{d_1}{S_1} \binom{S_1}{e_1, \ldots, e_q} \binom{d_1-S_1}{ e_{q+1}, \ldots, e_{q+s}} \binom{d_2}{S_2} \binom{S_2}{g_1, \ldots, g_q} \notag \\
 & \hspace*{2.0in}\binom{d_2-S_2} {g_{q+1}, \ldots, g_{q+s}} \notag \\
& \sum_{\{i_1, \ldots, i_q, j_1, \ldots, j_s, k_1, \ldots, k_t\}}
\E \left[ \prod_{r=1}^q (Y_{i_r} - \expect{Y_{i_r} \mid \H})^{e_r} (\conj{Y_{i_r}} - \expect{\conj{Y_{i_r}}\mid \H})^{g_r} \prod_{l=1}^s (Y_{j_l} - \expect{Y_{j_l} \mid \H})^{e_{q+s}} \right.\notag \\
& \hspace*{2.0in} \left. \prod_{m=1}^t ( Y_{k_m} - \expect{ Y_{k_m} \mid \H})^{g_{q+t}}\right] \notag \\
&\le \sum_{S_1=0}^{d_1} ~~\sum_{\substack{S_2=0 \\ S_1 =0 \Rightarrow S_2=0}}^{d_2} ~~\sum_{\substack{q=0\\ S_1=0 \Rightarrow q=0}}^{\min(d_1, d_2)} ~~\sum_{s = \alpha_0(q)}^{d_1 -S_1} \sum_{t=\alpha_0(q)}^{d_2-S_2}\sum_{e_1 + \ldots + e_q = S_1} \sum_{e_{q+1}, \ldots, e_{q+s} = d_1  - S_1} \notag \\
& \hspace*{0.8in} \binom{d_1}{S_1} \binom{S_1}{e_1, \ldots, e_q} \binom{d_1-S_1}{ e_{q+1}, \ldots, e_{q+s}} \binom{d_2}{S_2} \binom{S_2}{g_1, \ldots, g_q} \binom{d_2-S_2} {g_{q+1}, \ldots, g_{q+s}} \notag \\
& \sum_{\{i_1, \ldots, i_q, j_1, \ldots, j_s, k_1, \ldots, k_t\}} \prod_{r=1}^q \left(\frac{ \abs{x_{i_r}}^{2p-2}a(e_r+g_r)F_2}{\rho C} \right)^{ (e_r+g_r)/2}\prod_{l=1}^s \abs{x_{j_l}}^{pe_{q+l}} n^{-\Omega(e_{q+l})} \notag \\
& \hspace*{2.0in} \prod_{m=1}^t \abs{x_{k_m}}^{pg_{q+m}}n^{-\Omega(g_{q+m})}  \enspace . \label{eq:dcentmom:midG0:2:t3}
\end{align}

Define the following sums $P_1, P_2$ and $P_3$ as functions of $S_1$ and/or $S_2$. We will assume that the constraint: $S_2 = 0$ iff $S_1 = 0$, is satisfied by $S_1$ and $S_2$.
\begin{align} \label{eq:dcentmom:midG0:P1}
P_1(S_1, S_2) &= \sum_{\substack{q=1\\ q=0 \text{ iff } S_1=0}}^{\min(d_1, d_2)} \sum_{e_1 + \ldots + e_q = S_1} \sum_{g_1 + \ldots + g_q = S_2}\binom{ S_1}{e_1, \ldots, e_q} \binom{S_2}{g_1, \ldots, g_q}\notag \\
 & \hspace*{1.0in}\sum_{\{i_1, \ldots, i_q\}} \left(\frac{ \abs{x_{i_r}}^{2p-2}a(e_r+g_r)F_2}{\rho C} \right)^{ (e_r+g_r)/2} \\
P_2(d_1,S_1) & = \sum_{s = \alpha_0(S_1)}^{d_1 - S_1} \sum_{e_{q+1} + \ldots + e_{q+s} = d_1-S_1} \binom{ d_1-S_1}{e_{q+1}, \ldots, e_{q+s}} \sum_{\{j_1, \ldots, j_s\}} \prod_{l=1}^s \abs{x_{j_l}}^{pe_{q+l}} n^{-\Omega(e_{q+l})} \\
& = \left(n^{-\Omega(1)} F_p \right)^{d_1-S_1} \notag
\end{align}
By Lemma~\ref{lem:dcentmom:midG0:1}, and assuming $C \ge K (n^{1-2/p}/\log (n)) \epsilon^{-2} \log (1/\delta)$,   we have,
\begin{align*}
P_1(S_1,S_2) \le \left( \epsilon F_p/20\right)^{S_1+S_2} \enspace .
\end{align*}

 We note that the summation in Eqn.~\eqref{eq:dcentmom:midG0:2:t3} can be bounded above in terms of $P_1(S_1), P_2(d_1, S_1)$ and $P_2(d_2, S_2)$, as follows.
\begin{align}
&\sum_{S_1=0}^{d_1} ~~\sum_{\substack{S_2=0 \\ S_1 =0 \Rightarrow S_2=0}}^{d_2} ~~\sum_{\substack{q=0\\ S_1=0 \Rightarrow q=0}}^{\min(d_1, d_2)} ~~\sum_{s = \alpha_0(q)}^{d_1 -S_1} \sum_{t=\alpha_0(q)}^{d_2-S_2} \sum_{e_1 + \ldots + e_q = S_1} \sum_{e_{q+1}, \ldots, e_{q+s} = d_1  - S_1}
\notag \\
&
 \binom{d_1}{S_1} \binom{S_1}{e_1, \ldots, e_q} \binom{d_1-S_1}{ e_{q+1}, \ldots, e_{q+s}} \binom{d_2}{S_2} \binom{S_2}{g_1, \ldots, g_q} \binom{d_2-S_2} {g_{q+1}, \ldots, g_{q+s}} \notag \\
& \sum_{\{i_1, \ldots, i_q, j_1, \ldots, j_s, k_1, \ldots, k_t\}} \prod_{r=1}^q \abs{x_{i_r}}^{(p-1)(e_r+q_r)} \prod_{l=1}^s \abs{x_{j_l}}^{pe_{q+l}} n^{-\Omega(e_{q+l})} \prod_{m=1}^t \abs{x_{k_m}}^{pg_{q+m}}n^{-\Omega(g_{q+m})} \\
 & \le \sum_{S_1=0}^{d_1} \binom{d_1}{S_1} \sum_{S_2=0}^{d_2} \binom{d_2}{S_2}
~~\sum_{\substack{q=0\\ S_1=0 \Rightarrow q=0}}^{\min(d_1, d_2)} ~~\sum_{s = \alpha_0(q)}^{d_1 -S_1} \sum_{t=\alpha_0(q)}^{d_2-S_2}\sum_{e_1 + \ldots + e_q = S_1} \sum_{e_{q+1}, \ldots, e_{q+s} = d_1  - S_1} \notag \\
 &\hspace*{1.0in} \binom{S_1}{e_1, \ldots, e_q} \binom{d_1-S_1}{ e_{q+1}, \ldots, e_{q+s}} \binom{S_2}{g_1, \ldots, g_q} \binom{d_2-S_2} {g_{q+1}, \ldots, g_{q+s}} \notag \\
 &\sum_{\{i_1, \ldots, i_q\}} \sum_{\{j_1, \ldots, j_s\}} \sum_{\{k_1, \ldots, k_t\}} \prod_{r=1}^q \abs{x_{i_r}}^{(p-1)(e_r+q_r)} \prod_{l=1}^s \abs{x_{j_l}}^{pe_{q+l}} n^{-\Omega(e_{q+l})}  \prod_{m=1}^t \abs{x_{k_m}}^{pg_{q+m}}n^{-\Omega(g_{q+m})} \notag \\
& \le \sum_{S_1=0}^{d_1} \binom{d_1}{S_1} \sum_{S_2=0}^{d_2} \binom{d_2}{S_2} P_1(S_1, S_2) P_2(d_1, S_1) P_2(d_2, S_2)  \label{eq:dcentmom:midG0:2:t4}
\end{align}
Substituting the simplified expressions for $P_1(S_1, S_2)$ and $P_2(d_1, S_1)$ and $P_2(d_2, S_2)$, Eqn.~\eqref{eq:dcentmom:midG0:2:t4} is bounded above as follows.
\begin{align*}
& \sum_{S_1=0}^{d_1} \binom{d_1}{S_1} \sum_{S_2=0}^{d_2} \binom{d_2}{S_2} P_1(S_1, S_2) P_2(d_1, S_1) P_2(d_2, S_2) \\
& \le \sum_{S_1=0}^{d_1} \binom{d_1}{S_1} \sum_{S_2=0}^{d_2} \binom{d_2}{S_2} \left( \frac{\epsilon F_p}{20} \right)^{S_1+S_2} \left( n^{-\Omega(1)}F_p\right)^{d_1+d_2-S_1-S_2} \\
& \le \left( \frac{ \epsilon F_p}{20} + n^{-\Omega(1)} F_p \right)^{d_1+ d_2} \\
& \le \left( \frac{\epsilon F_p}{10} \right)^{d_1+d_2}
\end{align*}

\end{proof}

\subsection*{Putting things together}

The following lemma, which is a restatement of Lemma~\ref{lem:dcentmom} is applied  for the case when $\delta = O(n^{-c})$, for some constant $c$, but holds generally. Recall, that as stated, for the case $\delta = n^{-\Theta(1)}$, the shelf structure is not needed.
For the statement and proof of Lemma~\ref{lem:dcentmom}, let $\H = \G \wedge \nocollision\wedge \goodest$.
\begin{lemma} [Re-statement of Lemma~\ref{lem:dcentmom}]
Let $C \ge Kn^{1-2/p}\epsilon^{-2}\log(1/\delta)/\log (n) + L n^{1-2/p} \epsilon^{-4/p}\\ \log^{2/p}(1/\delta)$, where, $K$ is the constant from Lemma~\ref{lem:dcentmom:midG0:2} and $L$ is the constant from Lemma~\ref{lem:dmf2}.  Let $B$ be such that  $C/B \ge (5p)^2$.
Then,
\begin{align*}
\expect{ \left( \sum_{i \in [n]} (Y_i - \expect{Y_i \mid \H}) \right)^d \left( \sum_{i \in [n]} ( \conj{Y_i} - \expect{\conj{Y_i} \mid \H})\right)^d \mid \H} \le \left( \frac{ \epsilon F_p}{5} \right)^{2d}
\end{align*}
\end{lemma}

\begin{proof} Note that in this proof, the terms such as $\expect{Y_i}$ or $ \expect{ \conj{Y_i}}$ are written for brevity, they should be interpreted as $\expect{Y_i \mid \H}$ and $\expect{ \conj{Y_i} \mid \H}$ respectively.

Let $S_1 = \lmargin(G_0) \cup_{l=1}^L G_l$. Recall that for the case $ \delta = n^{-O(1)}$, $\midreg(G_0) = \{i : \abs{x_i} \ge T_0(1+\epsbar)\}$, corresponding to the frequency range $[T_0(1+\epsbar), \infty)$.
\begin{align*}
 &\left(\sum_{i \in S} (Y_i - \expect{Y_i \mid \H}) \right)^d\\
 &= \left ( \left( \sum_{i \in \midreg(G_0)} (Y_i - \expect{Y_i \mid \H}) \right)+ \left( \sum_{i \in S_1} (Y_i - \expect{Y_i \mid \H}) \right) \right)^d \\
& = \sum_{d_1=0}^d \binom{d}{d_1} \left( \sum_{i \in  \midreg(G_0)} (Y_i - \expect{Y_i \mid \H}) \right)^{d_1} \left( \sum_{i \in S_1} (Y_i - \expect{Y_i \mid \H}) \right) ^{d-d_1}
\end{align*}
Similarly,
\begin{multline*}
\left(\sum_{i \in [n]} (\conj{Y_i }- \expect{\conj{Y_i} \mid \H}) \right)^d =
\sum_{d_2=0}^d \binom{d}{d_2} \left( \sum_{i \in  \midreg(G_0)} (\conj{Y_i} - \expect{\conj{Y_i} \mid \H}) \right)^{d_2} \\ \left( \sum_{i \in S_1} (\conj{Y_i} - \expect{\conj{Y_i} \mid \H}) \right) ^{d-d_2}
\end{multline*}
Taking the product and then its expectation, we obtain,
\begin{align}
&\expect{ \left( \sum_{i \in [n]} (Y_i - \expect{Y_i \mid \H}) \right)^d \left( \sum_{i \in [n]} ( \conj{Y_i} - \expect{\conj{Y_i} \mid \H})\right)^d \mid \H}\notag \\
& = \sum_{d_1=0}^d  \sum_{d_2=0}^d\binom{d}{d_1} \binom{d}{d_2}\E\left[  \left( \sum_{i \in \midreg(G_0)} (Y_i - \expect{Y_i \mid \H}) \right)^{d_1}\left( \sum_{i \in \midreg(G_0)} (\conj{Y_i} - \expect{\conj{Y_i} \mid \H}) \right)^{d_2}\right. \notag \\
&\hspace*{1.0in}\left. \left( \sum_{i \in S_1} (Y_i - \expect{Y_i \mid \H}) \right) ^{d-d_1}\left( \sum_{i \in S_1} (\conj{Y_i} - \expect{\conj{Y_i} \mid \H}) \right) ^{d-d_2} \mid \H\right] \enspace .  \label{eq:dcentmom:t1}
\end{align}

 We now consider the expectation term in Eqn.~\eqref{eq:dcentmom:t1}. By properties of \nocollision~as discussed earlier, we have,
 \begin{align}
 &\E\left[  \left( \sum_{i \in \midreg(G_0)} (Y_i - \expect{Y_i \mid \H}) \right)^{d_1}\left( \sum_{i \in \midreg(G_0)} (\conj{Y_i} - \expect{\conj{Y_i} \mid \H}) \right)^{d_2}\right. \notag \\
&\left. \hspace*{0.5in}\left( \sum_{i \in S_1} (Y_i - \expect{Y_i \mid \H}) \right) ^{d-d_1}\left( \sum_{i \in S_1} (\conj{Y_i} - \expect{\conj{Y_i} \mid \H}) \right) ^{d-d_2} \mid \H\right] \notag \\
& = \E\left[  \left( \sum_{i \in \midreg(G_0)} (Y_i - \expect{Y_i \mid \H}) \right)^{d_1}\left( \sum_{i \in \midreg(G_0)} (\conj{Y_i} - \expect{\conj{Y_i} \mid \H}) \right)^{d_2} \mid \H\right] \notag \\
&\hspace*{0.5in} \expect{\left( \sum_{i \in S_1} (Y_i - \expect{Y_i \mid \H}) \right) ^{d-d_1}\left( \sum_{i \in S_1} (\conj{Y_i} - \expect{\conj{Y_i} \mid \H}) \right) ^{d-d_2} \mid \H} \notag \\
& \le \left( \frac{ \epsilon F_p}{10} \right)^{d_1 + d_2} \left( \frac{\epsilon   F_p}{10} \right)^{d-d_1+d-d_2}  \notag\\
&= \left( \frac{ \epsilon F_p}{10} \right)^{2d} \label{eq:dcentmom:t2}
\end{align}
where, the second to last equation follows from Lemmas~\ref{lem:dcentmom:midG0:2} and ~\ref{lem:dmf2} respectively.
Substituting in Eqn.~\eqref{eq:dcentmom:t1}, we obtain
\begin{align*}
&\expect{ \left( \sum_{i \in [n]} (Y_i - \expect{Y_i \mid \H}) \right)^d \left( \sum_{i \in [n]} ( \conj{Y_i} - \expect{\conj{Y_i} \mid \H})\right)^d \mid \H}\\
& \le \left( \frac{ \epsilon F_p}{10} \right)^{2d}\sum_{d_1=0}^d  \sum_{d_2=0}^d\binom{d}{d_1} \binom{d}{d_2}\\
& = \left( \frac{ \epsilon F_p}{10} \right)^{2d} 2^{2d} \\
& = \left( \frac{ \epsilon F_p}{5} \right)^{2d} \enspace .
\end{align*}

\end{proof}
\begin{lemma}[ Second part of Lemma~\ref{lem:dcentmom} restated.] \label{lem:dcentmom:part2}
Let $C \ge Kn^{1-2/p}\epsilon^{-2}\log(1/\delta)/\log (n) + L n^{1-2/p}  \allowbreak\epsilon^{-4/p} \log^{2/p}(1/\delta)$, where, $K,L$ are suitable constants.
Then, for $d = \lceil \log (1/\delta) \rceil$,
\begin{align*}
\mathbf{E}\biggl[ \biggl( \sum_{i \in [n]} (Y_i - \expect{Y_i}) \biggr)^d \biggl( \sum_{i \in [n]} ( \conj{Y_i} - \expect{\conj{Y_i}})\biggr)^d \mid \H\biggr] \le \biggl( \frac{ \epsilon F_p}{5} \biggr)^{2d}
\end{align*} It follows that
$\prob{ \left\lvert\sum_{i \in [n]}  (Y_i - \expect{Y_i}) \right\rvert \ge (\epsilon/2) F_p  \mid \H}
\le \delta^2 $. Hence, for $\delta = n^{-O(1)}$,
$$ \prob{ \left\lvert\sum_{i \in [n]}  (Y_i - \expect{Y_i}) \right\rvert \ge (\epsilon/2) F_p } \le \delta$$
\end{lemma}

\begin{proof}

\begin{align*}
&\prob{ \left\lvert\sum_{i \in [n]}  (Y_i - \expect{Y_i}) \right\rvert \ge (\epsilon/2) F_p \mid \H} \\
& = \prob{ \left\lvert\sum_{i \in [n]}  (Y_i - \expect{Y_i}) \right\rvert^{2d} \ge (\epsilon F_p/2)^{2d} \mid \H} \\
& \le \frac{\expect{ \left\lvert\sum_{i \in [n]}  (Y_i - \expect{Y_i}) \right\rvert^{2d} \mid \H}}{ (\epsilon F_p/2)^{2d}}\\
& = \frac{1}{ (\epsilon F_p/2)^{2d}} \expect{ \left(\sum_{i \in [n]}  (Y_i - \expect{Y_i}) \right)^{d}\left(\sum_{i \in [n]}  (\bar{Y}_i - \expect{\bar{Y}_i}) \right)^{d} \mid \H} \\
& \le \frac{\left( \epsilon F_p/5\right)^{2d}}{(\epsilon F_p/2)^{2d}}, \hspace*{1.0in}\text{by first part of  Lemma ~\ref{lem:dcentmom}.}\\
& \le (2/5)^{2d} \\
& \le (2/3)\delta^2
\end{align*}
since, $ d = \lceil \log (1/\delta) \rceil$ and $d \ge 1$.

Now, unconditioning with respect to $\H$, we have,
\begin{align} \label{eq:dcentmom:uncond}
&\prob{ \biggl\lvert\sum_{i \in [n]}  (Y_i - \expect{Y_i}) \biggr\rvert \ge (\epsilon/2) F_p } \notag \\
& = \prob{ \biggl\lvert\sum_{i \in [n]}  (Y_i - \expect{Y_i}) \biggr\rvert \ge (\epsilon/2) F_p \mid \H} \prob{\H} \notag \\
&~~~~~~ + \prob{ \biggl\lvert\sum_{i \in [n]}  (Y_i - \expect{Y_i}) \biggr\rvert \ge (\epsilon/2) F_p \mid \neg\H}  \prob{\neg H} \notag \\
& \le (2/3)\delta^2 \cdot  \prob{H} + (1-\prob{H})
\end{align}
We have $\prob{H} \ge 1- n^{-c}$, for any constant $c\ge 1$. Since, $\delta = n^{-O(1)}$, we can choose $c$ so that $ \prob{H} \ge 1- \frac{\delta}{10}$. Then, Eqn.~\eqref{eq:dcentmom:uncond} is bounded above by
\begin{align*}
& \le (2/3)\delta^2 ( 1- \delta/10) + \delta /10 \\
& \le \delta \enspace .
\end{align*}
\end{proof}

\begin{theorem} [Re-statement of Theorem~\ref{thm:ub}]For each $0 < \epsilon < 1$ and $7/8 \ge \delta \ge n^{-c}$, for any constant $c$, there is a sketching  algorithm that $(\epsilon, \delta)$-approximates $F_p$ with sketching dimension
$O\left(n^{1-2/p}\left(\epsilon^{-2} \log (1/\delta) + \epsilon^{-4/p} \log^{2/p}(1/\delta)\log n\right)\right)$ and update time (per stream update)  $O((\log^2 n)(\log (1/\delta)))$.
\end{theorem}
\begin{proof} 
The correctness of the algorithm follows from Lemma~\ref{lem:dcentmom:part2}.

The algorithm uses $O(C  \log (n)) = O(n^{1-2/p} \epsilon^{-2}\log (1/\delta)  + n^{1-2/p} \epsilon^{-4/p} \log^{2/p}(1/\delta) \log (n)) $ sketches  at the lowest level structure. The other  structures at levels $ 0,1,2, \ldots, L$ are geometrically decreasing in size with common ratio $\alpha$, and hence, the space is dominated by a constant times the space used at level $0$, that is,  $O(C \log n) = O(C \log^{2/p}(1/\delta) \epsilon^{-4/p} (\log n) )$ sketches.

The degree of independence used is $O((\log n) + (\log (1/\delta))$ for the roots of unity sketches and the hash functions for the \est~structure at each level. Consequently, the time taken to process a stream update is $O((\log n) + (\log (1/\delta)))$.
\end{proof}

\newpage

\section{Extending the algorithm for $n^{-\Omega(1)}$  failure probability to $2^{n^{-\Omega(1)}} $ failure probability  } \label{sec:shelf1}

We have so far shown an algorithm for computing an  $(\epsilon, \delta)$-approximation to  $F_p$ when  the failure probability $ \delta \ge n^{-c}$, for some constant $c$. We will now extend the analysis to the case when  $\delta$ is $n^{-\omega(1)}$ , that is,  $\delta = o(1/\text{poly}(n))$.

\subsection{The \ghss~structure and the event $\G$}

Consider the event $\G$ defined as a conjunction of events earlier in Section \ref{sec:algo}.
In \cite{g:arxiv15}, the set of good events  $\G$ as defined is a proper superset of the events constituting $\G$ in this work, and that set  was shown to hold with probability $1- n^{-\Omega(1)}$.
So far, we have looked at the case when  $\delta = n^{-O(1)}$. In this case, it suffices to show that $\G$ holds with probability $1- n^{-c'}$, for any constant $c'$. Since, this follows from the previous treatment in \cite{g:arxiv15}, no separate arguments were given.

Since now we consider the case when $\delta = n^{-\omega(1)}$  we have to at least show that $\G$ (as defined in Section~\ref{sec:algo}) holds with probability $1- \delta^{\Omega(1)}$. In order to do so, we first prove an extension of Lemma~34 from \cite{g:arxiv15} that shows  that not only is $C_L = n^{\Omega(1)}$ ( as shown in \cite{g:arxiv15}), but for a  suitable choice of the parameter $\nu = \Omega(1)$,  $C_L \ge (\epsilon^{-2}\log(1/\delta))^{1 + c}$, where, $c > 0$ is a constant. Using a theorem from \cite{sss:soda93}, this then implies that $\G$ holds with probability $1- \delta^{\Omega(1)}$. We first show that the  \smallres~event holds with probability $1-\delta/n^{\Omega(1)}$, and consequently, \lastlevel~also holds with probability $1-\delta/n^{\Omega(1)}$.

The events $\nocollision$, \accuest~and \smallhh~ are unchanged and hold with probability $1-n^{-\Omega(1)}$. These however are not part of the good event $\G$.

\begin{lemma} \label{lem:Clastlevel} [Extension  of Lemma 25 in \cite{g:arxiv15}.]
Let $\alpha = 1- (1-2/p)\nu$, for $\nu < (\ln 2)/8$. Then, for $p > 2$ and $\displaystyle \nu < \frac{ p/2 + 2/p-2}{(1-2/p)^2 (4/\ln 2)}$, we have, (1)  $C_L \ge n^{\Omega(1)}$ and (2) $C_L \ge (\epsilon^{-2}\log(1/\delta))^{1 + c}$, where, $c > 0$ is a constant depending on $p$.
\end{lemma}
\begin{proof}
Let $ \alpha = 1 - (1-2/p)\nu$ and let $\gamma = 1-\alpha$.
Following the notation of \cite{g:arxiv15}, we have,   $C = K n^{1-2/p}$, where, $K = \kappa \cdot \epsilon^{-4/p}\log^{2/p}(1/\delta)$. As shown in Lemma 25 in \cite{g:arxiv15},
\begin{align} \label{eq:lem:lastlevel:CL} C_L \ge  K^{1 + 4\gamma/(\ln 2)}  \cdot n^{1-2/p - (2/p)(4\gamma/(\ln 2))} \enspace .
\end{align}
From  the lower bound of $\Omega(n^{1-2/p} \epsilon^{-2} \log (1/\delta)$ for $(\epsilon, \delta)$-approximating $F_p$, we have that
$\log (1/\delta) = O(\epsilon^{2} n^{2/p})$  or that, $n \ge \Omega \left( \epsilon^{-p}\log ^{p/2} (1/\delta) \right)$.  Let $G  = \log (1/\delta)$. Substituting in Eqn.~\eqref{eq:lem:lastlevel:CL}, we have,
\begin{align}\label{eq:lem:lastlevel:t0}
C_L & \ge K^{1 + 4\gamma/(\ln 2)}  \cdot n^{1-2/p - (2/p)(4\gamma/(\ln 2))} \notag \\
& \ge (\epsilon^{-2} G)^{ (2/p)(1 + 4 \gamma/(\ln 2))} (\epsilon^{-2} G)^{ (p/2)\left( 1-2/p - (2/p)(4 \gamma/\ln 2) \right)} \notag \\
& = ( \epsilon^{-2}G)^{ (2/p) + (p/2)-1  - (1-2/p)(4 \gamma/(\ln 2))} \enspace .
\end{align}
The exponent of $\epsilon^{-2} G$ in ~\eqref{eq:lem:lastlevel:t0} equals
\begin{align}\label{eq:lem:lastlevel:t1}
&(p/2) + (2/p)-1 - (1-2/p)(4 \gamma/\ln 2) = (p/2) + (2/p) -1 - (1-2/p)^2 (4\nu/\ln 2)
\end{align}
using the fact that  $\gamma = 1- \alpha   = (1-2/p)\nu$.

In Eqn.~\eqref{eq:lem:lastlevel:t1}, note that $p/2 + 2/p = (\sqrt{(p/2)} - \sqrt{2/p})^2 +2 > 2$, since $p > 2$. By choosing $\displaystyle \nu < \frac{ p/2 + 2/p-2}{(1-2/p)^2 (4/\ln 2)}$, the exponent of $\epsilon^{-2} G$ in Eqn.~\eqref{eq:lem:lastlevel:t0} is greater than 1.
Therefore, $\nu$ can be chosen small enough so that
$$ C_L \ge (\epsilon^{-2} \log (1/\delta))^{1 +c } $$
for some constant $c > 0$.

\end{proof}

 Lemma~\ref{lem:Clastlevel} ensures that for levels $l \in [0, \ldots, L]$, $C_l > \omega(\log (1/\delta))$. This ensures that $ \log (1/\delta) \le  \lceil C_l e^{-1/3}\rceil $. This in turn implies that  Lemma 26 of \cite{g:arxiv15} holds with $d = O(\log (1/\delta))$ provided, the hash functions $g_1, \ldots, g_L$ are drawn from a $d$-wise independent family. It then follows that the events,  $\goodtopk(\{B_l\}_{l \in L}), \smallu( \{C_l\}_{l \in L})$, each holds with probability $1-(\delta/n)^{\Omega(1)}$. This implies that \smallres~ holds with probability $1 - (\delta/n^{\Omega(1)})$.

 The top-most level $L$ of  \ghss~uses the deterministic compressive sensing based algorithm  \cite{crt:ieeetit06a, donoho:tit06} for the recovery of $x_{iL}$ for those items $i$ that hash to level $L$. These techniques guarantee the deterministic  recovery of any $k$-sparse $n$-dimensional vector $y$ such that $ \norm{y - x}_2 \le C' \min_{k\text{-sparse } x'} \norm{x' - x_L}_1$, where, $C'>1$ is a constant,  using $m = O(k \log (n/k))$ measurements. Let $x_L$ denote the vector of frequencies of items that are sampled into level $L$, that is, $(x_L)_i = x_i$, if $i \in \stream_l$ and $(x_L)_i = 0$ otherwise. Following the arguments of Lemma 28 in \cite{g:arxiv15}, we have with probability $1 - \delta/n^{\Omega(1)}$ that, $\card{\{ i \in\stream_L\}} \le 2C_L$. Therefore, $x_L$ has at most $O(C_L)$ non-zero entries, and by compressive sensing,  using $m = O(C_L \log (n/C_L)) = O(C_L \log n)$ measurements, these entries are exactly recovered.  Hence, with this modification, \lastlevel~holds with probability $1- \delta/n^{\Omega(1)}$.

 We now prove Lemma~\ref{lem:Gfails}.

\eat{ \begin{lemma}[Restatement of Lemma \ref{lem:Gfails}]
 The number of items for which $\nocollision(i)$ does not hold is at most $O(\log(1/\delta)/\log (n))$ with probability $1 - \delta/10$. Analogously, the number of items for which $\accuest(i)$ and $\smallhh(i)$ does not hold is at most $O(\log (1/\delta)/\log (n))$ with probability $1- \delta/10$ each.
\end{lemma}

\begin{proof} First, we prove the statement of the lemma assuming full independence of the hash functions. Fix a level $l$ and let $i \in \stream_l$. As shown in \cite{g:arxiv15}, $\prob{\nocollision(i) \text{ fails} \mid  i \in \stream_l } \le n^{-13} = n^{-\Omega(1)} = q $\text{ say}.  Let $X = \{i \in \stream_l : \nocollision(i) ~\textrm{fails}\}$. Let $R = \card{\{ i: i \in \stream_l\}}$.
Hence, due to full independence,
\begin{align*}
\prob{X\ge t} = \sum_{k=t}^R \binom{n}{k} q^k (1-q)^{n-k} \le  \sum_{k = t}^n \binom{n}{k} q^k  \le  \sum_{k=t}^n (nq)^k = O(nq)^t = e^{t \ln (nq)} = e^{ -t \Omega(\ln n)}
\end{align*}
Therefore, for $t = O(\ln (1/\delta)/\ln n)$, $\prob{ X \ge t} \le \delta^{\Omega(1)}$.
\end{proof}
}

\subsubsection{Number of items not satisfying \textsc{goodest} or \textsc{nocollision}}
In this section, we try to calculate the number of items $i$ in the \ghss~levels  that do not satisfy \goodest$(i)$, with probability at most $1 - O(\delta)$. These are the items that can possibly be un-estimated, or underestimated, or overestimated, i.e., may cause $\hat{F}_p$ to be in error.

For the analysis, consider the \countsketch$(C,s)$ structure \cite{ccf:icalp02} with $s$ hash tables, denoted $T_1, \ldots, T_s$, each consisting of $C'=O(C)$ buckets. The hash functions for the respective tables are denoted by $h_1, \ldots, h_s$. Let $\{\xi_{ri}\}_{i\in [n]}$ denote the family of Rademacher variables used for the sketches in table $T_r$.  For the initial part of the analysis, we just consider one independent copy of the $s$ repetitions, and denote the hash table by $T$, the hash function by $h$ and the Rademacher family $\{\xi_j \mid j \in [n]\}$. For each $b \in [C]$, $$T[b] = \sum_{j: h(j)=b} x_j \xi_j \enspace .$$ Fix $i \in [n]$. Following the definition of \emph{estimate} for $x_i$ obtained from the single table copy under hash function $h$ as given in \cite{ccf:icalp02}, we have,
\begin{equation} \label{eq:cskbasic}\hat{x}_i = T[h(i)] \cdot \xi_i = x_i + \biggl(\sum_{j \ne i} x_j \xi_j  \chi_{ij}\biggr)\xi_i \end{equation}
where, $\chi_{ij}$ is an indicator variable such that $\chi_{ij} =1$ iff $h(i) = h(j)$.  We will say that the \emph{estimate for $i$ is good under $h$} if the following event holds:
$$ \text{Good Estimate}(i):~~ \card{\hat{x_i} - x_i} < \frac{\sqrt{24}}{\sqrt{C}} \norm{x}_2 $$
We consider the following question: Let $S_k = \{i_1, \ldots, i_k\} \subset [n]$ be any given fixed subset of $k$ distinct items, where, $k = C/20$. Obtain an upper bound for the probability $$\prob{ \ge \text{Good Estimate}(i_1) \wedge \cdots \wedge \neg  \text{Good Estimate}(i_k)}$$

Let $x^{h(i)}$ denote the $n$-dimensional vector $\begin{bmatrix} x_1 \cdot \chi_{i,1}, x_2 \cdot \chi_{i,2}, \ldots, x_n  \cdot \chi_{i,n}\end{bmatrix}$. Clearly,
$$ \norm{x^{h(i)}}_2^2 =  x_i^2 + \sum_{j\in [n], j \ne i} x_j^2 \chi_{i,j} $$
Taking expectations,
$$\expect{ \norm{x^{h(i)}}_2^2} = x_i^2 + \frac{\norm{x}_2^2 - x_i^2}{C} \le x_i^2 + \frac{ \norm{x}_2^2}{C} \enspace . $$
By Markov's inequality applied to the non-negative variable $\norm{x^{h(i)}}^2 - x_i^2$, we have,
$$\prob{ \norm{x^{h(i)}}^2 - x_i^2 \le \frac{ 8 \norm{x}_2^2}{C}} \ge \frac{7}{8} $$
Define the event \emph{GoodBucketNorm$(i)$} as follows:
$$ \emph{GoodBucketNorm}(i):  \norm{x^{h(i)}}^2 - x_i^2 \le \frac{ 8 \norm{x}_2^2}{C} $$
Fix the hash function $h$. This fixes the function $\chi_{ij}$. From Eqn.~\eqref{eq:cskbasic}, we have,
$$ \card{ \hat{x}_i - x_i} = \biggl\lvert \sum_{j \ne i} x_j \xi_j  \chi_{ij} \biggr\rvert $$
By Azuma-Hoeffding's bound
\begin{align*}
\prob{ \card{ \hat{x}_i - x_i} > t} \le 2\exp{ - \frac{2t^2}{\norm{x^{h(i)}}^2}}
\end{align*}
Choose $t = \frac{ \sqrt{24} \norm{x}_2}{\sqrt{C}}$.
Define the event
$$\text{\em SmallDeviation}(i): ~~\card{\hat{x}_i - x_i} \le \frac{ \sqrt{24} \norm{x}_2}{\sqrt{C}}$$
Conditioned on \emph{GoodBucketNorm}$(i)$, $$
\prob{\emph{SmallDeviation}(i) \mid \emph{GoodBucketNorm}(i)} \ge 1- 2 e^{-6} $$
Therefore,
\begin{align*}
\prob{ \emph{SmallDeviation}(i), \emph{GoodBucketNorm}(i)} \ge (1-2e^{-6}) \cdot \frac{7}{8} \ge \frac{3}{4}
\end{align*}
as deduced in \cite{ccf:icalp02}.

\emph{ Note 1.} that we use Azuma-Hoeffding's bound to obtain constant confidence of the form $1 - 2e^{-6}$, and hence the same can be obtained using a $d$th moment method for even and constant $d$. This in turn requires only $d = O(1)$-wise independence of the Rademacher variables and the hash function. \\
\noindent \emph{Note 2.} We intend to use the analysis for sketches involving complex roots of unity and not Rademacher variables. The Azuma-Hoeffding inequality can be applied for each of the real and complex part separately, since each of them are zero mean (since, $\expect{\omega} = 0$). The constant factor increases by a factor of 2.

We now return to the question posed earlier in the section. Let $S_k = \{i_1, \ldots, i_k\}$ be a fixed given set of items from $[n]$. Suppose the items in $[n]$ are populated as follows. First, say all items in $[n] \setminus S_k$ are inserted into the table. Next, the items $i_1, i_2, \ldots, i_k$ are inserted one by one incrementally. Say that the state of $i_1$ is good (i.e, 1) if \emph{GoodEstimate}$(i_1)$ holds and not good (i.e., 0) otherwise. Similarly, after the insertion of $i_1$, when $i_2$ is inserted, the state of $i_2$ can be either 1 or 0, and so on. Let $s^j \in \{0,1\}^j$ denote the state vector such that $s^j_t$ is 1 iff the state of $i_t$ is good and 0 otherwise, for $t \in [j]$. We wish to consider the probability
$$\prob{\emph{GoodEstimate}(i)\mid s^j} \enspace . $$
Define the event  \emph{Isolation}$(j+1)$ to mean that $i_{j+1}$ does not collide with $i_1, \ldots, i_j$. This happens with probability at least $1- \frac{t}{C} \ge 19/20$, since it is assumed that $j+1 \le k \le C/20$.
Therefore, by union bound,
\begin{align*}
\prob{ \emph{GoodBucketNorm}(i_{j+1}), \emph{Isolation}(i_{j+1}) \mid s^j} \ge 1- \frac{1}{20} - \frac{1}{8} = \frac{33}{40}
\end{align*}
Therefore,
\begin{align*}
&\prob{\emph{GoodEstimate}(i_{j+1}) \mid s^j}\\
& \ge \prob{\emph{SmallDeviation}(i_{j+1}), \emph{GoodBucketNorm}(i_{j+1}), \emph{Isolation}(i_{j+1}) \mid s^j} \\
& = \prob{ \emph{SmallDeviation}(i_{j+1}) \mid \emph{GoodBucketNorm}(i_{j+1}), \emph{Isolation}(i_{j+1}), s^j} \\
&\hspace*{1.0in} \cdot  \prob{\emph{GoodBucketNorm}(i_{j+1}), \emph{Isolation}(i_{j+1}) \mid s^j}\\
& \ge (1-e^{-6})(33/40)\\
& \ge \frac{4}{5} \enspace.
\end{align*}
We have thus shown that the probability that the item $i_{j+1} \in S_k$ satisfies \emph{GoodEstimate} is at least $4/5$, no matter what the state of the items $i_1, \ldots, i_j$ may be.

We can now introduce $s$ independent copies of the hash function $h$ as $h_1, \ldots, h_s$, and corresponding tables $T_1, \ldots, T_s$, where, the hash function $h_r$ is used for table $T_r$, $r \in [s]$. Let $\sigma^j = (\{0,1\}^{j})^s$ be the state vector for the state of items $i_1, \ldots, i_j$ in each of the tables $T_1, \ldots, T_s$. That is, the $j$-bit vector $\sigma^j_{1 \ldots j}$ represents the state of items $i_1, \ldots, i_j$ in $T_1$, and in general, the $j$-bit segment $\sigma^j_{ (r-1)j+1, \ldots, rj}$ represents the state of the items $i_1, \ldots, i_j$ in table $T_r$, for $r=1,2, \ldots, s$.
 Denote by \emph{GoodEstimate}$(r, i_j)$ the event that in table $T_r$, the item $i_j$ satisfies the event \emph{GoodEstimate}, that is, $\card{ \hat{x}_i - x_j} < \sqrt{\frac{24}{C}}\norm{x}_2$.
By independence  of the hash functions $h_1, \ldots, h_s$, for any $r \in [s]$,
$$ \prob{ \emph{GoodEstimate}(r, i_{j+1}) \mid \sigma^j} = \prob{ \emph{GoodEstimate}(r, i_{j+1}) \mid \sigma^j_{(r-1)j+1, \ldots, rj}} \ge 4/5 $$

For any $j \in [k]$, let $G_j$ denote the number of tables in which the bucket to which $i_j$ maps provides a good estimate, that is,
$$G_j = \card{ \{r \in [s] \mid \emph{GoodEstimate}(i_j) \text{ holds } \}}$$
For analysis purposes, let $\sigma^j$ denote the state of the buckets to which items $i_1, \ldots, i_j$ maps in each of the tables $T_1, \ldots, T_s$. Let
$$G_{j+1 \mid \sigma^j} = \card{ \{r \in [s] \mid \emph{GoodEstimate}(i_{j+1}) \text{ holds  conditional on state being } \sigma^j \}}$$
By the above calculation, the probability that \emph{GoodEstimate}$(i_{j+1})$ holds conditional on the state being any $\sigma^j$ is at least $4/5$.
What is the probability that the \countsketch~estimate, that is, \emph{median}$_{r=1}^s \hat{x}_{r, i_j}$ is not a good estimate? This is the probability $\prob{G_{j+1} \le s/2 \mid \sigma^j }$. Let $ g_{r,j+1}$ be the indicator variable that  is 1 iff \emph{GoodEstimate}$(i_{j+1})$ holds
 conditional on the state being $\sigma^j$. Then,
 $$G_{j+1} = \sum_{r=1}^s g_{r,j+1} $$ Now, $ \prob{ g_{r,j+1} \mid \sigma^j} \ge 4/5$. By Chernoff's bounds,
 \begin{align*}
 \prob{G_{j+1} < s/2 \mid \sigma^j} \le \exp{-\Theta(s)}, ~~~~ \text{ for any feasible $\sigma^j$}, j = 0, \ldots, k-1
 \end{align*}
Let $E_j$ be the union of any arbitrary states $\sigma^j$, that is, $E_j = \sigma^{j,1} \vee \ldots \vee \sigma^{j,N}$, for some $N$. For $N=2$,
\begin{align*}
&\prob{g_{r,j+1} \mid \sigma^{j,1} \vee \sigma^{j,2}} \\
& =
 \frac{\prob{g_{r,j+1} \wedge ( \sigma^{j,1} \vee \sigma^{j,2}) }}{\prob{\sigma^{j,1} \vee \sigma^{j,2}}}\\
& = \frac{ \prob{g_{r,j+1}, \sigma^{j,1}} + \prob{g_{r,j+1}, \sigma^{j,2}}}{\prob{\sigma^{j,1}} + \prob{\sigma^{j,2}}} \\
& = \frac{ \prob{g_{r,j+1} \mid \sigma^{j,1}} \prob{\sigma^{j,1}} + \prob{g_{r,j+1} \mid \sigma^{j,2}} \prob{ \sigma^{j,2}}}{\prob{\sigma^{j,1}} + \prob{\sigma^{j,2}}} \\
& \ge (4/5) \frac{ \prob{\sigma^{j,1}} + \prob{\sigma^{j,2}}}{\prob{\sigma^{j,1}} + \prob{\sigma^{j,2}}}\\
& = (4/5)
\end{align*}

Likewise, by induction, one can show that $ \prob{g_{r+1} \mid E_j} \ge 4/5$. Extending the argument, it can be shown that  $ \prob{G_{j+1} < s/2 \mid E_j} \le \exp{-\Theta(s)}$.
It follows that the probability that each of the $k$ median estimates is not a good estimate is,
\begin{align*}
& \prob{G_1 < s/2, G_2 < s/2, \ldots, G_k < s/2}\\
& = \prob{G_k < s/2 \mid G_{k-1} < s/2, \ldots, G_1 < s/2}  \cdot \prob{ G_{k-1} < s/2 \mid G_{k-2} < s/2 , \ldots, G_1 < s/2}\\
& \hspace*{0.5in}  \cdots \cdot \prob{G_1 < s/2} \\
& \le \exp{-\Theta(s)} \cdots \exp{-\Theta(s)} \\
& = \exp{-k \Theta(s)}
\end{align*}

For any set $S_k = \{i_1, \ldots, i_k\}$, let the $F_{S_k}$ be the event that for each $i_j \in S$, the estimate $\hat{x}_{i_j}$ is not a good estimate of $x_{i_j}$. Therefore, $\prob{F_{S_k}} \le \exp{-k \Theta(s)} $ For any given $k$, and noting that $s = \Theta(\log n)$, we have,
\begin{align*}
&\prob{\exists \text{ at least $k$ distinct items  whose estimates are not good estimates }} \\
& =\prob{ \exists S_k \subset [n], \abs{S_k} = k \text{ s.t. }  F_{S_k} \text{ holds }}\\
& \le \binom{n}{k} \exp{-k \Theta(s)} = \exp{-k(\Theta(s) - \log n)} = \exp{-k\Theta(\log n)}\\
& \le \delta^{O(1)}
\end{align*}
provided, $k \ge O(\log (1/\delta))/\log (n)$.

In an analogous way, it can be shown that the probability that at most $O(\log (1/\delta))/\log (n)$ items may fail to satisfy \nocollision~with probability $1-\delta$.
We have proved the following lemma.
\begin{lemma} [Restatement of Lemma~\ref{lem:Gfails}.]
With probability $1- O(\delta)$, the number of elements for which $\goodest$ fails is at most $O(\log (1/\delta))/(\log n)$.  With probability  $1- O(\delta)$, the number of elements for which $\nocollision$ fails is $O(\log (1/\delta))/(\log n)$.
\end{lemma}

\subsection{Extending the analysis}

 Lemma 29 in \cite{g:arxiv15} can be directly extended to obtain  $\card{\discover_l } \le 2 B_l$, with probability $1 - \delta^{\Omega(1)}$. Further,  $B_l = \Omega(C_L) = \omega(\log (1/\delta))$, with probability $1-\delta n^{-\Omega(1)}$. Using the analysis of Lemma~\ref{lem:Gfails}, the number of elements which are discovered  (i.e., $\abs{\hat{x}_i} \ge Q_l$) and for which $\goodest$ fails is $O(\log (1/\delta)/\log n)$ with probability $1 - \delta/n^{\Omega(1)}$. Hence, $\card{\discover_l} \le 2B_l + O(\log (1/\delta)/\log n \le 2B_l + o(B_L/\log n) \le 3 B_l$, with probability $1 - \delta/n^{\Omega(1)}$.

We now consider the effect of items in $\left(\lmargin(G_0) \cup_{l=1}^L G_l \right) \setminus \gooditems_1$. This set contains  items that for  reasons such as, non-discovery, or discovery followed by collisions, get dropped and their scaled contribution to $\hat{F}_p$ are not added. These also include items that were mistakenly discovered and added to samples. This is because $\accuest$ holds only with $1- n^{-\Omega(1)}$ probability.  Let $\text{Error}^{\ghss}$ denote the total contribution to $\hat{F}_p$ of such items that are  either dropped or erroneously estimated and  misclassified within  in the \ghss~structure. Let $\text{Error}^{\shelf}$ denote the total contribution to $\hat{F}_p$ of items that are either dropped or erroneously misclassified within the shelf structure.

\emph{Note} that it is possible for an item to belong to some group $G_l$ as per the \ghss~grouping, but the item's frequency could be significantly over-estimated so that it is classified into one of shelves. The error due to over-estimation of an item that belongs to a \ghss~group but gets classified into a shelf sample is calculated when we analyze the error in the estimation from the shelf structure.

Let $\gooditems_1$ be the set of items that satisfy $\nocollision \cap \accuest$.

\begin{lemma}[ Restatement of Lemma ~\ref{lem:error1}.]
$\text{Error}^{\ghss} \le O\left(\displaystyle \frac{\epsilon^2 F_p}{\log n} \right) $ with probability $1 - \delta/n^{\Omega(1)}$.
\end{lemma}

\begin{proof}

Consider the set $D_1 \subset \left(\lmargin(G_0) \cup_{l=1}^L G_l \right) \setminus \gooditems_1$ consisting of legitimate  items that are dropped due to error in estimation or due to collisions. 
Let $D_2 \subset \left(\lmargin(G_0) \cup_{l=1}^L G_l \right) \setminus \gooditems_1$ consisting of items that are incorrectly discovered or misclassified due to error in estimation.

If $i \in D_1 \cap G_l$ then,
$\abs{x_i} \le T_{l-1}$, if $l \ge 1$, or otherwise, $\abs{x_i} \le T_0 (1 + \epsbar)$, where, $\epsbar = 1/(54p)$. Suppose $i \in D_2$ and $i \in G_{l'}$. However, due to error in estimation, suppose that  $i \in \bar{G_l}$, for some $l < l'$, causing an over-estimate. Such items are in $D_2$. For $l \ge 1$, the estimate for $\abs{x_i}$, namely, $\abs{X_i}$ is bounded by $T_{l-1}(1+\bar{\epsilon})$, otherwise, the estimate is dropped from $\bar{G_l}$.
If $i \in \bar{G}_0$, the upper bound is $U_1(1+\epsbar)$, as is defined in the shelf structure. Here $U_1 \le 2^{1/p} T_0$.

We also have, $$T_0 = O\left( \frac{F_2}{n^{1-2/p} \epsilon^{-4/p} \log^{2/p}(1/\delta)}\right)^{1/2} \le O\left( \frac{\epsilon^4/p F_p^{2/p}}{\log^{2/p}(1/\delta)}\right)^{1/2}
\le O\left( \frac{\epsilon^2  F_p}{\log (1/\delta)} \right)^{1/p}
$$
Therefore, $T_0^p \le O(\epsilon^2 F_p/\log (1/\delta))$. Further,
$T_l^p \le O\left(\displaystyle (2\alpha)^{-lp/2} \epsilon^2 F_p/\log (1/\delta) \right)$.
 Then, the error contribution due to under-estimation or dropping of items is bounded above as follows.
\begin{align} \label{eq:error1}
\text{Error}_1 & \le \sum_{i \in D_1, i \in \lmargin(G_0)}  T_0^p(1+\epsbar)^p + \sum_{l=1}^L \sum_{i \in D_1, i \in G_l} 2^l T_{l-1}^p \notag \\
& \le O\left(\frac{\epsilon^2 F_p}{\log (1/\delta)}\right) \left(\card{D_1 \cap   \lmargin(G_0)} + \sum_{l=1}^L 2^l (2\alpha)^{-lp/2}\card{D_1 \cap G_l} \right) \notag \\
& \le O\left(\frac{\epsilon^2 F_p}{\log (1/\delta)}\right) \card{ D_1 \cap \left( \lmargin(G_0) \cup_{l=1}^{L-1} G_l\right)} \left(1 + \sum_{l=1}^L 2^l (2\alpha)^{-lp/2} \right) \notag \\
& \le O\left( \frac{\epsilon^2 F_p}{\log n} \right) \left(1 + \sum_{l=1}^L 2^l (2\alpha)^{-lp/2} \right)
\end{align}
since, as argued above, the total number of items dropped or mis-estimated is at most $O(\log (1/\delta)/\log n)$.

Recall that $\alpha = 1 - (1-2/p)\nu$, for a small constant $\nu$. Let $\gamma = (1-2/p)\nu$ so that $\alpha = 1- \gamma$. Therefore, $\ln \alpha = \ln (1-\gamma) \ge -2\gamma $.
Now, $$(2\alpha)^{-p/2} = 2^{-(p/2) \log_2 (2\alpha)} = 2^{-(p/2)(1 +  \ln \alpha/\ln 2)} \le 2^{-(p/2)(1 -\gamma (2/\ln 2))} \enspace . $$ Since, $(p/2)\gamma = (p/2-1)\nu$, therefore,
$$2 (2\alpha)^{-p/2}  = 2^{ 1 -(p/2)(1-\gamma(2/\ln 2))} = 2^{(1-p/2) + (p/2-1) \nu(2/\ln 2)}
= 2^{(1-p/2)(1- 2\nu/\ln 2)}
$$
The value $2^{(1-p/2)(1- 2\nu/\ln 2)} $ is a constant $< 1$ since, $p/2 > 1$ and $\nu = 0.01 < \ln 2/2$.
Thus, we have,
\begin{align*}
\sum_{l=1}^L 2^l (2\alpha)^{-lp/2} = \sum_{l=1}^L \left( \frac{2}{(2\alpha)^{p/2}} \right)^l
= \sum_{l=1}^L 2^{l(1-p/2)(1-2\nu/\ln 2)}  = \Theta(1)
\end{align*}
Substituting this in Eqn.~\eqref{eq:error1}, we have,
$$\text{Error}_1 \le O\left( \frac{\epsilon^2 F_p}{\log n} \right)  \enspace . $$

The error contribution of the items in $D_2$ is bounded as follows.
\begin{align*}
\text{Error}_2 = \sum_{i \in D_2 \cap \bar{G_0}}  (U_1(1+\epsbar))^p + \sum_{l=1}^L \sum_{i \in D_2, i \in G_l} 2^l T_{l-1}^p
\end{align*}
This sum is bounded similarly, since $U_1 \le 2^{1/p} T_0$, and therefore, $\text{Error}_2 \le O\left(\epsilon^2 F_p/\log n\right)$.

Therefore, combining, $\text{Error}^{\ghss} = \text{Error}_1 + \text{Error}_2 \le O\left(\epsilon^2 F_p/\log n\right)$.
\end{proof}

\newpage

\section{Analysis of Shelf Structure} \label{sec:shelf}

As discussed in Section~\ref{sec:algo}, the shelf structure is needed when $\log (1/\delta) = \omega( \log (n))$.
In this section,  we will assume that $\log (1/\delta) = \omega( \log (n))$. Further, as discussed,
we will  emphasize  the interesting case when $H_J = o(H_0)$.

Define the following two constraints on   the parameters $a$ and  $b$.
\begin{align}\label{eq:shelf:constr:ab}
(1) ~~ \abs{\ln (ab)} = \Omega(1) ~~~~~\text{ and }~~~~(2)~~ b = \Omega(1) \enspace .
\end{align}

Constraint 1 is derived from the following consideration. The number of measurements required by the $j$th shelf is $O(H_jw_j)$.

\noindent
\emph{Case 1.} Suppose $ab < 1$. Then, by constraint (1), $1-ab = 1-\exp{-\abs{\ln ab}} = 1- O(1) = \Omega(1)$.  The sum of the number of measurements required by all the shelves is $$ \sum_{j=0}^J H_j w_j= H_0 w_0 \sum_{j=0}^J (ab)^j \le \frac{H_0 w_0}{(1-ab)}  = O(H_0 w_0) \enspace .$$
\emph{Case 2.}  Suppose  $ab > 1$. Then, by constraint (1), $1-(ab)^{-1} = 1-\exp{-\abs{\ln ab}} = 1 - O(1) = \Omega (1)$. The sum of the number of measurements across all shelves becomes $$\sum_{j=0}^J H_j w_j = H_J w_J \sum_{k=0}^J (ab)^{-k} \le  \frac{H_J w_J}{1- (ab)^{-1}} = O(H_Jw_J) \enspace . $$ So, by constraint 1, in either case, the total space used by the shelf structure is $O(H_0 w_0 + H_Jw_J)$.
The motivation for Constraint 2 arises from the need to bound the \text{Error}$^{\shelf}$ term, as in the proof of Lemma~\ref{lem:error1}. We now outline how $a$ and $b$ can be chosen to satisfy constraints 1 and 2.

We have, $a^J = w_J/w_0 = \Theta(\ln (1/\delta)/(\ln n)) = \omega(1)$ and $b^J = H_J/H_0 = o(1) $. Therefore,
\begin{align*}
(ab)^J &=  \frac{w_J}{w_0} \cdot   \frac{H_J}{H_0} = \Theta(1) \cdot \frac{\log (1/\delta)}{\log n} \cdot \frac{ \epsilon^{-2}}{\epsilon^{-4/p} \log^{2/p} (1/\delta)} = \Theta(1) \cdot \frac{\left(  \epsilon^{-2} \log(1/\delta) \right)^{1-2/p}}{\log n}
\end{align*}
Let
\begin{align*}
L_0 & = \ln \left(H_Jw_J/(H_0 w_0)\right)= J \ln (ab) \enspace .  
\end{align*}
Let $J = L_0$ so that $\ln (ab) =1$ or $ ab = e$. Set $b = 1/2$ and therefore $a = 2e$.
\eat{In order for $0 < \ln (ab)  = \Omega(1)$, set  $J = L_0$. Therefore, $e^{L_0} = (ab)^J $, or, $ab = e$. We also note that
$$a^J = \frac{w_J}{w_0} = \frac{\log(1/\delta)}{\log n} \le ((ab)^J)^{q_1}$$
for a constant $q_1 \le  (1-2/p)^{-1}$. Similarly,
$$b^{-J} = \frac{H_0}{H_J} =  \Theta(\epsilon^{2-4/p} \log^{2/p}(1/\delta)) \le  ((ab)^J)^{q_2}$$
for another constant $q_2 < (2/p)(1-2/p)^{-1}$.
Define $b = (H_J/H_0)^{1/J}$ and $a = (w_J/w_0)^{1/J}$.
Then,
\begin{align*}
b^{-1} = \left(\frac{H_0}{H_J}\right)^{1/J} \le  (ab)^{Jq_2/J} = e^{q_2} = O(1) \enspace.
\end{align*}}
This shows that $a$ and $b$ can be chosen to satisfy constraints 1 and 2.

\eat{
\noindent Let $b_1 = \Theta(1)$ and let $a_1$ satisfy $J(a_1 + b_1) = L_0$.

\begin{claim}\label{claim:wJ}
Suppose we are given values for $H_0, H_J, w_0$ and $w_J$ that satisfy, (1)  $\Theta(\log n) = w_0 = o(w_J)$, (2) $w_J = \Theta (\log (1/\delta))$, (3) $ H_J/ H_0 = \Theta( \epsilon^{-2+4/p}/\log^{2/p}(1/\delta))$ and (4) $H_J < H_0$. Then for every constant $c' > 0$, there  exists a constant $c >1 $  so that  $w'_J = c w_J$ satisfies the following properties:
\begin{gather*}
(1) ~~\left\lvert  \ln \left(\frac{H_Jw'_J}{H_0 w_0}\right) \right\rvert \ge c' ~~~\text{ and } ~~~(2)~~
 \left\lvert  \ln \left( \left( \frac{H_J}{H_0}\right)^{p/2} \left(\frac{w'_J}{ w_0}\right)\right) \right\rvert \ge c'
 \end{gather*}
\end{claim}
\begin{proof}We consider two cases, depending on whether $H_jw_J/(H_0 w_0) \le 1$ or $> 1$.\\ Let $\displaystyle \ell_0 = \left \lvert \ln \frac{H_Jw_J}{H_0 w_0}\right\rvert$  and $\displaystyle \rho_0 = \left\lvert  \ln \left( ( H_J/H_0)^{p/2} (w_J/ w_0)\right) \right\rvert$.

\emph{Case 1:} Suppose $\frac{H_J w_J}{H_0 w_0} \le 1$. If $\ell_0 \ge c'$, then, since, $H_J/H_0 < 1$, $\rho_0 > \ell_0 \ge c'$. So we let $w'_J = w_J$ and the claim holds.

Now assume that $\ell_0 < c'$. Let $\hat{w}_J = w_J e^{2c'}$. Then, $\displaystyle \frac{H_J\hat{w}_J}{H_0 w_0} \ge e^{-c'} e^{2c'} = e^{c'}$. Now let $\displaystyle  \hat{\rho}  = \left\lvert  \ln \left( ( H_J/H_0)^{p/2} (\hat{w}_J/ w_0)\right) \right\rvert$. If $\hat{\rho} \ge c'$, then, the claim holds with $w'_J = \hat{w}_J = w_J e^{2c'}$. Otherwise, $\hat{\rho} < c' $ and  we consider two cases:

\emph{Case 1.1:}  $( H_J/H_0)^{p/2} (\hat{w}_J/ w_0) \le 1$. Let $w'_J = \hat{w}_J\cdot  e^{2c'}$. Since, $\hat{\rho} < c'$, we have, $ - \hat{\rho} = \ln \left( ( H_J/H_0)^{p/2} (\hat{w}_J/ w_0)\right) \ge -c'$, and therefore,
\begin{align*}
 \hat{\rho} + 2c' = \ln \left( ( H_J/H_0)^{p/2}( w'_J/ w_0)\right)  \ge c' \enspace .
 \end{align*}
 Note by the further increase of  $w'_J$ keeps the first property invariant.

\emph{Case 1.2:} $ (H_J/H_0)^{p/2} (\hat{w}_J/ w_0) \ge 1$. Let $w'_J = \hat{w_J} e^{c'-\hat{\rho}}$. Then, $\ln \left( H_J/H_0)^{p/2} (w'_J/ w_0) \right) = \hat{\rho} + (c' - \hat{\rho}) = c'$, thereby proving the claim in this case.

\emph{Case 2:} Suppose $H_j w_J/ (H_0 w_0) >1 $ and  $ \ell_0  \le c'$. Let $\hat{w}_J = w_j\exp{c'-\ell_0}$ implying that $\displaystyle \ln \frac{H_J\hat{w}_J}{H_0 w_0} \ge c'$.  Let $\hat{\rho} = \displaystyle \left\lvert  \ln \left( ( H_J/H_0)^{p/2} (\hat{w}_J/ w_0)\right) \right\rvert$. If $\hat{\rho} \ge c'$, then the claim is proved. Otherwise, let $\hat{\rho} < c'$. There are two cases:

\emph{Case 2.1:} $( H_J/H_0)^{p/2} (\hat{w}_J/ w_0) \le 1$. In this case, $\ln \left( ( H_J/H_0)^{p/2} (\hat{w}_J/ w_0) \right) \ge -c'$. Let $w'_J = \hat{w}_J e^{2c'}$. Then,
$\ln \left( ( H_J/H_0)^{p/2}( w'_J/ w_0)\right) \ge -c' + 2c' \ge c'$, proving the claim in this case.

\emph{Case 2.2:} $( H_J/H_0)^{p/2} (\hat{w}_J/ w_0)  > 1$. Let $w'_J = \hat{w}_J e^{c'-\hat{\rho}}$. Then, $\ln \left( ( H_J/H_0)^{p/2}( w'_J/ w_0)\right) = \hat{\rho} + (c' - \hat{\rho}) = c'$, proving the claim in this case.
Thus the claim is proved in all cases.
\end{proof}

\begin{claim}[Restatement of Claim~\ref{claim:ab}.]
There exists a choice of $a$ and $b$ satisfying constraints 1 and 2.
\end{claim}

\begin{proof} We have, $a^J = w_J/w_0 = \omega(1)$ and $b^J = H_J/H_0 = o(1) $. Therefore,
\begin{align*}
(ab)^J &=  \frac{w_J}{w_0} \cdot   \frac{H_J}{H_0} = \Theta(1) \cdot \frac{\log (1/\delta)}{\log n} \cdot \frac{ \epsilon^{-2}}{\epsilon^{-4/p} \log^{2/p} (1/\delta)} = \Theta(1) \cdot \left(  \epsilon^{-2} \log(1/\delta) \right)^{1-2/p}\\
(ab^{p/2})^J & =\frac{w_J}{w_0} \cdot  \left( \frac{H_J}{H_0} \right)^{p/2}
= \Theta(1) \cdot \frac{\log (1/\delta)}{\log n} \cdot \left(\frac{ \epsilon^{-2}}{\epsilon^{-4/p} \log^{2/p} (1/\delta)}\right)^{p/2} = \Theta(1) \cdot \frac{ \epsilon^{-(p-2)}}{\log n}
\end{align*}
Let
\begin{align*}
L_0 & = \ln \left(H_Jw_J/(H_0 w_0)\right)= J \ln (ab), \text{  and } L_1 = \ln \left( (H_J/H_0)^{p/2} (w_J/w_0)\right) = J \ln (ab^{p/2}) \enspace.
\end{align*}
Therefore,
\begin{align}\label{eq:shelf:L0L1}
L_0 = (1-2/p) \ln (\epsilon^{-2}
\log (1/\delta)) + \Theta(1), ~~\text{ and } L_1 = (p-2)\ln (1/\epsilon) - \ln\ln n + \Theta(1)
\end{align}
By Claim ~\ref{claim:wJ}, without loss of generality, we may assume that   $\abs{L_0}, \abs{L_1} = \Omega(1)$.

We have, $J \ln (ab) = L_0$ and $J \ln (ab^{p/2}) = L_1$.
Define  $u = \abs{\ln (ab)}$ and $v = \abs{\ln (ab^{p/2})}$.
 Taking absolute values we have, $Ju = \abs{L_0}$ and $Jv = \abs{L_1}$. Hence, $$J = \abs{L_0}/u = \abs{L_1}/v \enspace. $$ Set $u = \tau > 0$ to be a parameter.
Then, $v = (\abs{L_1} /\abs{L_0})\tau$.  Further, $J = \abs{L_0}/\tau $.
Further,  $\ln (ab) = \sgn(L_0) u$ and $ \ln (a b^{p/2}) = \sgn(L_1)v  = (\sgn(L_1) \abs{L_1} \tau)/\abs{L_0} = L_1 \tau/L_0$. Thus we have the simultaneous equations:
\[ \begin{array}{llrl}
\ln a &+& ~~\ln b  &= \sgn(L_0) \tau \\
\ln a &+& ~~(p/2) \ln b & = \frac{L_1 \tau}{\abs{L_0}} \\
\end{array} \]
Since $p/2 > 1$, these  equations can be solved uniquely  for the unknowns $\begin{bmatrix}\ln a\\ \ln b\end{bmatrix}$, to give values for  $a$ and $b$.
In particular,
$$(p/2-1)(-\ln b) = \tau \left( -\frac{L_1}{\abs{L_0}} + \sgn(L_0) \right) \enspace. $$

We consider two cases. 
\emph{Case 1.} There exist constants $c_1, c_2 > 0 $ such that
$-c_2 L_0 < L_1 < c_1 L_0$, .  From equation~\eqref{eq:shelf:L0L1}, we note that $L_0 = \Theta(1)+(1-2/p)(\epsilon^{-2} \log (1/\delta)) > 0$. Thus, $\sgn( L_0)=1$. So for $\tau = \Theta(1) $, $$(p/2-1)(-\ln b) = \tau \left( 1- \frac{L_1}{\abs{L_0}} \right)= \Theta(\tau) = \Theta(1) $$
Hence, $\ln (1/b) = \abs{\ln(b)} =  \Theta(1)$ and $\tau = \abs{\ln(ab)} = \Theta(1)$.  Thus constraints 1 and 2 are satisfied.

\emph{Case 2.}  Let $ \abs{\ln (ab)} = \tau = \Theta(1)$. Then, using Eqn~\eqref{eq:shelf:L0L1},
\begin{align*}
\frac{L_1}{L_0} &= \frac{(p-2)\ln (1/\epsilon) - \ln\ln  n + \Theta(1)}{(2-4/p) \ln (1/\epsilon) + (1-2/p) \ln \ln (1/\delta) - \ln \ln n + \Theta(1)}\\
& = \frac{(p-2)\ln (1/\epsilon) - \ln\ln  n + \Theta(1)}{(2/p)( (p-2) \ln (1/\epsilon) + (p/2-1)(\ln \ln (1/\delta) - \ln \ln n) - \ln \ln n + \Theta(1))}
 \end{align*}
Let  $D = (p/2-1)(\ln \ln (1/\delta) - \ln \ln n)$. Then, $$ \displaystyle \frac{L_1}{L_0}=  \frac{(p/2)}{(1 + D/L_1 + \Theta(1/L_1)) }$$
Suppose   $N>0$. Then, $\frac{L_1}{L_0} = $

Therefore, from Eqn.~\eqref{eq:claim:b1}, multiplying numerator and denominator by $p/2$, we have,
\begin{align*}
-\ln b & = \left(\frac{\Theta(\tau)(p/2)}{(p/2-1)}\right) \frac{ (p-2) \ln (1/\epsilon) - \ln \ln n}{ (p-2) \ln (1/\epsilon) + (p/2-1) (\ln \ln (1/\delta) - \ln \ln n) - \ln \ln n} \notag \\
& \le  \left(\frac{ O(\tau)}{1-2/p} \right) \\
& = O(1)
\end{align*}

 $L_1 = \ln (\epsilon^{-(p-2)}/\ln (n)) = (p-2) \ln (1/\epsilon) - \ln \ln n$ and $L_0 = \ln \left( \epsilon^{-2+4/p} \log^{1-2/p}(1/\delta)/\log n\right) = (2-4/p) \ln (1/\epsilon) + (1-2/p) \ln \ln (1/\delta) - \ln \ln n$, we have,
\begin{align}\label{eq:claim:b1}
-\ln b & = \frac{\Theta(\tau)}{(p/2-1)} \frac{ \abs{L_1}}{\abs{L_0}} \notag\\
& =  \left(\frac{\Theta(\tau)}{(p/2-1)}\right) \frac{ (p-2) \ln (1/\epsilon) - \ln \ln n}{(2-4/p) \ln (1/\epsilon) + (1-2/p) \ln \ln (1/\delta) - \ln \ln n}
\end{align}
We also note that since, $\ln (1/\delta) = \omega( \ln n)$, we have, $\ln \ln (1/\delta) = \ln \ln n + \omega(1)$.
Therefore, from Eqn.~\eqref{eq:claim:b1}, multiplying numerator and denominator by $p/2$, we have,
\begin{align*}
-\ln b & = \left(\frac{\Theta(\tau)(p/2)}{(p/2-1)}\right) \frac{ (p-2) \ln (1/\epsilon) - \ln \ln n}{ (p-2) \ln (1/\epsilon) + (p/2-1) (\ln \ln (1/\delta) - \ln \ln n) - \ln \ln n} \notag \\
& \le  \left(\frac{ O(\tau)}{1-2/p} \right) \\
& = O(1)
\end{align*}
Thus,  $ -\ln b = O(1)$.  Thus, both constraints are satisfied.

\end{proof}
}

\eat{
\subsection{Algorithm for the shelves}

We first define frequency-wise thresholds $S_j$ (analogous to frequency thresholds $T_j$ for the groups in \ghss~structure) for $j = 0,1, \ldots, J$. Let $E_j = c' H_j/p^2$, for an appropriate constant $c'$. Define $U_j = \left ( \frac{\hat{F}_2}{E_j} \right)^{1/2}$, where, the constants are chosen so that $E_0 = B_0$ and $F_0 = C_0$. The frequency group corresponding to shelf numbered $j$ is $[U_j, U_{j-1})$. The frequency group corresponding to shelf $J$ is $[U_J, \infty)$ and the frequency group corresponding to shelf $0$ is the level $0$ group $[T_0, U_1)$. In particular, since $F_0 = C_0$ and $E_0 = B_0$, $U_0 = T_0$.

Items are classified into shelf samples as follows. Let $\bar{x}_{ij}$ be the estimate for $x_i$ obtained using shelf $j$. We say that $i$ \emph{qualifies} as a heavy-hitter at shelf $j$ provided $\card{\bar{x}_{ij}} \ge U_j$ and $\card{\bar{x}_{ij}} < U_{j+1}$.  $i$ is placed into the sampled group $\bar{S}_j$, if $j$ is the highest index of a shelf such that item $i$ qualifies as a heavy-hitter for that shelf. Suppose $i \in \bar{S}_{j}$. Note that the disambiguation is unique across the levels of the \ghss~and shelf structures--- an item is classified into $\bar{G}_l$ in the \ghss~structure if $l$ is the lowest level at which the item crosses the threshold $T_l$ and does not qualify as a heavy-hitter in any shelf.

 The estimate $X_i$ for $\abs{x_i}$ is obtained using the $\hh(H_j,s_j)$ and $\est(H_j, s_j)$ structures as before. Let $\bar{x}_i $ denote $\bar{x}_{ij}$, where, $i \in \bar{S}_j$. Let $\overline{\topk(E_j)}$ denote the set of the top-$E_j$ items by their estimated frequencies $\card{\bar{x_{ij}}}$. Let $R(i)$ denote the set of indices of the hash tables in the \est~structure such that for each $r \in R(i)$, $i$ does not collide with any other item from $\overline{\topk(E_j)}$. We say that \nocollision$(i)$ holds if $\abs{R(i)} = \Omega(w_j)$. Following calculations in \cite{g:arxiv15}, it can be shown that $\prob{\nocollision(i)} \ge 1 - e^{-\Omega(w_j)}$. As before, we define the estimate $X_i$ for $\abs{x_i}$ as the average of the estimates from each of the tables where $i$ does not collide with any of the other $\overline{\topk(E_j)}$ items.
\begin{align*}
X_i = \frac{1}{\abs{R(i)}} \sum_{r \in R(i)} T_r[h_r(i)] \cdot \conj{\omega_r(i)} \cdot \sgn(\bar{x_i}) \enspace .
\end{align*}
If $\abs{X_i} \not\in [(1-\epsbar) U_j, (1+\epsbar) U_{j+1}]$, then the item is dropped from the sample. Similarly, if $\nocollision(i)$ fails to hold, then, $i$ is dropped from the sample. Otherwise, the contribution to the estimate $\hat{F}_p$ from the shelves is as follows.
\begin{align*}
\hat{F}_p^{\shelf} & = \sum_{j=0}^J \sum_{\substack{i \in \bar{S}_j,  \nocollision(i) \text{ holds }\\ (1-\epsbar)U_j \le \abs{X_j} \le (1+\epsbar) U_{j+1}}} X_i^p \enspace .
\end{align*}
}

\subsection{Error Analysis for Shelf Structure}
We first  extend the definition of events  $\accuest_1$, $\smallhh_1$ and $\nocollision_1$ to  the shelf structure. Let $\gooditems_1 = \nocoll_1 \cap \smallhh_1 \cap \nocollision_1$. We first estimate the error  arising in the  estimate $\hat{F}_p^{\shelf}$  due to items with frequency in the range $[T_0, \infty)$ $\setminus \gooditems_1$, with probability $1- \delta/n^{\Omega(1)}$. That is, these items were either dropped on account of collision, or due to inaccurate estimation, or they were misclassified into a larger shelf, also due to inaccurate estimation.

Suppose $i$ belongs to the frequency range of shelf $j$. Then, the probability of $\nocollision(i)$ failing, or $\accuest(i)$ failing or $\smallhh_j$ failing is at most $e^{-\Omega(w_j)}$. In particular, since, $w_J = \Theta (\log (1/\delta))$, it follows that items belonging to the range of shelf numbered $J$ satisfy \accuest~ and \smallhh~ and \nocollision, with probability $1- \delta^{\Omega(1)}$, and therefore, there is no contribution to error from the last shelf up to probability $1-\delta^{\Omega(1)}$.

\begin{lemma}[Restatement of  first part of  Lemma \ref{lem:error1}.] Assuming constraints 1 and 2,
$$\displaystyle \text{Error}^{\shelf} \le \max\left(O\left( \frac{\epsilon^2 F_p}{\log n} \right), O(\epsilon^p F_p) \right) $$ with probability $1- \delta/n^{\Omega(1)}$.
\end{lemma}

\begin{proof}

For shelf index $j \in \{0,1,\ldots, J-1\}$, let  $D^{(j)}$ denote the set of items that belong to the frequency range of shelf indexed $j$ but do not belong to $ \gooditems_1$. Let $\abs{D^{(j)}} = d_j$.  The contribution to the error term from these items is at most  $U_{j+1}^p d_j$. Further, by the calculation in Lemma~\ref{lem:Gfails}, $d_j \le O\left(\log (1/\delta)/w_j\right)$. More generally, from Lemma~\ref{lem:Gfails}, we have that $ \sum_{j=0}^J w_j d_j = O(\log (1/\delta))$.
Therefore,
\begin{align}\label{eq:shelf:lp}
\text{Error}^{\shelf} \le & \text{ Maximize}  \sum_{j=0}^{J-1} U_{j+1}^p d_j  \text{ subject to } \sum_{j=0}^{ J-1} w_j d_j = O(\log (1/\delta))
\end{align}
This is a linear program with feasible region $ \sum_{j=0}^{J-1} w_j d_j = O(\log (1/\delta))$ and $d_j \ge 0$, for $j =0, \ldots, J-1$. The optimal value of this linear program  lies on a vertex of the corresponding  polygonal face in $J$-dimensional space $\R^J$. A vertex of this face is of the form $\hat{d}_{j-1} e_j = (0, \ldots, 0, \hat{d}_{j-1}, 0, \ldots, 0)^T$, where, $\hat{d}_{j-1} = O(\log (1/\delta))/w_j$, for each $j=0, 1, \ldots, J-1$. Here, $e_j$ is the $j$th column of the $J \times J$ identity matrix.

\noindent
We have,  $$U_{j+1}^p = \left( \frac{F_2}{H_0 b^{j+1}} \right)^{p/2} \le \left( \frac{\epsilon^2 F_p}{\log (1/\delta)} \right) b^{-(j+1)(p/2)} $$
The objective value at the vertex $ \hat{d}_j e_j = (0, \ldots, 0, \hat{d}_j, 0, \ldots, 0)$ is
\begin{align} \label{eq:shelf:objval} U_{j+1}^p \hat{d}_j \le \left( \frac{\epsilon^2 F_p}{\log (1/\delta)} \right)b^{-(j+1)(p/2)} \left( \frac{ O(\log (1/\delta))}{w_j } \right) = \left( \frac{O(\epsilon^2 F_p)}{b^{p/2}\log (n)}\right) \left( \frac{1}{(ab^{p/2})^j}\right)
\end{align}
using $w_j = \Theta(\log n) a^j$.

\noindent
As has been discussed, we can choose the parameters $a$ and $b$, so that $\abs{\ln (ab)} = \Omega(1)$ and  $b = \Omega(1)$. Therefore, $b^{p/2} = \Omega(1)$.\\

\noindent
\emph{Case 1: $ab^{p/2} < 1$}. By Eqn.~\eqref{eq:shelf:objval},  the vertex maximizing the objective function occurs at $\hat{d}_{J-1} e_{J}$, that is, the vertex corresponding to $j = J-1$. Then,
\begin{align*}(ab^{p/2})^J  &= (w_J/w_0) (H_j/H_0)^{p/2}  = \left( \frac{ \log (1/\delta)}{\log (n)} \right) \left( \frac{ \epsilon^{-2+4/p}}{\log^{2/p}(1/\delta)} \right)^{p/2}\\& = \left( \frac{ \log (1/\delta)}{\log (n)} \right) \left( \frac{ \epsilon^{-(p-2)}}{\log (1/\delta)} \right)
= \frac{ \epsilon^{-(p-2)}}{\log n} \enspace .
\end{align*}
Therefore, the maximum objective value is
\begin{align*}
U_J^p \hat{d}_J &  \le  \left( \frac{ O(\epsilon^2 F_p)}{b^{p/2}\log (n)}\right)\left( \frac{1}{(ab^{p/2})^J}\right)\\
&  = \left( \frac{O(\epsilon^2) F_p}{b^{p/2}\log (n)}\right) \left(\epsilon^{p-2} \log n\right) \\
& = O\left(\epsilon^p b^{-p/2} F_p\right)\\
& = O\left(\epsilon^p F_p\right)
\end{align*}
assuming $b = \Omega (1)$.

\noindent
\emph{Case 2: $ab^{p/2} > 1$.} Then the vertex maximizing the objective value occurs at $\hat{d}_0 e_1$, that is corresponding to $j=0$. This value is
\begin{align*}
 U_{1}^p \hat{d}_0  &\le \left( \frac{ O(\epsilon^2 F_p)}{b^{p/2}\log (n)}\right) = \frac{ O(\epsilon^2 F_p)}{\log n}
\end{align*}
assuming $b = \Omega(1)$.

\noindent\emph{Case 3: $ab^{p/2}=1$.} In this case, all vertices have the same objective value, which is $\frac{ O(\epsilon^2 F_p)}{\log n}$. This proves the Lemma.
\end{proof}

\begin{lemma} [Expanded Restatement of Lemma~\ref{lem:refine:shelf:conj}.]
Let $1 \le e,g \le \lceil\log (1/\delta)\rceil$ and $l \in S_j$. Let $\log (1/\delta) = \omega (\log n)$. Assume that $\accuest(l)$ holds and  $H_j \ge \Omega(p^2 E_J)$ and $\abs{x_l} \ge \left( \frac{F_2}{E_j} \right)^{1/2}$.  Let the family $\{\omega_{lr}(i)\}_{i \in [n]}$ 
be $O(log(1/\delta))$-wise independent . Then, $\displaystyle \expect{ \left( \left(1 + \frac{Z_l}{\abs{x_l}} \right)^p-1\right)^e \left( \left(1 + \frac{\conj{Z_l}}{\abs{x_l}} \right)^p-1\right)^g\mid \H}$ is real and non-negative and is bounded above by $\displaystyle c^h \abs{x_l}^{-h} \left( \frac{F_2}{H_j} \right)^{h/2} \left( \min\left( \frac{h}{w_j}, 1 \right) \right)^{h/2}$, 
where, $h = e+g$ and $c $ is an absolute constant.
Further,
\begin{align*}
\expect{ \left( Y_l - \expect{Y_l}\right)^e \left(\conj{Y_l} - \expect{ \conj{Y_l}}\right)^g\mid \H}
\le c^h \abs{x_l}^{(p-1)h} \left( \frac{F_2}{H_j} \right)^{h/2} \left( \min\left( \frac{h}{w_j}, 1 \right) \right)^{h/2} \enspace .
\end{align*}
\end{lemma}

\emph{Note.} It suffices to condition on the conjunction of events $\G \wedge \nocollision(l) \wedge \goodest(l)$ instead of $\G \wedge \nocollision \wedge \goodest$.

\begin{proof} Following the initial part of the proof of Lemma~\ref{lem:midcentral:conj}, we have,
\begin{align}\label{eq:ref:midreg:t1a}
&\expect{ \left( \left(1 + \frac{Z_l}{\abs{x_l}} \right)^p-1\right)^e \left( \left(1 + \frac{\conj{Z_l}}{\abs{x_l}} \right)^p-1\right)^g \mid \H} \notag\\
&=\sum_{a_1 + \ldots + a_k+ \ldots = e} \sum_{\substack{b_1 + \ldots + b_k + \ldots  = g\\
\sum_{r \ge 1} r a_r = \sum_{s \ge 1} s b_s}}  c_1^{e+g} \prod_{r \ge 1} \left(\frac{c_r}{c_1}\right)^{a_r} \prod_{s \ge 1} \left(\frac{c_s}{c_1} \right)^{b_s}
 \frac{ \expect{(Z \conj{Z} )^{\sum_{r \ge 1} r a_r}\mid \H}}{ \abs{x_l}^{2\sum_{r \ge 1} r a_r}}
\end{align}

\emph{Case 1}: $h/2 \le \Theta(w_j)$. \\
Following earlier calculation,
for non-zero expectation, we have to assume that  $ \sum_{r \ge 1} r a_r  = \sum_{s \ge 1} s b_s$. Thus,  we have,
\begin{align*}
0 \le \frac{\expect{ (Z_l\conj{Z_l})^{\sum_{r \ge 1} r a_r} \mid \H}}{\abs{x_l}^{2\sum_{r \ge 1} r a_r}}  &\le \frac{ \expect{ (Z_l\conj{Z_l})^{h/2} \mid \H}}{\abs{x_l}^{h}} \left( \frac{ \abs{Z_l}}{\abs{x_l}} \right)^{ 2(\sum_{r \ge 1} r a_r) - h}\\
&\le \abs{x_l}^{-h}\left( \frac{h F_2}{\Theta(w_j) H_j } \right)^{h/2} \varrho^{2(\sum_{r \ge 1} r a_r) - h}
\end{align*}
Again, following analogous calculation in Lemma~\ref{lem:midcentral:conj},
\begin{align}
\expect{ \left( \left(1 + \frac{Z_l}{\abs{x_l}} \right)^p-1\right)^e \left( \left(1 + \frac{\conj{Z_l}}{\abs{x_l}} \right)^p-1\right)^g\mid \H}  \le \abs{x_l}^{-h} c'^h \left( \frac{h F_2}{O(w_j) H_j} \right)^{h/2}
\end{align}
for some constant $c'$.

\emph{Case 2}: $h/2 > \Theta(w_j)$. Let $\Theta(w_j) = w'_j$.
\begin{align} \label{eq:ref:midcentral:a}
\frac{\expect{ (Z_l\conj{Z_l})^{\sum_{r \ge 1} r a_r} \mid \H}}{\abs{x_l}^{2\sum_{r \ge 1} r a_r}}  &
\le \frac{ \expect{ (Z_l \conj{Z_l})^{w'_j} \mid H}}{ \abs{x_l}^{2w'_j}} \varrho^{2(\sum_{r} ra_r) -h} \left( \frac{\abs{Z_l}}{\abs{x_l}}\right)^{h - 2w'_j} \varrho^{2(\sum_{r} ra_r) -h}
\end{align}
Now $ \expect{(Z_l \conj{Z_l})^{w'_j} \mid \H}\le  \left( \frac{w'_jF_2}{w'_j H_j} \right)^{w'_j}$, and by \accuest$(l)$, $\abs{Z_l} \le \left( \frac{ F_2}{H_j} \right)^{1/2}$. Substituting in Eqn.~\eqref{eq:ref:midcentral:a}, we have,
\begin{align*}
\frac{\expect{ (Z_l\conj{Z_l})^{\sum_{r \ge 1} r a_r}\mid \H}}{\abs{x_l}^{2\sum_{r \ge 1} r a_r}}  &
\le \abs{x_l}^{-h_j} \left( \frac{F_2}{H_j} \right)^{h_j/2} \varrho^{2\sum_{r} ra_r - h}
\end{align*}
From here, the proof may proceed along the lines of Lemma~\ref{lem:midcentral:conj}. This yields,
\begin{align*}
\expect{ \left( \left(1 + \frac{Z_l}{\abs{x_l}} \right)^p-1\right)^e \left( \left(1 + \frac{\conj{Z_l}}{\abs{x_l}} \right)^p-1\right)^g \mid \H}
\le \abs{x_l}^{-h} c'^h \left( \frac{F_2}{ H_j} \right)^{h/2}
\end{align*}
for some constant $c'$.

Combining cases 1 and 2, we have in general that,
\begin{align*}
\expect{ \left( \left(1 + \frac{Z_l}{\abs{x_l}} \right)^p-1\right)^e \left( \left(1 + \frac{\conj{Z_l}}{\abs{x_l}} \right)^p-1\right)^g \mid \H}
\le \abs{x_l}^{-h} c'^h \left( \frac{F_2}{ H_j} \right)^{h/2} \min\left( \frac{h}{w_j},1 \right)^{h/2} \enspace .
\end{align*}
The remainder of the proof proceeds identically along the lines of the proof of Lemma~\ref{lem:midcentral:conj}.

\end{proof}

\eat{\begin{lemma} [Restatement of second part of \label{lem:dmom:shelf}].
Let $0 \le d_1, d_2 \le \lceil \log (1/\delta) \rceil$ and integral. Then, for some constants $c_1,c_2 > 0$
\begin{align*}
\mathbf{E}\biggl[ & \biggl( \sum_{i \in \cup_{j=0}^{J-1} (S_j \cap \nocollision)} (Y_i - \expect{Y_i}) \biggr)^{d_1} \biggl( \sum_{i \in \cup_{j=0}^{J-1} (S_j \cap \nocollision)} (\conj{Y_i} - \conj{\expect{Y_i}}) \biggr)^{d_2}\biggr]  \\
& \hspace*{3.0in}\le \left( c_1\epsilon F_p \right)^{d_1 + d_2} \enspace . \\
\mathbf{E}\biggl[ & \biggl( \sum_{i \in S_J} (Y_i - \expect{Y_i}) \biggr)^{d_1} \biggl( \sum_{i \in S_J} (\conj{Y_i} - \conj{\expect{Y_i}}) \biggr)^{d_2}\biggr] \le \left( c_2\epsilon F_p \right)^{d_1 + d_2}
\end{align*}
\end{lemma}
}

We will decompose the two statements of Lemma~\ref{lem:dmom:shelf} into two lemmas, one corresponding to the contribution to the $2d$th central moment from all the shelves except the outermost shelf, and the second corresponding to the contribution to the same from the outermost shelf. We prove them separately and then combine the results.  Let $S' = S_1 \cup \ldots \cup S_{J-1}$.

\begin{lemma}\label{lem:dmom:shelf1}
Let $0 \le d_1, d_2 \le \log (1/\delta)$ and integral. Then, there exists an absolute constant $c$ such that
\begin{equation*}
\expect{ \left( \sum_{i \in S'} (Y_i - \expect{Y_i}) \right)^{d_1} \left( \sum_{i \in  S'} (\conj{Y_i} - \conj{\expect{Y_i}}) \right)^{d_2} \mid \H} \le \left( c\epsilon F_p \right)^{d_1 + d_2} \enspace .
\end{equation*}
\end{lemma}

\begin{lemma} \label{lem:dcentmom:shelf:largest}
Let $0 \le d_1, d_2 \le \log (1/\delta)$ and integral. Then, there exists an absolute constant $c$ such that
\begin{align*}
\expect{ \left( \sum_{i \in S_J} (Y_i - \expect{Y_i}) \right)^{d_1} \left( \sum_{i \in S_J} (\conj{Y_i} - \conj{\expect{Y_i}}) \right)^{d_2} \mid \H} \le \left( c\epsilon F_p \right)^{d_1 + d_2} \enspace .
\end{align*}
\end{lemma}

\eat{We can now apply the above lemmas.
\begin{lemma} \label{lem:dcentmomsmalldelta}
Let $\log (1/\delta) = \omega(\log n)$, $C = \Theta( n^{1-2/p} \epsilon^{-4/p} \log^{2/p}(1/\delta))$ and $H_J = \Theta(n^{1-2/p} \epsilon^{-2}$.
Then, for $d = \lceil \log (1/\delta)$, there exists a constant $c>0$ such that
\begin{align*}
\mathbf{E} \biggl[ \biggl(  \sum_{i \in  \G_1} (Y_i - \expect{Y_i}) \biggr)^d \biggl( \sum_{i \in  \G_1} ( \conj{Y_i} - \expect{\conj{Y_i}})\biggr)^d \biggr]  \le \left( c \epsilon F_p \right)^{2d}
\end{align*}
Setting $d = \lceil \log (1/\delta) \rceil$, we obtain $\prob{ \card{ \hat{F}_p - F_p} \le (\epsilon F_p)/2} \le  \delta/2 \enspace.$
\end{lemma}

\begin{lemma}[Restatement of  Lemma~\ref{lem:dmom:shelf1}.]
Let $0 \le d_1, d_2 \le \log (1/\delta)$.
\begin{align*}
\expect{ \left( \sum_{i \in S'} (Y_i - \expect{Y_i}) \right)^{d_1} \left( \sum_{i \in S'} (\conj{Y_i} - \conj{\expect{Y_i}}) \right)^{d_2} \mid \G } \le \left( c\epsilon F_p \right)^{d_1 + d_2}
\end{align*}
for some absolute constant $c$.
\end{lemma}
}

\begin{proof} [Proof of Lemma~\ref{lem:dmom:shelf1}.] Consider the case when say $d_1=0$ and $d_2 > 0$. Then, the expression in the expectation is
\begin{align*} &\expect{ \left( \sum_{i \in S'} (\conj{Y_i} - \conj{\expect{Y_i}}) \right)^{d_2} \mid \H} \\
& = \sum_{q=1}^{d_2} \sum_{e_1 + \ldots + e_q = d_2} \sum_{\{i_1, \ldots, i_q\} \subset S'} \prod_{u=1}^r \expect{ (\conj{Y_i} - \conj{\expect{Y_i}})^{e_u} \mid \H} \\
& = \sum_{q=1}^{d_2} \sum_{e_1 + \ldots + e_q = d_2} \sum_{\{i_1, \ldots, i_q\} \subset S'} \prod_{u=1}^r \abs{x_{i_u}}^{e_u} n^{-\Omega(e_u)} \\
& \le F_p^{d_2} n^{-\Omega(d_2)}
\end{align*}
which proves the claim. The proof for $d_2=0$ and $d_1 > 0$  is analogous.

The \emph{LHS} of the expression in the lemma may be written as follows. This is the same expression as in Eqn.~\eqref{eq:dcentmom:midG0:2:t1} in the proof of Lemma~\ref{lem:dcentmom:midG0:2}. Let $\alpha_0(q) = \max(0, 1-q)$. Using \nocollision~and the argument in Lemma~\ref{lem:dcentmom:midG0:2}
\begin{align} \label{eq:dcentmom:shelf:t1}
&\expect{ \left( \sum_{i \in S'} (Y_i - \expect{Y_i}) \right)^{d_1}
\left( \sum_{i \in S'} (\conj{Y_i} - \conj{\expect{Y_i}}) \right)^{d_2} \mid \H} \notag \\
& = \sum_{q=1}^{\min(d_1, d_2)} \sum_{s=\alpha_0(q)}^{d_1-q} \sum_{t=\alpha_0(q)}^{d_2-q}   \sum_{\substack{e_1 + \ldots + e_{q+s} =d_1 \\ e_1, \ldots, e_{q+s}\ge 1}}
\sum_{\substack{g_1 + \ldots + g_{q+s}=d_2 \\ g_1, \ldots, g_{q+t} \ge 1}}
\sum_{\{i_1, \ldots, i_q, j_1, \ldots, j_{s}, k_1, \ldots, k_t\} \subset [n]} \notag\\
&\hspace*{0.5in}\prod_{r=1}^q
\expect{  (Y_{i_r} - \expect{Y_{i_r}})^{e_r} \left(\conj{Y_{i_r}} - \expect{ \conj{Y_{i_r}}}\right)^{g_r}\mid \H}
\prod_{l=1}^s
\expect{ (Y_{j_l} - \expect{Y_{j_l}})^{e_{q+l}}\mid \H}  \notag \\
&\hspace*{2.0in}\prod_{m=1}^t \expect{ (Y_{k_m} - \expect{Y_{k_m}})^{e_{q+m}} \mid \H} \enspace .
\end{align}

Analogous to the definitions of $P_1$ and $P_2$ in Eqn.~\eqref{eq:dcentmom:midG0:P1}, define
\begin{align} \label{eq:dcentmom:shelf:alpha}
\alpha(s_1, s_2) &= \sum_{\substack{q=1\\ q=0 \text{iff} s_1=0}}^{\min(s_1, s_2)} \sum_{e_1 + \ldots + e_q = s_1} \sum_{g_1 + \ldots + g_q = s_2}\binom{ s_1}{e_1, \ldots, e_q} \binom{s_2}{g_1, \ldots, g_q}\notag \\
 & \hspace*{1.0in}\sum_{\{i_1, \ldots, i_q\}} \prod_{r=1}^q
\expect{  (Y_{i_r} - \expect{Y_{i_r} \mid \H})^{e_r} \left(\conj{Y_{i_r}} - \expect{ \conj{Y_{i_r}} \mid \H}\right)^{g_r}\mid \H} \enspace . \\
\beta(d_1,s_1) & = \sum_{s = \alpha_0(s_1)}^{d_1 - s_1} \sum_{e_{q+1} + \ldots + e_{q+s} = d_1-s_1} \binom{ d_1-s_1}{e_{q+1}, \ldots, e_{q+s}} \sum_{\{j_1, \ldots, j_s\}} \prod_{l=1}^s \abs{x_{j_l}}^{pe_{q+l}} n^{-\Omega(e_{q+l})} \notag\\
& = \left(n^{-\Omega(1)} F_p \right)^{d_1-s_1} \enspace . \label{eq:dcentmom:shelf:beta}
\end{align}
Following the calculations of Lemma~\ref{lem:dcentmom:midG0:2}~, the expression in Eqn.~\eqref{eq:dcentmom:shelf:t1} is bounded above as
\begin{multline}\label{eq:dcentmom:shelf:t2}
\sum_{s_1=0}^{d_1} \sum_{s_2=0}^{d_2} \binom{d_1}{s_1} \binom{d_2}{s_2} \alpha(s_1, s_2) \beta (d_1, s_1) \beta(d_2, s_2) \\
  \le  \sum_{s_1=0}^{d_1} \sum_{s_2=0}^{d_2} \binom{d_1}{s_1} \binom{d_2}{s_2} \alpha(s_1, s_2) \left(n^{-\Omega(1)} F_p \right)^{d_1+d_2-s_1+s_2} \enspace .
\end{multline}
We now make the following claim.
\begin{claim} \label{lem:dcentmom:shelf:claim}
$\alpha(s_1, s_2) \le (c' \epsilon F_p)^{s_1 + s_2}$ for some constant $c'$.
\end{claim}
Using this claim, we have from Eqn.~\eqref{eq:dcentmom:shelf:t2} that this expression is bounded above by
\begin{align*}
\sum_{s_1=0}^{d_1} \sum_{s_2=0}^{d_2} \binom{d_1}{s_1} \binom{d_2}{s_2} (c'\epsilon F_p)^{s_1 + s_2} \left(n^{-\Omega(1)} F_p \right)^{d_1+d_2-s_1+s_2} = \left((c'\epsilon + n^{-\Omega(1)}) F_p \right)^{d_1 + d_2} \enspace .
\end{align*}
thereby proving the statement of the Lemma.

For the remainder of the proof, it will suffice to prove
 Claim \ref{lem:dcentmom:shelf:claim}. Let $s_1$ and $s_2$ be each non-zero, otherwise, the sum is vacuous.

For $i_u \in S_1 \cup S_2 \cup \ldots \cup S_{J-1}$, let $j(i_u)$ denote the index $j$ of the shelf $S_j$ such that $i \in S_j$. From Lemma~\ref{lem:refine:shelf:conj}, we have
\begin{align} \label{lem:dcentmom:shelf:t3}
\alpha(s_1,s_2) &= \sum_{q=1}^{\min(s_1, s_2)} \sum_{\substack{e_1 + \ldots + e_q = s_1 \\ e_j's \ge 1}} \sum_{\substack{g_1 + \ldots + g_q = s_2\\ g_j's \ge 1}}\binom{ s_1}{e_1, \ldots, e_q} \binom{s_2}{g_1, \ldots, g_q}\notag \\
 & \sum_{\{i_1, \ldots, i_q\}} \prod_{r=1}^q
\expect{  (Y_{i_r} - \expect{Y_{i_r} \mid \H})^{e_r} \left(\conj{Y_{i_r}} - \expect{ \conj{Y_{i_r}} \mid \H}\right)^{g_r}\mid \H} \notag  \\
& \le c_0^{s_1+s_2} \sum_{q=1}^{\min(s_1, s_2)} \sum_{\substack{e_1 + \ldots + e_q = s_1 \\ e_j's \ge 1}} \sum_{\substack{g_1 + \ldots + g_q = s_2\\ g_j's \ge 1}}\binom{ s_1}{e_1, \ldots, e_q} \binom{s_2}{g_1, \ldots, g_q}\notag \\
 & \hspace*{0.5in}\sum_{\{i_1, \ldots, i_q\}} \prod_{u=1}^q \abs{x_{i_u}}^{(p-1)h_u} \left( \frac{F_2}{H_{j(i_u)}} \right)^{h_u/2} \left( \min\left(\frac{h_u}{w_{j(i_u)}}, 1\right)\right)^{h_u/2},
\end{align}
where, $h_u = e_u + g_u$. 

\noindent By definition of multinomial coefficients, we have,
\begin{align} \label{lem:dcentmom:shelf:t4}
\alpha(s_1, s_2) & \le \sum_{q=1}^{\min(s_1, s_2)} \sum_{\substack{h_1 + \ldots + h_q = s_1+s_2 \\ h_1, \ldots, h_q \ge 2}} \binom{ s_1 + s_2} { h_1, \ldots, h_q} \notag  \\
& \hspace*{0.3in}\sum_{\{i_1, \ldots, i_q\}} \prod_{u=1}^q \abs{x_{i_u}}^{(p-1)h_u} \left( \frac{F_2}{H_{j(i_u)}} \right)^{h_u/2} \left( \min\left(\frac{h_u}{w_{j(i_u)}}, 1\right)\right)^{h_uc_0^{s_1+s_2}/2} \enspace .
\end{align}
For $i_u \in s_{j(i_u)}$, where, $j(i_u) \in \{0, 1, \ldots, J-1\}$, it follows that $\abs{x_{i_u}} = \Theta ((F_2/H_{j(i_u)})^{1/2})$. Therefore, for $\beta = p/2-1$, we have,
\begin{align*}
\abs{x_{i_u}}^{(p-1) h_u} \left( \frac{ F_2}{H_{j(i_u)}} \right)^{h_u/2} \le c_1^{h_u} \abs{x_{i_u}}^{(p/2) h_u} \left( \frac{ F_2}{H_{j(i_u)}} \right)^{p h_u/4}
\end{align*}
for some constant $c_1$.
Further, by the definition of $H_{j} = H_0 b^j$, we have,
\begin{align} \label{eq:Hju}
\left( \frac{ F_2}{H_{j(i_u)}} \right)^{p h_u/4} =
\left( \frac{F_2}{H_0} \right)^{ph_u/4} b^{-j(i_u) p h_u/4} \enspace .
\end{align}
Now, for any $j \in \{0,1, \ldots, J\}$, $$b^{-j} = b^{-J (j/J)} = \left( \frac{H_0}{H_J} \right)^{j/J} = \left( \epsilon^{2-4/p}  \log^{2/p}(1/\delta)\right)^{j/J} \le \left( \log (1/\delta)\right)^{(2/p)(j/J)} \enspace . $$
Substituting this in Eqn.~\eqref{eq:Hju} and then in Eqn.~\eqref{lem:dcentmom:shelf:t4}, we obtain,
\begin{align*}
\left( \frac{ F_2}{H_{j(i_u)}} \right)^{p h_u/4} \le
\left( \frac{F_2}{H_0} \right)^{ph_u/4}  \left( \log (1/\delta)\right)^{jh_u/(2J)} \enspace .
\end{align*}
Further, 
\begin{align}\label{eq:lem:dcentmom:shelf:t5}
\alpha(s_1, s_2) &
  \le c_0^{s_1+s_2}\sum_{q=1}^{\min(s_1, s_2)} \sum_{\substack{h_1 + \ldots + h_q = s_1+s_2 \\ h_1, \ldots, h_q \ge 2}} \binom{ s_1 + s_2} { h_1, \ldots, h_q}  \sum_{\{i_1, \ldots, i_q\}} \prod_{u=1}^q c_1^{h_u} \notag \\
   & \hspace*{0.5in}\abs{x_{i_u}}^{(p/2) h_u} \left( \frac{ F_2}{H_0} \right)^{p h_u/4} \left( \log (1/\delta)\right)^{(j(i_u)h_u/(2J)}  \left( \min\left(\frac{h_u}{w_{j(i_u)}}, 1\right)\right)^{h_u/2}  \notag\\
& = (c_0c_1)^{s_1+s_2}  \left( \frac{ F_2}{H_0} \right)^{p (s_1 + s_2)/4} \sum_{q=1}^{\min(s_1, s_2)} \sum_{\substack{h_1 + \ldots + h_q = s_1+s_2 \\ h_1, \ldots, h_q \ge 2}} \binom{ s_1 + s_2} { h_1, \ldots, h_q} \notag  \\
& \hspace*{0.3in}\sum_{\{i_1, \ldots, i_q\}} \prod_{u=1}^q \abs{x_{i_u}}^{(p/2) h_u} \left( \log (1/\delta)\right)^{(j(i_u)h_u/(2J)} \left( \min\left(\frac{h_u}{w_{j(i_u)}}, 1\right)\right)^{h_u/2} \enspace .
\end{align}

Consider the product $(\log (1/\delta))^{j(i_u)/J} \min((h_u/w_{j(i_u)}),1)$ and suppose that we wish to maximize it as a function of $j(i_u)$. First suppose that $h_u \ge w_{j(i_u)}$. Then, the product is simply $(\log (1/\delta))^{j(i_u)/J} $ and increases with increasing $j(i_u)$. Thus, in this case, the maximum of the product is $\log (1/\delta)$. Now suppose $h_u < w_{j(i_u)}$. For simplicity, let $j_u$ denote $j(i_u)$.
Incrementing $j_u$ by 1, and noting that $w_{j+1}/w_j = a$  and $a^J O(\log (n)) = \log (1/\delta)$, consider  the ratio
\begin{align*}
\frac{(\log (1/\delta))^{(j_u+1)/J} (h_u/ w_{j_u+1})}{(\log (1/\delta))^{j_u/J} (h_u/ w_{j_u})} = (\log(1/\delta))^{1/J} a^{-1} = (O(\log n))^{1/J}  \enspace .
\end{align*}
Hence, the ratio increases by incrementing $j(i_u)$. Hence, in both cases, the maximum is obtained when $j{i_u} = J-1$, that is,  $w_{j(i_u)} = w_{J-1}  = O(\log (1/\delta)$. Substituting in Eqn.~\eqref{eq:lem:dcentmom:shelf:t5}, we have,
\begin{align}\label{eq:lem:dcentmom:shelf:t6}
&\alpha(s_1, s_2)\\
&  \le
(c_0c_1)^{s_1+s_2}  \left( \frac{ F_2}{H_0} \right)^{p (s_1 + s_2)/4} \sum_{q=1}^{\min(s_1, s_2)} \sum_{\substack{h_1 + \ldots + h_q = s_1+s_2 \\ h_1, \ldots, h_q \ge 2}} \binom{ s_1 + s_2} { h_1, \ldots, h_q} \notag  \\
& \hspace*{1.0in}\sum_{\{i_1, \ldots, i_q\}} \prod_{u=1}^q \abs{x_{i_u}}^{(p/2) h_u} \left( \log (1/\delta)\right)^{(J-1/J)h_u/2} \left(\frac{h_u}{O(\log (1/\delta))} \right)^{h_u/2} \notag \\
& \le (c_0c_1c_2)^{s_1+s_2}  \left( \frac{ F_2}{H_0} \right)^{p (s_1 + s_2)/4} \sum_{q=1}^{\min(s_1, s_2)} \sum_{\substack{h_1 + \ldots + h_q = s_1+s_2 \\ h_1, \ldots, h_q \ge 2}} \binom{ s_1 + s_2} { h_1, \ldots, h_q} \prod_{u=1}^q h_u^{h_u/2} \notag \\
& \hspace*{2.0in}\sum_{\{i_1, \ldots, i_q\}} \prod_{u=1}^q \abs{x_{i_u}}^{(p/2) h_u} \enspace .
\end{align}
Now
$$
 \left( \frac{F_2}{H_0} \right)^{p/2} = \left( \frac{\epsilon^{4/p} F_p^{2/p}}{\log^{2/p}(1/\delta)} \right)^{p/2} = \frac{ \epsilon^2 F_p}{\log (1/\delta)} \enspace. $$
 Therefore, $\displaystyle \left( \frac{ F_2}{H_0} \right)^{p (s_1 + s_2)/4}  \le \left( \frac{ \epsilon^2 F_p}{\log (1/\delta)} \right)^{(s_1+s_2)/2}$.\\
Further, since, each $h_u \ge 2$, we have,
\begin{align*}
\binom{s_1 + s_2}{h_1, \ldots, h_q} \prod_{u=1}^q h_u^{h_u/2}
&= \frac{(s_1+s_2)!}{\prod_{u=1}^q h_u!} \prod_{u=1}^q h_u^{h_u/2} \\& \le c_3^{s_1+s_2} (s_1+s_2)^{(s_1+s_2)/2} \binom{s_1 + s_2}{h_1/2, \ldots, h_q/2} \enspace .
\end{align*}
for some constant $c_3$.
Therefore, Eqn.~\eqref{eq:lem:dcentmom:shelf:t6} can be written as
\begin{align}
\alpha(s_1, s_2) &\le
(c_0 c_1 c_2 c_3)^{s_1+s_2} \left( \frac{ \epsilon^2 F_p}{\log (1/\delta)} \right)^{(s_1+s_2)/2} (s_1+s_2)^{(s_1+s_2)/2} \notag \\
\sum_{q=1}^{\min(s_1, s_2)} \sum_{\substack{h_1 + \ldots + h_q = s_1+s_2 \\ h_1, \ldots, h_q \ge 2}} \binom{(s_1 + s_2)/2}{h_1/2, \ldots, h_q/2}\sum_{\{i_1, \ldots, i_q\}} \\ & \hspace*{0.5in} \prod_{u=1}^q \abs{x_{i_u}}^{(p/2) h_u} \label{eq:lem:dcentmom:shelf:t7}.
\end{align}
Following the arguments in the proof of Lemma~\ref{lem:comb:PSF2p2}, we can show that
\begin{gather*}
\sum_{q=1}^{\min(s_1, s_2)} \sum_{\substack{h_1 + \ldots + h_q = s_1+s_2 \\ h_1, \ldots, h_q \ge 2}} \binom{(s_1 + s_2)/2}{h_1/2, \ldots, h_q/2}\sum_{\{i_1, \ldots, i_q\}} \prod_{u=1}^q \abs{x_{i_u}}^{(p/2) h_u} \le  \left(c_4 F_p\right)^{(s_1+s_2)/2}  \enspace .
\end{gather*}
for some absolute constant $c_4$.\\
Substituting in Eqn.~\eqref{eq:lem:dcentmom:shelf:t6}, we have,
\begin{align*}
\alpha(s_1, s_2) &\le C'^{s_1+s_2} \left( \frac{ \epsilon^2 F_p}{\log (1/\delta)} \right)^{(s_1+s_2)/2} (s_1+s_2)^{(s_1+s_2)/2} \left(c_4 F_p\right)^{(s_1+s_2)/2} \notag \\
& \le \left( C' \epsilon F_p \right)^{s_1+s_2}
\end{align*}
assuming $ s_1, s_2$ are each $O(\log (1/\delta))$, for some absolute constant $C'$.

\end{proof}

\begin{lemma} [Restatement of Lemma~\ref{lem:dcentmom:shelf:largest}.]
Let $0 \le d_1, d_2 \le \log (1/\delta)$ and integral. Then,
\begin{align*}
\expect{ \left( \sum_{i \in S_J} (Y_i - \expect{Y_i}) \right)^{d_1} \left( \sum_{i \in S_J} (\conj{Y_i} - \conj{\expect{Y_i}}) \right)^{d_2} \mid \H} \le \left( c\epsilon F_p \right)^{d_1 + d_2}  \enspace .
\end{align*}
for some absolute constant $c$.
\end{lemma}

\begin{proof} [Proof of Lemma~\ref{lem:dcentmom:shelf:largest}.]
If $d_1 =0$ and $d_2 > 0$ or vice-versa, then, as proved in the initial part of the proof of Lemma~\ref{lem:refine:shelf:conj}, the expectation in the statement of the lemma is bounded above as $(F_p n^{-\Omega(1)})^{d_1 + d_2}$, thereby satisfying the statement of the lemma.

So now we assume that both $d_1, d_2 > 0$. Continuing as in the proof of Lemma~\ref{lem:refine:shelf:conj}, let us  define $ \alpha(s_1, s_2)$, for $1 \le s_1 \le d_1$ and $1 \le s_2 \le d_2$ as in Eqn.~\eqref{eq:dcentmom:shelf:alpha}, which is  reproduced below.
\begin{align*}
\alpha(s_1, s_2) &= \sum_{\substack{q=1\\ q=0 \text{iff} s_1=0}}^{\min(s_1, s_2)} \sum_{e_1 + \ldots + e_q = s_1} \sum_{g_1 + \ldots + g_q = s_2}\binom{ s_1}{e_1, \ldots, e_q} \binom{s_2}{g_1, \ldots, g_q}\notag \\
 & \hspace*{1.0in}\sum_{\{i_1, \ldots, i_q\} \subset S_J} \prod_{r=1}^q
\expect{  (Y_{i_r} - \expect{Y_{i_r}\mid \H})^{e_r} \left(\conj{Y_{i_r}} - \expect{ \conj{Y_{i_r}}\mid \H}\right)^{g_r} \mid \H}
\end{align*}
Following  the arguments in the proof of Lemma~\ref{lem:refine:shelf:conj}, to prove the given lemma it suffices to show that $\alpha(s_1,s_2)  \le (c'\epsilon F_p)^{s_1 + s_2}$, for some constant $c'$.

\begin{align*}&\alpha(s_1,s_2) \\
&= \sum_{q=1}^{\min(s_1, s_2)} \sum_{\substack{e_1 + \ldots + e_q = s_1 \\ e_j's \ge 1}} \sum_{\substack{g_1 + \ldots + g_q = s_2\\ g_j's \ge 1}}\binom{ s_1}{e_1, \ldots, e_q} \binom{s_2}{g_1, \ldots, g_q}\notag \\
 & \hspace*{1.0in}\sum_{\{i_1, \ldots, i_q\}\subset S_J} \prod_{r=1}^q
\expect{  (Y_{i_r} - \expect{Y_{i_r} \mid \H})^{e_r} \left(\conj{Y_{i_r}} - \expect{ \conj{Y_{i_r}} \mid \H}\right)^{g_r} \mid \H} \notag  \\
& \le c_0^{s_1+s_2} \sum_{q=1}^{\min(s_1, s_2)} \sum_{\substack{e_1 + \ldots + e_q = s_1 \\ e_j's \ge 1}} \sum_{\substack{g_1 + \ldots + g_q = s_2\\ g_j's \ge 1}}\binom{ s_1}{e_1, \ldots, e_q} \binom{s_2}{g_1, \ldots, g_q}\notag \\
 & \hspace*{1.0in}\sum_{\{i_1, \ldots, i_q\} \subset S_J} \prod_{u=1}^q \abs{x_{i_u}}^{(p-1)h_u} \left( \frac{F_2}{H_{J}} \right)^{h_u/2} \left(\frac{h_u}{w_J}\right)^{h_u/2} , \\ &\hspace*{3.0in} \{ h_u = e_u + g_u, u \in [q]\}\\
& \le c_0^{s_1+s_2} \left( \frac{F_2}{H_J} \right)^{(s_1+s_2)/2} (c_2 \log (1/\delta))^{-(s_1+s_2)/2} \sum_{q=1}^{\min(s_1, s_2)} \\
&\sum_{\substack{ h_1 + \ldots + h_q = s_1+s_2\\ h_j's \ge 2}} \binom{ s_1 + s_2}{h_1, \ldots, h_q} \prod_{u=1}^r h_u^{h_u/2} \sum_{\{i_1, \ldots, i_q\} \subset S_J} \prod_{u=1}^r \abs{x_{i_u}}^{(p-1)h_u} \enspace .
\end{align*}
Following the arguments in the proof of Lemma~\ref{lem:comb:PSF2p2}, we have
\begin{align*}
& \sum_{q=1}^{\min(s_1, s_2)} \sum_{\substack{ h_1 + \ldots + h_q = s_1+s_2\\ h_j's \ge 2}} \binom{ s_1 + s_2}{h_1, \ldots, h_q} \prod_{u=1}^r h_u^{h_u/2} \sum_{\{i_1, \ldots, i_q\} \subset S_J} \prod_{u=1}^r \abs{x_{i_u}}^{(p-1)h_u} \\
& \le (s_1 +s_2)^{(s_1 + s_2)/2} \sum_{q=1}^{\min(s_1, s_2)} \sum_{\substack{ h_1 + \ldots + h_q = s_1+s_2\\ h_j's \ge 2}} \binom{ (s_1+s_2)/2} { h_1/2, \ldots, h_q/2} \\
& \hspace*{1.5in}\sum_{\{i_1, \ldots, i_q\} \subset S_J} \prod_{u=1}^r \abs{x_{i_u}}^{(2p-2)h_u/2} \\
& \le (s_1 +s_2)^{(s_1 + s_2)/2}c_3^{s_1 + s_2} F_{2p-2}^{(s_1+s_2)/2}  \enspace .
\end{align*}
Substituting in the expression for $\alpha(s_1, s_2)$, we obtain,
\begin{align*}
\alpha(s_1, s_2) & \le (c_0 c_2 c_3)^{s_1+s_2} \left( \frac{F_2}{H_J} \right)^{(s_1+s_2)/2} ((s_1 + s_2)/\log (1/\delta))^{(s_1+s_2)/2} F_{2p-2}^{(s_1 + s_2)/2}  \enspace .
\end{align*}
Now $(s_1 + s_2)/\log (1/\delta)) \le 2$. Further, $$\displaystyle \left( \frac{F_2}{H_J} \right)^{(s_1+s_2)/2} \le \left( \epsilon^{2} F_p^{2/p} \right)^{(s_1+s_2)/2} = \epsilon^{(s_1+s_2)} F_p^{(s_1+s_2)/p} \enspace. $$
Also, $$F_{2p-2}^{(s_1+s_2)/2} \le F_p^{(2-2/p)(s_1+s_2)/2} \le F_p^{(1-1/p)(s_1 + s_2)}\enspace. $$ Substituting these simplifications, we have,
\begin{align*}
\alpha(s_1, s_2) & \le (2c_0 c_2 c_3)^{s_1+s_2}\epsilon^{(s_1+s_2)} F_p^{(s_1+s_2)/p}F_p^{(1-1/p)(s_1 + s_2)} = \left( c' \epsilon F_p \right)^{s_1 + s_2}
\end{align*}
This implies the statement of the lemma, as argued earlier.

\end{proof}

\begin{lemma} [Restatement of Lemma~\ref{lem:dcentmomsmalldelta}.]
Let $\log (1/\delta) = \omega(\log n)$.
Let $C \ge L n^{1-2/p} \epsilon^{-4/p} \allowbreak \log^{2/p}(1/\delta)$ and $H_J = L' n^{1-2/p} \epsilon^{-2}$, for suitable constants $L,L'$.
Then, for $d = \lceil \log (1/\delta) \rceil$,
\begin{align*}
\expect{ \left( \sum_{i \in [n]} (Y_i - \expect{Y_i}) \right)^d \left( \sum_{i \in [n]} ( \conj{Y_i} - \expect{\conj{Y_i}})\right)^d \mid \H } \le \left( c \epsilon F_p \right)^{2d}
\end{align*}
for some constant $c$.
\end{lemma}
\begin{proof} The proof follows the lines of the proof of Lemma~\ref{lem:dcentmom} with the modification that we use Lemmas~\ref{lem:dcentmom:shelf:largest}, ~\ref{lem:dmom:shelf} and ~\ref{lem:dmf2} instead. We sketch an outline below.
Let $G' = \lmargin(G_0) \cup_{l=1}^L G_l$. Let $S'= S_1 \cup \ldots \cup S_{J-1}$.
\begin{align*}
\left( \sum_{i \in [n] } (Y_i - \expect{Y_i \mid \H}  \right)^d &= \sum_{d_1+d_2 + d_3 = d} \binom{d}{d_1 d_2 d_3} \left( \sum_{i \in G' } ( Y_i - \expect{Y_i \mid \H}\right)^{d_1} \\ & \hspace*{0.2in}\left( \sum_{i \in S'} (Y_i - \expect{Y_i\mid \H}) \right)^{d_2}  \left( \sum_{i \in S_J} (Y_i - \expect{Y_i \mid \H} \right)^{d_3}  \enspace .
\end{align*}
Similarly, we expand the conjugate expression $\left( \sum_{i \in [n] \cap \G_1} (\conj{Y_i} - \expect{\conj{Y_i} \mid \H} ) \right)^d$.
\begin{align*}
\left( \sum_{i \in [n] } (\conj{Y_i} - \expect{\conj{Y_i}\mid \H})  \right)^d
&= \sum_{d'_1+d'_2 + d'_3 = d} \binom{d}{d'_1  d'_2 d'_3} \left( \sum_{i \in G'} ( \conj{Y_i} - \expect{\conj{Y_i} \mid \H})\right)^{d'_1}\\ & \hspace*{0.2in}  \left( \sum_{i \in S'} (\conj{Y_i} - \expect{\conj{Y_i} \mid \H}) \right)^{d'_2}  \left( \sum_{i \in S_J} (\conj{Y_i} - \expect{\conj{Y_i} \mid \H}) \right)^{d'_3}  \enspace .
\end{align*}
Taking the product  and then its expectation, we obtain,
\begin{align*}
&\expect{\left( \sum_{i \in [n] \cap \G_1} (Y_i - \expect{Y_i \mid \H})  \right)^d
\left( \sum_{i \in [n] \cap \H_1} (\conj{Y_i} - \expect{\conj{Y_i} \mid \H})  \right)^d \mid \H } \\
& = \sum_{d_1+d_2 + d_3 = d}  \sum_{d'_1+d'_2 + d'_3 = d} \binom{d}{d_1 d_2 d_3}  \binom{d}{d'_1  d'_2 d'_3}\\
& \hspace*{0.1in} \cdot \expect{ \left( \sum_{i \in G'} ( Y_i - \expect{Y_i \mid \H})\right)^{d_1}\left( \sum_{i \in G'} ( \conj{Y_i} - \expect{\conj{Y_i} \mid \H})\right)^{d'_1} \mid \H} \\
&\hspace*{0.1in} \cdot \expect{ \left( \sum_{i \in S'} (Y_i - \expect{Y_i \mid \H}) \right)^{d_2} \left( \sum_{i \in S'} (\conj{Y_i} - \expect{\conj{Y_i} \mid \H}) \right)^{d'_2} \mid \H} \\
&\hspace*{0.1in} \cdot  \expect{ \left( \sum_{i \in S_J} (Y_i - \expect{Y_i \mid \H}) \right)^{d_3} \left( \sum_{i \in S_J} (\conj{Y_i} - \expect{\conj{Y_i} \mid \H}) \right)^{d'_3} \mid \H}
\end{align*}
By Lemmas~\ref{lem:dcentmom:shelf:largest}, ~\ref{lem:dmom:shelf} and ~\ref{lem:dmf2}, there is some constant $c'$ such that  the above expression is  upper-bounded by
\begin{align*}
& \le \sum_{d_1+d_2 + d_3 = d}  \sum_{d'_1+d'_2 + d'_3 = d} \binom{d}{d_1 d_2 d_3}  \binom{d}{d'_1  d'_2 d'_3}(c' \epsilon F_p)^{d_1 + d'_1 + d_2 + d'_2 + d_3 + d'_3}\\
& = (c' \epsilon F_p)^{2d} \sum_{d_1+d_2 + d_3 = d}  \sum_{d'_1+d'_2 + d'_3 = d} \binom{d}{d_1 d_2 d_3}  \binom{d}{d'_1  d'_2 d'_3} \\
& = (c' \epsilon F_p)^{2d} (1+1+1)^d (1+1+1)^d \\
& = (c \epsilon F_p)^{2d}
\end{align*}
for some (other) constant $c$.
\end{proof}

\begin{lemma} \label{lem:final}
For $\log (1/\delta)= \omega(\log n)$, $0 < \epsilon < 1/2$,  $C  \ge L n^{1-2/p} \epsilon^{-4/p} \log^{2/p}(1/\delta)$ and $H_J = L' n^{1-2/p} \epsilon^{-2}$, for suitable constants $L,L'$, then,
\begin{align*}
\prob{ \card{ \hat{F}_p - F_p} \le (\epsilon F_p)/2} \le  \delta/n^{\Theta(1)} \enspace.
\end{align*}
\end{lemma}

\begin{proof}
In the statement of Lemma~\ref{lem:dcentmomsmalldelta}, let $\epsilon' = \epsilon/(8c c')$, where, $c' \ge 1$ is a constant to be chosen below, and choose $L$ and $L'$ appropriately, by increasing them by a proportionate  constant factor. Then,  by Lemma~\ref{lem:dcentmomsmalldelta},  for
$d = \lceil \log (1/\delta)$,
\begin{align*}
\expect{ \left( \sum_{i \in [n]} (Y_i - \expect{Y_i \mid \H}) \right)^d \left( \sum_{i \in [n]} ( \conj{Y_i} - \expect{\conj{Y_i} \mid \H})\right)^d \mid \H} \le \left( (\epsilon/(8c')) F_p \right)^{2d}
\end{align*}
Therefore,
\begin{align*}
 \prob{ \left\lvert \sum_{i \in [n]} (Y_i - \expect{Y_i \mid \H}) \right\rvert > (\epsilon/2) F_p \mid \H}
&\le \frac{ \expect{ \left\vert \sum_{i \in [n]} (Y_i - \expect{Y_i \mid \H}) \right\rvert^{2d}\mid \H}}{((\epsilon/2) F_p)^{2d}}\\ & \le (4c')^{-2\log (1/\delta)}\\  & = \delta^{4+\log c'}
\end{align*}
From definition,  $ \expect{Y_i \mid H} = \abs{f_i}^p$, and therefore, $\expect{ \sum_{i \in [n]} Y_i \mid \H} = F_p$. Therefore,
\begin{align*}
\prob{ \abs{\hat{F}_p -F_p} > (\epsilon/2) F_p \mid \H} \le \delta^{4 + \log c'} \enspace .
\end{align*}
Now,
\begin{align} \label{eq:ubfinal:1}
&\prob{ \abs{\hat{F}_p - F_p} >  \epsilon F_p} \notag \\
&  \le  \prob{\abs{\hat{F}_p - F_p} > (\epsilon/2) F_p\mid \H} \prob{\H} +   \prob{\abs{\hat{F}_p - F_p} > \epsilon F_p\mid \neg \H} \prob{\neg \H}  \notag \\
& \le \delta^4 (1 - n^{-\Theta(1)}) + \prob{\abs{\hat{F}_p - F_p} > \epsilon F_p\mid \neg \H} \prob{\neg \H}
\end{align}
As shown in Lemma~\ref{lem:error1}, the unaccounted error term, in addition to the error term conditional under $\H$ is  $ \text{Error}^{\ghss} + \text{Error}^{\shelf} =  O(\epsilon^{\prime 2} F_p/\log (n) + O(\epsilon^{\prime p} F_p) \le 2\epsilon^{'2} $, with probability $1-\delta$. Let $c' = 4$.
Thus, conditional on $\H$ failing, the error increases
to  $$ (\epsilon/2) F_p + 2\epsilon^{'2} F_p \le  \epsilon F_p$$
with probability $1- \delta^{O(1)}$ by Lemma~\ref{lem:error1}. Hence, $\prob{\abs{\hat{F}_p - F_p} >  \epsilon F_p\mid \neg \H} \le \delta/n^{\Theta(1)}$. Combining with Eqn.~\eqref{eq:ubfinal:1}, we have,
\begin{align*}
&\prob{ \abs{\hat{F}_p - F_p} >  \epsilon F_p} \\
& \le \delta^4 (1 - n^{-\Theta(1)}) + \prob{\abs{\hat{F}_p - F_p} > \epsilon F_p\mid \neg \H} \prob{\neg \H}\\
& \le \delta^4 (1 - n^{-\Theta(1)}) + \frac{\delta}{n^{\Theta(1)}}  n^{-O(1)} \\
& \le \frac{\delta}{n^{\Theta(1)}} , ~~~~ \text{ since, $\delta = n^{-\omega(1)}$}\enspace .
\end{align*}

\end{proof}

\section{Full Version of Our Lower Bounds}\label{sec:lbLong}
\subsection{The $\Omega(n^{1-2/p} \epsilon^{-2} \log(1/\delta))$ Measurement Lower Bound}\label{sec:firstlb}
We consider the problem of designing a distribution $\mu$ over $r \times n$ matrices $S$ so that for any
fixed vector $x \in \mathbb{R}^n$, from $S \cdot x$, one can estimate $\|x\|_p^p = \sum_{i=1}^n |x_i|^p$
up to a $(1 \pm \epsilon)$ factor with probability
at least $1-\delta$. We say the algorithm $(\epsilon, \delta)$-approximates $\|x\|_p^p$.
We start by showing a lower bound on $r = \Omega(n^{1-2/p}\epsilon^{-2} \log(1/\delta))$, assuming that
$\log(1/\delta) \leq C \eps^2 n^{2/p}$, for a sufficiently small constant $C > 0$. This assumption is necessary up
to constant factors, as otherwise one can always just take $S$ to be the $n \times n$ identity matrix.

By Yao's minimax principle, it
suffices to design two distributions $\alpha$ and $\beta$ on $\mathbb{R}^n$ so that
\begin{enumerate}
\item For all $y$ in the support of $\alpha$, and all $z$ in the support of $\beta$, $\|y\|_p^p \leq (1-\epsilon) \|z\|_p^p$.
\item There is a constant $\kappa > 0$ so that
  for any fixed matrix $S \in \mathbb{R}^{r \times n}$ with $r = \kappa n^{1-2/p} \epsilon^{-2} \log(1/\delta)$,
  $D_{TV}(\bar{\alpha}, \bar{\beta}) < 1-\delta$, where $\bar{\alpha}$ is the distribution of $S x$ for $x \sim \alpha$ and
  $\bar{\beta}$ is the distribution of $S x$ for $x \sim \beta$.
\end{enumerate}

\subsubsection{Preliminaries}
We use  the following lemma (re-stated from Section~\ref{sec:lb:intro}) concerning distances between multivariate Gaussian distributions.
\begin{lemma}[Re-statement of Lemma~\ref{lem:gtvd}.]
  Let $P_1$ denote the $N(0, I_r)$ Gaussian distribution, and $P_2$ the
  $N(\tau, I_r)$ Gaussian distribution. Then
  $$D_{TV}(P_1, P_2) = \Pr[|N(0,1)| \leq \|\tau\|_2/2],$$
  where $N(0,1)$ denotes a standard one-dimensional normal random variable.
\end{lemma}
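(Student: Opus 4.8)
The plan is to exploit the spherical symmetry of the standard Gaussian together with the classical variational characterization of total variation distance via the likelihood ratio. Recall that if $P_1,P_2$ have densities $p_1,p_2$, then $D_{TV}(P_1,P_2) = P_1(A) - P_2(A)$ where $A := \{x : p_1(x) > p_2(x)\}$; indeed $P_1(B) - P_2(B) = \int_B (p_1 - p_2)$ is maximized over measurable $B$ exactly by taking $B = A$, and the resulting value is $\int (p_1-p_2)_+ = D_{TV}(P_1,P_2)$ (the Neyman--Pearson/Scheff\'e argument). Throughout, let $\Phi$ denote the cdf of a standard one-dimensional normal.

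First I would compute the likelihood ratio explicitly. Since the density of $N(\mu,I_r)$ is proportional to $\exp(-\|x-\mu\|_2^2/2)$, we get $p_1(x)/p_2(x) = \exp\!\big(\tfrac12\|\tau\|_2^2 - \langle \tau, x\rangle\big)$. Hence, assuming $\tau \neq 0$ (the case $\tau = 0$ is trivial since then both sides of the claimed identity are $0$), the set $A$ is precisely the half-space $\{x : \langle \tau, x\rangle < \tfrac12\|\tau\|_2^2\}$, equivalently $\{x : \langle \tau/\|\tau\|_2,\, x\rangle < \tfrac12\|\tau\|_2\}$.

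Next I would pass to the one-dimensional projection $Z := \langle \tau/\|\tau\|_2,\, x\rangle$. Being a unit linear functional of a (possibly shifted) standard Gaussian vector, $Z$ is distributed as $N(0,1)$ when $x\sim P_1$ and as $N(\|\tau\|_2,1)$ when $x\sim P_2$. Therefore $P_1(A) = \Pr[N(0,1) < \tfrac12\|\tau\|_2] = \Phi(\tfrac12\|\tau\|_2)$ and $P_2(A) = \Pr[N(\|\tau\|_2,1) < \tfrac12\|\tau\|_2] = \Phi(-\tfrac12\|\tau\|_2)$. Subtracting and invoking the symmetry of the standard normal density yields $D_{TV}(P_1,P_2) = \Phi(\tfrac12\|\tau\|_2) - \Phi(-\tfrac12\|\tau\|_2) = \Pr[\,|N(0,1)| \le \|\tau\|_2/2\,]$, which is the claim.

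There is essentially no obstacle: the only points needing (minor) care are the justification that the likelihood-ratio threshold set attains the supremum in the definition of $D_{TV}$, and the separate handling of the degenerate case $\tau = 0$. An equivalent alternative route, which I would mention only as a sanity check, is to first apply an orthogonal transformation carrying $\tau$ to $\|\tau\|_2 e_1$ — rotations preserve both the standard Gaussian and $D_{TV}$ — and then reduce to a purely one-dimensional computation, since all coordinates but the first are identically distributed (and independent of the first) under $P_1$ and $P_2$, so the two densities cross at the midpoint $\|\tau\|_2/2$; I would nonetheless carry out the likelihood-ratio version above since it is shorter and avoids invoking the tensorization identity $D_{TV}(\nu\otimes\lambda,\nu'\otimes\lambda) = D_{TV}(\nu,\nu')$.
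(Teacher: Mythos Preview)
Your argument is correct. The paper takes a slightly different route: it first applies an orthogonal transformation sending $\tau$ to $\|\tau\|_2 e_1$, uses rotational invariance of both the standard Gaussian and $D_{TV}$ to reduce to $D_{TV}(N(0,1),N(\|\tau\|_2,1))$, and then simply cites an external reference for the one-dimensional formula. Your approach instead identifies the optimal set directly via the likelihood ratio (Scheff\'e/Neyman--Pearson), which turns out to be a half-space, and then computes $P_1(A)-P_2(A)$ by projecting onto the unit vector $\tau/\|\tau\|_2$. The two arguments are closely related---indeed you mention the rotation reduction as a sanity check---but yours is more self-contained, since it actually carries out the one-dimensional computation rather than outsourcing it to a citation; the paper's version is shorter on the page but leaves the final step as a black box.
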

\begin{proof}
  Let $U$ be an arbitrary $r \times r$ orthonormal matrix.
  Note that $D_{TV}(P_1, P_2) = D_{TV}(P_1', P_2')$, where $P_i'$ is the distribution
  of $Ux$, where $x \sim P_i$. Let $U$ be any such matrix which rotates $\tau$ to $\|\tau\|_2 \cdot e_1$,
  where $e_1$ is the first standard unit vector. Then by rotational invariance,
  $D_{TV}(P_1, P_2) = D_{TV}(N(0,1), N(\|\tau\|_2, 1))$. Then by Section 3 of \cite{tvd},
  $$D_{TV}(N(0,1), N(\|\tau\|_2, 1) = \Pr[|N(0,1)| \leq \|\tau\|_2/2].$$
\end{proof}
We also need a lemma concerning concentration of $\|x\|_p$ for $x \in N(0, I_n)$.
\begin{lemma}\label{lem:concentrate}
  For $x \in N(0, I_n)$, for all $t \geq 0$ we have
  $$\Pr[|\|x\|_p - {\bf E}[\|x\|_p]| \geq t] \leq 2e^{-\frac{t^2}{2}}.$$
\end{lemma}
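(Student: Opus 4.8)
The plan is to reduce this to the standard Gaussian concentration inequality for Lipschitz functions: if $f:\mathbb{R}^n \to \mathbb{R}$ is $L$-Lipschitz with respect to the Euclidean norm and $x \sim N(0,I_n)$, then $\Pr[|f(x) - \mathbb{E}[f(x)]| \geq t] \leq 2e^{-t^2/(2L^2)}$. So it suffices to show that $f(x) = \|x\|_p$ is $1$-Lipschitz in $\|\cdot\|_2$, and then apply this with $L=1$.

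First I would verify the Lipschitz bound. By the triangle inequality for $\|\cdot\|_p$ we have $|\,\|x\|_p - \|y\|_p\,| \leq \|x-y\|_p$. Since $p \geq 2$ throughout this section, the $\ell_p$ norm is dominated by the $\ell_2$ norm, i.e. $\|v\|_p \leq \|v\|_2$ for every $v \in \mathbb{R}^n$; hence $|\,\|x\|_p - \|y\|_p\,| \leq \|x-y\|_2$, so $f$ is $1$-Lipschitz. Plugging $L=1$ into the Gaussian concentration bound immediately yields $\Pr[|\|x\|_p - \mathbb{E}[\|x\|_p]| \geq t] \leq 2e^{-t^2/2}$, which is the claim.

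If a self-contained derivation of the concentration inequality is wanted rather than a citation, I would obtain it by the Herbst argument: using the Gaussian log-Sobolev inequality (or directly the Gaussian isoperimetric inequality), bound the moment generating function $\mathbb{E}\big[e^{\lambda(f(x) - \mathbb{E}[f(x)])}\big] \leq e^{\lambda^2/2}$ for all $\lambda$, then apply Markov's inequality to the upper and lower tails and optimize over $\lambda$, giving $e^{-t^2/2}$ on each side and $2e^{-t^2/2}$ in total.

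There is no real obstacle here; the one point to be careful about is that the clean Lipschitz constant $1$ uses $p \geq 2$ (for $p < 2$ one would only get Lipschitz constant $n^{1/2-1/p}$ and a correspondingly weaker tail), but that regime does not arise in this paper.
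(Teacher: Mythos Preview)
Your proposal is correct and follows exactly the same approach as the paper: show that $x \mapsto \|x\|_p$ is $1$-Lipschitz in the Euclidean norm via the triangle inequality together with $\|v\|_p \le \|v\|_2$ for $p \ge 2$, and then invoke the standard Gaussian Lipschitz concentration inequality with $L=1$. Your additional remarks (the Herbst/log-Sobolev derivation and the caveat about $p<2$) go a bit beyond what the paper includes, but the core argument is identical.
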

\begin{proof}
  Say that a function $f:\mathbb{R}^n \rightarrow \mathbb{R}$ is $L$-Lipshitz with respect to the Euclidean norm
  if $|f(x)-f(y)| \leq L\|x-y\|_2$ for all $x, y \in \mathbb{R}^n$.
  We invoke the following standard theorem on $L$-Lipshitz functions with respect to the Euclidean norm
  \begin{theorem}(see, e.g., Theorem 2.4 of \cite{w16})\label{thm:concentration}
    Let $x \sim N(0, I_n)$ and let $f$ be $L$-Lipshitz with respect to the Euclidean norm. Then
    $$\Pr[|f(x) - {\bf E}[f(x)]| \geq t] \leq 2 e^{-\frac{t^2}{2L^2}},$$
    for all $t \geq 0$.
  \end{theorem}
  Let $f(x) = \|x\|_p$. Then for $x, y \in \mathbb{R}^n$,
  \begin{eqnarray*}
    |\|x\|_p -\|y\|_p| & \leq & \|x-y\|_p \leq \|x-y\|_2,
  \end{eqnarray*}
  where the first inequality is the triangle inequality, and the second uses that $\|z\|_p \leq \|z\|_2$ for $p \geq 2$ and
  any vector $z \in \mathbb{R}^n$. Hence, $f$ is $1$-Lipshitz with respect to the Euclidean norm, and applying Theorem \ref{thm:concentration},
  $$\Pr[|\|x\|_p - {\bf E}[\|x\|_p]| \geq t] \leq 2^{-\frac{t^2}{2}}.$$
\end{proof}
Since ${\bf E}_{X \sim N(0,1)} [|X|^p]$ is a positive constant for any constant $p$, we have that ${\bf E}[\|x\|_p^p] = \Theta(n)$
and ${\bf Var}[\|x\|_p^p] = O(n)$. Consequently, by Chebyshev's inequality there is an absolute constant $\kappa > 0$
so that with probability at least $1/2$, it holds that $\frac{n}{\kappa} \leq \|x\|_p^p \leq \kappa n$.
Therefore because of the tail bounds given in
Lemma \ref{lem:concentrate}, it follows that
${\bf E}_{x \sim N(0, I_n)} [\|x\|_p] = \Theta(n^{1/p})$. We let $E_n = {\bf E}_{x \sim N(0, I_n)} [\|x\|_p]$.

\subsubsection{The Hard Distribution}\label{sec:dist}
{\bf Case 1:}
Suppose $y = n^{1/p}e_I + x$, where $x \sim N(0, I_n)$ and $I$ is independent and uniformly random in $[n] = \{1, 2, \ldots, n\}$.
Let $\alpha'$ be the distribution of $y$.
Then $\|y\|_p^p = (n^{1/p} e_I - x_I)^p + \|\bar{x}\|_p^p$, where $\bar{x}$ denotes the vector $x$ with the $I$-th coordinate
removed. Let $\mathcal{E}$
be the event that $|x_I| \leq \eps n^{1/p}/p$. By independence of $I$ and
$x$, we have
\begin{eqnarray*}
  \Pr_{I, x}[\mathcal{E}] & = & (1/n) \sum_i \Pr_x[\mathcal{E} \mid I = i]\\
  & = & (1/n) \sum_i \Pr_x[|x_i| \leq \epsilon n^{1/p}/p].
  \end{eqnarray*}
By tail bounds for a standard normal random variable,
$$\Pr[\mathcal{E}] \geq 1 - 2 e^{-\frac{\eps^2 n^{2/p}}{2 \cdot p^2}} \geq 1 - \frac{\delta}{10},$$
provided $2e^{-\frac{\eps^2 n^{2/p}}{2 \cdot p^2}} \leq \delta/10$, which holds if $\log(1/\delta) \leq C \eps^2 n^{2/p}$, for a
sufficiently small constant $C > 0$,
as we have assumed. Conditioned on $\mathcal{E}$,
\begin{eqnarray*}
  \|y\|_p^p & \leq & (n^{1/p} (1+\epsilon/p))^p + \|\bar{x}\|_p^p \leq n (1+2\epsilon) + \|\bar{x}\|_p^p,
\end{eqnarray*}
using that $(1+x)^p \leq 1+2px$ for $px \leq 1/2$. Let $\mathcal{F}$ be the event that
$\|\bar{x}\|_p \leq E_{n-1} + 10\sqrt{\log(1/\delta)}$. By Lemma \ref{lem:concentrate}, $\Pr[\mathcal{F}] \geq 1-\delta/10$. Note also
that $10 \sqrt{\log(1/\delta)} \leq 10 \sqrt{C} \eps n^{1/p}$, under our assumption on $\log(1/\delta)$, and for a sufficiently small
constant $C > 0$,
this is at most $(\eps/p)E_{n-1}$, using that $E_{n-1} = \Theta(n^{1/p})$. Hence, conditioned on $\mathcal{E}$ and $\mathcal{F}$,
\begin{eqnarray}\label{eqn:lower}
\|y\|_p^p \leq n(1+2\epsilon) + E_{n-1}^p(1+\epsilon/p)^p = (1+2\epsilon) \cdot (n + E_{n-1}^p).
\end{eqnarray}

{\bf Case 2:}
Now let $z = (1+C'\epsilon)n^{1/p} e_I + x$, where $x \sim N(0,I_n)$ and $I$ is independent and uniformly random in $[n]$.
Here $C' > 0$ be a sufficiently large constant. Let $\beta'$ be the distribution of $z$.
Then $\|z\|_p^p = ((1+C'\epsilon)n^{1/p} e_I + x_I)^p + \|\bar{x}\|_p^p$.
Let $\mathcal{E}'$ be the event that $|x_I| \leq \eps n^{1/p}/p$. As for the event $\mathcal{E}$,
we have $\Pr[\mathcal{E}'] \geq 1-\delta/10$. Also, let
$\mathcal{F}'$ be the event hat $\|\bar{x}\|_p \geq E_{n-1} - 10\sqrt{\log(1/\delta)}$. As in the previous paragraph, we have $\Pr[\mathcal{F}'] \geq 1-\delta/10$
and conditioned on $\mathcal{E}'$ and $\mathcal{F}'$, that
$$\|z\|_p^p \geq (1+C' \epsilon) n(1-\epsilon/p)^p + E_{n-1}^p (1-\epsilon/p)^p \geq (1+C'\epsilon/2) n + (1-\epsilon)E_{n-1}^p,$$
where here we use Bernoulli's inequality that $(1+x)^p \geq (1+xp)$ for $p \geq -1$, and that $C' > 0$ is sufficiently large.
As argued above, $E_{n-1}^p = O(n)$, and consequently for large enough $C'$, we have that
\begin{eqnarray}\label{eqn:upper}
  \|z\|_p^p \geq (1+4 \epsilon)(n + E_{n-1}^p).
\end{eqnarray}

\subsubsection{Conditioning the Cases:}\label{sec:firstlbReuse}
Let $\alpha$ be the distribution of $\alpha'$ conditioned on $\mathcal{E}$ and $\mathcal{F}$. Similarly, let $\beta$ be the distribution of
$\beta'$ conditioned on $\mathcal{E}'$ and $\mathcal{F}'$. Recall that for a distribution $\gamma$ on $x$, the distribution $\bar{\gamma}$ is the
distribution of $Sx$.

Combining (\ref{eqn:lower}) and (\ref{eqn:upper}) it follows that any algorithm which can $(\epsilon, \delta)$-approximate $\|x\|_p^p$
of an arbitrary vector $x$ can also be used to decide, with probability at least $1-\delta$, if $x$ is drawn from $\alpha$ or if
$x$ is drawn from $\beta$. Consequently, $D_{TV}(\bar{\alpha}, \bar{\beta}) \geq 1-\delta$.

On the other hand, we have
\begin{eqnarray*}
  D_{TV}(\bar{\alpha}, \bar{\beta}) & \leq & D_{TV}(\bar{\alpha'}, \bar{\beta'}) + D_{TV}(\bar{\alpha'}, \bar{\alpha}) + D_{TV}(\bar{\beta}, \bar{\beta'})\\
  & \leq & D_{TV}(\bar{\alpha'}, \bar{\beta'}) + D_{TV}(\alpha, \alpha') + D_{TV}(\beta, \beta')\\
  & \leq & D_{TV}(\bar{\alpha'}, \bar{\beta'}) + \delta/10 + \delta/10\\
  & \leq & D_{TV}(\bar{\alpha'}, \bar{\beta'}) + \delta/5,
\end{eqnarray*}
where the first inequality is the triangle inequality, the second uses the data processing inequality, and the third uses
that $E \cap F$ and $E' \cap F'$ are the events that realize the variation distance in the two cases.
Therefore, to obtain our lower bound, it suffices to show for small sketching dimension $r$
that $D_{TV}(\bar{\alpha'}, \bar{\beta'}) < 1-2\delta$.

\subsubsection{Completing the Argument}
Let us fix an $r \times n$ matrix $S$. Without loss of generality we can assume the rows of $S$ are orthonormal, since we can always
perform a change of basis to its rowspace to make the rows orthonormal, and any change of basis preserves $D_{TV}(Sx, Sy)$.

We let $\bar{\alpha'}_i$ denote the distribution of $\bar{\alpha'}$ conditioned on $I = i$, and similarly define
$\bar{\beta'}_i$. Then
\begin{eqnarray}\label{eqn:contradict}
  D_{TV}(\bar{\alpha'}, \bar{\beta'}) & = & D_{TV}(\frac{1}{n} \sum_i \bar{\alpha'}_i, \frac{1}{n} \sum_i \bar{\beta'}_i)
  \leq \frac{1}{n} \sum_i D_{TV}(\bar{\alpha'}_i, \bar{\beta'}_i).
\end{eqnarray}
To complete the argument, as argued in Section \ref{sec:dist}, we just need to show that $ \frac{1}{n} \sum_i D_{TV}(\bar{\alpha'}_i, \bar{\beta'}_i)$ is at most $1-2\delta$.
Suppose, towards a contradiction, that it were larger than $1-2\delta$. Since it is an average of $n$ summands, each bounded by $1$,
it implies that for at least $1-3\delta$ fraction of summands, the summand value is at least $1-3\delta$. Indeed, otherwise the summation
would be at most
$(1-3\delta)^2 + 3\delta \leq 1-3\delta + 9\delta^2 < 1-2\delta$, where the final inequality follows for $\delta$ smaller than
a small enough constant. This is a contradiction.

On the other hand, since $\|S\|_F^2 = r$, by averaging, for at least $n/2$ columns $S_i$ of $S$, we have $\|S_i\|_2^2 \leq 2r/n$. Since $\delta$
is smaller than a small enough constant, we can assume $1-3\delta \geq 2/3$, and therefore by a union bound there exists an $i^* \in [n]$
for which both (1) $\|S_{i^*}\|_2^2 \leq 2r/n$ and (2) $D_{TV}(\bar{\alpha'}_{i^*}, \bar{\beta'}_{i^*}) \geq 1-3\delta$.

Since $S$ has orthonormal rows, $\bar{\alpha'}_{i^*} \sim N(n^{1/p}S_{i^*}, I_r)$ while $\bar{\beta'}_{i^*} \sim N((1+C'\epsilon)n^{1/p}S_{i^*}, I_r)$.
Shifting both distributions by $n^{1/p} S_{i^*}$, it follows that
$$D_{TV}(\bar{\alpha'}_{i^*}, \bar{\beta'}_{i^*}) = D_{TV}(N(0, I_r), N(C'\epsilon n^{1/p} S_{i^*}, I_R)).$$
Applying Lemma \ref{lem:gtvd},
\begin{eqnarray*}
  D_{TV}(N(0, I_r), N(C'\epsilon n^{1/p} S_{i^*}, I_R) & = & \Pr[|N(0,1)| \leq C' \epsilon n^{1/p} \|S_{i^*}\|_2/2]\\
  & \leq & \Pr[|N(0,1)| \leq C' \epsilon n^{1/p} \sqrt{2r/n}/2].
\end{eqnarray*}
We can assume $C' \epsilon n^{1/p} \sqrt{2r/n}/2 \geq 2$, as otherwise this probability is $1- \Omega(1)$, which is a contradiction
to it being at least $1-3\delta$ for small enough constant $\delta > 0$. A standard bound \cite{d10} is then
that for $t \geq 2$,
$$\Pr[|N(0,1)| > t] \geq \frac{e^{-t^2/2}}{\sqrt{2 \pi}} \cdot \frac{1}{2t}.$$
Consequently, we have
\begin{eqnarray*}
  \Pr[|N(0,1)| \leq C' \epsilon n^{1/p} \sqrt{2r/n}/2]
  & \leq & 1 - 2 \frac{e^{-(C')^2 \epsilon^2 n^{2/p} r/(2n)}}{\sqrt{2\pi}} \cdot \frac{1}{C' \epsilon n^{1/p} \sqrt{2r/n}}.
\end{eqnarray*}
It follows that if $r = o(n^{1-2/p} \epsilon^{-2} \log(1/\delta)$, this probability is strictly less than $1-3\delta$.

Let us recap the argument. We found an $i^*$ with two properties:
(1) $\|S_{i_*}|_2^2 \leq 2r/n$, and
(2) $D_{TV}(\bar{\alpha'}_{i_*}, \bar{\beta'}_{i_*}) \geq 1-3\delta$.
Using (1), we were able to apply Lemma \label{lem:gtvd} to upper bound $D_{TV}(\bar{\alpha'}_{i^*}, \bar{\beta'}_{i^*})$
by a quantity that was strictly less than $1-3\delta$, thereby contradicting (2). It follows that there cannot
exist an $i^*$, which means that our hypothesis in (\ref{eqn:contradict}) did not hold. Consequently,
$D_{TV}(\bar{\alpha'}, \bar{\beta'}) = D_{TV}(\frac{1}{n} \sum_i \bar{\alpha'}_i, \frac{1}{n} \sum_i \bar{\beta'}_i) \leq \frac{1}{n} \sum_i D_{TV}(\bar{\alpha'}_i, \bar{\beta'}_i) \leq 1-2\delta$, and so by the argument in
Section \ref{sec:firstlbReuse} the proof is complete.

\subsection{The $\Omega(n^{1-2/p} \eps^{-2/p} (\log^{2/p} 1/\delta) \log n)$ Measurement Lower Bound}
We can assume $\log(1/\delta) \leq C \eps^2 n^{2/p}$ for a small enough constant $C > 0$, as otherwise
the lower bound we proved in Section \ref{sec:firstlb} is $\Omega(n)$, and one can always let $S$ be the $n \times n$ identity matrix
which would be optimal up to a constant factor in this regime. We refer to this as $\delta$-{\bf Bound1}. Note also that since $p > 2$,
this implies $n = \omega(\log(1/\delta))$, which implies $E_{n-t} = \Theta(n^{1/p})$ whenever $t = O(\log(1/\delta))$. We use this
fact later.

Furthermore, we can give two other bounds on $\delta$, similar to $\delta$-{\bf Bound1}.

First, we can assume that $n^{1-2/p} \eps^{-2/p} (\log^{2/p} 1/\delta) \log n = \Omega(n^{1-2/p} \epsilon^{-2} \log 1/\delta)$,
as otherwise the lower bound in Section \ref{sec:firstlb} is stronger.
This is equivalent to assuming $\log(1/\delta) \leq C \epsilon^{2-2/p} (\log^{2/p} 1/\delta) \log n$ for a sufficiently small constant $C > 0$.
 We refer to this as $\delta$-{\bf Bound2}.

Second, we can assume $n^{1-2/p} \eps^{-2/p} (\log^{2/p} 1/\delta) \log n \leq Cn$ for a sufficiently small constant $C > 0$, as otherwise the lower bound
we are proving is $\Omega(n)$ and one can always just let $S$ be the $n \times n$ identity matrix which would be optimal up to a constant
factor in this regime. This assumption is equivalent to $\log^{2/p} 1/\delta \leq C \eps^{2/p} n^{2/p}/\log n$. We refer to this as $\delta$-{\bf Bound3}.

In fact, we will need to assume $\delta$-{\bf Bound4}, which is that $\log(1/\delta) \leq (n^{1-2/p} \eps^{-2/p} (\log^{2/p} 1/\delta) \log n)^{1/4} n^{-c'}$,
for a sufficiently small constant $c' > 0$. Since $p > 2$ is an absolute constant, independent of $n$, this just states that $\delta \geq 2^{-n^{c''}}$ for a sufficiently small constant $c'' > 0$.
We note that $\delta$-{\bf Bound4} implies some of the bounds above, but we state it separately since unlike the previous three bounds on $\delta$, which
are optimal, it may be possible to relax this one for a larger constant $c'' > 0$.


\subsubsection{Preliminaries}
Let $p$ and $q$ be probability density functions of continuous distributions. The $\chi^2$-divergence from $p$ to $q$ is
$$\chi^2(p, q) = \int_x \left (\frac{p(x)}{q(x)} - 1 \right )^2 q(x) dx.$$

\begin{fact}(\cite{Tsybakov}, p.90) \label{chitvd}
For any two distributions $p$ and $q$, we have $D_{TV}(p, q) \leq \sqrt{\chi^2(p,q)}$.
  \end{fact}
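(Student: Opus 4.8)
The plan is to derive the bound directly from the definitions using the Cauchy--Schwarz inequality. Write the total variation distance in the form $D_{TV}(p,q) = \frac{1}{2}\int_x |p(x) - q(x)|\, dx$, and note that expanding the square in the definition given above yields $\chi^2(p,q) = \int_x \frac{(p(x)-q(x))^2}{q(x)}\, dx$. If $\chi^2(p,q) = \infty$ the inequality is vacuous, so we may assume it is finite; in particular $p(x) = 0$ whenever $q(x) = 0$, so the integrands below are well defined on the support of $q$ and zero elsewhere.

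The one substantive step is to split $|p(x) - q(x)| = \frac{|p(x)-q(x)|}{\sqrt{q(x)}}\cdot\sqrt{q(x)}$ and apply Cauchy--Schwarz to the two factors:
\begin{eqnarray*}
\int_x |p(x) - q(x)|\, dx & \leq & \left(\int_x \frac{(p(x)-q(x))^2}{q(x)}\, dx\right)^{1/2}\left(\int_x q(x)\, dx\right)^{1/2} = \sqrt{\chi^2(p,q)},
\end{eqnarray*}
using that $\int_x q(x)\, dx = 1$. Therefore $D_{TV}(p,q) = \frac{1}{2}\int_x |p(x)-q(x)|\, dx \leq \frac{1}{2}\sqrt{\chi^2(p,q)} \leq \sqrt{\chi^2(p,q)}$, which is in fact slightly stronger than the claimed bound.

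I do not anticipate any real obstacle here: the argument is a single application of Cauchy--Schwarz, and the only point needing a word of care is the treatment of points where $q$ vanishes, which is handled by the reduction to the case $\chi^2(p,q) < \infty$ above. The discrete analogue is identical with summations replacing the integrals, so the same proof covers whichever form of the fact is invoked later in the paper.
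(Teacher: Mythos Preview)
Your argument is correct; this is exactly the standard Cauchy--Schwarz proof of the inequality (and, as you note, it even gives the sharper bound with the extra factor of $\tfrac{1}{2}$). The paper itself does not supply a proof of this fact at all but simply cites it from \cite{Tsybakov}, so there is no in-paper argument to compare against.
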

We need a fact about the distance between a Gaussian location mixture to a Gaussian distribution.
\begin{fact}(p.97 of \cite{is03})\label{mixture}
  Let $p$ be a distribution on $\mathbb{R}^n$. Then
  $$\chi^2(N(0,I_n) \ast p, N(0,I_n)) = {\bf E}[e^{\langle X, X' \rangle}] - 1,$$
  where $X$ and $X'$ are independently drawn from $p$.
\end{fact}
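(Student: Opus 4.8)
The plan is to unwind the definition of $\chi^2$-divergence and reduce everything to an elementary Gaussian integral evaluated by completing the square. Write $\phi$ for the density of $N(0,I_n)$, so that $\phi(z) = (2\pi)^{-n/2} e^{-\|z\|_2^2/2}$, and let $f$ denote the density of the location mixture $N(0,I_n) \ast p$, namely $f(x) = {\bf E}_{X \sim p}[\phi(x - X)]$. Since $\int f(x)\, dx = \int \phi(x)\, dx = 1$, expanding the square in the definition gives the standard identity
$$\chi^2(f, \phi) = \int \left(\frac{f(x)}{\phi(x)}\right)^2 \phi(x)\, dx - 2\int f(x)\, dx + \int \phi(x)\, dx = \int \frac{f(x)^2}{\phi(x)}\, dx - 1,$$
so it suffices to show $\int \frac{f(x)^2}{\phi(x)}\, dx = {\bf E}[e^{\langle X, X' \rangle}]$, where $X$ and $X'$ are independent draws from $p$.

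Next I would substitute the mixture representation of $f$, writing $f(x)^2 = {\bf E}_X[\phi(x-X)] \cdot {\bf E}_{X'}[\phi(x-X')] = {\bf E}_{X,X'}[\phi(x-X)\phi(x-X')]$ with $X,X'$ independent, and then interchange the expectation with the $dx$-integral. The integrand is nonnegative, so Tonelli's theorem applies with no integrability hypotheses, and the identity will hold as an equality in $[0,+\infty]$ even if ${\bf E}[e^{\langle X,X'\rangle}]$ is infinite. This reduces the claim to evaluating, for each fixed $a, b \in \mathbb{R}^n$, the inner integral $\int \frac{\phi(x-a)\phi(x-b)}{\phi(x)}\, dx$.

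The heart of the computation is the algebraic identity $\|x-a\|_2^2 + \|x-b\|_2^2 - \|x\|_2^2 = \|x-(a+b)\|_2^2 - 2\langle a, b\rangle$, obtained by expanding each norm and completing the square in $x$. Plugging this into the ratio of Gaussian densities gives $\frac{\phi(x-a)\phi(x-b)}{\phi(x)} = e^{\langle a, b\rangle}\, \phi(x-(a+b))$, and since $\phi(\cdot - (a+b))$ is again a probability density it integrates to $1$, yielding $\int \frac{\phi(x-a)\phi(x-b)}{\phi(x)}\, dx = e^{\langle a, b\rangle}$. Taking the expectation over $a = X$, $b = X'$ then gives $\int \frac{f(x)^2}{\phi(x)}\, dx = {\bf E}[e^{\langle X, X'\rangle}]$, which combined with the first paragraph completes the proof.

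There is no real obstacle here: the argument is a short, clean calculation. The only two points deserving a moment of care are (i) the interchange of expectation and integration, which is immediate from Tonelli by nonnegativity and makes the statement robust to divergence of the right-hand side, and (ii) making sure the cross term in the $\chi^2$ expansion is handled via $\int f = 1$ rather than any unjustified claim about the pointwise ratio $f/\phi$.
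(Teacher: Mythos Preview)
Your proof is correct and is exactly the standard direct computation: expand $\chi^2(f,\phi)=\int f^2/\phi-1$, write $f^2$ as an expectation over independent $X,X'$, swap by Tonelli, and evaluate the inner Gaussian integral via the completing-the-square identity $\|x-a\|_2^2+\|x-b\|_2^2-\|x\|_2^2=\|x-(a+b)\|_2^2-2\langle a,b\rangle$. The paper does not supply its own proof of this statement; it simply quotes it as a fact from \cite{is03}, so there is nothing to compare against beyond noting that your argument is the expected one.
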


\subsubsection{The Hard Distribution}
Let $T$ be a sample of $t \eqdef \log_3(1/\sqrt{\delta})$ coordinates $i \in [n]$ without replacement.

{\bf Case 1:}
Suppose $y \sim N(0,I_n)$, and let $\alpha'$ be the distribution of $y$.
Let $\bar{y}$ denote the vector $y$ with the coordinates in the set $T$ removed, and let $x$ be $y$ restricted to the coordinates in $T$.
By the triangle inequality, $\|y\|_p \leq \|\bar{y}\|_p + \|x\|_p$.
Let $\mathcal{E}$ be the event that $\|\bar{y}\|_p \leq E_{n-t} + 10 \sqrt{\log(1/\delta)}$.
By Lemma \ref{lem:concentrate}, $\Pr[\mathcal{E}] \geq 1-\delta/10$.
Let $\mathcal{F}$ be the event that $\|x\|_p \leq E_{t} + 10 \sqrt{\log(1/\delta)}$.
Again by Lemma \ref{lem:concentrate}, $\Pr[\mathcal{F}] \geq 1-\delta/10$.
Note also that $E_t = \Theta(t^{1/p}) \leq \sqrt{\log(1/\delta)}$ using the definition of $t$ and that $p > 2$.
Consequently, if $\mathcal{E}$ and $\mathcal{F}$ occur, then
$$\|y\|_p \leq E_{n-t} + 21 \sqrt{\log(1/\delta)}.$$
By $\delta$-{\bf Bound1}, we have
$\sqrt{\log(1/\delta)} \leq C^{1/2} \eps n^{1/p}$ for a sufficiently small constant $C > 0$. This implies that
$21 \sqrt{\log(1/\delta)} \leq (\epsilon/p) E_{n-t}$, using that $E_{n-t} = \Theta(n^{1/p})$.
Consequently, conditioned on $\mathcal{E}$ and $\mathcal{F}$, we have that
\begin{eqnarray*}
  \|y\|_p^p & \leq & (1+\epsilon/p)^p E_{n-t}^p\\
  & \leq & (1+2\epsilon) E_{n-t}^p,
\end{eqnarray*}
where the second inequality follows for $\epsilon$ less than a sufficiently small positive constant.

{\bf Case 2:}
Let $z = x + \sum_{i \in T} \frac{C' \epsilon^{1/p} E_{n-t}}{t^{1/p}} e_i,$ where $x \sim N(0, I_n)$.
Note that $x$ and $T$ are independent. Also, $C' > 0$ is a sufficiently large constant.
Let $\beta'$ be the distribution of $z$. Let $\mathcal{E}'$ be the event
that for all $i \in T$, $|x_i| \leq \epsilon^{1/p} n^{1/p}/(p t^{1/p})$. Then
$$\Pr[\mathcal{E}'] \geq 1 - t \cdot 2 e^{-\frac{\epsilon^{2/p} n^{2/p}}{2 p^2 t^{2/p}}} \geq 1 - \frac{\delta}{10},$$
which holds provided $\log(1/\delta) \leq C \frac{\epsilon^{2/p} n^{2/p}}{\log^{2/p} 1/\delta}$ for a sufficiently small constant $C > 0$. To see that
the latter holds, note that it is equivalent to the constraint that
$\log^{1-2/p}(1/\delta) \leq C \epsilon^{2/p} n^{2/p}$. To see that this latter constraint holds, observe
\begin{eqnarray*}
  \log^{1-2/p}(1/\delta) & \leq & \log(1/\delta)\\
  & \leq & C \epsilon^{2-2/p} \log^{2/p} (1/\delta) \log n\\
  & \leq & C^2 \epsilon^{2-2/p} n^{2/p} \epsilon^{2/p}\\
  & = & C^2 \epsilon^2 n^{2/p}\\
  & \leq & C^2 \epsilon^{2/p} n^{2/p},
  \end{eqnarray*}
where the first inequality follows since $p > 0$, the second inequality follows from $\delta$-{\bf Bound2}, the third inequality
follows from $\delta$-{\bf Bound3}, and the final inequality holds for any $p \geq 1$. Thus, the above constraint holds, since $C^2 > 0$ can be
made sufficiently small.

Conditioned on $\mathcal{E}'$, and using that $E_{n-t} = \Theta(n^{1/p})$ and $C' > 0$ is a sufficiently large constant, we have
$$\|z\|_p^p \geq t \cdot \frac{\epsilon C'' E_{n-t}^p }{t} \left (1-1/p \right )^p + \|\bar{x}\|_p^p,$$
where $\bar{x}$ denotes $x$ with coordinates $i \in T$ removed, and where $C'' > 0$ can be made an arbitrarily large constant, provided
$C' > 0$ is a sufficiently large constant. Let $\mathcal{F}'$ be the event that
$\|\bar{x}\|_p \geq E_{n-t} - 10\sqrt{\log(1/\delta)}$. By Lemma \ref{lem:concentrate}, $\Pr[\mathcal{F}'] \geq 1-\delta/10$.
By $\delta$-{\bf Bound1}, we have
$\sqrt{\log(1/\delta)} \leq C^{1/2} \eps n^{1/p}$ for a sufficiently small constant $C > 0$,
which implies $10 \sqrt{\log(1/\delta)} \leq (\epsilon/p) E_{n-t}$, using
that $E_{n-t} = \Theta(n^{1/p})$. Consequently,
conditioned on $\mathcal{E}'$ and $\mathcal{F}'$, we have
\begin{eqnarray*}
  \|z\|_p^p & \geq & \epsilon C'' E_{n-t}^p (1-1/p)^p + (1-\epsilon/p)^p E_{n-t}^p\\
            & \geq & (1+4\epsilon) E_{n-t}^p,
\end{eqnarray*}
where the second inequality follows for $C''$ a sufficiently large constant.

\subsubsection{Conditioning the Cases:}
The argument here is the same as in Section \ref{sec:firstlbReuse}.

Let $\alpha$ be the distribution of $\alpha'$ conditioned on $\mathcal{E}$ and $\mathcal{F}$. Similarly, let $\beta$ be the distribution of
$\beta'$ conditioned on $\mathcal{E}'$ and $\mathcal{F}'$. Recall that for a distribution $\gamma$ on $x$, the distribution $\bar{\gamma}$ is the
distribution of $Sx$.

Combining (\ref{eqn:lower}) and (\ref{eqn:upper}) it follows that any algorithm which can $(\epsilon, \delta)$-approximate $\|x\|_p^p$
of an arbitrary vector $x$ can also be used to decide, with probability at least $1-\delta$, if $x$ is drawn from $\alpha$ or if
$x$ is drawn from $\beta$. Consequently, $D_{TV}(\bar{\alpha}, \bar{\beta}) \geq 1-\delta$.

On the other hand, we have
\begin{eqnarray*}
  D_{TV}(\bar{\alpha}, \bar{\beta}) & \leq & D_{TV}(\bar{\alpha'}, \bar{\beta'}) + D_{TV}(\bar{\alpha'}, \bar{\alpha}) + D_{TV}(\bar{\beta}, \bar{\beta'})\\
  & \leq & D_{TV}(\bar{\alpha'}, \bar{\beta'}) + D_{TV}(\alpha, \alpha') + D_{TV}(\beta, \beta')\\
  & \leq & D_{TV}(\bar{\alpha'}, \bar{\beta'}) + \delta/10 + \delta/10\\
  & \leq & D_{TV}(\bar{\alpha'}, \bar{\beta'}) + \delta/5,
\end{eqnarray*}
where the first inequality is the triangle inequality, the second uses the data processing inequality, and the third uses
that $E \cap F$ and $E' \cap F'$ are the events that realize the variation distance in the two cases.

It follows that $D_{TV}(\bar{\alpha'}, \bar{\beta'}) \geq 1-\delta -\delta/5 > 1-2\delta$.

\subsubsection{A Further Useful Conditioning}
Fix an $r \times n$ matrix $S$ with orthonormal rows. Important to our proof will be the existence of a
subset $W$ of $n/2$ of the columns for which $\|S_i\|^2 \leq 2r/n$ for all $i \in W$.
To see that $W$ exists,
consider a uniformly random column $S_i$ for $i \in [n]$. Then ${\bf E}[\|S_i\|^2] = r/n$ and so by Markov's inequality, at least a $1/2$-fraction
of columns $S_i$ satisfy $\|S_i\|^2 \leq 2r/n$. We fix $W$ to be an arbitrary subset of $n/2$ of these columns.

Suppose we sample $t$ columns of $S$ without replacement, indexed by $T \subset [n]$.
Let $\mathcal{G}$ be the event that the set $T$ of sampled columns belongs to the set $W$.

\begin{lemma}\label{lemma:condition}
  $\Pr[\mathcal{G}] \geq \sqrt{\delta}$.
\end{lemma}
\begin{proof}
The probability that $T \subset W$ is equal to
$$\frac{|W|}{n} \cdot \frac{|W|-1}{n-1} \cdots \frac{|W|-\log_3(1/\sqrt{\delta})-1}{n-\log_3(1/\sqrt{\delta})-1}
\geq \left (\frac{|W| - \log_3(1/\sqrt{\delta})}{n-\log_3(1/\sqrt{\delta})} \right )^{\log_3(1/\sqrt{\delta})}
\geq \left (\frac{1}{3} \right )^{\log_3(1/\sqrt{\delta})} \geq \sqrt{\delta},$$
where we have used $t = \log_3(1/\sqrt{\delta}) \leq \frac{n}{4}$ and $|W| =n/2$. Hence, $\Pr[\mathcal{G}] \geq \sqrt{\delta}$.
\end{proof}

Let $\alpha_G = \bar{\alpha'} \mid \mathcal{G}$ and
$\beta_G = \bar{\beta'} \mid \mathcal{G}$. By the triangle inequality,
\begin{eqnarray*}
  1- 2\delta & \leq & D_{TV}(\bar{\alpha'}, \bar{\beta'}) \leq \Pr[\mathcal{G}] D_{TV}(\alpha_g, \beta_G) + 1-\Pr[\mathcal{G}]
  \leq \frac{\sqrt{\delta}}{2} D_{TV}(\alpha_G, \beta_G) + 1-\frac{\sqrt{\delta}}{2},
\end{eqnarray*}
which implies that $1 - 4\sqrt{\delta} \leq D_{TV}(\alpha_G, \beta_G)$. We can assume $\delta$ is less than a sufficiently small positive
constant, and so it suffices to show for sketching dimension $r = o(n^{1-2/p} \eps^{-2/p} (\log^{2/p} 1/\delta) \log n)$, that
$D_{TV}(\alpha_G, \beta_G) \leq 1/2$. By Fact \ref{chitvd}, it suffices to show $\chi^2(\alpha_G, \beta_G) \leq 1/4$.

Since $S$ has orthonormal
rows, $\bar{\alpha'}$ is distributed as $N(0, I_r)$. Note that, by definition of $\alpha$, we in fact have $\bar{\alpha'} = \alpha_G$ since
conditioning on $\mathcal{G}$ does not affect this distribution. On the other hand, $\beta_G$ is a Gaussian location mixture, that is, it has
the form $N(0, I_r) \ast p$, where $p$ is the distribution of a random variable
chosen by sampling a set $T$ subject to event $\mathcal{G}$ occurring
and outputting $\sum_{i \in T} \frac{C' \epsilon^{1/p} E_{n-t} S_i}{t^{1/p}}$.
We can thus apply Fact \ref{mixture} and
it suffices to show for $r = o(n^{1-2/p} \eps^{-2/p} (\log^{2/p} 1/\delta) \log n)$ that
$${\bf E}[e^{\frac{(C')^2 \epsilon^{2/p} E_{n-t}^2}{t^{2/p}} \langle \sum_{i \in T} S_i, \sum_{j \in U} S_j \rangle}] - 1 \leq \frac{1}{4},$$
where the expectation is over independent samples $T$ and $U$ conditioned on $\mathcal{G}$. Note that
under this conditioning $T$ and $U$ are uniformly random subsets of $W$.

\subsubsection{Analyzing the $\chi^2$-Divergence}
To bound the $\chi^2$-divergence, we define variables $x_{T,U}$, where
\begin{eqnarray*}
  x_{T, U} & = & \frac{(C')^2 \epsilon^{2/p} E_{n-t}^2}{t^{2/p}} \langle \sum_{i \in T} S_i, \sum_{j \in U} S_j \rangle.
\end{eqnarray*}
Consider the following, where the expectation is over
independent samples $T$ and $U$ conditioned on $\mathcal{G}$:
\begin{eqnarray*}\label{eqn:exp}
  {\bf E}[e^{\frac{(C')^2 \epsilon^{2/p} E_{n-t}^2}{t^{2/p}} \langle \sum_{i \in T} S_i, \sum_{j \in U} S_j \rangle}] & = & {\bf E} [e^{x_{T,U}}]\\
  &= & \sum_{0 \leq j < \infty} {\bf E} \left [\frac{x_{T,U}^j}{j!} \right ]\\
  &= & 1 + \sum_{j \geq 1} \frac{(C')^{2j} \eps^{2j/p} E_{n-t}^{2j}}{t^{2j/p} j!} {\bf E} \left [\langle \sum_{i \in T} S_i, \sum_{j \in U} S_j \rangle^j \right ]\\
  & = & 1 + \sum_{j \geq 1} \frac{O(1)^{2j} \eps^{2j/p} n^{2j/p}}{t^{2j/p} j!} {\bf E} \left [\langle \sum_{i \in T} S_i, \sum_{j \in U} S_j \rangle^j \right ],
   \end{eqnarray*}
The final equality uses that $E_{n-t} = \Theta(n^{1/p})$ and here
$O(1)^{2j}$ denotes an absolute constant raised to the $2j$-th power.

We can think of $T$ as indexing a subset of rows of $S^T S$ and $U$ indexing a subset of columns.
Let $M$ denote the resulting $t \times t$ submatrix of $S^T S$.
Then $\langle \sum_{i \in T} S_i, \sum_{j \in U} S_j \rangle = \sum_{i, j \in [t]} M_{i,j} \leq \sum_{i, j \in [t]} |M_{i,j}| \eqdef P$, and we seek
to understand the value of ${\bf E}[P^j]$ for integers $j \geq 1$.

\begin{lemma}\label{lem:key}
  For integers $j \geq 1$,
  $${\bf E}[P^j] \leq \left (\frac{t^2}{r^{1/2}} \right ) \cdot \left (\frac{16r}{n} \right )^j.$$
  \end{lemma}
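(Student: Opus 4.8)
The plan is to expand the $j$-th moment combinatorially. Writing $G := S^{T}S$ and recalling $P = \sum_{i\in T,\,l\in U}|G_{il}|$, we have
$$\mathbf{E}[P^{j}] \;=\; \sum_{(i_{1},\dots,i_{j})\in W^{j}}\ \sum_{(l_{1},\dots,l_{j})\in W^{j}} \Pr\!\big[\{i_{1},\dots,i_{j}\}\subseteq T\big]\,\Pr\!\big[\{l_{1},\dots,l_{j}\}\subseteq U\big]\ \prod_{k=1}^{j}|G_{i_{k}l_{k}}| .$$
Since $|W|=n/2$ and $T$ is a uniform $t$-subset of $W$, a falling-factorial computation gives $\Pr[A\subseteq T]=\prod_{s=0}^{|A|-1}\frac{t-s}{|W|-s}\le (t/|W|)^{|A|}=(2t/n)^{|A|}$ for every $A$ (and the probability is $0$ once $|A|>t$, so effectively $|A|\le t$). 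I would then sort the pairs of index tuples by their collision pattern: the set partition $\pi$ of $[j]$ recording which of the $i_{k}$ coincide, the set partition $\sigma$ recording which of the $l_{k}$ coincide, and the resulting bipartite multigraph $H(\pi,\sigma)$ whose vertex classes are the blocks of $\pi$ and the blocks of $\sigma$ and whose $j$ edges are the pairs $(\pi(k),\sigma(k))$.

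For a fixed pattern, the sum over the distinct index-values assigned to the blocks factors over the connected components of $H(\pi,\sigma)$, and each component I would bound by a leaf-peeling argument using two facts about $G$ restricted to $W$: entrywise $|G_{il}|\le 2r/n$ (from $\|S_{i}\|_{2}^{2},\|S_{l}\|_{2}^{2}\le 2r/n$), and the row bound $\sum_{i\in W}G_{il}^{2}\le \|S_{l}\|_{2}^{2}\le 2r/n$, whence $\sum_{i\in W}|G_{il}|\le\sqrt{|W|\cdot 2r/n}=\sqrt r$ by Cauchy--Schwarz. Peeling one leaf of a spanning tree of a component costs a factor $\sqrt r$, the last free vertex costs $|W|=n/2$, and every non-tree edge (including the extra copies produced by edge multiplicities) costs $2r/n$. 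Combining this with the $(2t/n)$ factor per block, a component with $v$ blocks carrying $j'$ of the $j$ copies contributes at most $(2t/n)^{v}(n/2)(\sqrt r)^{v-1}(2r/n)^{j'-v+1}$; taking the product over components, the powers of $n$ cancel and one obtains a bound of the form
$$\mathbf{E}[P^{j}] \ \le\ \frac{2^{j}}{n^{j}}\sum_{\pi,\sigma} t^{\,|\pi|+|\sigma|}\; r^{\,j-(|\pi|+|\sigma|-c(\pi,\sigma))/2},$$
where $c(\pi,\sigma)$ is the number of connected components of $H(\pi,\sigma)$.

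The remaining step, which is the main obstacle, is to show this sum is at most $16^{j}t^{2}r^{\,j-1/2}/n^{j}$. The single pattern with $\pi$ and $\sigma$ both the trivial one-block partition (all $j$ copies pinned to one index-pair, one component, $|\pi|+|\sigma|=2$) contributes exactly $2^{j}t^{2}r^{\,j-1/2}/n^{j}$, i.e.\ the claimed bound up to the slack factor $16^{j}/2^{j}=8^{j}$. For every other pattern one has $V:=|\pi|+|\sigma|\ge 3$ and $c(\pi,\sigma)\le V/2$, so the $r$-exponent drops by at least $\tfrac14$ per excess block and the summand is at most $(t/r^{1/4})^{V-2}$ times the main term. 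This is precisely where $\delta$-Bound4 enters: it gives $t\le r^{1/4}n^{-c'}$, hence $t/r^{1/4}\le n^{-c'}$, so patterns with large $V$ are geometrically suppressed by $n^{-c'(V-2)}$. The delicate part is that the number of patterns with a given $V$ grows combinatorially (as a sum of products of Stirling numbers $S(j,|\pi|)S(j,|\sigma|)$), so I would have to verify that the $n^{-c'(V-2)}$ suppression --- together with the constraint $|\pi|,|\sigma|\le t$ forced by the vanishing of $\Pr[A\subseteq T]$ for $|A|>t$ --- dominates this growth, which reduces the tail to a convergent geometric series that is absorbed into the $8^{j}$ of slack. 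I expect the technical heart to be exactly this bookkeeping: cleanly isolating the "pinned" main term, and controlling how $c(\pi,\sigma)$ trades off against the number of patterns so that no family of non-trivial patterns can overwhelm the target.
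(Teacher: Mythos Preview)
Your plan matches the paper's overall strategy: expand $\mathbf{E}[P^j]$ combinatorially, group the terms by the pair of set partitions $(\pi,\sigma)$ recording index coincidences, pass to the bipartite multigraph on their blocks, and bound each pattern's contribution by an edge-peeling argument. The differences lie in the peeling device and in how the sum over patterns is closed.

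The paper does not peel a spanning tree. It first applies Cauchy--Schwarz once to the whole inner sum, replacing $\sum\prod_w|\langle S_{a_w},S_{b_w}\rangle|$ by $|W|^{(y+z)/2}\bigl(\sum\prod_w\langle S_{a_w},S_{b_w}\rangle^{2}\bigr)^{1/2}$, and then decomposes the $j$ edges into a maximum matching $M$, a greedy extension $G$ that covers the remaining vertices one new vertex per edge, and the leftover edges $H$. Edges of $M$ are summed freely over both endpoints via $\sum_{i,l\in[n]}\langle S_i,S_l\rangle^{2}=r$; edges of $G$ over their one free endpoint via $\sum_{i}\langle S_i,S_l\rangle^{2}\le\|S_l\|_2^{2}\le 2r/n$; edges of $H$ are bounded pointwise by $(2r/n)^{2}$. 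Since $2|M|+|G|=y+z$ and $|M|\le\min(y,z)$, maximizing over $|M|$ gives a per-pattern bound $\le(t^{4}/r)^{(y+z)/4}(4r/n)^{j}$, which for $t^{4}\le r$ is \emph{uniformly} at most $(t^{2}/\sqrt{r})(4r/n)^{j}$, independent of the particular pattern. Your spanning-tree peeling on the unsquared entries is a legitimate alternative and is in fact at least as tight per pattern (since $(V-c)/2\ge V-\min(y,z)/? $ --- concretely $(V-c)/2\ge\max(y,z)/2$ because $c\le\min(y,z)$), but it retains the component count $c$, which is what forces you into the $V$-graded tail sum.

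What the Cauchy--Schwarz-plus-matching route buys is precisely the step you flag as the main obstacle: once the per-pattern bound no longer depends on the pattern, the paper simply multiplies by its count of pattern pairs (taken to be at most $4^{j}$) and absorbs that factor into the $16^{j}$. Your route instead has to control $\sum_{y,z}S(j,y)S(j,z)(t/r^{1/4})^{\,y+z-2}$ using $\delta$-\textbf{Bound4}, and as you anticipate, Stirling-number growth makes this the delicate part; for instance already $S(j,2)\sim 2^{j-1}$ eats a nontrivial share of the $8^{j}$ slack at $V=4$. So your proposal is sound in structure and at least as strong per pattern, but the paper's matching decomposition is engineered exactly to sidestep the bookkeeping you are bracing for.
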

\begin{proof}
 We have,
  \begin{eqnarray}\label{eqn:split}
    {\bf E}[P^j] & = & {\bf E} \left [\sum_{a_1,\ldots, a_j \in T, b_1, \ldots, b_j \in U} \prod_{w = 1}^j |\langle S_{a_w}, S_{b_w} \rangle | \right ]\\ \notag
    & = & \sum_{a_1, \ldots, a_j, b_1, \ldots, b_j \in W} \Pr[a_1, \ldots, a_j \in T] \cdot \Pr[b_1, \ldots, b_j \in U] \cdot \prod_{w = 1}^j |\langle S_{a_w}, S_{b_w} \rangle |,
  \end{eqnarray}
  where recall $W$ is our subset of $n/2$ columns of $S$ which all have squared norm at most $2r/n$.

  To analyze (\ref{eqn:split}), we define a $y$-{\it pattern} $P$ to be a
  partition of $\{1, 2, \ldots, j\}$ into $y$ non-empty parts, where $1 \leq y \leq j$. The
  number $n_y$ of $y$-patterns, that is, the number of partitions of a set of size $j$ into $y$ non-empty parts is
  known to equal ${j-1 \choose y-1}$. Note that $\sum_y n_y = 2^{j-1}$ is the total number of {\it patterns}.

  We partition $W^j$ into patterns, where a particular $j$-tuple
  $(a_1, \ldots, a_j)$ is in some $y$-pattern $P$, for some $1 \leq y \leq j$, if for each non-empty piece
  $\{d_1, \ldots, d_{\ell}\} \in P$, we have $a_{d_1} = a_{d_2} = \cdots = a_{d_{\ell}}$. Moreover, if $d, d' \in \{1, 2, \ldots, j\}$
  are in different pieces of $P$, then $a_d \neq a_{d'}$. If $(a_1, \ldots, a_j)$ is in some $y$-pattern $P$ we say it is {\it valid}
  for $P$. We similarly say a $j$-tuple $(b_1, \ldots, b_j)$ is in some $z$-pattern $Q$, for some $1 \leq z \leq j$, if for each
  non-empty piece $\{e_1, \ldots, e_m\} \in Q$, we have $b_{e_1} = b_{e_2} = \cdots = b_{e_m}$. Moreover, if $e, e' \in \{1, 2, \ldots, j\}$
  are in different pieces of $Q$, then $b_e \neq b_{e'}$. We also say $(b_1, \ldots, b_j)$ is valid for the $z$-pattern $Q$ in this case.

  Thus, each pair of $j$-tuples is valid for exactly one pair $P, Q$ of patterns.

  We show for each
  pair $P, Q$ of patterns, where $P$ is a $y$-pattern for some $y$ and $Q$ is a $z$-pattern for some $z$,
  \begin{eqnarray}\label{eqn:toshow}
    \sum_{a_1, \ldots, a_j \in P \textrm{ and } b_1, \ldots, b_j \in Q} \Pr[a_1, \ldots, a_j \in T] \Pr[b_1, \ldots, b_j \in U] \prod_{w = 1}^j |\langle S_{a_w}, S_{b_w} \rangle |
    \leq \left (\frac{t^2}{r^{1/2}} \right ) \cdot \left (\frac{4r}{n} \right )^j.
  \end{eqnarray}
  Notice the sum is only over pairs of $j$-tuples valid for $P$ and $Q$.
  As the number of pairs $P, Q$ of patterns is at most $4^{j-2}$, the lemma will follow given (\ref{eqn:toshow}).

  We have $\Pr[a_1, \ldots, a_j \in T] = \left (\frac{t}{|W|} \right )^y$ and $\Pr[b_1, \ldots, b_j \in U] = \left (\frac{t}{|W|} \right )^z$. To analyze
  $$\sum_{a_1, \ldots, a_j \in P \textrm{ and } b_1, \ldots, b_j \in Q} \prod_{w = 1}^j |\langle S_{a_w}, S_{b_w} \rangle |,$$ by Cauchy-Schwarz this is at most
  \begin{eqnarray}\label{eqn:cauchySchwarz}
    |W|^{(y+z)/2} \cdot \left (\sum_{a_1, \ldots, a_j \in P \textrm{ and } b_1, \ldots, b_j \in Q} \prod_{w = 1}^j \langle S_{a_w}, S_{b_w} \rangle^2 \right )^{1/2}.
  \end{eqnarray}
  The valid pairs of $j$-tuples for $P$ and $Q$
  define a bipartite multi-graph as follows.
  In the left partition we create a node for each non-empty piece of $P$, and in the right
  partition we create a node for each non-empty piece of $Q$. We include an edge between a node
  $a$ in the left and a node $b$ in the right if $i \in a$ and $i \in b$ for some $i \in \{1, 2, \ldots, j\}$.
  If there is more than one such $i$, we include the edge with multiplicity corresponding to the number of such $i$.
  This bipartite graph only depends on $P$ and $Q$.

  Fix a largest matching $M$ in this multi-graph, meaning that if an edge occurs with multiplicity more than one, it can occur at most once
  in a matching. After
  choosing $M$, greedily construct a set $G$ of edges for which one endpoint is not incident to an edge which is already chosen.
  Finally, let $H$ be the set of remaining edges, each of which has both endpoints incident to an edge already chosen. Write
  $$\prod_{w = 1}^j \langle S_{a_w}, S_{b_w} \rangle^2 = \prod_{\{a,b\} \in M} \langle S_a, S_b \rangle^2
  \cdot \prod_{\{a,b\} \in G} \langle S_a, S_b \rangle^2 \cdot \prod_{\{a,b\} \in H} \langle S_a, S_b \rangle^2.$$
  We bound
  \begin{eqnarray*}
    && \left (\sum_{a_1, \ldots, a_j \in P \textrm{ and } b_1, \ldots, b_j \in Q} \prod_{w = 1}^j \langle S_{a_w}, S_{b_w} \rangle^2 \right )^{1/2}\\
    && = \left (\sum_{a_1, \ldots, a_j \in P \textrm{ and } b_1, \ldots, b_j \in Q} \prod_{\{a,b\} \in M} \langle S_a, S_b \rangle^2
    \cdot \prod_{\{a,b\} \in G} \langle S_a, S_b \rangle^2 \cdot \prod_{\{a,b\} \in H} \langle S_a, S_b \rangle^2 \right )^{1/2}
    \end{eqnarray*}
  We peel off the edges of the bipartite multi-graph constructed above one at a time.

  First, we have $\prod_{\{a,b\} \in H} \langle S_a, S_b \rangle^2 \leq \left (\frac{4r^2}{n^2} \right )^{|H|}$, conditioned on
  $\mathcal{G}$, so that $S_a$ and $S_b$ belong to $W$.

  Next, each $\{a,b\} \in G$ is incident to a vertex which is not incident to any edges chosen before $\{a,b\}$. Suppose w.l.o.g. this vertex is $b$.
  Consider any assignment to the vertices incident to all edges chosen before $\{a,b\}$. For this fixed assignment, we will sum over at most
  $|W|$ possible assignments to the vertex $b$ (note, typically it may be much fewer than $n$ assignments since $b$ can only be assigned
  to an $S_i$ for $i \in W$ if no vertex incident to all edges chosen before $\{a,b\}$ is assigned to $i$). Since the rows of $S$
  are orthonormal, it follows that this sum over assignments to $b$ is at most $\|S_a\|_2^2$,
  which is at most $\frac{2r}{n}$ since $a \in W$.

  Finally, each $\{a,b\} \in M$ is incident to two vertices not incident to any edges chosen before $\{a,b\}$. Consider any assignment to all
  vertices incident to all edges chosen before $\{a,b\}$. For this fixed assignment, we will sum over at most $|W|^2$ possible assignments
  to the vertices $a$ and $b$. This is at most $\sum_{i,j \in [n]} \langle S_i, S_j \rangle^2 = r$, using the fact that the rows of $S$
  are orthonormal.

We thus have:
\begin{eqnarray*}
  && \sum_{a_1, \ldots, a_j \in P \textrm{ and } b_1, \ldots, b_j \in Q} \prod_{(a,b) \in M} \langle S_a, S_b \rangle^2 \cdot \prod_{(a,b) \in G} \langle S_a, S_b \rangle^2 \cdot \prod_{(a,b) \in H} \langle S_a, S_b \rangle^2\\
  & \leq & r^{|M|} \cdot \left (\frac{2r}{n} \right )^{|G|} \cdot \left (\frac{4r^2}{n^2} \right )^{|H|}\\
  & \leq & r^{\min(y,z)} \cdot \left (\frac{2r}{n} \right )^{\max(y,z) - \min(y,z)} \cdot \left (\frac{4r^2}{n^2} \right )^{j-\max(y,z)},
\end{eqnarray*}
where the final inequality uses that $r \geq \frac{2r}{n} \geq \frac{4r^2}{n^2}$ and bounds on the maximum size of $M$ and $G$ given the number of vertices in
the two parts of the bipartite graph. Rearranging, we have
\begin{eqnarray}\label{eqn:blah}
 \sum_{a_1, \ldots, a_j \in P \textrm{ and } b_1, \ldots, b_j \in Q}\prod_{(a,b) \in M} \langle S_a, S_b \rangle^2
  \cdot \prod_{(a,b) \in G} \langle S_a, S_b \rangle^2 \cdot \prod_{(a,b) \in H} \langle S_a, S_b \rangle^2
  & \leq & \frac{4^j r^{2j - \max(y,z)}}{n^{2j-\max(y,z) - \min(y,z)}}
  \end{eqnarray}
Combining (\ref{eqn:blah}) with our earlier (\ref{eqn:cauchySchwarz}) we have
\begin{eqnarray*}
  && \sum_{a_1, \ldots, a_j \in P \textrm{ and } b_1, \ldots, b_j \in Q} \Pr[a_1, \ldots, a_j \in T] \Pr[b_1, \ldots, b_j \in U] \prod_{w = 1}^j |\langle S_{a_w}, S_{b_w} \rangle |\\
  & \leq &  \left (\frac{t}{|W|} \right )^{y+z}  \cdot |W|^{(y+z)/2} \cdot \left (\frac{4^j r^{2j - \max(y,z)}}{n^{2j-\max(y,z) - \min(y,z)}} \right )^{1/2}\\
  & \leq & \frac{2^{j} 2^j t^{y+z} r^{j-\max(y,z)/2}}{n^{j-\max(y,z)/2-\min(y,z)/2 + (y+z)/2}}\\
  & \leq & \left (\frac{t^4}{r} \right )^{(y+z)/4} \cdot \left (\frac{4r}{n} \right )^j\\
  & \leq & \left (\frac{t^2}{r^{1/2}} \right )  \cdot \left (\frac{4r}{n} \right )^j,
\end{eqnarray*}
which establishes (\ref{eqn:toshow}), and completes the proof.
\end{proof}
By $\delta$-{\bf Bound4}, we have $\frac{t^2}{r^{1/2}} = \frac{1}{n^{\Omega(1)}}$, and therefore Lemma \ref{lem:key} establishes that
 $${\bf E}[P^j] \leq \frac{1}{n^{\Omega(1)}} \cdot \left (\frac{16r}{n} \right )^j.$$

We thus have:
\begin{eqnarray*}
  {\bf E}[e^{\frac{(C')^2 \epsilon^{2/p} E_{n-t}^2}{t^{2/p}} \langle \sum_{i \in T} S_i, \sum_{j \in U} S_j \rangle}] & = & {\bf E} [e^{x_{T,U}}]\\
  & = & 1 + \frac{1}{n^{\Omega(1)}} \cdot \sum_{j \geq 1} \frac{O(1)^{2j} \epsilon^{2j/p} n^{2j/p}}{j! t^{2j/p}} \cdot \left (\frac{r}{n} \right )^j\\
  & = & 1 + \frac{1}{n^{\Omega(1)}} \cdot \sum_{j \geq 1} \frac{(c \log n)^j}{j!}\\
  & \leq & 1 + \frac{1}{n^{\Omega(1)}} \cdot e^{c (\log n)}\\
  & \leq & 1 + \frac{1}{4},
\end{eqnarray*}
since $c > 0$ is an arbitrarily small constant independent of the constant in the $n^{\Omega(1)}$. The proof is complete.


\subsection{Lower Bound for $1 \leq p < 2$}\label{sec:smallp}
We prove the following theorem, that for $1 \leq p < 2$, the sketching
dimension is $\Omega(\epsilon^{-2} \log(1/\delta))$, which as discussed
in Section \ref{sec:intro}, matches known upper bounds up to a constant
factor.

\begin{theorem}[Re-statement of Theorem ~\ref{thm:smallp}.]
  The sketching dimension for $(\epsilon,\delta)$-approximating $F_p$
  for $1 \leq p < 2$ is $\Omega(\epsilon^{-2} \log(1/\delta))$.
\end{theorem}
\begin{proof}[Proof of Theorem~\ref{thm:smallp}.]
  By Yao's minimax principle we can fix the sketching matrix
  $S \in \R^{r \times n}$, which
  as discussed we can w.l.o.g. assume has orthonormal rows. We will
  show the lower bound for $n = \Omega(\epsilon^{-2} \log(1/\delta))$.
%
  We will again use Theorem \ref{thm:concentration} and a similar argument
  to that of Lemma \ref{lem:concentrate} and the discussion following it.
  This time we apply Theorem \ref{thm:concentration} to the function
  $f(x) = \|x\|_p$ for $1 \leq p < 2$. By the triangle inequality
  and norm inequalities,
  \begin{eqnarray*}
    |\|x\|_p - \|y\|_p| & \leq & \|x-y\|_p \leq n^{1/p-1/2}\|x-y\|_2,
  \end{eqnarray*}
  and so $f$ is $n^{1/p-1/2}$-Lipshitz with respect to the Euclidean norm.

  Applying Theorem \ref{thm:concentration} for $x \sim N(0, I_n)$,
  \begin{eqnarray}\label{eqn:smallp}
    \Pr[|\|x\|_p - {\bf E}[\|x\|_p]| \geq t] \leq 2^{-\frac{t^2}{2n^{2/p-1}}}.
    \end{eqnarray}
  Since ${\bf E}_{X \sim N(0,1)} [|X|^p]$ is a positive constant for any
  constant $p$, we have ${\bf E}[\|x\|_p^p] = \Theta(n)$ and
  ${\bf Var}[\|x\|_p^p] = O(n)$. Consequently, because of the tail bound
  in (\ref{eqn:smallp}) and Chebyshev's inequality, necessarily we have
  ${\bf E}[\|x\|_p] = \Theta(n^{1/p})$. We let $E_n=  {\bf E}_{x \sim N(0, I_n)} [\|x\|_p]$.
  Consequently, by
  (\ref{eqn:smallp}),
\begin{eqnarray*}
  \Pr_{x \sim N(0, I_n)}[|\|x\|_p - {\bf E}[\|x\|_p]| \geq \epsilon E_n] \leq 2^{-\frac{\Theta(\epsilon^2 n^{2/p})}{n^{2/p-1}}}
     & = & 2^{-\Theta(\epsilon^2 n)} \leq \delta/10,
    \end{eqnarray*}
    where the final inequality uses that $n = \Theta(\epsilon^{-2} \log(1/\delta))$.

     By linearity,
    for $x \sim (1+3\epsilon) \cdot N(0, I_n)$, we have ${\bf E}_{x \sim (1+3\epsilon) \cdot N(0, I_n)} [\|x\|_p] = (1+3\epsilon)E_n$.
    We also have
\begin{eqnarray*}
  \Pr_{x \sim (1+3\epsilon) N(0, I_n)}[|\|x\|_p - {\bf E}[\|x\|_p]| \geq \epsilon (1+3\epsilon)E_n] \leq 2^{-\frac{\Theta(\epsilon^2 n^{2/p})}{n^{2/p-1}}} & = & 2^{-\Theta(\epsilon^2 n)} \leq \delta/10.
    \end{eqnarray*}
        It follows that for both distributions $N(0, I_n)$ and $(1+3\epsilon) \cdot N(0, I_n)$, that a random sample from the distribution
        is within a $(1+\epsilon)$ factor of its expectation with probability $1 - \delta/10$, and consequently, by the same argument as in
        Section \ref{sec:dist}, the distributions of $S \cdot x$ and $S \cdot y$ must have variation distance $1-\Theta(\delta)$, where
        $x \sim N(0,I_n)$ and $y \sim (1+3\epsilon)N(0, I_n)$.

        The key point now though is that the distribution of $S \cdot x$ is equal
        to $N(0, I_r)$, while the distribution of $S \cdot y$ is equal to
        $(1+3\epsilon) N(0,I_r)$, by using the fact that the rows of $S$ are orthonormal
        and the rotational invariance of the Gaussian distribution.

        We use the following variation distance bound (it is standard, but see, e.g.,
        Lemma 22 of \cite{kmv}).
        \begin{fact}\label{fact:kmv}
          For $r$-dimensional distributions $N(0,\Sigma_1)$ and $N(0, \Sigma_2)$, let
          $\lambda_1, \ldots, \lambda_n > 0$
          be the eigenvalues of $\Sigma_1^{-1} \Sigma_2$. Then
          $$D_{TV}(N(0,\Sigma_1), N(0, \Sigma_2))^2 \leq \sum_{i=1}^r \left (\lambda_i + \frac{1}{\lambda_i} - 2 \right )$$
        \end{fact}
        Let us split $r$ into $\Theta(\log(1/\delta) \cdot s$, where $s = C\epsilon^{-2}$ for a sufficiently
        small constant $C > 0$.
        Applying Fact \ref{fact:kmv} to $N(0,I_s)$ and $(1+3\epsilon) N(0, I_s) = N(0, (1+3\epsilon)^2 I_r)$, we have that all eigenvalues
        $\lambda_i$ in Fact \ref{fact:kmv} are $(1 \pm O(\epsilon))$, and $\lambda_i + \frac{1}{\lambda_i} = \Theta(\epsilon^{-2})$, and consequently for $C > 0$ sufficiently small,
        $D_{T,V}(N(0, I_S))$ and $(1+3\epsilon) N(0,I_s)) \leq 1/10$.

        Recall that the squared Hellinger distance $h^2(p,q)$ between distributions $p$ and $q$
        satisfies $h^2(p,q) \leq D_{TV}(p,q)$ (see, e.g., Lemma 2.3 of \cite{ww15}) and so
        $h^2(N(0, I_s), (1+3\epsilon)N(0,I_s)) \leq 1/10$. We also have
        $h^2(p^m, q^m) = 1 - \prod_{i=1}^m (1-h^2(p,q))$, where $p^m$ and $q^m$ denote the distributions
        of $m$ independent samples from $p$ and $q$ respectively (see, e.g., Fact 2.2 of \cite{ww15}).
        Setting $m = r/s = \Theta(\log(1/\delta))$, we have for appropriate choice of constant in the definition of $m$,
        $h^2(N(0,I_r), (1+3\epsilon)N(0,I_r)) \leq 1-(9/10)^m \leq 1-B\sqrt{\delta}$, for a sufficiently
        large constant $B > 0$. But we also have
        $D_{TV}(N(0, I_r), (1+3\epsilon)N(0, I_r)) \leq h \sqrt{2-h^2}$, where
        $h = h(N(0, I_r), (1+3\epsilon)N(0,I_r))$ (see, e.g., Lemma 2.3 of \cite{ww15}). Consequently,
        $D_{TV}(N(0, I_r), (1+3\epsilon)N(0, I_r)) \leq (1-B\sqrt{\delta})^{1/2} \sqrt{2-(1-B\sqrt{\delta})} < 1-\delta$, for $B$ sufficiently large. This is a contradiction and so necessarily $r = \Omega(\epsilon^{-2} \log(1/\delta))$.
\end{proof}

\end{document}